\documentclass[10pt]{article}
\usepackage[margin=1in]{geometry}
\usepackage{amsmath,amssymb,graphicx}
\usepackage{amsthm}
\usepackage{url}
\usepackage[ruled,vlined]{algorithm2e}
\usepackage{hyperref}
\usepackage{authblk}
\usepackage[numbers,sort&compress]{natbib}
\usepackage{yfonts}
\usepackage{titlesec}
\titlespacing*{\paragraph}{0pt}{0pt}{5pt}

\newcounter{appsection}

\newcommand{\appsection}[1]{%
  \refstepcounter{appsection}%
  \subsection*{Appendix A\arabic{appsection}. #1}%
}

\usepackage{amsthm}
\usepackage{thmtools}

\newtheoremstyle{tight} % name
  {2pt}    % Space above
  {2pt}    % Space below
  {\itshape}  % Body font
  {}       % Indent amount
  {\bfseries} % Theorem head font
  {.}      % Punctuation after theorem head
  {0.5em}  % Space after theorem head
  {}       % Theorem head spec (can be left empty)

\theoremstyle{tight}

\newtheorem{theorem}{Theorem}
\newtheorem{lemma}{Lemma}[appsection]
\newtheorem{proposition}[lemma]{Proposition}

\newtheorem{definition}{Definition}
\newtheorem{criterion}{Criterion}

\def\reads{\mathfrak{R}}
\def\chromosomes{\mathfrak{C}}
\def\assemblygraph{\mathfrak{G}}
\def\genomepaths{\mathfrak{P}}
\def\multiplicity{\mathfrak{M}}
\def\vertexmultiplicity{\mathfrak{N}}
\def\topology{\mathfrak{T}}
\def\CC{\text{CC}}
\def\CPC{\text{CPC}}
\def\OG{\text{OG}}
\def\SG{\text{SG}}
\def\ROG{\text{ROG}}
\def\DBG{\text{DBG}}
\def\CDBG{\text{CDBG}}
\def\AG{\text{AG}}
\def\SPG{\text{SPG}}
\def\CSPG{\text{CSPG}}

\def\GC{\text{GC}}
\def\OA{\text{OA}}
\def\OGA{\text{OGA}}

\def\Core{\text{Core}}

\def\Labels{\text{Labels}}
\def\Superstrings{\text{Superstrings}}

\def\ExhaustiveMultiplex{\text{Multiplex}^*}

\title{Supregraph: Enabling Information-Optimal Assembly Graph Representation of a Read Set}

\author[1,2,*]{Anton Bankevich}

\affil[1]{Department of Computer Science and Engineering, Pennsylvania State University, University Park, PA, USA}
\affil[2]{Huck Institutes of the Life Sciences, Pennsylvania State University, University Park, PA, USA}
\affil[*]{anton.bankevich@psu.edu}

\begin{document}

\maketitle

\begin{abstract}
The first step in any genome assembly algorithm entails the conversion from the domain of strings and overlaps to the language of graphs and paths, typically using one of the two conventional methods: de Bruijn graphs or overlap graphs. However, both standard approaches are known to have limitations. De Bruijn graphs fail to represent complete information from reads, while the overlap graphs often produce artificial breaks in contigs due to the necessity to discard contained reads as a preliminary step.

In this work we present a mathematical model for genome assembly that provides a formal framework to determine what constitutes a correct conversion of a read set into an assembly graph under the assumption of error-free reads. We prove that a correct representation of a read set exists in the form of a new class of assembly graphs, which we call supregraphs. We show that supregraphs can be constructed by iteratively transforming de Bruijn graphs using the multiplexing procedure, previously employed in the genome assemblers LJA and Verkko. Finally, we demonstrate that, under a set of natural assumptions, supregraphs provide a foundation for constructing theoretically optimal genome assemblies.
\end{abstract}

\newpage

\section*{Introduction}\label{sec:intro}
Genome assembly algorithms have historically been designed to tolerate high sequencing error rates. This was essential in the NGS era, when more than half of $100$ bp Illumina reads in a library could contain errors, and became critical for PacBio CLR and Oxford Nanopore reads, which could exceed $10\%$ error rates. The introduction of PacBio HiFi reads \cite{wenger2019accurate} and duplex nanopore sequencing \cite{silvestre2021pair} marked a major shift, delivering long reads with much higher accuracy (typically fewer than one error per $500$ nucleotides). A single HiFi library is sufficient to produce a human genome assembly more contiguous and accurate than earlier assemblies, which required multiple data types and manual curation \cite{wenger2019accurate}. But does this mean the genome assembly problem is finally solved?

With increased power come more ambitious goals: we are no longer satisfied with assembling a few complete chromosome arms. We now aim for complete, diploid (or even polyploid) telomere-to-telomere assemblies \cite{nurk2022complete, liao2023draft, yoo2025complete}. These ambitious objectives shift the primary challenge of genome assembly away from handling sequencing errors toward accurately reconstructing the most complex and repetitive regions of the genome. Centromeres, sex chromosomes, immunoglobulin loci, and other regions with complex evolutionary histories feature tandem or mosaic repeat structures that remain difficult to assemble, even with nearly error-free sequencing data. Consequently, assemblies of such regions often contain structural flaws \cite{zhu2025closeread}.

In this paper, we set aside the problem of sequencing errors and focus instead on the question that grows increasingly important as read accuracy improves: how to reliably assemble a complex genomic region from error-free data. In practice, assembly algorithms must still account for sequencing errors. However, we argue that the quality of modern reads is high enough that treating genome assembly as an error-free problem provides a sufficiently accurate and useful approximation of real-world conditions.

Intuitively, the error-free genome assembly problem can be stated as follows: given a set of substrings of an unknown string $s$ that together cover $s$, reconstruct $s$. While this formulation captures the essence of the assembly problem, it does not specify what constitutes a correct or optimal reconstruction. By invoking Occam's razor, one might define the simplest (or shortest) possible superstring as the correct answer, leading to the Shortest Superstring Problem (SSP) \cite{kececioglu1995combinatorial}. Although the SSP was initially regarded as a mathematical model of genome assembly, it later became evident that the principle of Occam's razor fails in this context, owing to the highly repetitive nature of genomic sequences \cite{myers1995toward}. Moreover, the SSP formulation fails to capture important aspects of real assembly scenarios, such as the fact that sequencing reads may not contain enough information to reconstruct the complete genome.

Graph-based formulations offer alternative models \cite{medvedev2007computability, shomorony2016information, tomescu2017safe, rahman2022assembler}, where chromosomes are encoded as walks in a graph. The most widely used graphs include de Bruijn graphs \cite{pevzner2001eulerian}, overlap graphs \cite{myers1995toward}, and adaptations such as string graphs \cite{myers2005fragment}, sparse de Bruijn graphs \cite{ye2012exploiting}, and repeat graphs \cite{kolmogorov2019assembly}. However, these graphs do not fully represent the information from a read set: $k$-mer-based graphs discard substantial sequence context, while overlap graphs require discarding of subreads, which can introduce artificial breaks in chromosome paths \cite{nurk2022complete, li2024genome, hui2016overlap, jain2023coverage}. This issue is amplified in diploid genomes with variable read lengths \cite{jain2023coverage}. While this limitation has motivated several practical solutions \cite{kamath2024telomere, cheng2021haplotype}, we argue that the root cause of the problem lies in an incomplete theoretical understanding of the graph models underlying genome assembly. Notably, the same problems arise even under idealized assumptions of error-free, fully covering reads. Motivated by these considerations, we focus on the theoretical foundations of the genome assembly problem. We analyze the information traditionally extracted from a read set by assembly algorithms and propose a natural and precise representation of this information using an assembly graph. The algorithms presented are intended primarily to make the introduced concepts rigorous and are not designed with computational efficiency as a goal. Accordingly, memory- and time-efficient implementations are beyond the scope of this paper and will be described elsewhere.

In Sections~\ref{sec:preliminaries} and~\ref{sec:logic} we present a mathematical framework that formalizes the correctness of representing a read set as a graph. Sections \ref{sec:properties} and \ref{sec:conductors} investigate the properties of graphs that can provide a perfect representation of a read set, leading to the introduction of \emph{supregraphs}, a new type of assembly graph that perfectly preserves information from reads (Section~\ref{sec:supregraph}). In Section~\ref{sec:multiplexing} we further adapt it to reflect the practical genome assembly problem by incorporating constraints on the number of substring occurrences and describe a supregraph construction method based on an iterative process called \emph{multiplexing}. Heuristic variants of the multiplexing algorithm have already been implemented in practice by the LJA \cite{bankevich2022multiplex} and Verkko \cite{rautiainen2023telomere} assemblers to resolve repeats in de Bruijn graphs. Finally, we describe a supregraph-based approach for constructing an optimal (but possibly incomplete) genome assembly under general assumptions (Section~\ref{sec:optimal}). Technical details and proofs of all theorems are provided in the appendix.

\section{Preliminaries}\label{sec:preliminaries}
\paragraph{Simplifying assumptions.}
In this work, we make three simplifying assumptions. (I) All reads are error-free and therefore substrings of the genome string. (II) To avoid edge cases caused by chromosomes ending in repeats \cite{rahman2022assembler}, we treat all chromosomes as cyclic. (III) We ignore the double-stranded nature of DNA and represent sequences and graphs as directed.

(I) is required to construct a deterministic genome assembly model; methods for handling sequencing errors are beyond the scope of this paper. (II) and (III) are introduced to simplify formulations. However, all definitions and results extend naturally to bidirected strings and graphs (Appendix~\ref{app:bidirected}). Appendix~\ref{app:linear-challenges} discusses how the presented theory can be applied to genomes with linear chromosomes.

\paragraph{Strings, string-sets, and chains.}
Most objects in this paper are strings over an alphabet $\Sigma$, which we take to be any finite set of symbols, not just the nucleotide alphabet $\{A, C, G, T\}$. We also consider cyclic strings, representing circular chromosomes, denoted $\langle s\rangle$ (e.g., $\langle ACCAT\rangle$). Standard notions such as position and substring extend naturally to this setting; formal definitions accounting for cyclic strings are provided in Appendix~\ref{app:cyclic-strings}. Let $s$ be a string that has a string $t$ as both a prefix and a suffix, so that $s = tx = yt$. We define the \emph{circularization} of $s$ through the overlap $t$ to be the circular string $\langle x \rangle$. Observe that $\langle x \rangle = \langle y \rangle$, since the strings $x$ and $y$ are cyclic shifts of one another.

Many objects we consider (read libraries, genomes, assemblies) are collections of strings, which we call \emph{string-sets}. A string-set is cyclic if it contains only cyclic strings. A string $s$ is a substring of a string-set $S$ if $s$ is a substring of at least one string of $S$. A string-set $S$ is said to be substring-free if no string in it is a substring of another.

In addition to substrings, we introduce the notion of a \emph{proper infix}. A string $s$ is a proper infix of a string $t$ if $t=asb$ for some non-empty strings $a$ and $b$. A string-set is \emph{proper infix-free} if no element is a proper infix of another. In contrast to a substring-free set, a proper infix-free set permits prefix and suffix relations between elements.

A \emph{chain} is a sequence of strings $r_1,\ldots,r_n$ with overlaps $t_1,\ldots,t_{n-1}$, where $r_i$ and $r_{i+1}$ overlap by $t_i$ characters. A chain is cyclic if $r_1 = r_n$, and a $k$-chain if all overlaps have length at least $k$. The \emph{label} of a (cyclic) chain, $Label(C)$, is the (cyclic) string formed by merging the strings according to their overlaps. We will also refer to $Label(C)$ as a \emph{combined sequence} (in contrast to a concatenated sequence) of $r_1,\ldots,r_n$ when the overlap values are clear from context.

When all strings in a chain are elements of a \emph{read set} (a finite, non-cyclic string-set, given as an input for genome assembly), we call them read chains. Fig. \ref{fig:assembly-graphs}A illustrates a read set and a cyclic $3$-chain. The chain uses each read in the set exactly once but in general reads may be reused multiple times in a read chain.

\paragraph{Assembly graphs.} We introduce a general definition for assembly graphs that encompasses all traditional assembly graph types.

\begin{definition} An \emph{abstract assembly graph} is a directed graph where each vertex and edge is labeled with a string over a given alphabet. For each edge $e$ from vertex $u$ to vertex $v$, $u$ is a prefix of $e$ and $v$ is a suffix of $e$.
\end{definition}
For simplicity, we will identify edges and vertices with their labels. E.g., we say ``$u$ is a prefix of $e$'' rather than ``the label of $u$ is a prefix of the label of $e$''. In Appendix~\ref{app:cyclic-labels} we discuss how we interpret the assembly graph definition to take cyclic strings into account. We intentionally allow an edge to go from a vertex to its suffix (\emph{suffix edge}) or from a prefix of a string to its full version (\emph{prefix edge}). Although prefix and suffix edges do not occur in any traditional assembly graphs, our model requires their introduction.

The definition of abstract assembly graphs is intentionally broad and permits certain degenerate configurations that are not relevant for our purposes. We therefore introduce a more restrictive notion.

\begin{definition}
An \emph{assembly graph} is an abstract assembly graph in which every vertex and every edge lies on at least one directed cycle, and no edge is simultaneously a prefix edge and a suffix edge.
\end{definition}

Every abstract assembly graph can be converted into an assembly graph by first removing all vertices and edges that do not lie on any directed cycle, and then contracting every edge that is simultaneously a prefix edge and a suffix edge.

\paragraph{Walks in assembly graphs.} Walks and circuits in an assembly graph can be naturally associated with chains and cyclic chains, consisting of edge labels with overlaps defined by vertex labels. The label of a walk or circuit $P$, denoted as $Label(P)$, is the label of the corresponding chain. Vertices and edges are allowed to occur multiple times in a circuit, but we only consider circuits that are primitive (i.e., not formed by repeating a shorter circuit). For walks $P$ and $Q$ such that the end of $P$ coincides with the start of $Q$, we denote their concatenation by $PQ$. We say that a walk $P$ is an \emph{inner subwalk} of a walk $Q$ if $Q = APB$ for some non-empty walks $A$ and $B$.

A key property traditionally attributed to assembly graphs is that the true genome corresponds to a path in the graph, called a \emph{genome path}. We adopt this notion, while emphasizing that under the assumption of cyclic chromosomes, a genome “path” is in fact a circuit or a collection of circuits. Nevertheless, we refer to it as a genome path regardless of its actual structure. We distinguish between a genome path and a \emph{chromosome path}, the latter denoting a cycle that corresponds to a single chromosome of the genome.

\begin{figure}[h]
\centering
\includegraphics[width=1\textwidth]{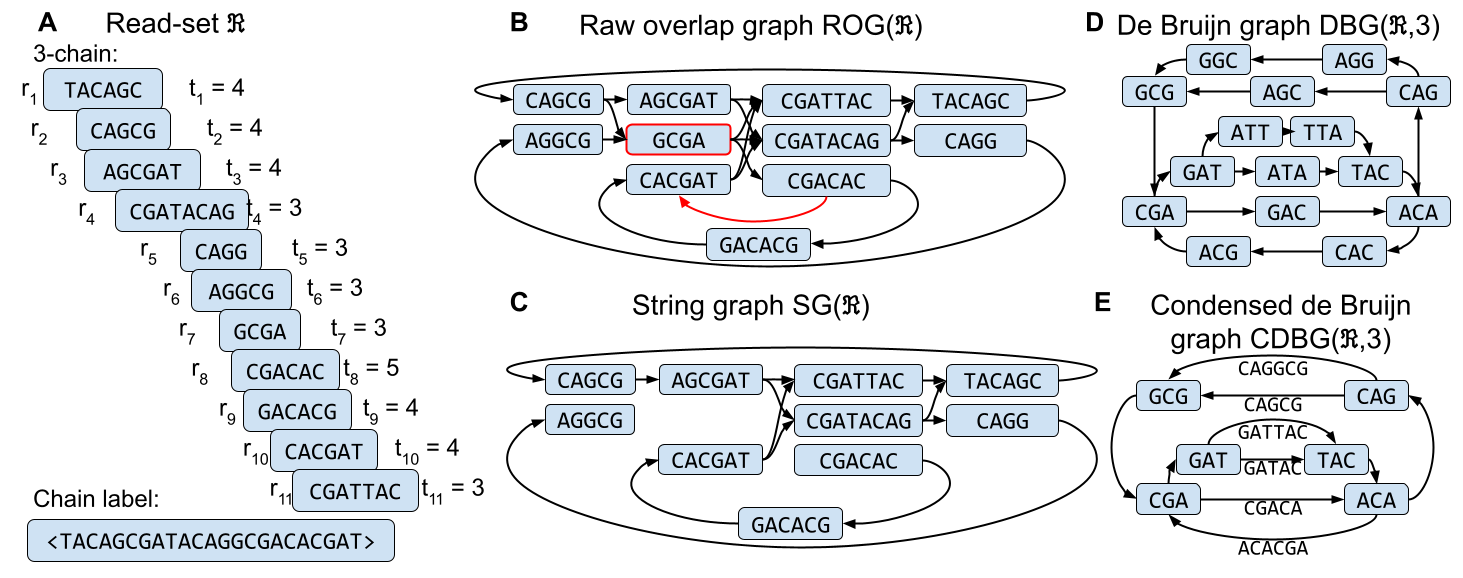}
\caption{\label{fig:assembly-graphs}
\textbf{Assembly graphs}. \textbf{(A)} An example of a cyclic read chain and its label. \textbf{(B)} Raw overlap graph, constructed from all reads including a contained read GCGA highlighted in red. \textbf{(C)} String graph obtained from raw overlap graph by removing one contained read and one transitive edge, highlighted in red. \textbf{(D)} De Bruijn graph. \textbf{(E)} Condensed de Bruijn graph, obtained from de Bruijn graph by transforming each unbranching path into a single edge.}
\end{figure}

\paragraph{Traditional assembly graph types.}
We briefly introduce traditional types of assembly graphs.

In \textit{overlap graphs} vertices are labeled by \emph{non-contained} reads (reads that are not substrings of any other reads), and edges represent overlaps between them. Although edges are not traditionally labeled by strings, they can be assigned the combined sequence of their endpoints joined by the overlap. This represents overlap graphs as an abstract assembly graph, denoted $\OG(\reads)$. We also introduce \emph{Raw Overlap Graph}, $\ROG(\reads)$, that uses all reads, including the contained ones (Fig. \ref{fig:assembly-graphs}B). A \emph{string graph}, $\SG(\reads)$, is obtained from $\OG(\reads)$ by removing transitive edges, whose sequences duplicate labels of longer paths sharing the start and end vertex with the edge (Fig. \ref{fig:assembly-graphs}C).

In \textit{de Bruijn graphs} ($\DBG(\reads,k)$) vertices are $k$-mers and edges are $(k+1)$-mers connecting their $k$-mer prefixes and suffixes (Fig. \ref{fig:assembly-graphs}D). Condensing each non-branching path into a single edge yields the \textit{Condensed de Bruijn Graph}, denoted $\CDBG(\reads, k)$ (Fig. \ref{fig:assembly-graphs}E). Both $\DBG(\reads,k)$ and $CDBG(\reads, k)$ have vertex and edge labels that satisfy our assembly graph definition.

\section{Logic of graph-based genome assembly}\label{sec:logic}
\paragraph{Genome candidates.} A read set may not contain enough information to uniquely reconstruct the genome. Thus the correct genome sequence cannot be assumed to always be the output of the genome assembly problem. A common practical solution is to output contigs, which are the longest substrings of the genome confidently inferred from reads. However, contigs are an artificial construct introduced for convenience in downstream analysis, and would not naturally arise in a model aiming to reflect the process of sequencing. Establishing a direct relationship between the input (reads) and the output (contigs) is therefore difficult. To address this, we introduce the notion of a \emph{genome candidate}: formally, a genome candidate is any collection of cyclic strings. Intuitively  we expect that a genome candidate could have generated the given read set through the sequencing process. This allows us to formally separate the assembly task into two steps: (i) characterize all genome candidates consistent with the reads, and (ii) derive contigs as the longest substrings shared by all genome candidates. Formalization of step (ii) is straightforward:

\begin{definition} For a set of genome candidates $\chromosomes$, \emph{optimal assembly} is a substring-free string-set $\OA(\chromosomes)$, which consists of maximal strings that are substrings of every genome candidate (\emph{contigs}).
\end{definition}

Formalizing step (i) requires explicitly stating all conditions that we expect a read set to impose on the properties of a candidate genome.

\paragraph{Deriving assumptions on genome candidates from conventional logic of genome assembly tools.}
An overlap graph is often considered the most faithful representation of a read set because it allows a natural mapping of read chains, collectively containing all reads as substrings, to paths that visit every vertex. In other words, it represents \emph{genome candidates} ('cyclic chains containing all reads') as \emph{genome path candidates} ('circuits visiting every vertex'). We decompose this definition of genome candidates into three simpler assumptions:

\textbf{Chain condition}. Every chromosome in a genome candidate is a label of a cyclic read chain.

Typically, consecutive reads are expected to overlap by at least $k$ bases, where $k$ is a constant chosen based on sequencing depth and read length. To simplify notation, we absorb the parameter $k$ into the alphabet by working over the alphabet of all $k$-mers. An overlap of length at least $1$ in the $k$-mer alphabet corresponds to an overlap of length at least $k$ in the original nucleotide alphabet. For clarity, however, examples are presented in nucleotide form.

\textbf{Topology condition ($\topology$)}. Genome has exactly one cyclic chromosome.

This is just an example of a possible topology constraint that could be used to characterize the structure of genome candidates. Biological context or experiments (such as karyotyping) can provide additional information about the genome organization, including the number and form of chromosomes and their approximate lengths. We will denote all topology information by $\topology$.

\textbf{Containment condition}. Every non-contained read is a substring of the genome candidate.

It suffices to ensure that non-contained reads are substrings of the genome, since all other reads are contained within them and are therefore also genome substrings. This condition is closely related to another type of information that can be used to characterize genome candidates:

\textbf{Multiplicity condition ($\multiplicity$)}. For each string $s$ from a given collection of sequences, the number of occurrences of $s$ in the genome can assume a limited set of values, defined by $\multiplicity(s)$.

Containment is the simplest example of a multiplicity condition, which requires that the multiplicity of each read sequence be at least $1$. However, sequencing reads encode additional information that can also be expressed naturally in the form of multiplicity conditions. Under the assumption of nearly uniform coverage, the multiplicities of substrings can be estimated from read coverage, although such estimates are inherently probabilistic. Because our model is deterministic, we do not attempt to model this estimation process. Instead, we treat multiplicity information as given, via a function $\multiplicity$ that assigns sequence $s$ with the set of allowed occurrences of $s$ in the genome. For example, if a sequence $s$ is known to be present exactly once in the genome (i.e., \emph{unique}), absent from the genome, or present in the genome at least once, then $\multiplicity(s)$ is $\{1\}$, $\{0\}$, or $\mathbb{Z}_+$, respectively.

We acknowledge that including multiplicity conditions as part of the assembly input may appear counterintuitive. In practice, conclusions about sequence multiplicity are often derived during the assembly process itself. For instance, a long edge in a de Bruijn graph may be treated as unique because the probability of an exact long repeat in a genome is low. Similarly, in a string graph edge between two reads $r_1$ and $r_2$ may be discarded if overlap between them is too short, which can be interpreted as assigning multiplicity $0$ to the combined sequence of $r_1$ and $r_2$.

In our framework, we assume that all such probabilistic analyses have already been performed. The focus of the model is therefore not on how multiplicity information is inferred, but on what can be formally deduced from the collection of multiplicity constraints once they are given.

\paragraph{Mathematical model of genome assembly.}
We argue that chain, containment, multiplicity, and topology conditions encompass all information types available for single long-read library assembly (see appendix~\ref{app::hybrid} for discussion). Other sequencing data types may impose different types of constraints on genome candidates, but these lie beyond the scope of this paper. We thus formalize the mathematical model of genome assembly as follows: each instance of the genome assembly problem is defined by a read set $\reads$, a set of multiplicity conditions $\multiplicity$, and a set of topology conditions $\topology$. Genome candidate is defined as a collection of read chain labels (chromosome sequences) that satisfies the multiplicity and topology conditions. The set of all genome candidates is denoted as $\GC(\reads|\multiplicity,\topology)$. The optimal set of contigs is then derived as $\OA(\GC(\reads|\multiplicity,\topology))$. Genome assembly thus reduces to computing this value efficiently in time and space despite the potential infinitude of the set $\GC(\reads|\multiplicity,\topology)$. 

\paragraph{Representing genome assembly as a graph problem.}
The chain condition plays a different role compared to multiplicity and topology conditions. It defines the building blocks of a genome candidate (possible chromosome sequences), and thus determines the space of genome candidates as collections of \emph{chromosome candidates}. By contrast, the multiplicity and topology conditions restrict this space. This distinction is mirrored in graph models: chromosome candidates map to circuits, defining the space of \emph{genome path candidates} as sets of circuits, while the containment condition imposes constraints on the presence of vertices in these paths.

We therefore define a graph-based representation of a genome assembly problem instance as a two-step process. First, the assembly graph must be constructed that fully preserves the space of possible chromosome candidates as circuit labels. Second, multiplicity constraints ($\multiplicity$) must be encoded as vertex and edge occurrence restrictions ($\vertexmultiplicity$). To formalize this, we introduce the following definitions. 

\begin{definition} A \emph{Chromosome Candidate} ($\CC$) is a label of a cyclic read-chain, that consists of reads from a read set $\reads$. We denote the set of all cyclic read-chains as $Chains(\reads)$ and the set of their labels as $\CC(\reads)$.
\end{definition}
\begin{definition} A \emph{Chromosome Path Candidate} ($\CPC$) is a label of a circuit in an assembly graph $\assemblygraph$. We denote the set of circuits in an assembly graph as $Circuits(\assemblygraph)$ and the set of their labels as $\CPC(\assemblygraph)$. Given a function $\vertexmultiplicity: V(\assemblygraph)\cup E(\assemblygraph)\to 2^{\mathbb{N}_0}$, which specifies occurrence constraints on vertices and edges, and a set of topology constraints $\topology$, we define $GPC(\assemblygraph | \vertexmultiplicity, \topology)$ as the set of all genome path candidates, composed of circuits in $\assemblygraph$, that satisfy the multiplicity and topology constraints specified by $\vertexmultiplicity$ and $\topology$.
\end{definition}
\begin{definition} We say that an assembly graph $\assemblygraph$ is a \emph{chromosome candidate-preserving representation} of a read set $\reads$ if $\CC(\reads)=\CPC(\assemblygraph)$. We say that $\assemblygraph$, $\vertexmultiplicity$ provide a \emph{genome candidate-preserving representation} of a read set $\reads$ and multiplicity conditions $\multiplicity$ if $GC(\reads|\multiplicity)=Labels(GPC(\assemblygraph|\vertexmultiplicity)$).
\end{definition}
We do not include topology conditions in the last definition since they are the easiest to represent in a graph form: the number and lengths of chromosomes directly translate into the number and lengths of circuits. Therefore, for any  genome candidate-preserving representation $(\assemblygraph,\vertexmultiplicity)$ of $(\reads, \multiplicity)$, for any set of  topology constraints $\topology$ we have $GC(\reads|\multiplicity, \topology)=GPC(\assemblygraph|\vertexmultiplicity, \topology)$.

\begin{figure}[h]
\centering
\includegraphics[width=0.9\textwidth]{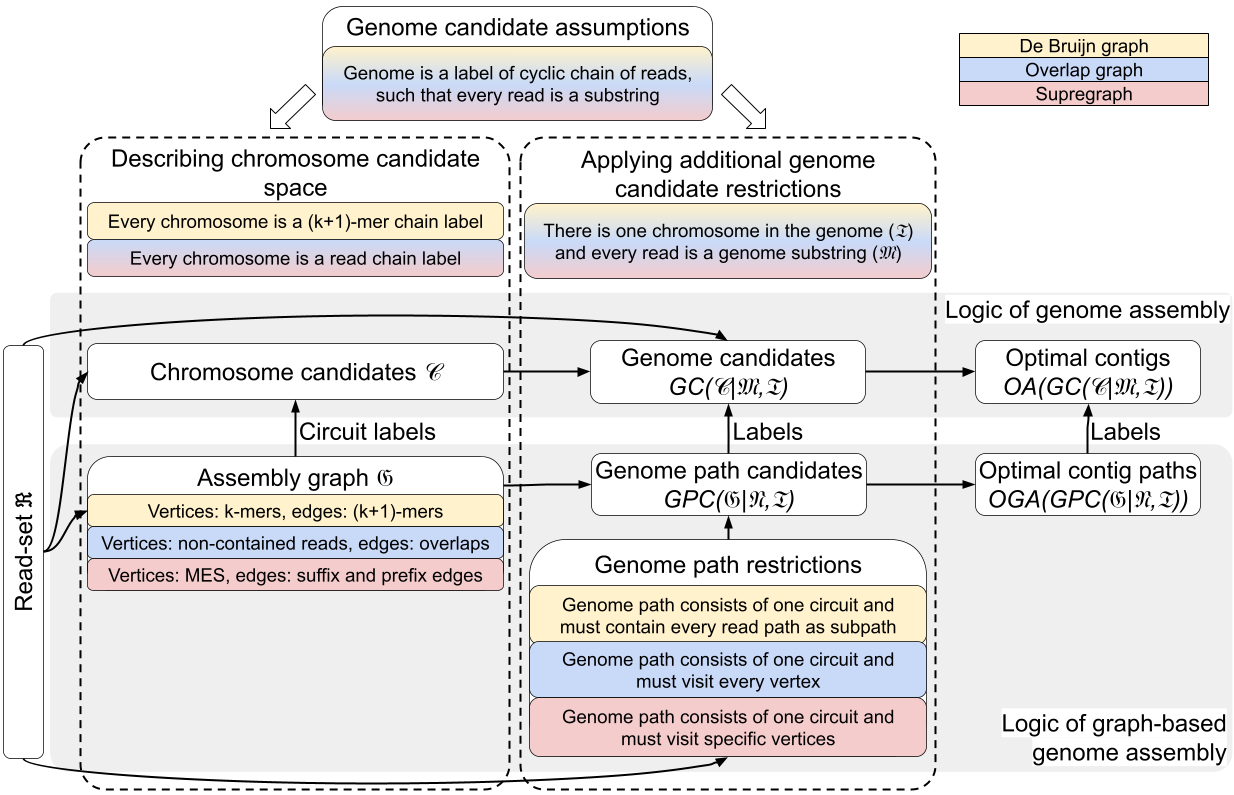}
\caption{\textbf{Logic of graph-based genome assembly}. The general assumptions about genome candidates are divided into two parts: a description of the chromosome-candidate space and additional constraints based on genome topology ($\topology$) and estimated sequence multiplicities ($\multiplicity$). This division is reflected in graph-based assembly by chromosome candidates being represented as circuits in the graph, while the remaining constraints are translated into bounds on vertex occurrences in the genome path ($\vertexmultiplicity$) and restrictions on its topology ($\topology$). Optimal contig paths can then be inferred from the genome-candidate set as the longest walks shared by all genome-path candidates. Variations of details specific to de Bruijn graphs, overlap graphs, and supregraphs are highlighted in yellow, blue, and red correspondingly.}
\label{fig:logic}
\end{figure}

\paragraph{Contig preservation.} The final step of graph-based genome assembly is to extract optimal assembly from graph-based representation of genome candidates. This part can also be done in terms of the assembly graph.

\begin{definition}
    For a collection of genome path candidates $\genomepaths$, \emph{optimal graph-based assembly} is the subwalk-free set of walks $\OGA(\genomepaths)$, consisting of all maximal paths that are subpaths of every genome path candidate.
\end{definition}

Instead of looking for optimal contigs we can look for optimal contig paths, which is a generally simpler task. This concludes the description of graph-based approach to solving the optimal genome assembly problem formulated above. Figure \ref{fig:logic} illustrates how the natural assumptions linking the read set to genome candidates translate into a graph-based pipeline for genome assembly. Practical graph-based genome-assembly tools do not always follow this pipeline explicitly, but the rationale behind their decisions is usually grounded in this underlying logic.

In Figure \ref{fig:logic} the top path from reads to contigs through chromosome and genome candidates represents the mathematical model of genome assembly we introduced. The bottom path from reads to contigs through the assembly graph and genome path candidates represents graph-based approach to genome assembly. For the graph-based approach to function correctly, this diagram must be commutative: the composition of functions along any directed path with the same start and end points must yield the same result.

Two of the three conditions that ensure commutativity ($\CC(\reads)=\CPC(\assemblygraph)$ and $GC(\reads| \multiplicity, \topology)=GPC(\assemblygraph|\vertexmultiplicity, \topology)$) have already been introduced as the chromosome candidate preservation and genome candidate preservation properties, respectively. The remaining relation derived from the diagram is $\OA \circ \Labels = \Labels\circ\OGA$, meaning that for any set of genome path candidates $\genomepaths$, $\OA(\Labels(\genomepaths))=\Labels(\OGA(\genomepaths))$. We refer to this property as \emph{contig preservation}.

These three conditions form the backbone of a formally correct genome assembly and provide us with a problem statement: does a chromosome candidate, genome candidate, and contig-preserving representation exist for each genome assembly problem instance? In the remainder of the paper, we discuss the implications and limitations of these conditions, explain why existing assembly graphs fail to satisfy them, and show how to construct a new type of assembly graph that satisfies them as precisely as possible.

\section{Contig preservation implies properties of assembly graphs}
\label{sec:properties}

\paragraph{Weak contig preservation.}
The contig-preservation condition is surprisingly difficult to satisfy. Assembly graphs direct attention to path labels, whereas optimal contigs may be shorter than even individual vertex labels, preventing the identification of short optimal contigs. We thus restrict our focus to a weaker notion:

\begin{definition}
    An assembly graph satisfies the \emph{weak contig preservation condition} if for any set of genome path candidates $\genomepaths$, $\OA(\Labels(\genomepaths)) \subseteq \Labels(\OGA(\genomepaths)).$
\end{definition}

In other words, in a weakly contig-preserving graph, the labels of optimal contig paths are optimal contigs, although not all optimal contigs may be found in this way.

\paragraph{Conditions for weak contig preservation.}
We introduce two conditions related to weak contig preservation: one necessary and one sufficient. A simple exact criterion of weak contig preservation is difficult to obtain, as some pathological graph structures satisfy it (see appendix~\ref{app:non-redundant}).

\begin{definition}
    A graph is \emph{splitting} if labels of edges that start at the same vertex $v$ (or end at the same vertex $u$) diverge immediately after $v$ (or immediately before $u$).
\end{definition}

By assembly graph definition, labels of all edges outgoing from $v$ share $Label(v)$ as a prefix. In splitting graphs this restriction is stronger: the outgoing edge labels must diverge immediately after that shared prefix. De~Bruijn graphs are splitting; overlap and string graphs are not.

\begin{definition}
    An assembly graph $\assemblygraph$ is \emph{non-redundant} if whenever $P$ and $Q$ are walks in $\assemblygraph$ such that $Label(P)$ is a proper infix of $Label(Q)$, the walk $P$ occurs as an inner subwalk of $Q$.
\end{definition}

This condition prevents the same sequence from labeling distinct walks. Although it could be stated in terms of substrings and subwalks for conventional assembly graphs, the presence of prefix and suffix edges necessitates a formulation using proper infixes and inner subwalks instead (see Appendix~\ref{app:non-redundant}). Non-redundancy is also critical for relating vertex occurrences in a circuit to substring occurrences in its label:

\begin{theorem}\label{thm:nonredcount}
    In a non-redundant graph, for any circuit $C$ and vertex $v$, the number of occurrences of $v$ in $C$ equals the number of occurrences of $Label(v)$ in $Label(C)$.
\end{theorem}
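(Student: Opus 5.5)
The plan is to build an explicit injection from occurrences of $v$ in $C$ to occurrences of $Label(v)$ in $Label(C)$, and then use non-redundancy to show it is surjective. Write $C=(v_0,e_1,v_1,\ldots,e_n,v_n=v_0)$. The chain structure assigns to each vertex occurrence $v_i$ a start position $p_i$ in the cyclic string $Label(C)$, and the substring of length $|v_i|$ starting at $p_i$ is exactly $Label(v_i)$. The map sends an index $i$ with $v_i=v$ to the occurrence of $Label(v)$ at $p_i$. Throughout I would treat cyclic labels through the conventions of Appendix~\ref{app:cyclic-labels}, and I would use primitivity of $C$ to ensure that going once around the circuit corresponds to going once around $Label(C)$.

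\textbf{Injectivity.} Along an edge $e_{i+1}$ from $v_i$ to $v_{i+1}$ we have $p_{i+1}=p_i+|e_{i+1}|-|v_{i+1}|\ge p_i$, so start positions are non-decreasing. Equality holds only when $v_{i+1}=e_{i+1}$, that is, for a prefix edge; then $|v_{i+1}|>|v_i|$, since otherwise the edge would be both a prefix and a suffix edge, which the definition of an assembly graph excludes. Hence two occurrences of the same vertex $v$ (equal lengths) cannot share a start position within one traversal of $C$. Since $|Label(C)|$ equals the total advance of the $p_i$ over one traversal, distinct indices give distinct cyclic positions. This gives $\#_C(v)\le \#_{Label(C)}(Label(v))$.

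\textbf{Surjectivity.} Fix an occurrence of $Label(v)$ in $Label(C)$ at cyclic position $q$. Unroll $C$ into a long walk $C^m$ and choose a subwalk $Q$ whose label contains this occurrence as a proper infix, with nonempty flanks on both sides. Take $Q$ minimal, so that deleting its first or last edge destroys this property. Let $P$ be the trivial walk consisting of $v$ alone. Non-redundancy then makes $P$ an inner subwalk of $Q$, so some inner vertex of $Q$ equals $v$; this gives at least one preimage, but a priori at the wrong position.

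This positional mismatch is the main obstacle: non-redundancy only guarantees that $v$ occurs \emph{somewhere} inside $Q$, not at position $q$. My first attempt is a counting argument rather than a local one. In $C^m$ for large $m$, every occurrence of $Label(v)$ in the interior of $Label(C^m)$ lies in such a window. Combining the minimal window for each string occurrence with injectivity of the position map, I would show that the set of inner $v$-positions and the set of $Label(v)$-occurrences inside the window have the same size. To do this, I would apply non-redundancy to successively shorter windows: cut $Q$ just before the leftmost occurrence, so that it becomes the only proper-infix occurrence available. Then the inner vertex found by non-redundancy must sit at that position, because by injectivity it yields a genuine occurrence of $Label(v)$ inside the window, and there is only one. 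If the window cannot be made to contain a single occurrence (overlapping periodic occurrences of $Label(v)$), I would induct from left to right on occurrences. At each step I would use the monotonicity of the $p_i$ established above to match the $j$-th string occurrence with the $j$-th vertex occurrence. Averaging over $m$ copies then gives equality of the two counts.
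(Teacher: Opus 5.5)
Your injectivity half is sound. The positions $p_i$ are non-decreasing and stay constant only across prefix edges, and prefix edges strictly lengthen the vertex, so two visits of $v$ in one traversal have distinct positions. One caveat: primitivity of $C$ alone does not give that the total advance $d$ equals the period $\pi$ of $Label(C)$, since this fails in redundant graphs. You need non-redundancy here as well. If $\pi<d$, the label of $C$ read once from $v_0$ reoccurs at offset $\pi$ as a proper infix of the label of $CC$. Non-redundancy then makes $C$ an inner subwalk of $CC$, i.e.\ a nontrivial rotation of $C$ equals $C$, which contradicts primitivity.

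The genuine gap is surjectivity, which is the real content of the theorem. After non-redundancy gives you some inner copy of $v$ in $Q$, you treat it as ``a priori at the wrong position'' and try to isolate a single occurrence of $Label(v)$ in the window. That repair does not work as stated, for two reasons:
\begin{itemize}
\item Walks can only be cut at vertex boundaries, so you cannot in general make $q$ the only proper-infix occurrence without already knowing where the visits of $v$ are.
\item Matching the $j$-th string occurrence with the $j$-th vertex occurrence presupposes that the two counts agree inside the window, which is exactly the claim. ``Averaging over $m$ copies'' adds nothing.
\end{itemize}

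The missing observation is that your minimal window already pins the position. End positions are monotone too, since $p_{i+1}+|v_{i+1}|=p_i+|e_{i+1}|\ge p_i+|v_i|$. If $Q$ runs from $v_a$ to $v_b$, minimality gives $p_{a+1}\ge q$ and $p_{b-1}+|v_{b-1}|\le q+|v|$. Hence every inner vertex of $Q$ occupies a subinterval of $[q,q+|v|)$, and an inner vertex equal to $v$ must start exactly at $q$. With this, your argument closes.

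The paper avoids this positional bookkeeping by using non-redundancy contrapositively. It splits $C$ at its visits of $v$ into walks $P_1,\ldots,P_n$ from $v$ to $v$ with no inner $v$, so $Label(v)$ is not a proper infix of any $Label(P_i)$. Consecutive labels overlap in exactly $Label(v)$, so every occurrence of $Label(v)$ lies inside one $Label(P_i)$ as a prefix or suffix, i.e.\ at a visit of $v$. If $v$ is not visited at all, a long power of $C$ handles the case.
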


De~Bruijn graphs, condensed de~Bruijn graphs, and string graphs are non-redundant. Raw overlap graphs are typically redundant because subreads and transitive edges cause the same sequence to label different paths.

The following theorem links these structural properties with weak contig preservation:

\begin{theorem}\label{thm:cprescond}
    Every weakly contig-preserving assembly graph is splitting. Every splitting, non-redundant graph is weakly contig-preserving. 
\end{theorem}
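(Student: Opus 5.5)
We prove the two implications separately.

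\textbf{Weak contig preservation implies splitting.} We argue by contraposition. Suppose $\assemblygraph$ is not splitting. Then there is a vertex $v$ with two distinct outgoing edges $e_1,e_2$ (the incoming case is symmetric) whose labels agree on at least one symbol beyond $Label(v)$, so $Label(v)a$ is a common prefix of both for some symbol $a$. Since $\assemblygraph$ is an assembly graph, every edge lies on a directed cycle, so there are circuits $C_1\ni e_1$ and $C_2\ni e_2$. Take $\genomepaths=\{C_1,C_2\}$, two singleton genome path candidates. Every walk common to both circuits can contain $v$ but cannot be extended past $v$ along a common edge at that occurrence, because $e_1\neq e_2$.

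On the string side, the string $Label(v)a$ is a common substring of both labels. Extend it to a maximal common substring $w$, which is an optimal contig in $\OA(\Labels(\genomepaths))$. The goal is to choose $C_1,C_2$ so that $w$ is not the label of any maximal common walk. The plan is to show that $w$ strictly extends the label of the walk ending at the branching vertex $v$, so that no common walk realizes it.

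This step is the main obstacle. One must rule out that $w$ coincides with the label of some other common walk located elsewhere in the circuits. Our first attempt is to take $C_1,C_2$ as short as possible, namely simple cycles through $e_1$ and $e_2$. We then compare lengths: any common walk containing the occurrence of $v$ has label ending before $a$ on one side, while $w$ contains $Label(v)a$. If a degenerate coincidence occurs elsewhere, we will instead use candidates consisting of longer circuits, or repetitions that traverse $C_1$ with $C_2$, to break it.

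\textbf{Splitting and non-redundant implies weakly contig-preserving.} Let $w\in\OA(\Labels(\genomepaths))$ be a contig. We proceed in three steps.

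\begin{enumerate}
\item \emph{Choose a walk.} Pick a candidate $C$ and an occurrence of $w$ in $Label(C)$. Let $P$ be the minimal subwalk of $C$ whose label contains this occurrence.
\item \emph{Show $P$ is common.} For any other candidate $D$, $w$ is a substring of $Label(D)$. Using the non-redundancy condition, applied to slight extensions so that proper infix relations hold, the walk $P$ occurs as a subwalk of $D$. Hence $P$ is common to all candidates. Conversely, the label of any common walk is a common substring.
\item \emph{Show $Label(P)=w$ and maximality.} We need $Label(P)=w$ exactly, rather than a string properly containing $w$. This is where splitting enters. If $P$ is maximal among common walks but $Label(P)$ extended beyond $w$, then by maximality of $w$ the candidates disagree immediately after $w$. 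In particular two of them leave the last vertex of $P$ along different edges. By splitting, those edges diverge immediately after the vertex label, so $w$ must end exactly at the end of $P$'s final vertex. The same argument applies at the start.
\end{enumerate}

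Maximality of $P$ among common walks then follows: any common extension of $P$ would give a common extension of $w$, contradicting maximality of $w$. Finally, Theorem~\ref{thm:nonredcount}, or directly non-redundancy, ensures that distinct occurrences do not produce spurious walks. Therefore $w\in\Labels(\OGA(\genomepaths))$.
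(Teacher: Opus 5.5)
There is a genuine gap: your proposal proves the wrong inclusion, and under your reading the second claim of the theorem is false.

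\textbf{Which inclusion.} You work with $\OA(\Labels(\genomepaths))\subseteq\Labels(\OGA(\genomepaths))$, exactly as the definition is typeset. The paper's proof establishes the reverse inclusion $\Labels(\OGA(\genomepaths))\subseteq\OA(\Labels(\genomepaths))$, and so do the gloss after the definition (``the labels of optimal contig paths are optimal contigs, although not all optimal contigs may be found in this way'') and Theorem~\ref{thm:limitation}.

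\textbf{Your second part cannot be repaired.} Take two disjoint loops: vertex $\text{AA}$ with edge $\text{AAGAA}$, and vertex $\text{CC}$ with edge $\text{CCGCC}$. This graph is trivially splitting and non-redundant. For $\genomepaths=\{\{C_1\},\{C_2\}\}$ the only optimal contig is $\text{G}$, while $\OGA(\genomepaths)=\varnothing$. The failing step is your step~2. The minimal subwalk $P$ covering an occurrence of $w$ generally has a label strictly longer than $w$, namely whenever $w$ starts or ends inside an edge or vertex label. Knowing only that $w$ occurs in $Label(D)$ then gives non-redundancy nothing to act on. Also, if $P$ were common, $Label(P)=w$ would follow from maximality of $w$ alone, so splitting does no work in your step~3.

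\textbf{What the intended direction needs.} Start from the walk, not the string. Let $P\in\OGA(\genomepaths)$ and suppose $Label(P)$ is a proper prefix of a common substring $s$. In each candidate, non-redundancy turns the occurrence of $Label(P)$ inside $s$ into an occurrence of $P$. Splitting means the symbol of $s$ right after $Label(P)$ singles out one outgoing edge $e$ of the last vertex of $P$, the same edge in every candidate. Hence $Pe$ is common, contradicting maximality of $P$.

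\textbf{Your first part.} It stalls exactly at the obstacle you flag: excluding that $w$ is realized by some other common walk. Under the intended reading that obstacle disappears, and your setup (shortest cycles $C_1,C_2$ through $e_1,e_2$, each passing $v$ once, taken as two singleton candidates) already matches the paper. Let $P$ be the longest common walk through $v$. It must end at $v$, since the two candidates leave $v$ along $e_1\neq e_2$. It lies in $\OGA(\genomepaths)$, because any common extension would also pass through $v$. Yet $Label(P)$ followed by $a$ is a common substring, because both $e_1$ and $e_2$ begin with $Label(v)a$. So $Label(P)\notin\OA(\Labels(\genomepaths))$. You need one bad walk, not a string that no walk realizes.
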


Thus, splitting is necessary for weak contig preservation, and splitting combined with non-redundancy is sufficient. This is not an exact characterization, since some redundant graphs are nevertheless weakly contig-preserving. However, we choose to focus on non-redundant graphs because Theorem~\ref{thm:nonredcount} allows multiplicity conditions to be translated into graph constraints and thereby ensures genome path preservation (Section~\ref{sec:multiplexing}).

\paragraph{Does an assembly graph have to be weakly contig preserving?} String graphs are not splitting, and the labels of optimal contig paths in string graphs require additional post-processing to be converted into optimal contigs. Thus, in practice, it is possible to compensate for the lack of the splitting property. However, we will show that such post-processing is unnecessary, since any non-redundant assembly graph can be transformed into a splitting graph (see section~\ref{sec:supregraph}).

The final theorem of this section shows that inability to reconstruct short contigs that do not contain vertices as substrings is the only limitation of assembly graph-based approach based on weakly contig-preserving graphs.

\begin{definition} For string-sets $S$ and $V$, we define $\Superstrings(S, V)$ as the set of strings from $S$ that contain at least one string from $V$ as a substring.
\end{definition}

\begin{theorem}\label{thm:limitation}
For any non-redundant splitting assembly graph $\assemblygraph$ and a set of genome path candidates $\genomepaths$, $$\Labels(\OGA(\genomepaths))=\Superstrings(\OA(\Labels(\genomepaths)), V(\assemblygraph)).$$.
\end{theorem}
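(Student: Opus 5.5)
We prove the two inclusions separately. Throughout, write $\genomepaths$ for the fixed set of genome path candidates, $\mathcal{S}=\Labels(\genomepaths)$, and $\OA=\OA(\mathcal{S})$.

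\emph{Inclusion $\subseteq$.} Let $W\in\OGA(\genomepaths)$. Since $W$ is a walk, it contains at least one vertex $v$, so $Label(W)$ has $Label(v)\in V(\assemblygraph)$ as a substring.

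\begin{itemize}
\item $Label(W)$ is a substring of every genome candidate, because $W$ is a subwalk of every genome path candidate.
\item It remains to show $Label(W)$ is maximal with this property. This is exactly the weak contig preservation argument in reverse, so we would reuse the ingredients of Theorem~\ref{thm:cprescond}.
\item Suppose $Label(W)$ extends by one character to a string $s'$ that is a substring of every candidate. Take an occurrence of $s'$ in some $Label(C)$, $C\in\genomepaths$.
\item By non-redundancy, used in the form of Theorem~\ref{thm:nonredcount} applied to walks rather than vertices, occurrences of $Label(W)$ in $Label(C)$ correspond to occurrences of $W$ in $C$.
\item The splitting property then forces the extension character to determine a unique next edge $e$ leaving the last vertex of $W$. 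Hence $We$ (after trimming to the next vertex) would be a subwalk of every candidate, contradicting the maximality of $W$.
\end{itemize}

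A subtlety is that $W$ may end at a vertex sitting in the middle of an edge label. The extension character must be read inside the edge following $W$'s last vertex, and splitting says distinct outgoing edges already differ at that first character beyond $Label(v)$. So we get $Label(W)\in\OA$ and hence $Label(W)\in\Superstrings(\OA,V(\assemblygraph))$.

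\emph{Inclusion $\supseteq$.} Let $c\in\OA$ contain $Label(v)$ for some vertex $v$.

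\begin{itemize}
\item Each $C\in\genomepaths$ contains $c$. Fix one occurrence and take the minimal walk $W_C$ in $C$ whose label covers that occurrence, trimmed so that it starts and ends at vertices lying inside $c$.
\item Concretely, take the maximal subwalk of $C$ whose label is a substring of that occurrence of $c$. It is nonempty because $c$ contains a vertex label.
\item Non-redundancy identifies this walk: any two walks whose labels are the same string, or a proper infix of a common label, must coincide as subwalks. So $W_C=W$ is independent of $C$, and $W$ is a common subwalk of all candidates.
\item By the $\subseteq$ direction applied to a maximal extension $W'\supseteq W$ in $\OGA(\genomepaths)$, $Label(W')$ is an optimal contig containing $Label(W)$.
\end{itemize}

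The remaining claim is $Label(W')=c$. Both are optimal contigs sharing the common substring $Label(W)$, which contains a vertex label. We need a uniqueness statement: two members of a substring-free set $\OA$ cannot both contain the same anchored occurrence unless they are equal.

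I expect this to be the main obstacle. Maximal common substrings of a set of strings can overlap, so uniqueness is not automatic. I would establish it by showing that $Label(W)$ extends, in every candidate simultaneously and character by character, in a forced way. Splitting together with non-redundancy guarantee that the next character determines the next edge, so extensions of $Label(W)$ correspond bijectively to extensions of $W$. Hence the maximal common extension is unique, namely $Label(W')$, and it must equal $c$.

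A second delicate point is the bookkeeping at the ends of $c$, where partial edges occur. Here I would use that $c$ lies between the last full vertex occurrences, and that proper-infix non-redundancy handles prefix and suffix edges.
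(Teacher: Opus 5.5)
\textbf{Verdict: genuine gap in the $\supseteq$ direction.} Your $\subseteq$ direction is fine; it is the argument of Theorem~\ref{thm:cprescond}, which you can simply cite. Your common walk $W$ in the $\supseteq$ direction is essentially the paper's ``longest walk whose label is a substring of $c$''. The problem is the last step, $Label(W')=c$.

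You pass to an arbitrary $W'\in\OGA(\genomepaths)$ containing $W$. The $\subseteq$ direction then only gives $Label(W')\in\OA(\Labels(\genomepaths))$. It says nothing about where $W'$ sits relative to the occurrences of $c$. In general, two optimal contigs that share a substring containing a vertex label need not coincide. For example, suppose that in every candidate the last vertex of a common walk $W$ is followed by $e_1$ at one occurrence and by $e_2\neq e_1$ at another. Then $We_1$ and $We_2$ are both common, and they can extend to different optimal contigs that both contain $Label(W)$.

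Your uniqueness argument (``the next character determines the next edge, so extensions of $Label(W)$ correspond bijectively to extensions of $W$'') uses neither the maximality of $W$ inside $c$ nor the maximality of $c$. So it would equally ``prove'' the false general statement above. Uniqueness holds only for extensions anchored at the occurrences of $c$, and an arbitrary $W'$ need not be anchored there. Likewise, you cannot assume that $c$ lies between full vertex occurrences, i.e.\ that $c$ does not end strictly inside an edge. That is precisely what has to be shown.

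\textbf{The missing idea.} Extend $W$ along $c$ itself and use a two-case contradiction. This gives $Label(W)=c$ directly and makes $W'$ unnecessary.
\begin{itemize}
\item Suppose $c$ extends beyond $Label(W)$, say to the right. At the fixed occurrence of $c$ in each candidate, $W$ occurs aligned with $c$.
\item By splitting, the next character of $c$ determines the same outgoing edge $e$ of the last vertex of $W$ in every candidate. So $We$ occurs in every candidate at the same offset relative to $c$.
\item If $Label(We)$ stays inside $c$, this contradicts the choice of $W$ as maximal among walks whose label lies in that occurrence of $c$.
\item If $Label(We)$ protrudes past the end of $c$, merge $c$ and $Label(We)$ along this alignment. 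The result is a substring of every candidate label that is strictly longer than $c$, contradicting $c\in\OA(\Labels(\genomepaths))$.
\end{itemize}
Hence $Label(W)=c$. Now any common walk properly containing $W$ has a label containing $c$. By maximality of $c$ that label equals $c$, so the walk lies inside $c$, contradicting the choice of $W$. Therefore $W\in\OGA(\genomepaths)$. This is exactly how the paper's proof closes the argument.
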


From this theorem, we can conclude that a non-redundant splitting graph becomes contig-preserving if each optimal contig contains at least one vertex label as a substring. Theorem \ref{thm:limitation} is a natural limitation of graph-based approaches.

This section introduced key properties of assembly graphs that substantially reduce the search space for chromosome candidate, genome candidate, and contig-preserving representation of a genome assembly problem instance. In the next section, we begin constructing such a representation, incorporating the insights developed here.

\section{Chromosome candidate preservation and conductor strings.} \label{sec:conductors}
In this section, we address the question of whether cyclic read-chain labels can be perfectly represented as labels of graph circuits: $\CC(\reads)=\CPC(\assemblygraph)$. We begin by examining this property in conventional assembly graph models:

\begin{theorem}\label{thm:conventional}
For read set $\reads$
$$\CPC(\OG(\reads))=\CPC(\SG(\reads))\subseteq\CPC(\ROG(\reads))=\CC(\reads)\subseteq\CPC(\DBG(\reads,k)).$$ Moreover, each of these inclusions is strict for some read sets.
\end{theorem}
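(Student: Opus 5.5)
We prove the chain of relations link by link, treating the equalities by direct correspondence of objects and the inclusions by a lifting or projection argument. Throughout we work in the $k$-mer alphabet, so overlaps in read chains have length at least $1$. Circuits in an abstract assembly graph correspond to cyclic chains of edge labels, with overlaps given by vertex labels.

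\textbf{$\CPC(\ROG(\reads))=\CC(\reads)$.} In $\ROG(\reads)$ the vertices are all reads and an edge $r\to r'$ carries the combined sequence of $r$ and $r'$ joined by a given overlap of length at least $1$. A circuit $r_1\to r_2\to\dots\to r_n=r_1$ therefore has exactly the same label as the cyclic read chain $r_1,\dots,r_n$ with the same overlaps, since merging edge labels along vertex overlaps reproduces merging reads along their overlaps. Conversely, every cyclic read chain gives such a circuit, because each consecutive overlap is an edge. One caveat: we must check the convention that the edge set contains one edge per admissible overlap length, and that any degenerate edges removed when passing to an assembly graph do not lie on cycles that matter.

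\textbf{$\CPC(\OG(\reads))=\CPC(\SG(\reads))\subseteq\CPC(\ROG(\reads))$.} The inclusion holds because $\OG(\reads)$ is the subgraph of $\ROG(\reads)$ induced on non-contained reads. For the equality, $\SG\subseteq\OG$ gives one direction. For the other, each transitive edge $u\to w$ removed from $\OG$ has the same label as a longer path $u\to\dots\to w$ in $\SG$, and the label of a circuit is unchanged when an edge is replaced by a walk of equal label with the same endpoints. We then induct on the order in which transitive edges are removed, using a well-founded measure such as overlap length: replacement paths use edges with larger overlaps, so the replacement process terminates.

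\textbf{$\CC(\reads)\subseteq\CPC(\DBG(\reads,k))$.} Take a cyclic read chain with overlaps of length at least $k$ in the original alphabet. Every $(k+1)$-mer of its cyclic label lies inside a single read or inside an overlap-spanning window. A window of length $k+1$ that straddles a junction with overlap at least $k$ is contained in one of the two reads, so every $(k+1)$-mer of the label is a $(k+1)$-mer of some read. Hence the sequence of consecutive $(k+1)$-mers of the label is a circuit in $\DBG$ with that label. This circuit may be non-primitive if the label is periodic, so we take the primitive root; since labels of cyclic strings are considered up to this convention, this should not cause a problem.

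\textbf{Strictness.} We need three small examples. For $\OG$ versus $\ROG$, use a contained read that permits a shortcut. Take reads $ab\,x$ and $x\,cd$ together with a contained read $y$, where $y$ is a substring of a non-contained read, such that a cyclic chain passes through $y$ using overlaps available only via $y$. The paper's Figure~1 (GCGA) suggests such a case, in which the cyclic label through the contained read has no counterpart among circuits of the overlap graph. For $\CC$ versus $\DBG$, take reads whose $(k+1)$-mers glue into a cycle that no read chain realizes, e.g.\ two reads sharing a $k$-mer repeat so that the de Bruijn graph admits a cycle switching between them with insufficient read overlap. The main obstacle is constructing the contained-read example and verifying that no circuit in $\OG$ produces the same cyclic label, including cyclic shifts and traversals that reuse reads. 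I would handle this by choosing a tiny alphabet and enumerating all circuits of $\OG$ explicitly.
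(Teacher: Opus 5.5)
Your argument for the chain of equalities and inclusions is the paper's argument. Like the paper, you use:
\begin{itemize}
\item the label-preserving bijection between circuits of $\ROG(\reads)$ and cyclic read chains;
\item $\OG(\reads)$ as a subgraph of $\ROG(\reads)$;
\item replacement of transitive edges by walks with the same label;
\item covering each $(k+1)$-mer of a $k$-chain label by a single read.
\end{itemize}
Your overlap-length induction tightens the paper's one-line claim about transitive edges. It is needed because the replacement walk lives in $\OG(\reads)$, not yet in $\SG(\reads)$. It works because each intermediate read of that walk is non-contained, so it starts strictly before $w$ and ends strictly after $u$. Hence every edge of the walk has overlap strictly larger than that of $u\to w$.

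What is missing is strictness, which you leave as a sketch; note there are only two strict inclusions to witness, not three. The paper's witnesses are given explicitly in Section~4:
\begin{itemize}
\item the chain CAGCG--GCGA--CGACAC--CACGAT--CGATTAC--TACAGC, with label $\langle$CAGCGACACGATTA$\rangle$, which is lost from $\OG(\reads)$ and $\SG(\reads)$ because GCGA is discarded as contained;
\item the de Bruijn circuit $\langle$GACAGC$\rangle$ for $k=3$, which is not a read-chain label.
\end{itemize}

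The verification you flag as the main obstacle needs no enumeration of circuits. Your own window observation shows that in any chain realizing a label $L$, each occurrence of a $(k+1)$-mer in $L$ lies inside a single read of that chain, and that read is a substring of $L$. So a witness reduces to one $4$-mer occurrence in $L$ that no admissible read which is a substring of $L$ covers:
\begin{itemize}
\item For $\langle$GACAGC$\rangle$, take GACA. The read CGACAC that contributes it to the de Bruijn graph is not a substring of this label.
\item For $\langle$CAGCGACACGATTA$\rangle$, take the occurrence of GCGA at the junction of CAGCG and CGACAC, which overlap in only two letters. You then only need to check that no non-contained read fits over that occurrence. This is a finite, local check on the read set.
\end{itemize}
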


Figure~\ref{fig:assembly-graphs} illustrates strict inclusions. The cyclic chain CAGCG \text{-} GCGA \text{-} CGACAC \text{-} CACGAT \text{-} CGATTAC \text{-} TACAGC has label $\langle\text{CAGCGACACGATTA}\rangle$, which does not appear in the string graph in Fig.~\ref{fig:assembly-graphs}C because the read GCGA was removed as contained. The de~Bruijn graph in
Fig.~\ref{fig:assembly-graphs}D includes a circuit labeled with
$\langle\text{GACAGC}\rangle$ that is not the label of any read chain. Consequently, neither the string graph nor the de~Bruijn graph provides a
chromosome-candidate-preserving representation for the read set in Fig.~\ref{fig:assembly-graphs}A. String graphs fail this property specifically due to artificial gaps introduced by contained-read filtering. The raw overlap graph is
chromosome-preserving but remains highly redundant and not splitting.

\paragraph{Existence of a chromosome candidate-preserving representation.}
To examine whether a chromosome candidate-preserving representation exists, we consider the following problem. Let $\chromosomes$ denote an arbitrary cyclic string-set. Does there exist a non-redundant assembly graph $\assemblygraph$ such that $\chromosomes = \CPC(\assemblygraph)$? In this formulation, we temporarily disregard the expectation that the chromosome candidate set is constructed as the labels of read chains and instead treat it as an unconstrained set of cyclic strings.

The answer is trivially positive: one can always construct such a graph by representing each cyclic string from $\chromosomes$ as an isolated loop. The essential question is whether there exists a finite graph with this property, since the set $\chromosomes$ is generally infinite. As shown below, this question has a simple and explicit characterization based on the following string property, which will become fundamental for further discussion.

\begin{definition}
A string $v$ is called a \emph{conductor} for a cyclic string-set $\chromosomes$ if, for any pair of strings $va, vb$, that both start and finish with $v$ ($va=a'v$, $vb=b'v$), $\langle ab \rangle \in \chromosomes$ if and only if both $\langle a\rangle\in\chromosomes$ and $\langle b\rangle\in\chromosomes$.
\end{definition}

Although this definition may appear puzzling for an arbitrary chromosome-set $\chromosomes$, it becomes natural when $\chromosomes$ consists of circuit labels of an assembly graph. Indeed, in a non-redundant graph, any two walks that start and end at the same vertex $v$, with labels $va$ and $vb$, can be concatenated into a single walk labeled $vab$ that also starts and ends at $v$. The corresponding circuits then have cyclic labels $\langle a\rangle$ and $\langle b\rangle$ for the original walks, and $\langle ab\rangle$ for the concatenated one. Hence, if $\chromosomes = \CPC(\assemblygraph)$, every vertex label satisfies the conductor condition. The following theorem formalizes and expands this intuition.

\begin{theorem} \label{thm:conductors}
For any non-redundant assembly graph $\assemblygraph$, the following strings are conductors for $\chromosomes = \CPC(\assemblygraph)$:
(i) any superstring of a conductor;
(ii) any vertex, edge, or walk label in $\assemblygraph$;
(iii) any string that is not a substring of $\chromosomes$;
(iv) any string that is at least as long as the longest edge label in $\assemblygraph$.
\end{theorem}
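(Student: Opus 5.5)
I would prove the four parts in the order (iii), (i), (ii), (iv). Parts (ii) and (iv) will reduce to (i) together with one core fact: in a non-redundant graph, every walk label is a conductor.

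\textbf{Part (iii).} Let $v$ not be a substring of $\chromosomes$. If $va=a'v$, then $v$ occurs in every sufficiently long unrolling of $\langle a\rangle$. Hence $\langle a\rangle\notin\chromosomes$, and likewise $\langle b\rangle$ and $\langle ab\rangle$ are not in $\chromosomes$. Both sides of the equivalence are false, so $v$ is a conductor.

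\textbf{Part (i).} Let $v$ be a conductor and $w=xvy$ a superstring. Given $wa=a'w$ and $wb=b'w$, rewrite these as strings starting and ending with $v$. Let $\tilde a$ be the cyclic shift of $a$ obtained by moving the prefix $y$ to the end, and define $\tilde b$ the same way, so that $v\tilde a$ starts and ends with $v$. Then $\langle\tilde a\rangle=\langle a\rangle$, and $\langle\tilde a\tilde b\rangle=\langle ab\rangle$ because the shifts are compatible: $w$ is a prefix of both $wa$ and $wb$. The conductor property of $v$ then transfers verbatim to $w$. I expect the index bookkeeping with cyclic strings, especially when $|a|<|w|$, to be the fiddly part. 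I would handle it by working with the infinite periodic strings.

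\textbf{Part (ii), the core step.} Take a vertex $v$ and strings $va$, $vb$ that start and end with $v$.
\begin{itemize}
\item The direction ($\Leftarrow$) follows from the intuition given before the theorem, made precise. If $\langle a\rangle$ is the label of a circuit $C$, then $Label(v)$ occurs in $Label(C)$. By Theorem~\ref{thm:nonredcount}, $C$ passes through $v$, so $C$ rotated to start at $v$ is a closed walk labeled $va$. The same holds for $b$, and concatenating the two closed walks gives a closed walk labeled $vab$, whose circuit has label $\langle ab\rangle$.
\item The direction ($\Rightarrow$) is the main obstacle. Let $\langle ab\rangle=Label(C)$. The occurrence of $v$ at the junction between $a$ and $b$ must correspond to a visit of $C$ to vertex $v$, and the occurrence at the start must as well. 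To see this, I would unroll $C$ into a long walk and apply non-redundancy: the walk consisting of the single vertex $v$ has label a proper infix of the long walk label, so it appears as an inner subwalk at that position. Aligning positions needs care, since non-redundancy only guarantees the walk occurs somewhere. I would apply Theorem~\ref{thm:nonredcount} to count occurrences. The number of occurrences of $v$ in $\langle ab\rangle$ is at least the sum of its counts in $\langle a\rangle$ and $\langle b\rangle$ plus the junction terms. A position-matching argument through non-redundancy, applied to longer walks extending $v$, then shows that every textual occurrence is a vertex visit. Cutting $C$ at the two visits yields closed walks labeled $va$ and $vb$, so $\langle a\rangle,\langle b\rangle\in\chromosomes$.
\item Edges and walks contain a vertex label as a substring, so they are conductors by (i).
\end{itemize}

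\textbf{Part (iv).} Take $w$ at least as long as the longest edge. If $w$ is not a substring of $\chromosomes$, (iii) applies. Otherwise $w$ occurs in some circuit label. A window of length at least the maximal edge length in a walk label must contain a full vertex label, since consecutive vertex occurrences lie within a single edge label. Hence $w$ contains a vertex label, and (i) together with (ii) finishes the proof.
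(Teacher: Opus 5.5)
Your proposal is correct and follows the paper's proof essentially step for step. (iii) holds because both sides of the equivalence are false. (i) uses the cyclic shift with $ya=\tilde a y$ and $\langle ab\rangle=\langle\tilde a\tilde b\rangle$; note that this shift moves the suffix $y$ of $a$ to the front, not the prefix to the end. (ii) is proved for vertices by cutting and concatenating closed walks at visits of $v$, then extended to edges and walks via (i). (iv) follows by locating a full vertex label inside any sufficiently long window.

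The ($\Rightarrow$) half of (ii) is no harder than ($\Leftarrow$). Distinct visits of $v$ in $C$ give distinct occurrences of $Label(v)$ in $Label(C)$, and Theorem~\ref{thm:nonredcount} equates the counts, so the occurrences at the start and at the junction are visits; this is all the paper uses. Your count comparison with $\langle a\rangle$ and $\langle b\rangle$ is therefore unnecessary, and the extra ``junction terms'' in it are already counted.
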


This theorem provides us with an important necessary condition: if a chromosome-set $\chromosomes$ can be represented in the form of circuit labels of a finite assembly graph, then every sufficiently long string is a conductor with respect to $\chromosomes$. This sufficient condition is also necessary:

\begin{theorem}\label{thm:criterion}
If there exists $K$, such that any string of length $K$ is a conductor with respect to chromosome-set $\chromosomes$, then $\CPC(DBG(\chromosomes, K))=\chromosomes$.
\end{theorem}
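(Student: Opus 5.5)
We prove the two inclusions $\chromosomes\subseteq\CPC(\DBG(\chromosomes,K))$ and $\CPC(\DBG(\chromosomes,K))\subseteq\chromosomes$ separately. Here $\DBG(\chromosomes,K)$ is the de Bruijn graph whose vertices are the $K$-mers occurring in $\chromosomes$ and whose edges are the $(K+1)$-mers occurring in $\chromosomes$, with cyclic strings read cyclically.

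\textbf{First inclusion.} Take $\langle c\rangle\in\chromosomes$. Reading its consecutive $(K+1)$-mers cyclically gives a closed walk in $\DBG(\chromosomes,K)$ whose label is $\langle c\rangle$. Two technicalities need care. If $|c|<K$, we read $K$-mers around the cycle with wraparound, in the sense of the cyclic-label conventions of Appendix~\ref{app:cyclic-labels}. If $\langle c\rangle$ is not primitive, the walk is a power of a shorter circuit, so we must handle it through the paper's convention that only primitive circuits are considered. I expect both points to be absorbed by those conventions, and I will simply check that they are.

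\textbf{Second inclusion.} This is the substantive direction, and I will prove it by induction on the number of occurrences of a repeated vertex in a circuit. Let $C$ be a circuit in $\DBG(\chromosomes,K)$. Each edge of $C$ is a $(K+1)$-mer occurring in some chromosome of $\chromosomes$.

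\textbf{Base step: each edge lies on a circuit of $\chromosomes$.} Since the edge $e=xy$ (with $x,y$ characters) occurs in some $\langle c_e\rangle\in\chromosomes$, the $K$-mer $v$ formed by the first $K$ characters of $e$ also occurs there. Cutting $\langle c_e\rangle$ at that occurrence of $v$ yields a walk from $v$ back to $v$ that traverses $e$ and whose cyclic label is $\langle c_e\rangle\in\chromosomes$.

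\textbf{Inductive step: gluing with the conductor property.} Write $C$ as a closed walk $v_0\to v_1\to\dots\to v_n=v_0$. My plan is to avoid stitching walks edge by edge, since splicing chromosome walks at intermediate vertices introduces extra loops. Instead I want a statement of the form: every closed walk in the de Bruijn graph, read as a string $vw$ with $vw=w'v$, satisfies $\langle w\rangle\in\chromosomes$. I would prove this by induction on $|w|$, using a "detour" argument.
\begin{enumerate}
\item For the current edge $v_i\to v_{i+1}$, take a chromosome $\langle c\rangle$ containing that $(K+1)$-mer. Cut $c$ at $v_i$ to obtain a string $v_i a$ ending in $v_i$, with $\langle a\rangle\in\chromosomes$, such that $v_i a$ passes through $v_{i+1}$ right after $v_i$.
\item Let $P$ be the part of $C$ from $v_{i+1}$ back to $v_i$. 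Relate $C$ to the closed walk $P'$ obtained by following $a$ from $v_{i+1}$ around to $v_i$.
\item Apply the conductor property at $v_i$ to combine $a$ with the loops, and at $v_{i+1}$ to split them.
\end{enumerate}
Concretely, at $v_i$ the conductor property gives $\langle a\cdot x\rangle\in\chromosomes\iff\langle a\rangle,\langle x\rangle\in\chromosomes$. I would use it in the "splitting" direction to cancel the detour and in the "joining" direction to add loops.

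A cleaner formulation that I would actually pursue is to define $\mathcal W$ as the set of closed walks at a vertex $v$ whose labels are in $\chromosomes$. By the conductor property at $v$, $\mathcal W$ is closed under concatenation, and $\mathcal W$ is closed under the cancellation $xy,\,x\in\mathcal W\Rightarrow y\in\mathcal W$. The goal is then to show that every closed walk at $v$ lies in $\mathcal W$, i.e.\ that $\mathcal W$ is the whole free monoid of closed walks at $v$.

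\textbf{Main obstacle.} The hardest part is this cancellation and free-monoid argument in the presence of vertices visited many times and of loops shorter than $K$. I expect to handle it by strong induction on walk length, showing that any closed walk $W$ at $v$ is a "quotient" of products of edge-witness circuits. Applying the conductor property at each vertex $v_i$ of $W$ lets me replace the segment of a witness chromosome by the corresponding segment of $W$.
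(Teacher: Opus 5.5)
\textbf{Verdict: there is a genuine gap.} The second inclusion, which is the whole content of the theorem, is not proved; the ``main obstacle'' you name is the statement itself. Your first inclusion is fine and is essentially the paper's: it obtains it from Theorem~\ref{thm:conventional} together with $\CC(\chromosomes)=\chromosomes$.

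Neither of your two devices makes progress on the second inclusion.
\begin{itemize}
\item The detour at one edge $e=v_i\to v_{i+1}$ gives no reduction. Write $C=eP$ and the witness loop as $eP'$. The conductor property at $v_i$ only says $\langle eP\,eP'\rangle\in\chromosomes$ iff both $\langle eP\rangle,\langle eP'\rangle\in\chromosomes$. The conductor property at $v_{i+1}$ gives the same equivalence after rotation. The left-hand side is just as unknown as $Label(C)$, so nothing gets shorter.
\item The monoid $\mathcal W$ at a fixed base vertex $v$ cannot succeed on its own. Closed walks at $v$ are freely generated by first-return loops at $v$. Joining and splitting at $v$ can only certify first-return loops that already occur as factors of closed walks known to lie in $\mathcal W$. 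Whether a first-return loop has its label in $\chromosomes$ is governed by the conductor property at the \emph{other} vertices it visits, which this formulation never uses.
\end{itemize}

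The missing idea is to change the invariant from closed walks to open walks. Prove by induction on length that the label of every walk in $\DBG(\chromosomes,K)$ is a substring of $\chromosomes$.
\begin{itemize}
\item Extending a walk with label $a_1v$ by an edge with label $vb_2$ uses Lemma~\ref{lm:switch}. Cut the witnessing chromosomes $\langle a_1vt_1\rangle$ and $\langle vb_2t_2\rangle$ at these occurrences of the $K$-mer $v$. Join them by the conductor property at $v$. The result $\langle t_1a_1vb_2t_2v\rangle\in\chromosomes$ contains $a_1vb_2$.
\item To finish, cut a circuit $C$ at a vertex $v$. This gives a substring $va=a'v$ of some $\langle vat\rangle\in\chromosomes$. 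Apply the splitting direction of the conductor property to $va$ and $vtv$ (Lemma~\ref{lm:circularization}) to get $Label(C)=\langle a\rangle\in\chromosomes$.
\end{itemize}

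This is exactly the edge-by-edge stitching you chose to avoid. The ``extra loops'' it introduces are harmless, because only substring membership is tracked, and the final splitting step discards them. Your closing sentence, about replacing witness segments using the conductor property at each $v_i$, points in this direction. Without this invariant and the circularization step, however, the argument does not go through.
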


Combining theorems \ref{thm:conductors} and \ref{thm:criterion} allows us to formulate the criterion:

\begin{criterion}\label{crit:1}
A chromosome-set can be represented as the set of circuits in a finite non-redundant assembly graph if and only if it is \emph{finitely conducting} (if all sufficiently long strings are conductors with respect to $\chromosomes$).
\end{criterion}

We apply this criterion to a set of read chain labels.

\begin{theorem}\label{thm:readscond}
If $\reads$ is a read set with maximal read length $L$, then every string of length $2L$ and every non-contained read is a conductor with respect to $\CC(\reads)$.
\end{theorem}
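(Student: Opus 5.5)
The plan is to prove a single \emph{splicing lemma} and derive both directions of the conductor condition from it. Work with unrolled cyclic read chains, so each read has a start and end position in the label, and consecutive reads share at least one position (we are over the $k$-mer alphabet). Fix an occurrence of $v$ in the label of a cyclic chain $A$ and an occurrence of $v$ in the label of a cyclic chain $B$; these may be the same chain at two different occurrences. The lemma should say that there is a read chain whose reads are, in order:
\begin{itemize}
\item all reads of $A$ up to some read $r_A$ meeting the occurrence of $v$;
\item possibly one or two bridging reads contained in that occurrence of $v$, or equal to it;
\item all reads of $B$ from some read $r_B$ meeting the occurrence of $v$ onward.
\end{itemize}
The overlaps are those induced by identifying the two copies of $v$. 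The label of this chain is therefore ``the label of $A$ up to $v$, followed by the label of $B$ from $v$''. Applying the lemma twice, once at each of the two occurrences, gives the two directions. For the ``if'' direction, splice the chains for $\langle a\rangle$ and $\langle b\rangle$ in both directions to get a cyclic chain labelled $\langle ab\rangle$. For the ``only if'' direction, unroll the chain for $\langle ab\rangle$ as $vabv$-style positions. At the two occurrences of $v$ (at the start of $a$ and at the start of $b$), splice the chain to itself crosswise, which closes it into two cyclic chains with labels $\langle a\rangle$ and $\langle b\rangle$.

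\textbf{Case $|v|=2L$.} Index the occurrence of $v$ by positions $0,\dots,2L-1$. In $A$, let $r_A$ be the last read starting at a position $<L$. Since consecutive reads overlap, such a read exists, and starts are increasing with gaps at most $L-1$. Because reads have length at most $L$, $r_A$ lies inside $v$. Its successor starts at a position $\ge L$ and overlaps $r_A$, so $r_A$ ends at a position $\ge L$. Thus $r_A$ covers positions $L-1$ and $L$.

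In $B$, let $r_B$ be the first read starting at a position $\ge L$, and let $p_B$ be its predecessor. By the same reasoning, $p_B$ starts before $L$ and ends at a position $\ge L$, so it is contained in $v$ and also covers positions $L-1,L$. The spliced chain is then: $A$ up to $r_A$, then $p_B$, then $B$ from $r_B$ on. The overlap of $r_A$ and $p_B$ is at least $2$ and is consistent with the letters, since both reads lie in the common string $v$.

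\textbf{Case $v$ a non-contained read.} Insert $v$ itself as the bridge. In $A$, take the last read $r$ starting at a position $\le 0$ of the occurrence. Its successor starts at a position $>0$ and overlaps $r$, so $r$ covers position $0$ and overlaps $v$. In $B$, take the first read $r'$ ending at a position $\ge |v|-1$. Its predecessor ends earlier, and the overlap forces $r'$ to start at a position $\le |v|-1$, so $r'$ overlaps $v$. The chain is $A$ up to $r$, then $v$, then $B$ from $r'$. If $r$ or $r'$ itself spans all of $v$, then it contains $v$, hence equals $v$ by non-containment; in that case we simply do not duplicate it.

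\textbf{Main obstacle.} I expect the delicate part to be bookkeeping rather than the combinatorial core. First, one must check that the spliced object, read cyclically, really has label $\langle ab\rangle$ (respectively $\langle a\rangle$ and $\langle b\rangle$) and not a rotation or a different overlap. Second, primitivity must be handled: labels such as $\langle ab\rangle$ may need to be read through the paper's conventions for cyclic strings, since $ab$ could be a power. Third, the degenerate case must be covered in which the two occurrences of $v$ overlap in the self-splice. For that case I would first argue, as in the definition of a conductor, that the occurrences at the starts of $a$ and $b$ are disjoint modulo the period, and otherwise reduce to the single-occurrence situation.
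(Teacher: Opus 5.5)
Your non-contained-read case is essentially the paper's argument. Its chain-prefix lemma cuts a chain at the first read that reaches the start of the occurrence and appends the read itself; non-containment rules out a read spanning the occurrence. It then glues two chains that begin and end with $r$.

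For $|v|=2L$ you take a genuinely different route, and it is the more robust one. The paper does not splice in this case. It claims that every length-$2L$ substring of a chain label contains a non-contained read, and then uses the facts that superstrings of conductors and non-substrings are conductors. That claim fails when contained reads chain among themselves. For $\reads=\{ABC,CBA,AB,BA\}$ (so $L=3$), the cyclic chain $AB,BA,AB$ has label $\langle AB\rangle$. So $ABABAB$ is a substring of $\CC(\reads)$, yet it contains neither $ABC$ nor $CBA$. Your midpoint splice only needs, in each chain, some read lying inside $v$ across its middle, and such a read always exists. So it covers exactly the case the paper's reduction misses.

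The splice itself, however, has a step that fails as written. Write $[\mathrm{st}(x),\mathrm{en}(x)]$ for the positions of a read $x$. The pair $r_A,p_B$ is a chain link only if $\mathrm{st}(r_A)\le\mathrm{st}(p_B)$ and $\mathrm{en}(r_A)\le\mathrm{en}(p_B)$, because the overlap must be a suffix of $r_A$ and a prefix of $p_B$. Both reads lying in $v$ and covering positions $L-1,L$ does not give this. With $L=4$ you can have $r_A$ on $[2,5]$, $p_B$ on $[3,4]$ and $r_B$ on $[4,7]$; then no overlap of $r_A$ with $p_B$ reproduces the letters of $v$. The reverse nesting and the reversed order can occur as well.

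The repair is local. Let $y$ be the first read of $B$ with $\mathrm{en}(y)>\mathrm{en}(r_A)$. Its predecessor ends by $\mathrm{en}(r_A)$ and overlaps $y$, so $\mathrm{st}(y)\le\mathrm{en}(r_A)$.
\begin{itemize}
\item If $\mathrm{st}(y)>\mathrm{st}(r_A)$, link $r_A\to y$.
\item Otherwise $y$ strictly contains $r_A$. Then $\mathrm{en}(y)\ge L+1$, $\mathrm{st}(y)\le L-1$ and $|y|\le L$ put $y$ inside $[2,2L-2]$. Link to $y$ instead the last read $x$ of $A$ with $\mathrm{st}(x)<\mathrm{st}(y)$. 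Its successor overlaps it, so $\mathrm{en}(x)\ge\mathrm{st}(y)$. Since $x$ precedes $r_A$, also $\mathrm{en}(x)<\mathrm{en}(r_A)<\mathrm{en}(y)$.
\end{itemize}
In both cases the overlap region lies inside $v$, which is all that letter consistency requires; the reads themselves may protrude from $v$.

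Finally, drop the plan to make the two occurrences in the self-splice disjoint. They sit $|a|$ apart, so they genuinely overlap whenever $|a|<|v|$. This is harmless: apply the splice to the unrolled chain and its copy shifted by $|a|$, and it still closes up. The resulting cyclic chain may be as short as a single read linked to its own shifted copy. For non-contained $v$ this is the chain $v,v$ with overlap $|v|-|a|$, which is how the paper's chain-prefix lemma behaves in that situation. Such a chain is a legitimate cyclic chain with label $\langle a\rangle$.
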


Combining theorem \ref{thm:readscond} with the formulated criterion implies that every read set has a chromosome-preserving representation in the form of a de Bruijn graph. It is important to note that this de Bruijn graph is not constructed from reads, but rather from chains of reads and the value of $k$ in this graph matches twice the size of the longest read. This graph is fundamentally different from the traditional de Bruijn graph, where $k$ is much smaller than the read length and $k$-mers are extracted only from reads. Such representation is theoretically valid but the size of this graph can easily be exponential in the number of reads. Our next goal is to identify the minimal graph provides a chromosome-preserving representation of a given chromosome-set.

The set of languages encodable by de Bruijn graphs is a well studied class of subregular languages called Strictly Locally Testable (SL) languages~\cite{mcnaughton1971, kim2000, deluca1980}. We have shown that labels of overlapping chains represent a subclass of SL languages.

\section{Supregraph}\label{sec:supregraph}
\paragraph{Non-redundant assembly graphs are defined by their vertices.}
Based on Section~\ref{sec:properties} conclusions, we choose to search for the minimal graph representing a chromosome-set among non-redundant splitting graphs. Our input for this problem is the set of chromosome candidates $\chromosomes$. We note that conventional assembly graphs are defined by the class of strings that can serve as vertices in the graph (e.g. reads or $k$-mers). We generalize this logic to describe an abstract class of assembly graphs defined by a proper infix-free set of possible vertex labels $V$. The algorithm \ref{alg:ag} provides a universal way for constructing non-redundant assembly graphs:

\begin{algorithm}[H]
\caption{$\mathrm{AG}(\chromosomes, V)$}
\label{alg:ag}
    $\assemblygraph \gets$ empty graph\;
    \ForEach{$C \in \chromosomes$} {
        $\mathcal{O} \gets \{(v, p) \mid v \in V \text{ occurs in } C \text{ at position } p\}$\;
        Sort $\mathcal{O}$ by position $p$ along $C$\;
        \ForEach{consecutive pair $(v_1, p_1), (v_2, p_2)$ in $\mathcal{O}$} {
            Add $v_1$ and $v_2$ to $\assemblygraph$\;
            Add edge $v_1 \rightarrow v_2$ with label $C[p_1 : p_2 + |v_2|]$ to $\assemblygraph$\;
        }
    }
    \Return $\assemblygraph$\;
\end{algorithm}

In algorithm \ref{alg:ag} we consider the cyclic sequence of occurrences of vertex labels from $V$ in chromosome candidates from $\chromosomes$ and turn each such sequence into a circuit. All such circuits are collected together and glued according to vertex and edge labels to form the resulting assembly graph $AG(\chromosomes, V)$.
Different choices of $V$ produce familiar graph families: reads, $k$-mers, minimizer $k$-mers, or junction $k$-mers yield string graphs, de Bruijn graphs, sparse de Bruijn graphs, or compressed de Bruijn graphs, respectively. Although we present $AG(\chromosomes, V)$ as pseudocode, the chromosome candidate set can be potentially infinite. Therefore, this algorithm should be viewed primarily as a mathematical construct, describing the relationship between chromosome candidates and non-redundant assembly graphs that could represent them. The following theorem analyzes the expected outcomes of this procedure.

\begin{theorem}\label{thm:ag}
\begin{enumerate}
 \setlength{\itemsep}{0pt}   % no extra space between items
  \setlength{\parskip}{0pt}   % no paragraph spacing in items
  \setlength{\topsep}{0pt}    % no space above/below list
  \setlength{\parsep}{0pt}    % no extra spacing between paragraphs in item
  \setlength{\partopsep}{0pt} % no extra spacing when starting a new paragraph
  \item For any non-redundant assembly graph $\assemblygraph$, we have
    $AG(\CPC(\assemblygraph), V(\assemblygraph)) = \assemblygraph$.
    \item For any infix-free string-set $v$ and chromosome-set $\chromosomes$ $AG(\chromosomes,V)$ is a non-redundant graph.
    \item Let $\chromosomes$ be a cyclic string-set, and let $V$ be a proper infix-free set of strings such that every string in $\chromosomes$ contains at least one element of $V$ as a substring. Then $\chromosomes \subseteq \CPC(AG(\chromosomes, V))$, and for any other non-redundant assembly graph $\assemblygraph$ satisfying $\chromosomes \subseteq \CPC(\assemblygraph)$ and $V(AG(\chromosomes, V)) \subseteq V(\assemblygraph)$, we have $\chromosomes \subseteq \CPC(AG(\chromosomes, V)) \subseteq \CPC(\assemblygraph)$.
\end{enumerate}
\end{theorem}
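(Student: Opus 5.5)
The plan is to prove the three parts separately. Throughout, the tool is the correspondence between circuits and the cyclic sequence of vertex-label occurrences along their labels, supplied by non-redundancy.

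\textbf{Part 1.} Let $\assemblygraph$ be non-redundant and $C$ a circuit with label $\langle c\rangle$. The first step is to show that the occurrences of vertex labels $v\in V(\assemblygraph)$ in $\langle c\rangle$ are exactly the positions of the vertices visited by $C$. Theorem~\ref{thm:nonredcount} gives equality of counts for each fixed $v$, and every visited vertex clearly yields an occurrence at its own position, so the two families of positions coincide. The second step is to show that between two consecutive vertices of $C$ the connecting edge carries the label $c[p_1:p_2+|v_2|]$.

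For this second step I must rule out an occurrence of some vertex strictly inside an edge, other than its endpoints. Suppose $w$ occurred strictly inside the label of an edge $e$. Then the one-vertex walk $w$ would have a label that is a proper infix of the label of the walk $e$ (or of a slightly extended walk, in the prefix/suffix-edge cases). Non-redundancy would then force $w$ to be an inner subwalk of that walk, which is impossible for a single edge. This is the step I expect to be delicate, since prefix and suffix edges must be handled through the proper-infix formulation. Once it is done, $AG$ applied to $\CPC(\assemblygraph)$ reproduces exactly the vertices and edges traversed by circuits. Because every vertex and edge of an assembly graph lies on a cycle, no vertex or edge is lost. Identical labels in $\assemblygraph$ correspond to identical objects, so the gluing by labels returns $\assemblygraph$.

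\textbf{Part 2.} Take walks $P,Q$ in $AG(\chromosomes,V)$ with $Label(P)$ a proper infix of $Label(Q)$. Align $Label(P)$ inside $Label(Q)$. By construction, an edge label $c[p_1:p_2+|v_2|]$ contains exactly two occurrences of elements of $V$, namely at its ends. Here proper infix-freeness of $V$ is used to exclude nested occurrences, which would otherwise have been listed in $\mathcal{O}$. Hence the occurrences of $V$-elements in $Label(Q)$ are exactly the vertices of $Q$, and likewise for $P$. The vertices of $P$ therefore match a consecutive block of the vertices of $Q$, each edge in between is determined by its label, and properness of the infix makes the flanking walks nonempty. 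This shows $P$ is an inner subwalk of $Q$.

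\textbf{Part 3.} The inclusion $\chromosomes\subseteq\CPC(AG(\chromosomes,V))$ holds because each $C\in\chromosomes$ has a nonempty list $\mathcal{O}$ and is turned into a circuit whose label is $C$; primitivity follows from $C$ being a primitive cyclic string.

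For the minimality statement, let $\assemblygraph$ be non-redundant with $\chromosomes\subseteq\CPC(\assemblygraph)$ and $V(AG)\subseteq V(\assemblygraph)$, and take a circuit $D$ of $AG$. Each edge of $AG$ arises from some $C\in\chromosomes$, which is the label of a circuit in $\assemblygraph$. By Theorem~\ref{thm:nonredcount} applied to that circuit, the occurrences of $v_1,v_2$ in $C$ correspond to visits of $\assemblygraph$ at those vertices. The segment between them is therefore a walk in $\assemblygraph$ from $v_1$ to $v_2$ with the same label as the $AG$-edge.

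Concatenating these walks along $D$ yields a closed walk in $\assemblygraph$ with label $Label(D)$. If it is not primitive, I would argue that its label would then be a power, contradicting primitivity of $Label(D)$, which itself needs justification from the circuit being primitive and non-redundancy of $AG$ from Part 2. This gives $\CPC(AG)\subseteq\CPC(\assemblygraph)$.
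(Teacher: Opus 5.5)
Your Parts 1 and 3 follow the paper's argument. In Part 1, occurrences of vertex labels in a circuit label are exactly the visits, by Theorem~\ref{thm:nonredcount}. In Part 3, each $AG$-edge is realized by a walk of $\assemblygraph$ cut out of a circuit with the same chromosome label.

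\textbf{The gap is in Part 2.} You claim that an edge label $c[p_1:p_2+|v_2|]$ contains exactly two occurrences of elements of $V$, and hence that the $V$-occurrences in $Label(Q)$ are exactly the vertices of $Q$, and likewise for $P$. Both claims are false once prefix and suffix edges appear. Take $\chromosomes=\{\langle ABC\rangle\}$ and $V=\{A,AB\}$. Then $AG(\chromosomes,V)$ has the prefix edge $A\to AB$ (label $AB$) and the edge $AB\to A$ with label $ABCA$. The one-edge walk $AB\to A$ has three $V$-occurrences ($A$ and $AB$ at position $0$, $A$ at position $3$) but only two vertices.

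More generally, any element of $V$ that is a proper prefix of a walk's first vertex, or a proper suffix of its last vertex, occurs in its label without being visited. The sorted-list argument only shows that $V$-occurrences inside a single edge label are prefixes or suffixes of that edge. That does not identify them with visits: an occurrence at the start of $e_i$ may be a proper prefix of its start vertex, and symmetrically at the end. Handling this is the whole content of the paper's proof. It passes to closed walks $P_c,Q_c$ and shows that an occurrence of $v$ ending inside the last edge $u\to w$ of a prefix walk must be a suffix of $w$. The walk is then forced along the unique chain of suffix edges $w\to w_1\to\cdots\to v$.

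\textbf{How to complete your route.} You can finish without closed walks, because $Label(P)$ is a proper infix of $Label(Q)$, so only occurrences that are proper infixes of $Label(Q)$ matter. For such an occurrence of $x\in V$ you need three facts.
(a) It lies inside a single edge $e_i$ of $Q$; otherwise it would contain an intermediate vertex of $Q$ as a proper infix, contradicting proper infix-freeness.
(b) It is a prefix or suffix of $e_i$.
(c) Suppose it is a prefix of $e_i$ other than the start vertex $v_{i-1}$. If it is longer than $v_{i-1}$, consecutiveness forces $x=v_i$ with $e_i$ a prefix edge. If it is a proper prefix of $v_{i-1}$, it is a proper infix of $e_{i-1}$ unless $e_{i-1}$ is a prefix edge, and you iterate backwards. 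The iteration stays inside $Q$ precisely because the occurrence does not start at position $0$. Suffixes are symmetric.

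Then map the \emph{visits} of $P$, not all $V$-occurrences of $Label(P)$, to a consecutive block of visits of $Q$. The flanking walks are nonempty because the image does not touch the ends of $Label(Q)$.

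A minor point on Part 3: cyclic strings here are bi-infinite periodic strings. A closed walk that is a power of a shorter circuit therefore has the same label as that circuit. No contradiction argument is needed, and none would arise; just take the primitive root.
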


The first two statements of theorem~\ref{thm:ag} show that a graph can be represented as $AG(\chromosomes, V)$ for certain $\chromosomes$ and $V$ if and only if it is non-redundant. The third statement shows that $AG(\chromosomes, V)$ provides the most precise representation of the chromosome candidate set among all graphs with the same or larger vertex set.

\paragraph{Multiplicity-extremal substrings.}
Traditional vertex label choices share a limitation: they are defined independently of the chromosome candidate set they aim to represent. In $k$-mer graphs, $k$ is arbitrary; in string graphs, labels depend on a specific read set, even though different read sets can yield the same chromosome candidate set. To address this, we define vertex labels directly from the chromosome-set.

Since an assembly graph reflects genome repeat structure, vertex labels should also capture it. We base our choice on maximal exact matches (MEMs)~\cite{kurtz2004versatile}, commonly used in bioinformatics. Since MEMs usually refer to pairs of substring occurrences rather than substrings themselves, for clarity we introduce a separate concept:

\begin{definition} A string $s$ is a \emph{Multiplicity-Extremal Substring} (\emph{MES}) for a chromosome-set $\chromosomes$ if $s$ is a substring of $\chromosomes$ and any proper superstring of $s$ has fewer occurrences than $s$ in at least one chromosome candidtes from $\chromosomes$.
\end{definition}
For a finite text we could make proper superstrings of $s$ to have fewer occurrences in the whole text. Since $\chromosomes$ is a potentially infinite set, we must be careful and specify that the number of occurrences is reduced in at least one chromosome candidate. The following theorem provides a criterion for MES, which does not have this issue.

\begin{theorem}\label{thm:mesalt}
A finite string $s$ is MES for a chromosome-set $\chromosomes$ if and only if there exist characters $c_1\neq c_2$ and $c_3\neq c_4$, such that $c_1s$, $c_2s$, $sc_3$, and $sc_4$ are substrings of $\chromosomes$.
\end{theorem}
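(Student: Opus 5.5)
The plan is to prove both directions directly from the definition of MES, using the fact that every occurrence of a string in a \emph{cyclic} chromosome has a well-defined preceding and following character. Throughout, occurrences of a string $t$ in $C\in\chromosomes$ are counted as starting positions in the cyclic string $C$, following the conventions of Appendix~\ref{app:cyclic-strings}. The main tool is the following observation. If $t = x s y$ for a fixed decomposition, then shifting each occurrence of $t$ by $|x|$ gives an occurrence of $s$. This map from occurrences of $t$ in $C$ to occurrences of $s$ in $C$ is injective.

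($\Leftarrow$) Assume $c_1 s$, $c_2 s$, $s c_3$, $s c_4$ are substrings of $\chromosomes$, with $c_1\neq c_2$ and $c_3\neq c_4$. Then $s$ is itself a substring of $\chromosomes$. Let $t$ be a proper superstring of $s$, and fix one occurrence of $s$ inside $t$, so that $t = xsy$ with $x$ or $y$ non-empty.
\begin{itemize}
\item Suppose $x$ is non-empty, and let $c$ be its last character. Pick $i\in\{1,2\}$ with $c_i\neq c$, and a chromosome $C$ containing $c_i s$.
\item Under the shift map, every occurrence of $t$ in $C$ is sent to an occurrence of $s$ that is preceded by $c$.
\item The occurrence of $s$ inside $c_i s$ is preceded by $c_i\neq c$, so it is not in the image.
\item Hence the number of occurrences of $t$ in $C$ is strictly less than the number of occurrences of $s$ in $C$. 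This also covers the case where $t$ does not occur in $C$ at all.
\end{itemize}
The case where $y$ is non-empty is symmetric, using the first character of $y$ and $c_3, c_4$. Therefore $s$ is an MES.

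($\Rightarrow$) Assume $s$ is an MES, and suppose for contradiction that no two distinct left-extension characters exist. Since $s$ occurs in some $C$ and $C$ is cyclic, every occurrence of $s$ has a preceding character. By assumption, that character is the same $c$ for every occurrence in every chromosome. Then, in each $C\in\chromosomes$, the occurrences of $cs$ and the occurrences of $s$ are in bijection via the shift by one position:
\begin{itemize}
\item the shift is injective by the observation above;
\item it is surjective because every occurrence of $s$ is preceded by $c$.
\end{itemize}
So $cs$ is a proper superstring of $s$ with the same number of occurrences as $s$ in every chromosome candidate, contradicting the MES property. The right-hand side, producing $c_3\neq c_4$, is symmetric, using $sc$.

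The main obstacle I expect is the cyclic bookkeeping rather than the combinatorics. One must check that the shift map on occurrences is well defined and injective when occurrences wrap around the cycle, or when $s$ is long relative to $C$, so that it overlaps itself or even exceeds $|C|$. I would settle this first by fixing the formal definition of occurrence as a position $p\in\mathbb{Z}_{|C|}$ with $C[p:p+|s|]=s$ read cyclically. With that definition, the shift $p\mapsto p+|x|$ is a bijection on positions, so injectivity, and surjectivity in the ($\Rightarrow$) case, follow immediately.
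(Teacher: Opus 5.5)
Your proof is correct and follows the same route as the paper's. In both directions the key fact is that occurrences of an extension of $s$ inject into occurrences of $s$, bijectively exactly when every occurrence of $s$ has the same flanking character, so a second flanking character is what forces the count to drop. Your version is somewhat more careful than the paper's: you handle an arbitrary proper superstring $t=xsy$ rather than only one-letter extensions $sc$, and you place the strict inequality in a specific chromosome containing $c_is$, as the MES definition requires.
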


The following theorem shows that every string can be extended to a MES with the same collection of occurrences in chromosome candidates.

\begin{theorem}\label{thm:mesext}
For any substring $s$ of a chromosome-set $\chromosomes$, there exists a unique MES, denoted $MES(s)$ (\emph{MES extension} of $s$), such that $s$ occurs exactly once in $MES(s)$ as a substring and every occurrence of $s$ in $\chromosomes$ extends to an occurrence of $MES(s)$.
\end{theorem}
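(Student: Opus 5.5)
The plan is to construct $MES(s)$ by greedy extension and then use Theorem~\ref{thm:mesalt} to show that the result is forced. Fix a substring $s$ of $\chromosomes$. Call a string $w$ containing a marked occurrence of $s$ an \emph{occurrence-extension} of $s$ if every occurrence of $s$ in every chromosome candidate of $\chromosomes$ extends, around the marked position, to an occurrence of $w$.

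\textbf{Step 1 (basic observation).} If $w$ is an occurrence-extension of $s$, then $wc$ is again an occurrence-extension exactly when $c$ is the unique character with $wc$ a substring of $\chromosomes$. The same holds on the left for $cw$. This is immediate from the definitions: every occurrence of $s$ sits inside an occurrence of $w$, and every occurrence of $w$ then continues with the same character.

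\textbf{Step 2 (construction).} Start from $w=s$ and repeatedly extend to the right while the right continuation is unique, then to the left while the left continuation is unique. Extending on one side never makes a previously ambiguous continuation on the other side unique: the set of continuations can only shrink to those realized by occurrences of the longer string, and every occurrence of $s$ is such an occurrence. Hence one right phase followed by one left phase suffices, apart from re-checking at the end.

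When the process stops, $w$ has two distinct left continuations $c_1w, c_2w$ and two distinct right continuations $wc_3, wc_4$, all substrings of $\chromosomes$. By Theorem~\ref{thm:mesalt}, $w$ is then a MES. By Step 1, every occurrence of $s$ extends to an occurrence of $w$.

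\textbf{Step 3 (uniqueness).} Suppose $w'$ is any MES satisfying the two conditions of the theorem. Because $s$ occurs exactly once in $w'$, the position of $s$ inside $w'$ is determined, so $w'$ is an occurrence-extension of $s$. Every occurrence-extension is a substring of the greedy $w$ with the marked copies of $s$ aligned: inductively, each character of $w'$ beyond $s$ is forced, so it coincides with the greedy choice. If $w'$ were a proper substring of $w$, then $w'$ would have a unique continuation on some side, contradicting Theorem~\ref{thm:mesalt}. Hence $w'=w$.

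\textbf{Step 4 (termination and the single occurrence of $s$ in $w$) — the main obstacle.} I must show the greedy process stops, and that $s$ occurs exactly once in $w$. Suppose that at some stage $w$ contains a second occurrence of $s$, at offset $d$ from the marked one. Since every occurrence of $s$ forces an occurrence of $w$, each occurrence of $s$ forces another occurrence of $s$ at distance $d$. Inside any chromosome candidate containing $s$, the occurrences of $s$ are then periodic, and the forced extension propagates all the way around the cyclic string.

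The same propagation argument handles infinite growth: if the process never stops, the forced extension determines the entire chromosome candidate containing $s$. In that degenerate case the extension wraps around the candidate. I would argue, using the cyclic-string conventions of Appendix~\ref{app:cyclic-strings} and the primitivity of circuits/chromosomes, that the MES is then that cyclic chromosome itself, in which $s$ occurs once per period. Otherwise the growth halts before a second copy of $s$ appears.

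Making this wrap-around case rigorous is the step I expect to require the most care. The plan is to bound the length of $w$ by the shortest candidate containing $s$, then split into two cases: either a branching character appears within one period, or the whole candidate is determined.
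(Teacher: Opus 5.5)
Your route is correct in outline and differs from the paper's. The paper fixes two occurrences of $s$ and takes minimal MESs $m_1,m_2$ covering them. It extends the shared copy of $s$ to a maximal exact match $s'$ between $m_1$ and $m_2$, and checks via Theorem~\ref{thm:mesalt} that $s'$ is a MES: at each end the match either reaches an end of some $m_i$ and inherits its branching, or the flanking letters differ. Minimality then gives $m_1=s'=m_2$, and the single occurrence of $s$ comes from rerunning this on two occurrences inside $MES(s)$.

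You instead grow $s$ one forced letter at a time. Your key observation is that occurrences of any occurrence-extension are in bijection with occurrences of $s$, so continuation sets on the other side never change. This buys three things:
\begin{itemize}
\item an explicit description of $MES(s)$ as the maximal unambiguous extension of $s$;
\item a cleaner uniqueness proof;
\item the by-product that every occurrence-extension of $s$ is an aligned substring of $MES(s)$.
\end{itemize}
The paper's argument avoids the step-by-step bookkeeping. However, it presupposes a minimal covering MES and does not treat the periodic case separately. In that case no finite MES covers $s$ and the MEM step degenerates.

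\textbf{The gap is Step~4.} It is still a plan, and the bound it proposes is false as stated. For $\chromosomes=\{\langle AB\rangle,\langle ABAC\rangle\}$ and $s=ABA$ one has $MES(s)=ABA$, which is longer than the shortest candidate $\langle AB\rangle$ containing $s$. No bound is needed; the step closes as follows.

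\emph{A second copy forces non-termination.} Suppose the current $w$ contains $s$ at offset $d\neq 0$ from the marked copy. An occurrence of $s$ at position $p$ of a candidate $\langle t\rangle$ forces $w$ around $p$, hence $s$ at $p+d$, hence $w$ around $p+d$, and so on. Consecutive copies overlap by at least $|s|$, so they make $\langle t\rangle$ $|d|$-periodic on a half-line, hence everywhere since it is cyclic. Every later continuation of $w$ can then be read off inside $w$: the letter following an occurrence of $w$ starting at $q$ equals the letter at $q+|w|-|d|$, and symmetrically on the left. So every continuation is forced and the process never halts.

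\emph{Halting case.} By the contrapositive, if the process halts then $s$ occurs exactly once in $w$. Moreover, each one-sided extension is shorter than $|t|$ for every candidate $\langle t\rangle$ containing $s$; this is the correct form of your bound.

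\emph{Non-halting case.} The forced infinite ray on the non-halting side of the marked $s$ is common to all occurrences of $s$ in all candidates. A ray of a primitive cyclic string $\langle t\rangle$ has minimal period $|t|$, so exactly one candidate contains $s$, and $s$ occupies a single position in it. Every finite occurrence-extension then has forced continuations on both sides, so it is not a MES by Theorem~\ref{thm:mesalt}. Hence $MES(s)$ must be that cyclic candidate, which is a MES by Lemma~\ref{lm:cyclic_mes}.
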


Strings with the same MES extension form an equivalency class of strings that always occur together in any chromosome or genome candidate.

Figure \ref{fig:mes}A illustrates the notion of MES for the chromosome candidate set generated as read chains from the read set shown in Fig. \ref{fig:assembly-graphs}A. The label of a cyclic chain that contains all reads is written on the matrix diagonal. Potentially this matrix is infinite since the label is cyclic and we show only a part of it. Every cell corresponds to a substring of the genome, for which start and end positions are defined by the cell's coordinates. We draw a line to the right or on top of a cell corresponding to a substring that cannot be extended in that direction without reducing its number of occurrences in $\CC(\reads)$. Thus, every cell with lines both to the right and on top corresponds to a MES; all such cells are marked with either an asterisk or letter V. Lines split the diagram into parts with exactly one MES in each at the upper right corner. Substrings from the same component always occur together in a chromosome. Cells that correspond to reads are marked with $\reads$. Properties of MESs are discussed in Appendix~\ref{app:mes}.

\begin{figure}[h]
\centering
\includegraphics[width=0.5\textwidth]{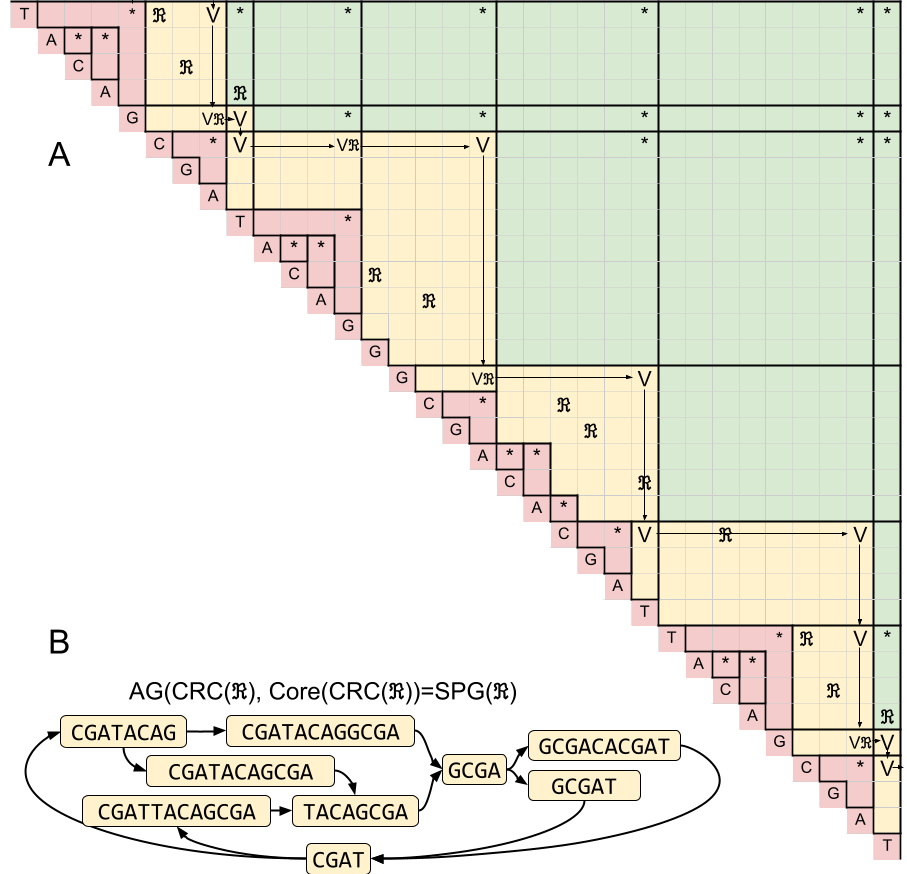}
\caption{\textbf{Multiplicity-extremal substrings}. \textbf{(A)} A table in which every cell corresponds to a substring of a string written on the diagonal. MES are marked with *, nucleotides or V. Non-conductor strings are highlighted in red. Strings that contain core strings as proper infixes are highlighted in green. Core strings are marked with V. Reads are marked with $\reads$. The border between two substrings is drawn if they have different numbers of occurrences in chromosome candidates.}
\label{fig:mes}
\end{figure}

\paragraph{Supregraphs.}
We aim to use MESs as vertex sequences to construct a new type of assembly graph, but a few adjustments are necessary. First, by Theorem \ref{thm:conductors}, only conductor strings can serve as vertices in a non-redundant graph that perfectly represents a set of chromosome candidates; thus, we restrict vertices to strings that are both MESs and conductors. Second, in a non-redundant graph vertices can not be proper infixes of one another; thus, we ignore conductor MES that contain other conductor MES as proper infixes (an approach opposite to subread removal in overlap graph construction).

\begin{definition} A string $s$ is called a \emph{core string} for a chromosome-set $\chromosomes$ if $s$ is a conductor MES, such that no other conductor MES $t$ is a proper infix of $s$. The set of all core strings will be denoted as $\Core(\chromosomes)$.
\end{definition}

Core strings (that correspond to vertices of the supregraph) are marked with letter V in Fig. \ref{fig:mes}A.

\begin{theorem}\label{thm:spg}
Let $\chromosomes$ be a finitely-conducting chromosome-set. Then $\CPC(AG(\chromosomes, \Core(\chromosomes)))=\chromosomes$. Each edge in $AG(\chromosomes, \Core(\chromosomes))$ is either a prefix or a suffix edges.
\end{theorem}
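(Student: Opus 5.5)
The plan is to prove the equality $\CPC(AG(\chromosomes,\Core(\chromosomes)))=\chromosomes$ by two inclusions, and then to treat the edge-shape statement separately. Throughout, the core strings play the role of ``gluing points'': conductors where any two closed walks can be spliced, placed as sparsely as the multiplicity structure of $\chromosomes$ allows.

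\textbf{Inclusion $\chromosomes\subseteq\CPC$.} This is part 3 of Theorem~\ref{thm:ag}. We first note that $\Core(\chromosomes)$ is proper infix-free by definition. It then remains to check that every $C\in\chromosomes$ contains a core string.
\begin{itemize}
\item Since $\chromosomes$ is finitely conducting, there is a $K$ such that every string of length $K$ is a conductor.
\item Take a substring $s$ of length $K$ of $C$, read cyclically (wrapping around $C$ if $|C|<K$, so that $s$ is a substring of a power of $C$). By Theorem~\ref{thm:mesext}, $MES(s)$ exists, and it is a conductor by Theorem~\ref{thm:conductors}(i).
\item The set of conductor MESs that are infixes of $MES(s)$ is nonempty. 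Choosing one minimal under the proper-infix relation gives a core string occurring in $C$.
\end{itemize}
One point needs care: Theorem~\ref{thm:conductors} is stated for graphs. Here I would instead use the conductor definition directly for superstrings, i.e.\ reprove (i) abstractly for an arbitrary $\chromosomes$.

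\textbf{Inclusion $\CPC\subseteq\chromosomes$.} Every edge of $AG$ arises from a consecutive pair of core occurrences in some chromosome. Hence a circuit $Z$ is a cyclic sequence of edges $e_1,\dots,e_m$, each witnessed by some $C_i\in\chromosomes$. I would induct on the number of ``switches'', i.e.\ indices where consecutive edges are witnessed by different chromosomes.
\begin{itemize}
\item Zero switches: the circuit is a (primitive root of a) circuit read off a single $C$. Its label is in $\chromosomes$.
\item Inductive step: at a switch vertex $v$, write the label of $Z$ as a concatenation of two closed walks through $v$, $va$ and $vb$. Each of these has fewer switches after rerouting through the witnessing chromosomes, because the return from $v$ to $v$ inside $C_i$ is itself available as a walk.
\item By induction $\langle a\rangle,\langle b\rangle\in\chromosomes$. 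Since $v$ is a conductor, $\langle ab\rangle\in\chromosomes$.
\end{itemize}
Making this rerouting precise is the delicate bookkeeping: one has to choose the decomposition so that both pieces are genuinely closer to single-chromosome circuits. Non-redundancy (part 2 of Theorem~\ref{thm:ag}) guarantees that labels of walks correspond to occurrences, so the splice points are well defined.

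\textbf{Edge shape.} Let $v_1$ at $p_1$ and $v_2$ at $p_2$ be consecutive core occurrences in $C$, with $w=C[p_1:p_2+|v_2|]$. Suppose the edge is neither a prefix nor a suffix edge, i.e.\ $w\ne v_1$ and $w\ne v_2$, so neither occurrence contains the other.
\begin{enumerate}
\item Consider the strings $x$ obtained from $w$ by deleting its first character, and $y$ obtained by deleting its last character. Then $x$ contains $v_2$ and $y$ contains $v_1$, so both are conductors.
\item Using Theorem~\ref{thm:mesalt}, I would argue about left and right extensions. The left-branching witnessing that $v_2$ is a MES, together with the right-branching for $v_1$, should force some conductor MES strictly inside the window $(p_1,p_2+|v_2|)$, other than $v_1$ and $v_2$.
\item A minimal such string yields a core occurrence strictly between $p_1$ and $p_2$ in the sorted order, contradicting consecutiveness.
\end{enumerate}
I expect step 2 to be the main obstacle. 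The MES extension of a substring may stretch outside the window, so I would first try to show that the MES extension of the overlap region of $x$ (or of $y$) stays within $w$, by comparing its occurrence counts with those of $v_1$ and $v_2$.
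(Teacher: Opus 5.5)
Your inclusion $\chromosomes\subseteq\CPC$ is fine. You even verify the hypothesis of Theorem~\ref{thm:ag}(3) that each chromosome contains a core string, which the paper leaves implicit. The other two parts, however, have genuine gaps.

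\textbf{Edge shape.} Step~2 is missing the key idea, and the pairing you propose points the wrong way. Left branching of $v_2$ together with right branching of $v_1$ only shows that the overlap of $v_1$ and $v_2$ is a MES, and only when they overlap at all. Nothing makes that overlap a conductor, and a proper infix of $w$ contains neither of the two conductor occurrences you control. The contradiction should come from $w$ itself, using the opposite pairing.
\begin{enumerate}
\item Take $c_1\neq c_2$ with $c_1v_1,c_2v_1$ substrings of $\chromosomes$, and $c_3\neq c_4$ with $v_2c_3,v_2c_4$ substrings.
\item A conductor $s$ lets you swap suffixes of substrings through it: from $a_1sb_1$ and $a_2sb_2$ one gets $a_1sb_2$ (Lemma~\ref{lm:switch}, proved by circularizing both chromosomes and applying the conductor identity).
\item Swapping at $v_1$ between $c_iv_1$ and $w$ gives $c_1w,c_2w$. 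Swapping at $v_2$ gives $wc_3,wc_4$. Hence $w$ is a conductor MES by Theorem~\ref{thm:mesalt}.
\item Any conductor MES that is a proper infix of $w$ would contain a core occurrence sorting strictly between $v_1$ and $v_2$. So $w$ is itself core.
\item Unless $w\in\{v_1,v_2\}$, the occurrence of $w$ at $p_1$ sorts after $v_1$ (same start, longer) and before $v_2$ (since $p_1<p_2$). This contradicts consecutiveness.
\end{enumerate}

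\textbf{Inclusion $\CPC\subseteq\chromosomes$.} The switch induction does not go through as described. A switch vertex $v$ may be visited only once by $Z$, in which case $Z$ is not a concatenation of two closed walks at $v$. Once you reroute the pieces through the witnessing chromosomes, their concatenation is no longer $Z$. So the identity $\langle ab\rangle\in\chromosomes\iff\langle a\rangle,\langle b\rangle\in\chromosomes$ tells you nothing about $Label(Z)$.

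Your base case is also misjustified. Counting only changes of witnessing chromosome misses jumps inside a single chromosome. For example, a circuit following just one of two loops of $C$'s circuit through a vertex is witnessed entirely by $C$, yet it is not read off $C$.

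The fix is to induct on walks rather than circuits.
\begin{itemize}
\item Every edge label is a substring of $\chromosomes$ by construction.
\item If $Label(P)$ is a substring and $e$ leaves the last vertex $u$ of $P$, swapping through the conductor $u$ shows that $Label(Pe)$ is a substring.
\item Finally, cut a circuit at a vertex $v$. Its label is a substring $va=a'v$ of $\chromosomes$, and the conductor identity at $v$ (Lemma~\ref{lm:circularization}) gives $\langle a\rangle\in\chromosomes$.
\end{itemize}
This is the paper's proof, and it needs no switch bookkeeping.
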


 Theorem \ref{thm:spg} shows that our choice of vertices resulted in a chromosome candidate preserving assembly graph. From the perspective of traditional genome assembly, the resulting graph may appear unnatural (Fig. \ref{fig:mes}B), since prefix and suffix edges are disallowed in conventional assembly graphs. This motivates the introduction of a new type of assembly graph:

\begin{definition} A \emph{supregraph} (Suffix-Prefix Graph) is a non-redundant assembly graph in which every edge is either a suffix or prefix edge. A supregraph $\assemblygraph$ is \emph{condensed} if all labels of its vertices are MESs with respect to $CPC(\assemblygraph)$.
\end{definition}

By theorem~\ref{thm:spg} for any finitely-conducting chromosome-set $\chromosomes$, $\assemblygraph=AG(\chromosomes,\Core(\chromosomes))$ is a supregraph. Moreover, by construction, vertices of $\assemblygraph$ are MES with respect to $\chromosomes=CPC(\assemblygraph)$, showing that $\assemblygraph$ is also a condensed supregraph.

Although this is not immediately apparent from the definition, condensed supregraphs are analogous to condensed de~Bruijn graphs in that they do not contain unnecessary unbranching paths. The structure and properties of condensed supregraphs are discussed in appendix~\ref{app:supregraphs}.

By theorem~\ref{thm:spg}, $AG(\chromosomes, \Core(\chromosomes))$ only has suffix and prefix edges making it a supregraph. By construction all its vertices are MES with respect to $\chromosomes=AG(\chromosomes, \Core(\chromosomes))$, making it a condensed supregraph that perfectly represents the chromosome-set $\chromosomes$. The next theorem shows that it is the smallest such supregraph.

\begin{theorem}\label{thm:minspg}
For a finitely-conducting chromosome-set $\chromosomes$ $AG(\chromosomes, \Core(\chromosomes))$ is the smallest supregraph $\assemblygraph$ (in terms of number of vertices or edges) such that $\CPC(\assemblygraph)=\chromosomes$.
\end{theorem}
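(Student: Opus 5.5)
The plan is to show that every supregraph $\assemblygraph$ with $\CPC(\assemblygraph)=\chromosomes$ contains, among its vertex labels, every core string of $\chromosomes$. We then exhibit an injection from the vertices and edges of $AG(\chromosomes,\Core(\chromosomes))$ into those of $\assemblygraph$. Since $\assemblygraph$ is non-redundant, Theorem~\ref{thm:ag}(1) gives $\assemblygraph = AG(\chromosomes, V(\assemblygraph))$. So it suffices to compare the two vertex sets $V(\assemblygraph)$ and $\Core(\chromosomes)$, both of which are proper infix-free.

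\textbf{Step 1 (each core string contains a vertex).} Let $s\in\Core(\chromosomes)$ and fix an occurrence of $s$ in some $C\in\chromosomes$. Write $C$ as the label of a circuit in $\assemblygraph$. In a supregraph every edge is a prefix or a suffix edge, so each edge label equals one of its endpoint labels. Hence consecutive vertex occurrences along $C$ overlap heavily, and every position-window of $C$ is covered by vertex occurrences that are nested prefix/suffix chains. The first goal is to show that $s$ contains some vertex $v$ of $\assemblygraph$ as a substring. By Theorem~\ref{thm:conductors}(ii) and~(i), every vertex $v$ is a conductor, and so is every superstring of $v$. Moreover $MES(v)$ is a conductor MES by Theorem~\ref{thm:mesext}. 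Using the characterization in Theorem~\ref{thm:mesalt}, I will argue that the two-sided branching of $s$ (characters $c_1\neq c_2$ before it and $c_3\neq c_4$ after it) forces the circuits through those branchings to separate at a vertex lying inside the occurrence of $s$. Non-redundancy (Theorem~\ref{thm:nonredcount}) is what makes the differing contexts visible as distinct walks, which is why the separating vertex must sit within $s$.

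\textbf{Step 2 (each core string is a vertex).} From Step~1 we have a vertex $v$ that is a substring of $s$. Then $MES(v)$ is a conductor MES contained in $s$. Because $s$ is a MES and every occurrence of $v$ extends to $MES(v)$, the string $MES(v)$ is either $s$ itself or a proper infix of $s$; the only alternative is a prefix or suffix of $s$, and that case must be excluded by a separate argument. Minimality of core strings rules out the proper infix case. For the prefix/suffix case, I expect to show that it would produce a conductor MES that is a proper infix, or would contradict the branching of $s$ on that side. It then remains to show $v=s$, not merely $MES(v)=s$. Here I use the condensed structure: if $v\neq MES(v)$, the edges at $v$ do not branch on one side. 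This requires the splitting-type behaviour of prefix/suffix edges, via Theorem~\ref{thm:cprescond}-style reasoning or the characterization of edges in Theorem~\ref{thm:ag}, to show that $s$ itself must be a vertex. If this last point fails, a fallback is to map each core string $s$ to a vertex $v_s\subseteq s$ with $MES(v_s)=s$. Distinct core strings then receive distinct $v_s$, which already yields the vertex inequality.

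\textbf{Step 3 (counting edges).} Apply Theorem~\ref{thm:ag}(3) with $V=\Core(\chromosomes)$ and the graph $\assemblygraph$. This gives $\CPC(AG(\chromosomes,\Core))\subseteq\CPC(\assemblygraph)$. For edges, each edge of $AG(\chromosomes,\Core)$ corresponds to a consecutive pair of core-string occurrences in some chromosome. The walk in $\assemblygraph$ between the corresponding vertices has at least one edge. Assign to the pair the first edge of that walk; injectivity follows from non-redundancy, because distinct edge labels give distinct walks. The main obstacle is Step~2: proving that each core string is literally a vertex label, or at least admits an injective assignment of vertices. This hinges on controlling how prefix/suffix edges interact with the MES branching characterization of Theorem~\ref{thm:mesalt}.
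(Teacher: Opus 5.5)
\textbf{There is a genuine gap, in the premise of Steps 1--2.} It is false that every supregraph $\assemblygraph$ with $\CPC(\assemblygraph)=\chromosomes$ has all core strings as vertices. Even your weaker Step~1 claim, that each core string contains some vertex of $\assemblygraph$ as a substring, fails.

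Here is a counterexample. Take a junction $v$ of $\SPG(\chromosomes)$, with incoming suffix edges from $u_1,\dots,u_a$ and outgoing prefix edges to $w_1,\dots,w_b$, where $a,b\ge 2$. Multiplex it freely. By Theorem~\ref{thm:freemultiplexing}(2) the result is a condensed supregraph with the same circuit labels. However, $v$ is deleted and replaced by the vertices labeled $u_ivw_j$, all of which are proper superstrings of $v$.

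No substring of $v$ survives as a vertex. By Proposition~\ref{prop:consupprop}, a junction has no incoming prefix edge and no outgoing suffix edge. So Proposition~\ref{prop:speprop}(2),(4) shows that no proper substring of $v$ was a vertex to begin with. Condensing only replaces labels by superstrings (their MES extensions), so no vertex of the new graph is a substring of $v$. This is exactly what happens to GCGA between $\SPG(\reads)$ and $\SPG^*(\reads)$ in Fig.~\ref{fig:multiplexing}.

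Your branching heuristic in Step~1 is where the argument breaks. The distinct left/right contexts of $s$ can be separated by distinct vertices that \emph{extend beyond} the occurrence of $s$, not by a vertex lying inside it. Indeed, if Step~1 held for all such $\assemblygraph$, then $\assemblygraph\le\SPG(\chromosomes)$ in the order of Lemma~\ref{lm:spgpo}. Together with $\SPG(\chromosomes)\le\assemblygraph$, antisymmetry would make $\SPG(\chromosomes)$ the only condensed supregraph with circuit labels $\chromosomes$, contradicting $\SPG^*(\reads)\neq\SPG(\reads)$. The fallback $s\mapsto v_s\subseteq s$ fails for the same reason. Step~3 collapses as well, since Theorem~\ref{thm:ag}(3) requires $V(AG(\chromosomes,\Core(\chromosomes)))\subseteq V(\assemblygraph)$.

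The missing idea is that the substring relation runs the other way: every vertex of $\assemblygraph$ contains a core string. So minimality cannot come from embedding $\Core(\chromosomes)$ into $V(\assemblygraph)$; it has to come from a monotonicity argument. The paper uses Theorem~\ref{thm:freemultiplexing}(3): every condensed supregraph with $\CPC=\chromosomes$ is reached from $\SPG(\chromosomes)$ by free multiplexing steps. A free step at a junction of in/out-degree $a,b\ge2$ trades $1$ vertex and $a+b$ edges for $ab$ vertices and $2ab$ edges, so both counts can only increase. The paper additionally asserts that the condensing after a free step removes nothing. If you prefer a direct counting proof, you would have to charge each core string to a distinct vertex of $\assemblygraph$ that \emph{contains} it, and similarly for edges. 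That is a genuinely different matching argument from the one you propose.
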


Note that the constructed supregraph is not the smallest among all non-redundant splitting graphs representing a given chromosome-set. However, the actual smallest such graph has a structure very similar to $AG(\chromosomes, \Core(\chromosomes))$ and can be obtained from it via a simple procedure discussed in appendix~\ref{app:supregraphs}.

This completes the first goal of this work: we have shown that a chromosome candidate-preserving representation of a read set as an assembly graph exists and identified its minimal form, which we will denote as $SPG(\reads)$.

Figure~\ref{fig:mes} illustrates the construction of a supregraph. Vertex labels are chosen among substrings of reads and read chains, represented as cells in the diagram. We focus on MES, corresponding to the top-right corners of each subarea. Non-conductor strings (red) and conductor MES that contain other conductor MES as proper infixes (green) are ignored. The remaining yellow MES (marked by letter V) constitute the core strings, which are ordered along a chromosome candidate; this order is shown as a path in Figure~\ref{fig:mes}A. Paths from all chromosomes are then merged by gluing vertices and edges with matching labels, producing the final supregraph (Figure~\ref{fig:mes}B).

\paragraph{Transforming any assembly graph into a condensed supregraph.}
Any non-redundant assembly graph $\assemblygraph$ (Fig.\ref{fig:multiplexing}A) can be transformed into a condensed supregraph with the same set of circuit labels using the following \emph{supregraph condensing} procedure: first, each edge $e$ from $u$ to $v$ is replaced with a vertex, labeled $e$, that is connected with $u$ and $v$ through incoming prefix edge and outgoing suffix edge (Fig.\ref{fig:multiplexing}B). Then every edge and vertex sequence is replaced with its MES (Fig.\ref{fig:multiplexing}C). The resulting graph can contain unbranching paths, consisting of vertices, marked with the same label. By collapsing these unbranching paths we obtain a condensed supregraph (Fig.\ref{fig:multiplexing}D), denoted as $\CSPG(\assemblygraph)$. We refer to this procedure as supregraph condensing. Note that since any supregraph is a splitting graph, we have shown that any graph can be transformed into a splitting graph without changing its set of chromosome candidates.

\begin{theorem}\label{thm:cspg}
For any non-redundant assembly graph $\assemblygraph$, $\CSPG(\assemblygraph)$ is a condensed supregraph with the same set of chromosome path candidates: $\CPC(\assemblygraph) = \CPC(\CSPG(\assemblygraph))$.
\end{theorem}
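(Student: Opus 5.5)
The plan is to avoid tracking the three local surgery steps of supregraph condensing one by one. Instead, I will identify $\CSPG(\assemblygraph)$ with a graph of the form $AG(\chromosomes, V)$, where $\chromosomes=\CPC(\assemblygraph)$, and then use Theorem~\ref{thm:ag}. The first step checks that the initial subdivision preserves everything. Replacing each edge $e\colon u\to v$ by a vertex $e$ with a prefix edge $u\to e$ and a suffix edge $e\to v$ gives a graph $\assemblygraph_1$ whose circuits correspond bijectively to circuits of $\assemblygraph$, with identical labels, since the overlaps $u$ and $v$ are exactly the vertex labels. Non-redundancy of $\assemblygraph_1$ follows from that of $\assemblygraph$: a walk in $\assemblygraph_1$ is, up to its first and last half-edges, a walk in $\assemblygraph$, and the inner-subwalk property transfers. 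Theorem~\ref{thm:ag}(1) then gives $\assemblygraph_1=AG(\chromosomes, V_1)$ with $V_1=V(\assemblygraph)\cup E(\assemblygraph)$.

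The second step handles the MES replacement. By Theorem~\ref{thm:conductors}(ii), every $x\in V_1$ is a conductor for $\chromosomes$. By Theorem~\ref{thm:mesext}, $MES(x)$ has exactly the same occurrences in every chromosome candidate as $x$, shifted by a fixed offset, and by Theorem~\ref{thm:conductors}(i) it is again a conductor. Hence the sequence of vertex occurrences that Algorithm~\ref{alg:ag} reads off along each $C\in\chromosomes$ is unchanged when each $x$ is replaced by $MES(x)$. The only differences are that some labels may now coincide and the edge labels are extended consistently. The graph after the MES step is therefore obtained from $AG(\chromosomes, V_1)$ by relabeling, so every circuit has the same label. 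After collapsing the unbranching runs of equal-labelled vertices, I expect to land exactly on $AG(\chromosomes, V_2)$, where $V_2$ is the set of MES extensions that survive. A run of equal labels corresponds to one MES occurring at one position, so collapsing it is precisely the gluing performed by Algorithm~\ref{alg:ag}.

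The third step shows that $V_2$ is proper infix-free and that every edge is a prefix or suffix edge. Then Theorem~\ref{thm:ag}(2) gives non-redundancy, and Theorem~\ref{thm:ag}(3) gives $\chromosomes\subseteq\CPC(\CSPG(\assemblygraph))$. For the reverse inclusion, I will map each circuit of $\CSPG(\assemblygraph)$ back to a circuit of $\assemblygraph_1$ with the same label. Consecutive vertices of $\assemblygraph_1$ overlap along original vertex labels, and their MES extensions preserve these overlaps because the extension has a fixed context around each occurrence. So every step in the condensed graph lifts uniquely, and a circuit in $\CSPG(\assemblygraph)$ lifts to a circuit in $\assemblygraph_1$ with the same label.

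I expect the main obstacle to be this step. I must show that after MES extension, consecutive vertices $MES(u)$ and $MES(e)$ are related by prefix or suffix containment and not by a proper infix or an arbitrary overlap. The argument I will try first is this: $u$ is a prefix of $e$, and every occurrence of $e$ in $\chromosomes$ contains an occurrence of $u$. If $MES(u)=MES(e)$, the two vertices collapse. Otherwise $e$ extends $u$ only to the right, and I will use the characterization in Theorem~\ref{thm:mesalt} to argue that $MES(u)$ and $MES(e)$ share their left extension. This would make $MES(u)$ a prefix of $MES(e)$, and symmetrically $MES(v)$ a suffix of $MES(e)$. If some $MES(u)$ were a proper infix of some other vertex, that would contradict non-redundancy of $\assemblygraph_1$ via the inner-subwalk condition. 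I would verify that case separately.
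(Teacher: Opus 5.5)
Your route differs from the paper's. You realize $\CSPG(\assemblygraph)$ as $AG(\chromosomes,V_2)$ and read off non-redundancy from Theorem~\ref{thm:ag}(2). The paper instead notes that each condensing operation preserves circuit labels, cites Lemma~\ref{lm:mes_substring} to keep edges prefix or suffix edges, and argues non-redundancy directly by matching occurrences of MES extensions with vertex occurrences. Your route could make the non-redundancy part cleaner, but two steps fail as written.

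First, Step~1 is false whenever $\assemblygraph$ has a prefix or suffix edge. The theorem allows such edges, and they are exactly what is present when $\CSPG$ is called inside multiplexing. Subdividing a prefix edge $u\to v$ (whose label is $Label(v)$) creates a vertex with the same label as $v$. It is joined to $v$ by an edge that is both a prefix and a suffix edge, and suffix edges behave symmetrically. So $\assemblygraph_1$ is not an assembly graph, and it has two vertices with equal labels, which Proposition~\ref{prop:speprop}(3) forbids in a non-redundant graph. Hence Theorem~\ref{thm:ag}(1) cannot be applied to it. The fix is to subdivide only edges that are neither prefix nor suffix edges; for those your transfer argument works.

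Second, your plan for the main obstacle cannot succeed with Theorem~\ref{thm:mesalt} alone, because the claim fails for non-conductors. Suppose $u$ occurs after two letters $c_1\neq c_2$, but $e=ux$ occurs only after $c_1$. Then $MES(e)$ extends to the left of $u$ while $MES(u)$ does not. Moreover, some occurrence of $u$ is not followed by $x$, so $MES(u)$ ends before $e$ does. Thus $MES(u)$ is a proper infix of $MES(e)$.

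What excludes this is that $u$ is a conductor (Theorem~\ref{thm:conductors}(ii)), combined with Lemma~\ref{lm:switch}. From $c_2u$ and $ux$ you get $c_2ux$, so $e$ inherits every left-branching letter of $u$. This is Lemma~\ref{lm:mes_substring}, the tool the paper uses. You record the conductor property in Step~2 but never use it where it is needed.

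The same switching argument proves that $V_2$ is proper infix-free. Suppose $m=MES(x)$ were a proper infix of $M=MES(y)=\alpha m\beta$; note that $m$ is a conductor as a superstring of $x$. Choose a left neighbour $c$ of $m$ that differs from the last letter of $\alpha$. Switching at $m$ shows that $cm\beta$ occurs. Therefore the occurrence of $y$ in $M$ cannot lie inside $m\beta$, since that occurrence would not extend to $M$. Symmetrically, it cannot lie inside $\alpha m$. Hence $x$ is a proper infix of $y$. This contradicts non-redundancy of $\assemblygraph$ (not $\assemblygraph_1$), because a single vertex or edge has no inner subwalk.

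Once the prefix/suffix preservation and infix-freeness facts are in place, the starts and ends of MES occurrences are monotone along each circuit. That monotonicity is what justifies your claims that equal labels form contiguous unbranching runs and that the condensed graph is exactly $AG(\chromosomes,V_2)$.
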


\section{Multiplicity and multiplexing}\label{sec:multiplexing}
\paragraph{Containment condition.}
In Sections \ref{sec:conductors} and \ref{sec:supregraph}, we focused on preserving chromosome candidates. While this was achieved, the resulting graph does not directly link occurrences of non-contained read sequences in genome candidates to vertices or edges in genome path candidates. Here, we show how to restore this connection.

\begin{theorem}\label{thm:mescnt}
For non-redundant assembly graph $\assemblygraph$, vertex $v$, and string $s$, if $Label(v)=MES(s)$ (with respect to $\CPC(\assemblygraph)$), then for any circuit $C$ in $\assemblygraph$, the number of occurrences of $v$ in $C$ is the same as the number of occurrences of $s$ in $Label(C)$.
\end{theorem}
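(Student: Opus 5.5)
The plan is to factor the count through $Label(v)$: first compare occurrences of $v$ in $C$ with occurrences of $Label(v)$ in $Label(C)$, then compare occurrences of $Label(v)$ with occurrences of $s$. Write $\chromosomes=\CPC(\assemblygraph)$ and $m=MES(s)=Label(v)$.

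\textbf{Step 1.} Since $\assemblygraph$ is non-redundant, Theorem~\ref{thm:nonredcount} gives that for every circuit $C$ the number of occurrences of $v$ in $C$ equals the number of occurrences of $m$ in $Label(C)$. It therefore suffices to show that, for every cyclic string $X\in\chromosomes$, the occurrences of $m$ in $X$ and the occurrences of $s$ in $X$ are in bijection. Here $X=Label(C)$ ranges over $\CPC(\assemblygraph)$ by definition.

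\textbf{Step 2.} Construct the bijection from Theorem~\ref{thm:mesext}. That theorem says $s$ occurs exactly once in $m$, at some fixed offset $d$. Define $\phi$ sending an occurrence of $m$ in $X$ at position $p$ to the occurrence of $s$ at position $p+d$. The key points are:
\begin{itemize}
\item $\phi$ is injective, because the offset $d$ is fixed.
\item $\phi$ is surjective, because Theorem~\ref{thm:mesext} states that every occurrence of $s$ in $\chromosomes$ extends to an occurrence of $MES(s)$.
\end{itemize}
To make surjectivity precise, I must read ``extends'' as follows: for an occurrence of $s$ at position $q$ in $X$, there is an occurrence of $m$ in $X$ whose copy of $s$ lies exactly at $q$. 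Because $s$ occurs in $m$ only at offset $d$, that occurrence of $m$ must start at $q-d$. Uniqueness of the offset is what keeps two different occurrences of $s$ from being served by the same occurrence of $m$.

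\textbf{Step 3 (main obstacle).} The delicate part is the cyclic setting. If $X$ is short, an occurrence of $m$ may wrap around and overlap itself, or $|m|$ may even exceed $|X|$. The positions then need to be interpreted as in the paper's appendix conventions for cyclic strings: occurrences are positions modulo $|X|$, and $m$ is read by unrolling $X$ periodically.

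I would check that, under this interpretation, the statement of Theorem~\ref{thm:mesext} is exactly the per-position extension property used in Step 2. If the theorem is only stated at the level of strings, I would rederive it from the proof of Theorem~\ref{thm:mesext}. That proof extends $s$ left and right one character at a time while the neighbouring character is forced (Theorem~\ref{thm:mesalt}). This yields the forced extension for each individual occurrence of $s$, including in the periodic unrolling of $X$.

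Combining Steps 1 and 2 then gives the claimed equality of counts.
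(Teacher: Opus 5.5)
Your proposal is correct and follows the paper's argument. The paper's proof also factors the count through $Label(v)$: non-redundancy (Theorem~\ref{thm:nonredcount}) matches occurrences of $v$ in $C$ with occurrences of $Label(v)$ in $Label(C)$, and Theorem~\ref{thm:mesext} matches occurrences of $MES(s)$ with occurrences of $s$; your offset map $p\mapsto p+d$, with uniqueness of the offset giving surjectivity, just writes out explicitly the correspondence the paper states in one line.
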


Theorem \ref{thm:mescnt} allows direct translation of containment conditions, but only for reads whose MES extension matches a vertex label. In Fig. \ref{fig:mes}A, most reads fall in the "yellow zone," meaning their MES extensions are vertex labels. No reads fall in the "red zone," as all non-contained reads are conductors by Theorem \ref{thm:readscond}. Read AGCGAT falls in the "green zone" and its MES extension corresponds to the path TACAGCGA \text{-} GCGA \text{-} GCGAT, showing that occurrences of some reads correspond to paths rather than vertices or edges, similar to de Bruijn graphs. To address this, we adopt the multiplexing approach previously used for repeat resolution in de Bruijn graphs \cite{bankevich2022multiplex, rautiainen2023telomere}.

\paragraph{Multiplexing.}
Consider a junction vertex $v$ in a supregraph $\assemblygraph$, defined as a vertex with at least two incoming and two outgoing edges. Any path visiting $v$ can enter via any incoming edge and exit via any outgoing edge. Multiplexing $v$ restricts these possibilities by replacing $v$ with a set of vertices, each representing a specific incoming--outgoing edge pairing (see Fig. \ref{fig:multiplexing}E,G,I and Algorithm \ref{alg:mult}). As multiplexing can introduce unbranching paths in the graph, we apply CSPG procedure to turn the resulting supregraph into condensed supregraph after each step (see Fig. \ref{fig:multiplexing}F,H). We, however, assume that multiplexing procedure connected every incoming edge with at least one outgoing and vice versa to avoid creating vertices that are not present in any cycles.

\begin{algorithm}[H]
\caption{$VertexMultiplex(\assemblygraph, v, D_r)$}
\label{alg:mult}
\ForEach{incoming edge $uv$ in $Graph$}{
    \ForEach{outgoing edge $vw$ in $Graph$}{
        \If{$D_r(uv,vw)$}{
            add vertex $p$ labeled with $Label(uvw)$ to $Graph$\;
            add prefix edge from $up$ to $Graph$\;
            add suffix edge $pw$ to $Graph$\;
        }
    }
}
$Graph \leftarrow CSPG(Graph)$\;
\Return $Graph$\;
\end{algorithm}

\begin{algorithm}[H]
\caption{$\ExhaustiveMultiplex(\assemblygraph, D=(D_c, D_r))$}
\label{alg:fmult}
\While{there exists a vertex $v$ in $Graph$, such that $D_c(v)$}{
    $Graph \leftarrow$ \textsf{VertexMultiplex}($Graph, v, D_r$)\;
}
\Return $Graph$\;
\end{algorithm}

To apply multiplexing iteratively (Algorithm \ref{alg:fmult}) one must define two decision rules: \emph{rule of choice} $D_c$ defines whether a vertex should be multiplexed, and \emph{rule of restriction} $D_r$, that defines which incoming--outgoing edge pairs should be preserved as new vertices. We will refer to a pair of rules, defining iterative multiplexing as \emph{multiplexing decision rule} $D=(D_c,D_r)$. In general, an additional rule that defines the order of vertex processing is required, but we will consider only decision rules such that iterative application of multiplexing yields the same result regardless of the order in which vertices are processed.

Different rules produce graphs with different properties, i.e., if the restriction rule always preserves all edge pairings, the set of chromosome path candidates (CPC) remains unchanged; such steps are called \emph{free}. The following theorem describes the properties of graphs, constructing using multiplexing.

\begin{figure}
\centering
\includegraphics[width=1\textwidth]{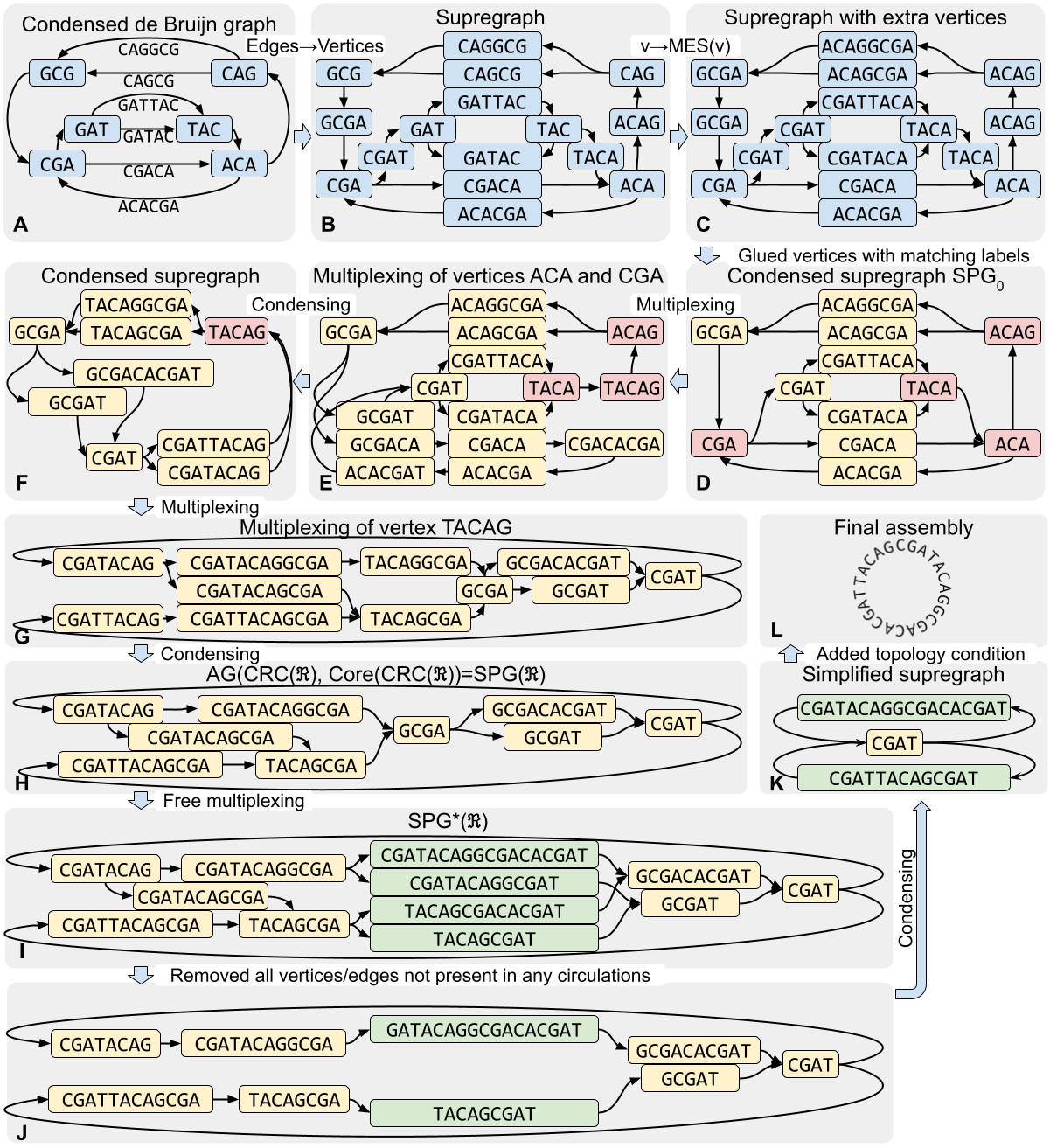}
\caption{\label{fig:multiplexing}
\textbf{Multiplexing}. The figure illustrates a theoretical genome assembly pipeline, including represention a read set in the form of supregraph and its further simplification, yielding an optimal assembly. Yellow/red/green vertices in subfigures D-K represent core vertices, non-conductor vertices and conductor MES that contain other conductor MES as proper infixes, respectively.  \textbf{(A)} Condensed de Bruijn graph, constructed from read set $\reads$. \textbf{(A-D)} supregraph condensing procedure. \textbf{(D-H)} Construction of $\SPG(\reads)$ using multiplexing approach. \textbf{(I)} $\SPG^*(\reads)$: result of additional free multiplexing, that ensures that all MES extensions of non-contained reads are among supregraph vertices. \textbf{(I-K)} OptimalAssembly algorithm: \textbf{(J)} Result of graph cleaning assuming that all non-contained reads are present in the genome exactly once. \textbf{(K)} Result of supregraph cleaning and compressing. \textbf{(L)} Final assembly assuming that the genome is known to consist of a single cyclic chromosome.}
\end{figure}

\begin{theorem}\label{thm:freemultiplexing}
Let $\assemblygraph^*$ be obtained from a supregraph $\assemblygraph$ by multiplexing. Then:
\begin{enumerate}
 \setlength{\itemsep}{0pt}   % no extra space between items
  \setlength{\parskip}{0pt}   % no paragraph spacing in items
  \setlength{\topsep}{0pt}    % no space above/below list
  \setlength{\parsep}{0pt}    % no extra spacing between paragraphs in item
  \setlength{\partopsep}{0pt} % no extra spacing when starting a new paragraph
  \item $\assemblygraph^*$ is a condensed supregraph and $\CPC(\assemblygraph^*) \subseteq \CPC(\assemblygraph)$.
    \item $\CPC(\assemblygraph^*) = \CPC(\assemblygraph)$ if and only if every multiplexing step is free.
    \item Any condensed supregraph $\assemblygraph^*$ can be obtained from $\SPG(\CPC(\assemblygraph^*))$ (minimal supregraph with the same set of chromosome candidates) by a sequence of free multiplexing steps.
\end{enumerate}
\end{theorem}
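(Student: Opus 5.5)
The plan is to analyse a single $VertexMultiplex$ step and then induct over the steps of $\ExhaustiveMultiplex$. Fix a supregraph $\assemblygraph$, a junction vertex $v$ and a restriction rule $D_r$. Let $\assemblygraph'$ be the graph just before the final $\CSPG$ call: $v$ is replaced by vertices $p_{uw}$ labeled $Label(uvw)$, one for each pair of edges with $D_r(uv,vw)$ true, with a prefix edge $u\to p_{uw}$ and a suffix edge $p_{uw}\to w$.

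\textbf{Step 1: structure and circuit labels of $\assemblygraph'$.} First I check that $\assemblygraph'$ is an assembly graph. Every new vertex lies on a cycle because each incoming edge is paired with some outgoing edge and vice versa, and the old cycles through $v$ are rerouted through the $p_{uw}$. Every edge is a prefix or suffix edge by construction.

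Next I compare circuits. A circuit of $\assemblygraph'$ passing through $p_{uw}$ maps to a circuit of $\assemblygraph$ passing through $u,v,w$. The label is unchanged, because the walk $u\to p_{uw}\to w$ and the walk $u\to v\to w$ both combine into $Label(uvw)$. Conversely, a circuit of $\assemblygraph$ lifts to $\assemblygraph'$ if and only if every consecutive pair $(uv,vw)$ it uses at $v$ is kept by $D_r$. Hence $\CPC(\assemblygraph')\subseteq\CPC(\assemblygraph)$, with equality when the step is free.

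For non-redundancy of $\assemblygraph'$, I would use Theorem~\ref{thm:ag}(1), $\assemblygraph=AG(\CPC(\assemblygraph),V(\assemblygraph))$. The plan is to show $\assemblygraph'=AG(\CPC(\assemblygraph'),V')$ with the new vertex set $V'$ and then invoke Theorem~\ref{thm:ag}(2). This needs $V'$ to be proper infix-free, and that is where the splitting property of $\assemblygraph$ (Theorem~\ref{thm:cprescond}) enters. Distinct $Label(uvw)$ diverge immediately around $v$, so neither is a proper infix of the other. A new $Label(uvw)$ could contain an old vertex as a proper infix; I would argue this cannot happen by non-redundancy of $\assemblygraph$, since such an occurrence would force that vertex to be an inner subwalk of $u\,v\,w$.

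Finally, Theorem~\ref{thm:cspg} shows that applying $\CSPG$ yields a condensed supregraph with the same $\CPC$. Inducting over steps gives part (1) and the ``if'' direction of part (2).

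\textbf{Step 2: the ``only if'' direction of part (2).} Suppose some step drops a pair $(uv,vw)$. Because $\assemblygraph$ is an assembly graph, the edges $uv$ and $vw$ lie on cycles, so there is a circuit $C$ in $\assemblygraph$ traversing $u\,v\,w$. Since $v$ has the conductor property (Theorem~\ref{thm:conductors}), I can choose $C$ so that it passes through $v$ exactly once. I then claim $Label(C)\notin\CPC(\assemblygraph')$. Suppose it were the label of a circuit $C'$ of the new graph. By non-redundancy (Theorem~\ref{thm:nonredcount} applied to the old vertices still present) and uniqueness of walks spelling a given label, $C'$ projects to $C$ itself, which uses the forbidden pair, a contradiction. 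Since later steps only shrink $\CPC$, the inclusion stays strict.

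\textbf{Step 3: part (3).} I would induct on a potential such as $\sum_{x\in V(\assemblygraph^*)}|Label(x)|$ minus that of $\SPG$, or more directly on the vertex set. Let $\chromosomes=\CPC(\assemblygraph^*)$. By Theorem~\ref{thm:ag}(1), $\assemblygraph^*=AG(\chromosomes,V^*)$. The vertices of $V^*$ are conductor MESs (Theorem~\ref{thm:conductors}(ii) and the condensed assumption), so every $x\in V^*$ contains some core string. If $V^*\neq\Core(\chromosomes)$, I pick a vertex $x\notin\Core$. Then $x$ properly contains a core string, or a shorter conductor MES, $c$, and the occurrences of $c$ in $\chromosomes$ split according to their extensions. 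Free-multiplexing the vertex carrying $c$ in a graph with smaller vertex set (built by induction from $\SPG(\chromosomes)$) and applying $\CSPG$ should create $x$, because $MES(Label(uvw))$ grows toward $x$. The induction target is $AG(\chromosomes,V^*\setminus\{x\}\cup\{\text{smaller pieces}\})$.

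The main obstacle is part (3): showing that a single free multiplexing followed by $\CSPG$ lands exactly on the vertex sets of intermediate graphs, so that the induction closes. I expect to need the MES properties from Appendix~\ref{app:mes}, in particular that $MES$ of an edge label $Label(uvw)$ is again a conductor MES, together with the minimality from Theorem~\ref{thm:minspg}.
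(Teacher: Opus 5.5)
\textbf{Parts (1) and (2).} These follow essentially the paper's route. You collapse the new vertices $p_{uw}$ back onto $v$ to get $\CPC(\assemblygraph^*)\subseteq\CPC(\assemblygraph)$ and lift circuits in the free case. In the non-free case you use that a non-redundant graph has no two distinct circuits with the same label. You argue this in $\assemblygraph$ via projection, while the paper argues in the freely multiplexed graph; both work. Your attempt to certify non-redundancy of the pre-condensing graph through Theorem~\ref{thm:ag} is more careful than the paper, which simply applies Theorem~\ref{thm:cspg}.

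Two side remarks on these parts:
\begin{itemize}
\item You cannot in general choose $C$ through $v$ only once. In a figure-eight at $v$ where the only way back from $w_i$ leads to $u_i$, the pair $(u_1v,vw_2)$ forces two visits to $v$. Your projection argument does not need this, so you can simply drop the claim.
\item In that same figure-eight, keeping the pairs $(u_1,w_1),(u_1,w_2),(u_2,w_2)$ covers every in- and out-edge, yet leaves $p_{u_1w_2}$ on no cycle. So the covering assumption alone does not make the intermediate graph an assembly graph; a pruning step is needed. The paper's assumption has the same weakness.
\end{itemize}

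\textbf{The genuine gap is part (3).} Your top-down step replaces a single non-core vertex $x$ by unspecified ``smaller pieces'' and hopes that one free multiplexing plus $\CSPG$ restores $x$. But multiplexing a junction $c$ deletes $c$ and creates a vertex for \emph{every} in/out pair at once. Undoing it therefore means collapsing the whole family of vertices generated at $c$, not a single $x$. You would then need to show all of the following, and none of it is addressed:
\begin{itemize}
\item the collapsed vertex set is proper-infix-free and consists of conductor MESs;
\item $AG$ on that set is a condensed supregraph with the same $\CPC$;
\item re-multiplexing lands exactly on $V^*$.
\end{itemize}
The sentence ``$MES(Label(uvw))$ grows toward $x$'' is not an argument for any of these. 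Moreover, you cannot use Theorem~\ref{thm:minspg}: the paper derives it \emph{from} part (3), so invoking it here is circular.

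\textbf{The missing idea} is Lemma~\ref{lm:supremes}. In a condensed supregraph, the prefix-edge out-neighbours of $u$ are exactly the minimal MESs (with respect to $\CPC$) having $u$ as a proper prefix, and symmetrically for suffix edges. These neighbourhoods therefore depend only on $\chromosomes$, not on the particular graph.

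The paper then works bottom-up:
\begin{itemize}
\item Start from $\SPG(\chromosomes)$ and repeatedly free-multiplex the \emph{shortest} current vertex not in $V(\assemblygraph^*)$.
\item Keep the invariant that every vertex of $\assemblygraph^*$ contains some current vertex. This holds initially because every conductor MES contains a core string.
\item Lemma~\ref{lm:supremes} shows that this shortest missing vertex is a junction. Otherwise its prefix or suffix predecessor would be shorter, hence present in $\assemblygraph^*$, which forces the vertex itself into $\assemblygraph^*$.
\item Lemma~\ref{lm:supremes} also shows that multiplexing it preserves the invariant, since the neighbour witnessing containment survives.
\item At termination each graph dominates the other in the order of Lemma~\ref{lm:spgpo}, and its antisymmetry (via Theorem~\ref{thm:ag}(1)) gives equality.
\end{itemize}
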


In other words, the minimal supregraph $\SPG(\chromosomes)$ with a given chromosome candidate set $\chromosomes$ serves as a canonical starting point: any condensed supregraph with the same set of chromosome path candidates can be derived from it via free multiplexing. In contrast, non-free multiplexing strictly reduces the space of chromosome candidates.

\paragraph{Using multiplexing to construct $SPG(\reads)$.} The definition of $SPG(\reads)$ is not constructive as it requires processing of an infinite set $\CC(\reads)$. Instead $SPG(\reads)$ can be constructed from the condensed de Bruijn graph $DBG(\reads, k)$ using multiplexing. To do it we first apply supregraph condensing to $DBG(\reads, k)$ (fig. \ref{fig:multiplexing}A--C), turning de Bruijn graph into a condensed supregraph $CSPG(DBG(\reads, k))$, which we denote as $SPG_0(\reads)$ with the same set of chromosome candidates. Combining theorems \ref{thm:conventional} and \ref{thm:cspg} we have $\CC(\reads)\subseteq\CPC(\DBG(\reads,k))=CPC(SPG_0(\reads))$. Thus to transform $SPG_0(\reads)$ into $SPG(\reads)$ we must first prohibit some extra chromosome path candidates $SPG_0(\reads)$ might have. Multiplexing procedure helps with it, since multiplexing steps can reduce the set of chromosome path candidates. Here we use multiplexing with the following decision rule $D^{SPG}$: multiplex only non-conductor vertices; combine incoming edge $e_1$ and outgoing edge $e_2$ into a new vertex only if combinled label of $e_1$ and $e_2$ is present as a substring in a read chain (in other words it is a substring of $\CC(\reads)$).

\begin{theorem}
    Multiplexing with decision rule $D^{SPG}$ always stops in a finite number of steps and regardless of operation order $\ExhaustiveMultiplex(SPG_0(\reads), D^{SPG})=SPG(\reads)$.
\end{theorem}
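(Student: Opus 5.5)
We prove three things: multiplexing with $D^{SPG}$ preserves $\CC(\reads)\subseteq\CPC(\cdot)$, it terminates, and at termination the graph is exactly $SPG(\reads)=AG(\CC(\reads),\Core(\CC(\reads)))$. Throughout, write $\chromosomes=\CC(\reads)$; it is finitely conducting by Theorem~\ref{thm:readscond}.

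\emph{Invariant.} We prove by induction on the number of steps that every intermediate graph $\assemblygraph_i$ is a condensed supregraph with $\chromosomes\subseteq\CPC(\assemblygraph_i)$. The base case is $SPG_0(\reads)$, which holds by Theorems~\ref{thm:conventional} and~\ref{thm:cspg}. For the step, let a circuit $C$ of $\assemblygraph_i$ have label in $\chromosomes$ and pass through a multiplexed vertex $v$ via edges $e_1,e_2$. Then the combined label of $e_1e_2$ is a substring of $Label(C)\in\chromosomes$, so $D^{SPG}$ keeps the pair. Hence $C$ lifts to a circuit of $\assemblygraph_{i+1}$ with the same label. Condensation does not change labels, by Theorem~\ref{thm:cspg}. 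The condensed-supregraph part follows from Theorem~\ref{thm:freemultiplexing}(1).

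\emph{Termination.} After multiplexing, each new vertex label strictly contains the label of the old vertex $v$ as a proper infix. The labels created by $D^{SPG}$ are always substrings of $\chromosomes$. By Theorem~\ref{thm:readscond}, every string of length $2L$ is a conductor, so a non-conductor vertex has label of length $<2L$. Only finitely many substrings of $\chromosomes$ (over a finite alphabet) have length $<2L$, so only finitely many non-conductor labels can ever appear. I would use the potential
\[
\Phi=\sum_{v\ \text{non-conductor}}(2L-|Label(v)|),
\]
or, more robustly, the multiset of lengths of non-conductor vertex labels. The aim is to show $\Phi$ decreases lexicographically with each step. The key facts are that multiplexing replaces $v$ by strictly longer labels, and that CSPG only replaces labels by their MES extensions, which are longer. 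I expect this to need care: condensing could merge vertices or create new non-conductor MES. A cleaner alternative is to show that every intermediate graph is non-redundant and has vertex set among substrings of $\chromosomes$ of bounded length plus conductors. Then there are finitely many possible graphs, and each step strictly shrinks $\CPC$ or lengthens some label.

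\emph{Identification of the limit.} At termination every vertex is a conductor MES with respect to $\CPC(\assemblygraph^*)$, and $\chromosomes\subseteq\CPC(\assemblygraph^*)$. I would first show $\CPC(\assemblygraph^*)=\chromosomes$. Take a circuit $C$ of $\assemblygraph^*$. Every vertex conducts, and every consecutive edge pair $e_1,e_2$ through a vertex has its combined label occurring in $\chromosomes$, since each such pair was kept by $D^{SPG}$ or inherited from an earlier kept pair. Then use the conductor property to splice: cut $C$ at vertex occurrences and glue $\chromosomes$-circuits through each conductor vertex, as in the proof of Theorem~\ref{thm:criterion}. This shows $Label(C)\in\chromosomes$.

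This step is the main obstacle. The conductor property is relative to $\CPC(\assemblygraph^*)$, not to $\chromosomes$. So I need a lemma: if $v$ is a conductor for $\CPC(\assemblygraph^*)$ and every local two-edge window through $v$ is a substring of $\chromosomes$, then the same splicing works inside $\chromosomes$. I would prove this by induction on the number of vertex occurrences in $C$, using Theorem~\ref{thm:readscond} for long windows.

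Once $\CPC(\assemblygraph^*)=\chromosomes$, Theorem~\ref{thm:freemultiplexing}(3) says $\assemblygraph^*$ arises from $SPG(\chromosomes)$ by free multiplexing. Minimality, i.e.\ equality, follows because any free multiplex of a core vertex would yield vertices containing a core string as a proper infix. Since every vertex of $\assemblygraph^*$ came from splitting only non-conductor vertices, its conductor vertices are exactly the core strings. Then Theorem~\ref{thm:ag}(1) and (3) give $\assemblygraph^*=AG(\chromosomes,V(\assemblygraph^*))=AG(\chromosomes,\Core(\chromosomes))$. The result is independent of operation order because this characterization of the limit does not depend on the order.
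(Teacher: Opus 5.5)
Your invariant step matches the paper, but the identification of the limit rests on a misreading of $D^{SPG}$. In the rule, ``conductor'' means conductor with respect to $\chromosomes=\CC(\reads)$. You already use this reading in your termination paragraph when you invoke Theorem~\ref{thm:readscond}.

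Relative to $\CPC(\assemblygraph^*)$ the notion is empty. By Theorem~\ref{thm:conductors}(ii), every vertex label of a non-redundant graph is a conductor for that graph's own $\CPC$. So under your reading the rule never fires, the output is $SPG_0(\reads)$, and the theorem fails whenever $\CC(\reads)\subsetneq\CPC(\DBG(\reads,k))$. This is why your planned two-edge-window lemma has nothing to stand on. Its hypothesis is also unjustified: ``kept by $D^{SPG}$ or inherited'' says nothing about junctions that were never multiplexed.

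With the correct reading no such lemma is needed:
\begin{itemize}
\item At termination every junction is a conductor for $\chromosomes$.
\item Every vertex of a condensed supregraph contains a junction, namely a minimal label (Proposition~\ref{prop:consupprop}). So every vertex is a conductor for $\chromosomes$, by the superstring argument of Theorem~\ref{thm:conductors}(i).
\item Add to your invariant that every edge label is a substring of $\chromosomes$. This holds for $SPG_0$ because MES extensions taken in the larger set $\CPC(SPG_0)$ remain substrings of $\chromosomes$. It is preserved because $D^{SPG}$ checks new labels.
\item Lemma~\ref{lm:switch} then gives, by induction, that all walk labels are substrings of $\chromosomes$.
\item Lemma~\ref{lm:circularization} then gives $\CPC(\assemblygraph^*)\subseteq\chromosomes$.
\end{itemize}

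Your final step also has a gap. The claim that ``every vertex came from splitting only non-conductor vertices, so its conductor vertices are exactly the core strings'' does not follow. Nothing you prove rules out a vertex that was created by multiplexing and then MES-extended during condensing from containing a core string as a proper infix.

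The paper carries a second invariant: $\assemblygraph_i\le\SPG(\reads)$ in the order of Lemma~\ref{lm:spgpo}, i.e.\ every core string contains some current vertex label. It survives a step at $v$ because a core string containing $v$ also contains the minimal MES properly extending $v$ inside it. By Lemma~\ref{lm:supremes} that MES is a neighbour of $v$, and it is not deleted. At termination every vertex of $\assemblygraph^*$ is a conductor MES for $\chromosomes$ and so contains a core string, giving $\SPG(\reads)\le\assemblygraph^*$. Antisymmetry then forces $\assemblygraph^*=\SPG(\reads)$, which also settles order-independence.

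Your termination paragraph is unfinished. The paper argues directly that each step deletes one of finitely many non-conductor labels. In either version, the point that still needs an argument is that a deleted label never reappears.
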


Figures \ref{fig:multiplexing}D-G illustrate the application of this algorithm. In Fig. \ref{fig:multiplexing}D, ACA and CGA are non-conducting; testing all edge-pair combinations yields the new vertices shown in Fig. \ref{fig:multiplexing}E. Condensing(Fig. \ref{fig:multiplexing}F) leaves one remaining non-conducting vertex, resolved in Fig. \ref{fig:multiplexing}G. Final condensing (Fig. \ref{fig:multiplexing}H) completes the construction of $SPG(\reads)$.

\paragraph{Using free multiplexing to construct a genome candidate preserving graph.} We get back to the problem of perfectly translating the containment condition into vertex multiplicity condition. By theorem \ref{thm:mescnt}, it is sufficient to construct a chromosome candidate-preserving graph that contains $MES(r)$ as a vertex sequence for every non-contained read $r$ from $\reads$. By theorem \ref{thm:freemultiplexing} all chromosome candidate preserving graphs can be constructed from $SPG(\reads)$ using a series of free multiplexing steps. The following multiplexing decision rule $D^\reads$provides us with a graph that can perfectly represent containment condition in the graph form: we apply multiplexing to a vertex only if it is labeled with a proper infix of a read from $\reads$, all multiplexing steps are free.

\begin{theorem}\label{thm:spgstar}
    Multiplexing with the decision rule $D^{\reads}$ always stops in a finite number of steps and regardless of the operation order the result is the same condensed supregraph $SPG^*(\reads)=\ExhaustiveMultiplex(SPG(\reads), D^{\reads})$. $SPG^*(\reads)$ contains vertex with label $MES(r)$ for every non-contained read $r$ from $\reads$.
\end{theorem}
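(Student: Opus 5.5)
The plan has three parts: a potential argument for termination, confluence via an explicit description of the result, and a direct argument for the MES-vertex claim. Throughout, free multiplexing leaves $\CPC$ unchanged (Theorem~\ref{thm:freemultiplexing}), so every intermediate graph is a condensed supregraph with $\CPC=\CC(\reads)=:\chromosomes$. Every vertex label is therefore a substring of $\chromosomes$ and, by Theorem~\ref{thm:conductors}(ii), a conductor for $\chromosomes$. Since the graph is condensed, every vertex label is also a MES.

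\emph{Termination.} Multiplexing a vertex $v$ whose label is a proper infix of a read $r$ replaces $v$ by vertices labeled $Label(uvw)$, each a strict two-sided extension of $Label(v)$. After the CSPG step these labels become their MES extensions, which are at least as long. I will use the potential
\[
\Phi(\assemblygraph)=\sum_{r\in\reads}\#\{\text{vertices whose label is a proper infix of } r\}.
\]
I expect $\Phi$ to be bounded, since the number of substrings of reads is finite and each label occurs at most once as a vertex. The claim to prove is that each step strictly decreases a suitable lexicographic measure, for example the multiset of lengths $|r|-|Label(v)|$ over pairs $(v,r)$ with $Label(v)$ a proper infix of $r$. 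Each new vertex is strictly longer than $v$, and the vertex $v$ itself disappears. The thing to verify is that condensing cannot recreate $v$: if it did, $v$'s MES status would force two distinct left and two distinct right extensions (Theorem~\ref{thm:mesalt}), and those would have been split apart by the multiplexing.

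\emph{Uniqueness of the result.} I will avoid a diamond-lemma argument and instead characterize the output directly. The output is $AG(\chromosomes,V^*)$, where $V^*$ is the set of core-like strings: conductor MES that contain no other conductor MES as a proper infix, except when that inner MES is a proper infix of some read. To prove this I will show two things about any terminal graph. First, its vertex set is exactly $V^*$. The inclusion in one direction holds because a terminal graph has no vertex that is a proper infix of a read; the other direction follows from minimality arguments analogous to those behind Theorem~\ref{thm:minspg}. Second, a non-redundant graph is determined by its $\CPC$ and its vertex set, via Theorem~\ref{thm:ag}(1). Since $\CPC$ is fixed, the terminal graph is independent of the order in which vertices were processed.

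\emph{Containment of $MES(r)$.} Let $r$ be a non-contained read. By Theorem~\ref{thm:readscond}, $r$ is a conductor, and hence so is $MES(r)$ by Theorem~\ref{thm:conductors}(i). Take an occurrence of $r$ in some circuit label. By non-redundancy and the splitting property, it is covered by a walk, and each vertex label of that walk is either a superstring of $r$ or overlaps it partially. Suppose no vertex label contains $r$. Then some vertex label $x$ on the walk is a proper infix of $r$, since otherwise consecutive vertex overlaps would leave an edge that contains $r$ strictly inside it, contradicting suffix/prefix edges. But then the terminal condition is violated. So some vertex $v$ satisfies $r\subseteq Label(v)$, and $Label(v)$ is a MES. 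I then need $Label(v)=MES(r)$ rather than a longer MES. For this I will argue that $MES(r)$ is a conductor MES, so it would be a vertex of $AG(\chromosomes,V^*)$ unless some vertex label is a proper infix of it. Any such proper infix must be a vertex that is a proper infix of $r$, or one that straddles $r$. Vertices of the first kind have been removed. Vertices of the second kind are ruled out because every occurrence of $r$ extends to $MES(r)$ (Theorem~\ref{thm:mesext}), which makes the straddling vertex a non-MES. I expect this last step, excluding longer vertices and straddling vertices, to be the main obstacle, along with the precise characterization of $V^*$ used in the uniqueness argument.
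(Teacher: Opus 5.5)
Your termination argument is a valid alternative to the paper's ``no vertex is multiplexed twice.'' $\Phi$ alone is not monotone, since one step can create several new vertices that are still proper infixes of reads. The multiset measure does decrease, however. If a new label $p\supsetneq Label(v)$ is a proper infix of a read $r'$, then $Label(v)$ is also a proper infix of $r'$, so each new pair $(p,r')$ is dominated by the removed pair $(v,r')$. Moreover, $v$ cannot reappear within the step: $\CPC$ is unchanged, old labels are already MES, and new labels strictly contain $Label(v)$.

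Your uniqueness route also differs from the paper's. You characterize the terminal vertex set intrinsically and then apply Theorem~\ref{thm:ag}(1). The paper instead uses Lemma~\ref{lm:supremes}: the neighbours of a junction $u$ at the moment it is multiplexed are fixed by $\CPC$, so the vertices created and surviving do not depend on the order. Your route can work, but the essential lemma is missing.

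Write $\mathcal{I}$ for the set of proper infixes of reads. First, $V^*$ must explicitly require $x\notin\mathcal{I}$. As literally stated, every conductor MES in $\mathcal{I}$ qualifies, because all of its inner conductor MES also lie in $\mathcal{I}$. Second, ``no terminal vertex is in $\mathcal{I}$'' gives only $Label(y)\notin\mathcal{I}$. It says nothing about conductor MES sitting strictly inside $Label(y)$, and nothing forces an element of $V^*$ to be a vertex.

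What you need is the invariant that at every stage, each conductor MES not yet multiplexed away contains a current vertex. It holds because when a junction $v$ is removed and $x=avb$ with $b$ nonempty, the shortest MES extending this occurrence of $v$ to the right inside $x$ exists ($vb$ is MES, cf.\ Lemma~\ref{lm:mes_substring}). By Lemma~\ref{lm:supremes} it is an out-neighbour of $v$, so it survives. With this invariant both inclusions follow:
\begin{itemize}
\item A terminal vertex containing a conductor MES outside $\mathcal{I}$ as a proper infix would contain another vertex as a proper infix, contradicting Proposition~\ref{prop:speprop}(4).
\item For $x\in V^*$, let $y$ be the longest vertex inside $x$. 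It cannot be a proper infix of $x$. If it were, say, a proper prefix, it could not have an outgoing suffix edge, because the successor would be a vertex strictly inside $x$. Lemma~\ref{lm:supremes} would then give a longer vertex inside $x$. Hence $y=x$.
\end{itemize}
``Minimality arguments analogous to Theorem~\ref{thm:minspg}'' does not supply this.

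The second gap is the step you flagged, and your proposed reason for it is wrong. Being MES depends on all occurrences of $y$. Occurrences of $y$ outside occurrences of $r$ can supply two distinct left and right extensions, so Theorem~\ref{thm:mesext} applied to $r$ does not make a straddling $y$ non-MES. You also omit the case where $y$ lies entirely in a flank of $MES(r)$.

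The missing idea is to use that $y$ is a \emph{conductor}. Write $MES(r)=ayb$ with $a,b$ nonempty. By Lemma~\ref{lm:switch}, exactly as in the proof of Lemma~\ref{lm:mes_substring}, both $ay$ and $yb$ are MES. If this occurrence of $y$ is not strictly inside $r$, then $r$ lies inside $ay$ or $yb$. That would give a MES containing $r$ strictly shorter than $MES(r)$, which is impossible: every MES containing an occurrence of $r$ contains the corresponding occurrence of $MES(r)$, by the occurrence-extension argument from Theorems~\ref{thm:mesalt} and~\ref{thm:mesext}. This is precisely the paper's argument.

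Together with $MES(r)\notin\mathcal{I}$ (since $r$ is non-contained), this gives $MES(r)\in V^*$ directly. Your preliminary step locating some vertex that contains $r$ then becomes unnecessary. The paper instead splits into two cases: either $MES(r)$ is already a core string, or it is created when the longest conductor MES that is a proper infix of $r$ is multiplexed.
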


In Fig. \ref{fig:multiplexing}E, GCGA is such a vertex (a substring of AGCGAT), and multiplexing it (Fig. \ref{fig:multiplexing}I) yields the final graph $SPG^*(\reads)$.

Theorem \ref{thm:spgstar} concludes the central result of this paper. String graph is expected to perfectly translate chain and containment conditions, but fails to do it because of contained read removal. $SPG^*(\reads)$ fulfills these expectations as it perfectly represents chromosome candidates as circuits and enforces containment exactly: a genome path contains a vertex labeled $MES(r)$ if and only if the genome candidate contains read $r$ as a substring, providing a perfect graph translation of both chain and containment conditions.

\paragraph{Multiplicity.}
While we have provided a perfect translation for the containment condition, complete representation of a genome assembly problem in the graph form requires translation of any set string multiplicity conditions ($\multiplicity$) into vertex multiplicity conditions ($\vertexmultiplicity$). Let $S$ be a set of strings with multiplicity constraints defined by function $\multiplicity$. If graph $\SPG(\reads)$ contains vertices labeled $MES(s)$ for every $s\in S$, then the perfect translation of multiplicity condition can be obtained the same way as for containment condition: for each string $s\in S$ there is a vertex  that always has the same number of occurrences in genome path candidate as $s$ has in its label. In particular, if $S$ consists only of non-contained read sequences, we can always perfectly translate the multiplicity condition into the graph form.

This allows us to extend the main result by taking multiplicity and topology conditions into account: given a read set, multiplicity constraints on non-contained reads $\multiplicity$, and topology conditions $\topology$ we have constructed a supregraph $SPG^*(\reads)$ and a set of vertex multiplicity constraints $\vertexmultiplicity$ that provide a genome candidate-preserving representation of genome assembly problem instance.

 % We discuss the possibility of constructing genome candidate-preserving translation into graph form in case multiplicity condition is not limited to read sequences in APPENDIX.

\paragraph{Multiplexing in practice.} While multiplexing provides constructive definitions for $SPG(\reads)$ and $SPG^*(\reads)$, its practical application requires addressing additional challenges. First, construction of $SPG(\reads)$ requires quick query answers of two types: whether a given string is a conductor and whether a given string is a substring of a read chain. We leave these questions of efficient data structures beyond the scope of this paper. Second, the number of multiplexing steps in these procedures is difficult to estimate since it applies to each MES string that has length at least $k$ and is a substring of one of the reads.

We introduced the multiplexing procedure to address a theoretical question, but this approach is also very powerful in practice. Multiplexing allows us to selectively ignore specific repeats while preserving all other matches. For example, de Bruijn graphs are often criticized for merging identical $k$-mers that occur in the middle of reads even when the reads themselves do not overlap (e.g., reads with sequences ARB and CRD). Multiplexing the vertex corresponding to such a match (R) produces a graph in which this spurious middle-repeat match is ignored, while all other structure is retained. Exhaustive free multiplexing within a tangled region of the graph can generate all possible paths through this region and enable verification of their viability using additional data. Finally, even a small number of multiplexing steps applied to graphs constructed from real sequencing reads may expose errors in the form of simple graph structures (bulges or tips), thereby facilitating error correction.

These observations suggest several promising directions for future development of multiplexing-based genome assembly algorithms.

\paragraph{Multiplexing and supregraphs.}
In \cite{bankevich2022multiplex}, the multiplexing procedure was used to transform $DBG(\reads, k)$ into $DBG(\reads, k+1)$. Iterative application of this procedure enabled a smooth transition from smaller to larger values of $k$, avoiding the artificial gaps that are often introduced when de Bruijn graphs are constructed independently for successive values of $k$. \cite{peng2010idba,bankevich2012}. 

However, in \cite{bankevich2022multiplex} it was shown that the resulting graph does not always support all required operations and that certain vertices must be “frozen.” This limitation stems from the prohibition of prefix and suffix edges, which makes de Bruijn graphs insufficiently expressive to smoothly represent all possible sequence configurations.

To illustrate this point, consider a read set $\reads_k$ consisting of all possible $k$-mers. The minimal representation of this read set is the graph $\DBG_k$, whose vertices are all $4^{k-1}$ $(k-1)$-mers and whose edges are all $4^k$ $k$-mers. The containment condition can then be expressed as: every edge must appear at least once along the genome path.

Now suppose we add a single read of length $k+1$ to obtain a new read set $\reads'_k$. How does the assembly graph change? If prefix and suffix edges were allowed, the modification would require only a single multiplexing operation. In contrast, when such edges are prohibited, the graph $\DBG_{k+1}$ (with all $k$-mers as vertices and all $(k+1)$-mers as edges) becomes the minimal non-redundant splitting graph capable of representing $\reads'_k$ (see appendix~\ref{app:multiplexing}). Adding just a single read leads to a change in the entire graph, transforming $\DBG(\reads_k, k-1)$ into $\DBG(\reads_k, k)$. Thus, beyond the natural choice of vertex labels, a supergraph structure is essential for multiplexing.

\section{Optimal genome assembly}\label{sec:optimal}
We conclude with an example of a solution to the optimal genome assembly problem in a simplified setting. We assume that no topology constraints are imposed, and that multiplicity conditions apply only to non-contained reads, with each allowed multiplicity being an interval of integers. Algorithm~\ref{alg:optimal_assembly} describes a procedure that, given a read set $\reads$ and a set of multiplicity constraints $\multiplicity$, identifies all longest strings that appear in at least one genome candidate and contain at least one unique read (that is, a read whose multiplicity is constrained by $\multiplicity$ to be exactly~$1$) as a substring.

\begin{algorithm}[H]
\caption{OptimalAssembly($\reads|\multiplicity$)}
\label{alg:optimal_assembly}
$\assemblygraph \leftarrow SPG^*(\reads)$\;
$\vertexmultiplicity \leftarrow$ vertex multiplicity conditions induced by $\multiplicity$\;
$N \leftarrow $ a network on graph $\assemblygraph$ with (upper and lower) vertex capacities in $N$ defined by $\vertexmultiplicity$\;
\ForEach{$t \in V(\assemblygraph)\cup E(\assemblygraph)$}{
    \If{$t$ does not belong to any feasible circulation in $N$}{
        remove $t$ from $\assemblygraph$\;
    }
}
$\assemblygraph \leftarrow$ \CSPG($\assemblygraph$)\;
$Assembly \leftarrow \varnothing$\;
\ForEach{$r \in \reads$}{
    \If{$\multiplicity(r)=\{1\}$} {
        Add shortest vertex containing $read$ ($MES(r)$) to $Assembly$\;
    }
}
\Return $Assembly$\;
\end{algorithm}

Algorithm~\ref{alg:optimal_assembly} removes all vertices and edges that cannot occur in any genome consistent with the specified multiplicity constraints. The remaining graph is then condensed to collapse maximal unbranching paths. The following theorem shows that this natural supregraph cleaning procedure is sufficient to recover all optimal contigs that contain at least one unique read as a vertex sequence.

\begin{theorem}\label{thm:oa}
For any read set $\reads$ and a set of multiplicity conditions $\multiplicity$ defined on non-contained reads:
$$OptimalAssembly(\reads|\multiplicity)=\Superstrings(OA(GC(\reads|\multiplicity)), \reads^1),$$
where $\reads^1$ denotes the set of unique reads from $\reads$.
\end{theorem}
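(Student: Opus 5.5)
The plan is to pass entirely to the graph side, using that $(SPG^*(\reads),\vertexmultiplicity)$ is a genome candidate-preserving representation. By Theorem~\ref{thm:spgstar}, $SPG^*(\reads)$ is a condensed supregraph with $\CPC(SPG^*(\reads))=\CC(\reads)$, and it contains a vertex labeled $MES(r)$ for every non-contained read $r$. By Theorem~\ref{thm:mescnt}, the number of occurrences of that vertex in any circuit equals the number of occurrences of $r$ in the circuit label. Hence $\vertexmultiplicity$, which puts the interval $\multiplicity(r)$ on vertex $MES(r)$, gives $GC(\reads|\multiplicity)=\Labels(GPC(\assemblygraph|\vertexmultiplicity))$. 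Let $\genomepaths=GPC(\assemblygraph|\vertexmultiplicity)$. The target is then $\Superstrings(\OA(\Labels(\genomepaths)),\reads^1)$. Supregraphs are splitting, and $SPG^*$ is non-redundant, so Theorem~\ref{thm:limitation} applies. Since every unique read $r$ has $MES(r)$ as a vertex, a contig containing $r$ contains a vertex label. So the set equals $\{\Labels(P): P\in\OGA(\genomepaths),\ Label(P)\supseteq r \text{ for some } r\in\reads^1\}$.

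Next I would analyse the cleaning step. Genome path candidates are multisets of circuits, i.e.\ integer circulations, with the vertex multiplicity intervals as capacity bounds. An element belongs to some genome path candidate iff it carries positive flow in some feasible integer circulation. With interval bounds, integrality of circulation polytopes lets me pass between feasible circulations and integer ones. So removing everything not on a feasible circulation leaves exactly the union $\assemblygraph'$ of all genome path candidates, and $\genomepaths$ is unchanged. Since deletion keeps circuits as circuits, $\CPC$ only shrinks. After $\CSPG$ (Theorem~\ref{thm:cspg}), circuit labels are preserved, and $\CSPG(\assemblygraph')$ is again a condensed supregraph carrying the same genome candidates. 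I also need it to still have a vertex containing each unique read; the condensing maps vertex labels to their MES, so the vertex through $MES(r)$ becomes some vertex $w_r\supseteq MES(r)$.

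The core claim is: for a unique read $r$, the optimal contig containing $r$ is $Label(w_r)$. Because $r$ has multiplicity exactly $1$, every genome path candidate passes through $MES(r)$ exactly once. Take $w_r=MES(r)$ computed with respect to the cleaned chromosome set $\CPC(\CSPG(\assemblygraph'))$, which is a condensed supregraph vertex (Theorem~\ref{thm:mesext}). Every occurrence of $r$ in every genome candidate extends to $w_r$, so $Label(w_r)$ is a common substring of all candidates containing $r$.

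For maximality, suppose the contig $c\supseteq r$ strictly extends $w_r$, say to the right. Because $w_r$ is a MES, Theorem~\ref{thm:mesalt} gives two distinct right extensions $w_rc_3$, $w_rc_4$ in cleaned chromosome candidates. By the splitting property these are two distinct outgoing edges, and each survived cleaning, so each lies on some genome path candidate. In a candidate using the edge toward $c_4$, the unique occurrence of $r$ is followed by $c_4$. Therefore $w_rc_3$ is not common to all candidates, a contradiction.

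The main obstacle is precisely this step: bridging ``substring of some cleaned chromosome candidate'' with ``substring of some full genome candidate''. An edge may be in a feasible circulation while a specific cyclic chromosome through it is not itself a genome candidate. I would handle it by arguing only about edges and vertices, each of which lies in some genome path candidate, and by using uniqueness of $r$ so that the local continuation at the single occurrence is determined by which outgoing edge is used. Finally, I would check that distinct unique reads inside the same contig yield the same $w_r$, so the output is exactly the substring-free set claimed.
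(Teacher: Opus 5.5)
Your reduction to the graph side and your analysis of the cleaning step are essentially fine. The maximality argument, however, has a genuine gap, and it is exactly the obstacle you flagged. The sentence ``in a candidate using the edge toward $c_4$, the unique occurrence of $r$ is followed by $c_4$'' holds only when the branching happens at $w_r$ itself. That is the case where $w_r$ has at least two outgoing prefix edges: every traversal of an out-edge of $w_r$ is then preceded by the single occurrence of $w_r$.

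In a condensed supregraph, however, $w_r$ may instead have exactly one outgoing suffix edge (e.g.\ when $w_r$ is an outer vertex; see Proposition~\ref{prop:consupprop}). Then every walk through $w_r$ is forced along suffix edges to a vertex $u$ whose label is a proper suffix of $w_r$. Only $u$ carries the two outgoing edges $e_3,e_4$ that produce $w_rc_3$ and $w_rc_4$. In general $u$ is not unique. A genome path candidate can traverse $e_4$ at some other visit of $u$, while the visit following $w_r$ still continues with $e_3$. So knowing that each edge lies on some genome path candidate does not give you a candidate in which $w_r$ is followed by $c_4$, and uniqueness of $r$ alone cannot close this.

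The missing idea is a rewiring of circuit decompositions. This is also where the absence of topology constraints enters; your argument never uses it.
\begin{itemize}
\item Let $P$ be the closed walk with label $w_r$ that ends at $u$. Every feasible circulation has flow exactly $1$ through $w_r$, and the continuation from $w_r$ to $u$ is forced, so every circuit decomposition contains $P$.
\item Take a feasible integer circulation that uses $e_4$. Choose a decomposition in which $P$ is followed by some $e'\neq e_4$, while $e_4$ is entered at another visit of $u$ after a walk $P'$.
\item Swap the pairings at $u$ to obtain $Pe_4$ and $P'e'$. All edge and vertex flows are unchanged, so the bounds from $\vertexmultiplicity$ still hold.
\item Since no topology is imposed, the resulting closed walks form a genome path candidate. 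They may have merged or split, and the one through $w_r$ is primitive because it visits $w_r$ only once.
\end{itemize}
This gives candidates in which the unique occurrence of $r$ is followed by $c_3$ and by $c_4$ respectively, which is what your contradiction needs; the left side is symmetric. This switching step is the core of the paper's own proof, and without it your maximality claim is unproven.
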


Figure \ref{fig:multiplexing}J illustrates the effect of removing all vertices not belonging to any circulation, under the assumption that every non-contained read appears exactly once. The remaining condensed graph (Fig.~\ref{fig:multiplexing}K) consists of two loops. Sequences of these loops make up the optimal genome assembly. If an additional topological constraint, requiring the genome to form a single cyclic chromosome, is imposed, the complete genome assembly can be obtained by combining these two loops into a single cycle (Fig.~\ref{fig:multiplexing}L).

\section*{Discussion}
The theoretical framework developed in this paper clarifies the structure of the genome assembly problem and provides a foundation for designing and analyzing practical algorithms. Classical assembly approaches often rely on operations, such as discarding contained reads or resolving repeats, that are justified mainly by intuition or empirical experience. By contrast, the framework of chromosome and genome candidates, supregraphs, and multiplexing specifies exactly which representations preserve information and why. As a result, the entire process becomes more transparent: the representation of the read set, the meaning of every vertex and edge, and the effect of each graph modification step can all be traced directly to well-defined mathematical conditions.

This increased transparency is especially important for multiplexing, which in practice is used as a heuristic repeat-resolution technique. The theoretical formulation shows that multiplexing can be a correctness-governed tool whose effect on the space of genome candidates is well characterized. This not only explains why multiplexing-based strategies succeed in assemblers such as LJA and Verkko, but also provides a framework for evaluating, comparing, and improving the decision rules used in practice. A mathematically grounded representation makes it possible to understand which ambiguities or errors are inherent to the data and which arise from algorithmic choices.

At the same time, several challenges remain before the theory can be translated into practical tools. The ability of our approach to partially reconstruct optimal assembly (which was proven to be an NP-hard problem \cite{kapun2013de}) is explained by the fact that in some cases the supregraph, representing a given set of reads, is exponential in size. Construction of supregraphs requires nontrivial operations, such as determining which substrings represent core MES and verifying whether candidate labels occur in read chains. Moreover, the framework assumes error-free reads, while practical applications require error-correction procedures integrated with graph construction. Addressing these issues and developing efficient data structures for representing and updating supregraphs is beyond the scope of this paper and will be discussed elsewhere.

\section*{Acknowledgements}
The author is grateful to Sergey Nurk, Pavel Pevzner, and Yana Safonova for many helpful discussions that contributed to this work.
\bibliographystyle{unsrt}
\bibliography{references}
\newpage

\appendix
\section*{Appendix}

\appsection{Bidirected strings and graphs}\label{app:bidirected}
While we traditionally represent DNA as a sequence of nucleotides, the real physical DNA molecule contains two chains with nucleotide sequences reverse-complementary to each other. Thus the correct way to encode DNA sequence is with an unordered pair of strings $s_1$, $s_2$ where $s_1$ and $s_2$ are reverse-complements to each other. If DNA is encoded this way, so should reads, chromosome/genome candidates, labels of vertices and edges in assembly graphs. This correction leads to a bidirected assembly graph instead of a conventional directed graph. Bidirected graphs are well-studied and one can easily formulate the same notions of walks, circuits, and their labels in bidirected graphs. In this text for simplicity we do not use the language of bidirected graphs, but every single formulation and proof here can be directly translated into correct formulation and proof for bidirected graphs.

\appsection{Challenges of linear chromosome assembly}\label{app:linear-challenges}
The assumption that all chromosomes in the genome form a single circular sequence is particularly convenient for graph-based models. Without this assumption, each read could, in principle, originate from an independent chromosome, necessitating additional complex constraints to avoid such ``perfect'' assemblies.

In reality, complex genomes predominantly consist of linear chromosomes. However, the practical distinction between circular and linear chromosomes diminishes as chromosome length increases. Complex repetitive regions are interspersed with simple, easily assembled regions, enabling the decomposition of the assembly problem into smaller subproblems, each representing a local ``genome tangle'' with multiple entry points. Conceptually, one can connect all free ends within a subproblem into a single virtual vertex, effectively circularizing each genome fragment involved in the tangle and reducing the assembly problem to the circular chromosome case. This reduction is generally valid, except for subproblems that include actual chromosome termini, which require special treatment and are challenging to model accurately within a graph framework.

\appsection{Cyclic strings and related notions}\label{app:cyclic-strings}
The easiest way to introduce cyclic strings is to consider them as a special case of infinite strings. We say that a string $s$ is a mapping from the index set $(a,b)$ of integers strictly between $a$ and $b$ to alphabet $\Sigma$, where $a,b\in\mathbb{Z}\cup\{\pm\infty\}$. Length of the string $s$ is calculated as $b-a-1$ and denoted as $|s|$. A string $s$ is left or right infinite if $a$ or $b$ is $\pm\infty$  respectively. Otherwise, $s$ is called left or right finite. Two strings $s_1:(a_1,b_1)\to\Sigma$ and $s_2:(a_2,b_2)\to\Sigma$ are equal if and only if there exists an integer number $t$, such that $a_1 = t+a_2$, $b_1 = t+b_2$ and $s_1[i+t]=s_2[i]$ for any integer i from $(a_2,b_2)$. String $s: (a,c)\to\Sigma$ is said to be a concatenation of strings $s_1:(a,b)\to\Sigma$ and $s_2:(b-1,c)\to\Sigma$ if $s[i]=s_1[i]$ for $a<i<b$ and $s[i]=s_2[i]$ for $b-1<i<c$. We omit technical proof that concatenation is properly defined for any pair of right-finite and left-finite strings. $s_1$ and $s_2$ are referred to as prefix and suffix of $s$. String $s$ is a substring of string $t$ if $t = AsB$ for some strings $A$ and $B$. If either $A$ or $B$ are not empty, $s$ is a proper substring of $t$. If both $A$ and $B$ are not empty, $s$ is a proper infix of $t$. A right and left infinite string $s:\mathbb{Z}\to\Sigma$ is called cyclic if there exists $d$, such that for any integer $t$ $s[t]=s[t+d]$. The smallest such $d$ is referred to as period. For a finite string $s$ we denote $\langle s\rangle$ as a cyclic string with period $|s|$, such that $\langle s\rangle[t]=s[t \pmod{|s|}]$ for any integer $t$. Note that we use this notation only if $|s|$ is the period of this string. In other words, $s$ cannot be represented as a shorter string, repeated multiple times.

These definitions are in agreement with the standard definitions for finite strings. However, the definition of position inside a string should be introduced more carefully than just the distance from the start.

\begin{definition} \emph{Position} in a string $s$ is a pair of strings $(l,r)$, such that $s=lr$ and $r$ is not empty. The character at position $(l,r)$ is the first character of $r$. Two positions $l_1r_1$ and $l_2r_2$ in string $s$ are equal iff $l_1=l_2$ and $r_1=r_2$.
\end{definition}
For finite strings position can be associated with the length of $A$. Positions in cyclic strings are more tricky: for a cyclic string $\langle s\rangle$, there are only $|s|$ different left-final suffixes and $|s|$ different right-final prefixes, resulting in $|s|$ different positions. In contrast from finite strings, where positions can be associated with non-negative integer numbers, for cyclic strings positions should be associated with elements of finite group $\mathbb{Z}/|s|\mathbb{Z}$, or, more precisely, with an $|s|$-element set, on which $\mathbb{Z}/|s|\mathbb{Z}$ group acts. Therefore, we can add integer numbers to positions: e.g. for position $(l,cr)$ there is the next position $(l,cr)+1=(lc,r)$ (here $c$ is a single letter).

\begin{definition} For strings $s$ and $t$, we say that $t$ \emph{occurs at position} $p=(l,r)$ in $t$ if $p$ is a position in $t$ and $s$ is a prefix of $r$. We refer to a pair $(p, s)$ as an \emph{occurrence} of $s$ in $t$.
\end{definition}
Note that since the number of different positions in a cyclic string is finite, the number of occurrences of any substring in a cyclic string is also finite.

\begin{definition} If $s$ occurs in $t$ at position $p=(A,sB)$ and $t$ occurs in $w$ at position $q=(C,tD)$, then $p \circ q=(CA,sBD)$ is the \emph{nested occurrence} of $s$ in $w$.
\end{definition}
Note that while we allowed all kinds of infinite strings here, in the graph assembly model we only consider finite and cyclic strings.

Next we formalize the notion of consequtive occurrences. This notion is natural for finite strings, but for cyclic strings it requires more technical definition.
\begin{definition}
Let $(p_1,s_1),(p_2,s_2),\ldots,(p_n,s_n)$ be a sequence of occurrences of finite strings in a cyclic string $\langle t\rangle$. We say that this sequence of occurrences is \emph{sorted} and its elements are \emph{consecutive occurrences} if there exist non-negative integers $0 = d_1 \le d_2 \le \cdots \le d_n < |t|$
such that $p_i = p_1 + d_i$ for all $i$, for $i=1,\ldots n-1$, $d_i + |s_i| \le d_{i+1} + |s_{i+1}|$, and $d_n+|s_n| \le |t| + |s_1|$. 
\end{definition}

In other words in a sorted list of occurrences, both left and right ends of occurrences are sorted.

\begin{proposition}
    For an infix-free string-set $S$, the set of all their occurrences in a cyclic string $\langle t\rangle$ can be recorded as a sorted list of occurrences.
\end{proposition}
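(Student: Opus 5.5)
The plan is to sort occurrences by their left endpoint and then check that the right endpoints come out sorted too; infix-freeness is exactly what makes the second step work.

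\emph{Setup.} Fix a reference position $p_0$ in $\langle t\rangle$. Every occurrence $(p,s)$ with $s\in S$ then has an offset $d\in\{0,\dots,|t|-1\}$ such that $p=p_0+d$. Since positions in $\langle t\rangle$ form a $|t|$-element set on which $\mathbb{Z}/|t|\mathbb{Z}$ acts, there are only finitely many occurrences.

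\emph{Choosing the order.} We list the occurrences by increasing offset. When two occurrences share the same left position, we put the shorter string first. Two distinct strings of $S$ occurring at the same position are comparable as prefixes. This is allowed, because infix-freeness here means \emph{proper} infix-freeness, as in the surrounding text; prefix and suffix relations are permitted. We then re-index so that the first element has $d_1=0$. To keep the cyclic wrap-around condition as mild as possible, we choose $p_0$ to be the left position of a \emph{longest} occurrence among those at a fixed starting position. The cleanest choice is to start at any position where some occurrence begins, taking the longest string there as the first element.

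\emph{Key step: right endpoints are monotone.} Take consecutive occurrences $i$ and $i+1$ with $d_i\le d_{i+1}$. Suppose, for contradiction, that $d_i+|s_i|>d_{i+1}+|s_{i+1}|$. Then the occurrence of $s_{i+1}$ lies inside the occurrence of $s_i$, and its right end is strictly to the left of the right end of $s_i$. There are two cases.
\begin{itemize}
\item If $d_i<d_{i+1}$, then $s_{i+1}$ is a proper infix of $s_i$. This contradicts infix-freeness, unless $|s_i|\ge|t|$ and the nesting is only apparent through the wrap-around.
\item If $d_i=d_{i+1}$, then $s_{i+1}$ is a proper prefix of $s_i$. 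This contradicts the tie-breaking rule, which puts the shorter string first.
\end{itemize}

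\emph{Closing condition.} It remains to show $d_n+|s_n|\le|t|+|s_1|$. The occurrence $s_1$, shifted by $|t|$, starts at offset $|t|>d_n$. The same argument applied to the pair (last, first shifted) gives that $s_n$ cannot end strictly after $|t|+|s_1|$: otherwise the shifted copy of $s_1$ would be a proper infix of $s_n$, or a proper prefix if the starts coincided, which is impossible since $d_n<|t|$.

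\emph{Main obstacle.} The delicate point is strings of length at least $|t|$, whose occurrences wrap around the cycle. For such strings, "nested inside" in the cyclic picture must be read as nesting of the lifted intervals on $\mathbb{Z}$. I would handle this by working on the infinite periodic string $\langle t\rangle:\mathbb{Z}\to\Sigma$, where every occurrence is an honest interval. The proper-infix argument then applies verbatim to lifted intervals, because a proper-infix relation between lifted occurrences is a genuine proper-infix relation between the strings. Periodicity then transfers the sorted order from the lift back to $\mathbb{Z}/|t|\mathbb{Z}$.
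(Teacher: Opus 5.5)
Your strategy is the paper's: fix a reference occurrence, sort the remaining occurrences by relative left endpoint with ties broken by increasing length, and use proper-infix-freeness to force monotone right endpoints. The wrap-around inequality then comes from comparing the last occurrence with the first one shifted by $|t|$.

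The one place where you deviate is wrong, though. You take the \emph{longest} string starting at $p_0$ as the first element. The paper instead takes an occurrence of a string of \emph{minimal} length, hence in particular the shortest one at its starting position. Your choice also contradicts your own shorter-first tie-breaking rule, and with it the key step fails at $i=1$. Any other occurrence starting at $p_0$ has offset $0$. Since $0=d_1\le d_2\le\cdots\le d_n<|t|$, it must be listed immediately after $s_1$ and cannot be pushed to the end of the list. It is a proper prefix of $s_1$, which proper-infix-freeness permits, so $d_1+|s_1|>d_2+|s_2|$. Your tie case appeals to the shorter-first rule, but $s_1$ was not placed by that rule. Concretely, take $S=\{a,ab\}$ and $\langle ab\rangle$, where both occurrences start at the same position. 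Your list $(ab,0),(a,0)$ has right endpoints $2>1$ and is not sorted. The paper's list $(a,0),(ab,0)$ is sorted and satisfies $0+2\le 2+1$.

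Your reason for choosing the longest string is also unfounded. Your own closing argument never uses maximality of $|s_1|$: because $d_n<|t|$, any violation of $d_n+|s_n|\le|t|+|s_1|$ makes the shifted copy of $s_1$ a proper infix of $s_n$, whatever $s_1$ is. So the repair is simply to let the first element be the shortest occurrence at $p_0$, consistent with your tie-breaking. After that, your proof coincides with the paper's.

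Your wrap-around ``obstacle'' is handled exactly as you propose, and this is what the paper's relative positions do implicitly. Read occurrences as intervals $[d,d+|s|)$ of the periodic lift. Then the inequalities $d_i<d_{i+1}$ and $d_{i+1}+|s_{i+1}|<d_i+|s_i|$ exhibit $s_{i+1}$ as a genuine proper infix of $s_i$, so the exception ``unless the nesting is only apparent'' never occurs.

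Finally, when $S$ is infinite, finiteness of the set of occurrences does not follow from finiteness of the set of positions alone. It does follow from the same shifting trick. Suppose a string of $S$ starts at $p$ and is longer than $|t|+|s|$, for some shorter $s\in S$ also starting at $p$; then it contains that $s$, shifted by $|t|$, as a proper infix. The paper is silent on this point as well.
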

\textbf{Proof.} Let $s$ be one of the strings in $S$ of minimal length that has at least one occurrence in $\langle t\rangle$, and let $(p,s)$ be one such occurrence. The set $\{p,p+1,\ldots,p+|t|-1\}$ contains each position of the cyclic string $\langle t\rangle$ exactly once. Hence, every position can be uniquely written as $p+i$ with $0\le i<|t|$. We call $i$ the \emph{relative position}.

We treat $(p,s)$ as the first element of the list and sort all remaining occurrences of strings from $S$ by increasing relative position, breaking ties by increasing substring length. Let the resulting order be $(p_1,s_1)=(p,s),(p_2,s_2),\ldots,(p_n,s_n)$.

By construction, the left endpoints in this list are sorted. We will show that this list is sorted by right endpoints too. Assume, in search of contradiction, that for some $i$ we have $d_i+|s_i|>d_{i+1}+|s_{i+1}|$. We consider two cases.

\emph{Case 1:} $d_i=d_{i+1}$.  
Since ties in relative position are resolved by increasing substring length, we have $|s_i|<|s_{i+1}|$. Therefore, $d_i+|s_i|=d_{i+1}+|s_i|\le d_{i+1}+|s_{i+1}|$, a contradiction.

\emph{Case 2:} $d_i<d_{i+1}$.  
Then $d_i<d_{i+1}<d_{i+1}+|s_{i+1}|<d_i+|s_i|$, which implies that $s_{i+1}$ is a proper infix of $s_i$. This is impossible for strings from $S$.

To show that $d_n+|s_n|\le |t|+|s_1|$ we again note that if opposite is true then $s_1$ must be a proper infix of $s_n$.
The inequality $d_n+|s_n|\le |t|+|s_1|$ is proved analogously, and we conclude that the constructed list is a sorted list of all occurrences of strings from $S$ in $\langle t\rangle$.\qed

\begin{definition}
    For a proper infix-free set of strings $S$ and a string $T$ we define $Occ(S, T)$ as a sequence of occurrences of strings from $S$ in $T$.
\end{definition}

\appsection{Cyclic labels in assembly graphs}\label{app:cyclic-labels}
Our model allows vertices and edges in the graph to have cyclic labels. In this section we discuss how such vertices and edges should be treated in the assembly graph. Consider a vertex with a cyclic label $\langle s\rangle$. If it has at least one outgoing edge, this edge must be marked with sequence t, such that $\langle s\rangle$ is a prefix of t. Infinite string $\langle s\rangle$ can only be a prefix of itself, thus the edge should also be marked with $\langle s\rangle$. But what about the end vertex of the edge, marked with a cyclic string? The end vertex must be labeled with a suffix of $\langle s\rangle$. Since we allow only finite and cyclic strings, the label of the end vertex must be also $\langle s\rangle$, making this edge a loop. Therefore, while cyclic labels are allowed in assembly graphs, they appear only as isolated loops. Each such loop represents a circuit, thus contributing to the set of chromosome path candidates.

\appsection{Single long read library assembly and hybrid assembly problems} \label{app::hybrid}
Modern challenges in genome assembly require integrating multiple types of data to produce the most contiguous assemblies possible. Assembly graphs effectively represent simple, high-quality long-read libraries but struggle to directly incorporate information from more complex data types, such as Hi-C, genome maps, or error-prone long reads. Nevertheless, a reliable graph built from high-quality reads is widely regarded as the best foundation for genome assembly, effectively reducing the space of possible genomes to a managable space with additional technologies used to close gaps and resolve complex tangles in the graph. Therefore, although this study focuses on the simplest type of input data, it still provides a foundational framework for modern genome assembly applications.

\appsection{Contig preservation and non-redundant graphs}\label{app:non-redundant}
In this section, we discuss the properties and alternative formulations of non-redundant assembly graphs. We start by analyzing the clean graph assumption made in section~\ref{sec:properties}
\begin{proposition}
In a graph in which every vertex and every edge belongs to at least one cycle, each weakly connected component is strongly connected.
\end{proposition}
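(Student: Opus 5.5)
The plan is to show that for any edge $e$ from $u$ to $v$ (viewed in the weak, undirected sense) there is also a directed walk from $v$ back to $u$. Then weak connectivity upgrades to strong connectivity by chaining. The argument is purely graph-theoretic: labels play no role, and only the hypothesis that every edge lies on a directed cycle is used.

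\textbf{Step 1.} Fix an edge $e$ from $u$ to $v$. By hypothesis $e$ lies on some directed cycle $Z$. Traversing $Z$ starting just after $e$ gives a directed walk from $v$ to $u$. So whenever $u$ and $v$ are adjacent, in either direction, each is reachable from the other by a directed walk.

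\textbf{Step 2.} Let $x$ and $y$ lie in the same weakly connected component. Then there is a sequence $x=w_0,w_1,\ldots,w_m=y$ in which consecutive vertices are joined by an edge of some orientation. By Step 1, each $w_{i+1}$ is reachable from $w_i$ by a directed walk:
\begin{itemize}
\item if the edge is oriented $w_i\to w_{i+1}$, use the edge itself;
\item otherwise, use the return walk along the cycle containing that edge.
\end{itemize}
Concatenating these walks gives a directed walk from $x$ to $y$. By symmetry there is also one from $y$ to $x$, so the component is strongly connected.

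\textbf{Degenerate cases.} Isolated vertices, including the isolated loops with cyclic labels from Appendix~\ref{app:cyclic-labels}, form trivially strongly connected components. A loop at a single vertex is handled by Step 1 trivially.

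There is no real obstacle here. The only point needing care is that Step 1 must use cycle membership of the edge, not merely of its endpoints, since two vertices can each lie on cycles without being mutually reachable. This is exactly why the assembly graph definition requires every edge, and not just every vertex, to lie on a directed cycle.
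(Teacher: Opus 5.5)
Your proof is correct and matches the paper's argument: follow the undirected path, use each edge directly when it points forward, and when it points backward use the rest of a directed cycle through that edge; then repeat with the endpoints swapped. Your remark that the hypothesis is needed for edges, not just vertices, is exactly the fact the paper's proof relies on.
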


\textbf{Proof.} Let vertices $u$ and $v$ belong to the same weakly connected component. Then there exists a sequence of vertices $W$ starting in $u$ and finishing in $v$, such that consecutive vertices are connected with an edge in at least one of the directions. We will construct a walk from $u$ to $v$. For every pair of consecutive vertices $w_1$, $w_2$ in $W$, if the edge is directed forward, we add it to the walk. If the edge is directed backward we note that this edge must be a part of a cycle and thus there exists a path from $w_1$ to $w_2$, which we can add to the walk. This method allows one to construct directed walks in both directions between any pair of the vertices from the same weakly directed component, thus every weakly connected component is strongly connected. \qed

\paragraph{Graph examples.}
\begin{figure}[h]
\centering
\includegraphics[width=1\textwidth]{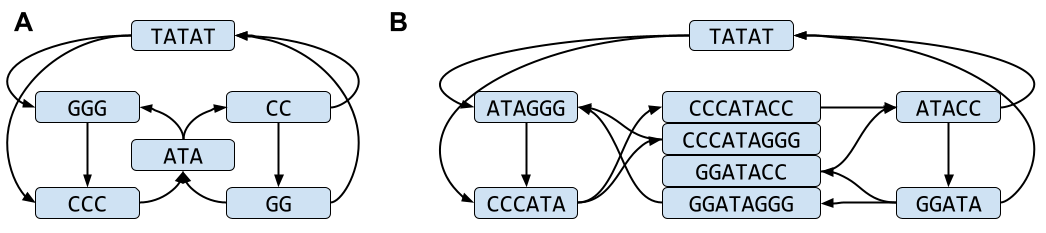}
\caption{\label{fig:cpe}
\textbf{Complex graph examples}. \textbf{(A)} An example of a redundant graph that is weakly contig-preserving and splitting. \textbf{(B)} The result of a non-redundant graph with the same set of chromosome candidates. }
\end{figure}

Figure~\ref{fig:cpe} shows examples of complex cases, where contig preservation is disconnected from redundancy. For clarity in this example all edge labels are ommitted, but each of them is either labeled by concatenation of start and end vertex labels or is a prefix/suffix edge (in subfigure~\ref{fig:cpe}B). The graph in subfigure~\ref{fig:cpe}A is redundant since the vertex label TATAT contains the vertex label ATA as a proper infix. On the other hand, this graph is weakly contig preserving. Such situation occurs because graph structure forces a situation where every genome candidate that contains TATAT, must also contain paths CCC\text{-}ATA and ATA\text{-}CC. As a result, whenever vertex TATAT is in one of the paths from $\OGA(\genomepaths)$ it is impossible for vertex ATA to form a maximal path from $\OGA(\genomepaths)$. Thus even though the same sequence is present multiple times in the assembly graph, it is irrelevant because global graph structure never allows paths from $\OGA(\genomepaths)$ have labels that are substrings of each other.

Non-redundant assembly graph in figure~\ref{fig:cpe}A can be easily transformed into a non-redundant graph with the same set of chromosome candidates by a single free multiplexing step at vertex ATA (Fig.~\ref{fig:cpe}B). The downside of this transformation is that the already weak contig preservation becomes even weaker: consider the set of two genome candidates $\langle GGGCCCATA\rangle$ and $\langle GGCCATA\rangle$. In the graph in Fig.~\ref{fig:cpe}A these genome candidates are represented by the cycles GGG\text{-}CCC\text{-}ATA and GG\text{-}CC\text{-}ATA. The optimal contigs for this set of genome candidates are CC, GG, and ATA. However, only vertex ATA is present in every genome candidate path. The optimal contigs CC and GG are therefore not preserved in the resulting assembly. Resolving vertex ATA leads to the loss of ATA as well.

\paragraph{Non-redundant graph properties.}
The definition of non-redundant assembly graphs is technical and non-intuitive. In simple words its intention is to prevent the same sequence from occurring as a label in multiple places in the graph. However, accurate formulation requires more precise wording. Additional complication is introduced by the presence of prefix and suffix edges. To clarify the meaning of non-redundancy, we provide the following implications of the second condition for the case of a traditional assembly graphs.

\begin{proposition}\label{prop:nrtr}
    Let $\assemblygraph$ be a non-redundant assembly graph without suffix and prefix edges. Then:
    \begin{enumerate}
        \item For any two walks $P$ and $Q$ in $\assemblygraph$, $Label(P)$ is a prefix/suffix/substring/equal of $Label(Q)$ iff $P$ is a prefix walk/suffix walk/subwalk/equal of $Q$.
        \item For any vertex or edge $v$ and walk $P$ in $\assemblygraph$, the number of occurrences of $v$ in $P$ is the same as the number of occurrences of $Label(v)$ in $Label(P)$.
        % \item For any vertex or edge $t$ and circuit $P$ in $\assemblygraph$, the number of occurrences of $t$ in $P$ is the same as the number of occurrences of $Label(t)$ in $Label(P)$.
        % \item For any vertex $t$ and walk $P$ in $\assemblygraph$ and any occurrence of $Label(t)$ at position $p$ in $Label(P)$ there exists a unique prefix path $Q$ of $P$, such that $Q$ ends in $t$ and $|Label(Q)|=p+|Label(t)|$.
    \end{enumerate}
\end{proposition}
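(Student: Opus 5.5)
The argument rests on one structural fact: in a graph without prefix and suffix edges, every edge label is strictly longer than each of its endpoint labels on both sides. Concretely, if $e$ goes from $u$ to $v$, then $Label(e)=Label(u)x=yLabel(v)$ with $x,y$ non-empty. So walk labels grow strictly at both ends when a walk is extended by an edge.

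For statement 1, the direction ``walk relation implies label relation'' is immediate from the definition of walk labels as chain labels. For the converse, the plan is to reduce every relation to the non-redundancy condition, which speaks only about proper infixes and inner subwalks.

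\textbf{Substring case.} Suppose $Label(P)$ is a substring of $Label(Q)$. Extend $Q$ on both sides by one edge, using that every vertex lies on a cycle and so has an incoming and an outgoing edge. This gives $Q'=AQB$. By the remark above, $Label(Q)$ is a proper infix of $Label(Q')$, and therefore so is $Label(P)$. Non-redundancy then yields $Q'=A'PB'$ with $A',B'$ non-empty. We still need to show that this occurrence of $P$ lies inside $Q$ rather than spilling into $A$ or $B$. The idea is to compare positions: the occurrence of $P$ inside $Q'$ determines the position of $Label(P)$ in $Label(Q')$.

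\textbf{Main obstacle: uniqueness of that position.} The delicate step is showing that the occurrence of $Label(P)$ obtained from non-redundancy is the one lying inside $Label(Q)$. I would avoid this by choosing the extensions cleverly. Because $A$ and $B$ can be arbitrary cyclic extensions, I would repeat the argument with two different long extensions, or with arbitrarily long ones. Alternatively, I would prove the stronger statement by induction on walk length: every occurrence of $Label(P)$ in $Label(Q)$ corresponds to a unique subwalk.

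For the inductive step, take the first edge $e$ of $P$ and the occurrence of $Label(e)$ inside $Label(Q)$. Apply non-redundancy to the single-edge walk $e$ and to $Q$ extended on both sides. This shows $e$ appears in $Q'$ at a location whose label position matches. The matching uses that positions of consecutive vertex and edge occurrences in a walk label are strictly increasing at both ends, which in turn follows from the absence of prefix and suffix edges. I would then continue edge by edge along $P$. Whether the occurrence is a prefix, a suffix, or the whole label is read off from the position, giving the prefix-walk, suffix-walk and equality cases.

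For statement 2, I would apply statement 1 in its occurrence-wise form. Each occurrence of $Label(v)$ in $Label(P)$ corresponds to exactly one occurrence of $v$ as a subwalk of $P$. Conversely, distinct occurrences of $v$ in $P$ give distinct positions of $Label(v)$, because label start positions along a walk are strictly increasing when there are no prefix edges. Together these give a bijection, so the two counts agree.
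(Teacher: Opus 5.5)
Your one-edge extension already settles the substring and equality cases, and the ``spilling'' worry does not arise there. If $Q'=AQB$ with $A,B$ single edges and $Q'=A'PB'$ with $A',B'$ each containing an edge, then $A'$ must contain the edge $A$ and $B'$ the edge $B$, so $P$ is a subwalk of $Q$. If $Label(P)=Label(Q)$, each walk is a subwalk of the other, hence $P=Q$.

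The genuine gap is the positional issue you flag. The prefix and suffix cases and your item~2 argument all depend on it, and neither proposed fix closes it. Longer extensions only give non-redundancy more room to place $P$. In your inductive step, non-redundancy applied to the single-edge walk $e$ and the extended $Q'$ gives only that $e$ is an inner subwalk of $Q'$ \emph{somewhere}. If $e$ or $Label(e)$ occurs more than once, nothing ties that occurrence to the given occurrence of $Label(e)$, and strict monotonicity of offsets does not change this. So ``this shows $e$ appears in $Q'$ at a location whose label position matches'' is the very claim that needs proof. The paper's proof of item~1 is essentially your one-edge extension, and in the suffix case it too concludes only that $P$ is a \emph{subwalk} of $Q$, so it does not address this point either.

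The missing idea is to invoke non-redundancy only on walks that leave it no freedom.
\begin{itemize}
\item A single edge has no inner vertex, so no vertex label is a proper infix of an edge label.
\item Hence no vertex label is a proper infix, prefix or suffix of another vertex label. A proper prefix of $Label(u)$ would be a proper infix of any edge entering $u$, since that edge is not a prefix edge; suffixes are symmetric.
\item Non-redundancy for the trivial walk $v$ inside a walk $e\,u\,f$ through $u$ shows that $Label(v)=Label(u)$ forces $v=u$.
\item Now take any occurrence of $Label(v)$ in $Label(Q)$. If it contains a vertex occurrence of $Q$, these facts force it to \emph{be} that occurrence. Otherwise, since vertex offsets along $Q$ increase strictly at both ends, it lies within the span of a single edge $f$ of $Q$ (or of a single vertex label, already excluded). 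It is then a prefix or suffix of $Label(f)$ and again coincides with an endpoint of $f$. This gives an offset-preserving bijection for vertices.
\item For an edge $e$ from $u$ to $w$, the endpoints of any occurrence of $Label(e)$ are now pinned to vertices of $Q$. No vertex of $Q$ lies strictly between them, since its label would be a proper infix of $Label(e)$. So $Q$ has an edge from $u$ to $w$ with label $Label(e)$ at that spot, and it equals $e$ by non-redundancy on a one-edge extension.
\end{itemize}
With these positional facts your edge-by-edge induction goes through. The prefix (resp.\ suffix) case then follows by matching $P$ vertex by vertex and edge by edge, starting from offset $0$ (resp.\ from the end) of $Label(Q)$.

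Alternatively, follow the paper's proof of item~2. Splitting the walk at every occurrence of $v$ ensures no piece has $v$ as an inner vertex. Non-redundancy applied to a piece extended by one edge at the relevant end can then only place $v$ at that end of the piece.
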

\textbf{Proof: (1)} If $Label(P)$ is a proper infix of $Label(Q)$, then by non-redundancy of $\assemblygraph$, $P$ is a subwalk of $Q$.

If $Label(P)$ is a proper suffix of $Label(Q)$, consider a walk $Q'$ obtained from $Q$ by appending one additional edge $e$ to its end (such an edge exists because every vertex lies on a cycle and therefore has at least one outgoing edge). Then $Label(Q)$ is a proper prefix of $Label(Q')$, and consequently $Label(P)$ is a proper infix of $Label(Q')$. By definition, $P$ is an inner subwalk of $Q'$. Since $Q'$ differs from $Q$ only by the addition of the final edge, it follows that $P$ is a subwalk of $Q$.

The cases where $Label(P)$ is a proper prefix of $Label(Q)$ or $Label(P)=Label(Q)$ are handled analogously.

\textbf{(2)} We only prove this statement for vertex $v$. The proof for edges is analogous. 

First we note that considering a walk $Q$ that consists only of vertex $v$ in the definition of a non-redundant graph gives us the following conclusion: for any walk $P$ that does not contain vertex $v$ as an inner vertex, $Label(P)$ does not contain $Label(v)$ as a proper infix. We can split walk $P$ into subwalks by breaking $P$ at each occurrence of $v$. Each subwalk does not contain $v$ as an inner vertex, thus its label does not contain $v$ as a proper infix. Therefore $v$ can only appear as a prefix or suffix of subwalks. From item 1 we know that it can be a prefix/suffix only if corresponding subwalk starts/finishes in $v$. Thus the number of occurrences of $v$ in $Label(P)$ is the same as the number of occurrences of vertex $v$ in $P$. \qed

We can extend intuition from proposition~\ref{prop:nrtr} to graphs that contain suffix and prefix edges, but it requires the introduction of special types of walks in the graph.

\begin{definition} A walk is called \emph{open} if it consists of a single vertex or the first/last edge in the walk is not prefix/not suffix.
\end{definition}
\begin{definition} A walk is called \emph{closed} if the first/last vertices have a non-prefix incoming edge/a non-suffix outgoing edge.
\end{definition}
Note that in a graph with suffix and prefix edges, adding suffix edge at the end or prefix edge to the start of a walk does not change the label of the walk. Open and closed walks represent the shortest/longest walks with a given label.

The proposition below lists the properties of veritices in a non-redundant graph, as well as open and closed walks.

\begin{proposition}\label{prop:speprop}
    Let $\assemblygraph$ be a non-redundant assembly graph.
    \begin{enumerate}
        \item If a vertex in $\assemblygraph$ has an outgoing suffix/incoming prefix edge, it can have no other outgoing/incoming edges.
        \item If a vertex $v$ in $\assemblygraph$ is a proper prefix/suffix of vertex $u$, then there exists unique path from $v$ to $u$/from $u$ to $v$ that consists only of prefix/suffix edges.
        \item All vertices in a non-redundant graph have different labels.
        \item Vertex label can not be a proper infix of another vertex label.
        \item For any walk $P$ in $\assemblygraph$ there exists unique closed walk $P_c$ and unqiue open walk $P_o$ such that $Label(P)=Label(P_o)=Label(P_c)$. $P_o$ can be obtained from $P_c$ by removing all prefix/suffix edges from the start/end of $P_c$.
        % \item If a vertex $v$ is a prefix/suffix of $Label(P)$ for a closed walk $P$, then $P$ starts/finishes with a walk that ends/starts with $v$ and contains only prefix/suffix edges.
    \end{enumerate}
\end{proposition}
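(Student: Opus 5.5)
The plan is to derive all five items from the definition of non-redundancy. Each time, we build two walks whose labels stand in a proper-infix relation, conclude that one is an inner subwalk of the other, and compare lengths or edge types. We also use the fact that every vertex lies on a cycle, so every walk can be extended on either side.

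\textbf{Item 1.} Let $v$ have an outgoing suffix edge $e=(v,w)$, so $Label(e)=Label(v)$ and $Label(w)$ is a proper suffix of $Label(v)$. Let $f=(v,x)$ be any other outgoing edge. Extend with a walk $A$ ending in $v$ whose first edge is not a prefix edge; such an $A$ exists because prefix chains cannot cycle, as labels strictly grow along them. Then $Label(Av)$ ends with $Label(v)$, which is a proper infix of $Label(Af)$ whenever $f$ is not itself a suffix edge. Non-redundancy places the single-vertex walk $v$ inside $Af$ as an inner subwalk, which is expected. The real test is the walk $Ae$ together with $w$: we get $Label(Aew)=Label(Av)$, and comparing with $Af$ forces the edge following $v$ in $Af$ to be $e$. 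More precisely, $Label(Ae)$ is a proper infix of $Label(AfB)$ for a further extension $B$. Hence $Ae$ is an inner subwalk of $AfB$, and aligning positions gives $e=f$. The incoming-prefix case is symmetric. This alignment argument, showing that the embedding of $Ae$ into $AfB$ must start at the beginning of $A$, is where I expect the main obstacle. I would handle it by choosing $A$ long enough (repeating a cycle) that any other embedding would create a shorter label, contradicting length bookkeeping.

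\textbf{Item 2.} Suppose $Label(v)$ is a proper prefix of $Label(u)$. Extend $u$ to a walk $Q=uB$ with $B$ nonempty and non-suffix at the end, and prepend a nonempty walk so that $Label(v)$ becomes a proper infix of $Label(Q')$. Non-redundancy embeds $v$ as an inner vertex of $Q'$ at the position of $u$'s start. The portion of $Q'$ between this occurrence of $v$ and $u$ has label equal to a prefix of $Label(u)$, so every edge in it adds no characters on the left and is a prefix edge. This gives the path. Uniqueness follows because two such paths would give two distinct walks with the same label, contradicting item 5, or directly item 1, since a vertex has at most one incoming prefix edge. The suffix case is symmetric.

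\textbf{Items 3 and 4.} Two vertices with equal labels, or one vertex label a proper infix of another, are handled by applying non-redundancy to the single-vertex walk $P=v$ and the walk $Q=AuB$. The result is that $v$ is an inner vertex of $AuB$ sitting exactly at $u$'s label span. In the proper-infix case the edges joining $v$ to $u$ inside $Q$ would need to be both prefix and suffix edges, or would lengthen the label beyond $Label(u)$. This contradicts the assembly-graph definition, which forbids edges that are both prefix and suffix. Equal labels similarly force $u=v$.

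\textbf{Item 5.} Given $P$, repeatedly strip initial prefix edges and final suffix edges; this does not change the label and yields $P_o$. To obtain $P_c$, repeatedly prepend the unique incoming prefix edge (unique by item 1) while one exists, and append the unique outgoing suffix edge similarly. This terminates because labels strictly shrink along such chains. Uniqueness comes from showing that two walks with the same label have the same open core. Extend both on both sides by identical non-prefix/non-suffix edges, which is possible after using item 3 to identify their end vertices. Non-redundancy then makes each an inner subwalk of the other's extension, and length comparison forces equality of the open parts.
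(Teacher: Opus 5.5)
\textbf{There is a gap.} Your arguments for items 1--4 use a \emph{positional} form of non-redundancy that the definition does not give you. Non-redundancy only says that if $Label(P)$ is a proper infix of $Label(Q)$, then $P$ is an inner subwalk of $Q$ \emph{somewhere}. It says nothing about $P$ sitting at the place in $Q$ that matches a chosen occurrence of $Label(P)$. So the following steps are not justified:
\begin{itemize}
\item item 2: ``embeds $v$ as an inner vertex of $Q'$ at the position of $u$'s start'';
\item items 3--4: ``sitting exactly at $u$'s label span'';
\item item 1: ``aligning positions gives $e=f$''.
\end{itemize}
With arbitrary extensions $A$ and $B$, the forced copy of $v$ may lie anywhere inside $A$ or $B$. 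Nothing then makes the segment between it and $u$ a chain of prefix edges. Item 1 has a further defect. $Label(Ae)=Label(A)$ is a prefix of $Label(AfB)$, not in general a proper infix. The alignment you single out as the main obstacle is also never carried out.

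\textbf{The missing idea.} Choose the comparison walk so small that its inner vertices are already under control; then ``occurs somewhere'' is enough. This is what the paper does for items 1, 2 and 4.
\begin{itemize}
\item Item 1. Suppose $f$ is not a suffix edge. Then $Label(w)$ is a proper suffix of $Label(v)$; it is proper because $e$ may not also be a prefix edge. Also $Label(v)$ is a proper prefix of $Label(f)$. So $Label(w)$ is a proper infix of the label of the one-edge walk $f$, which has no inner vertex at all. This is a contradiction.
\item Item 4. Compare $v$ directly with the single-vertex walk $u$, which has no inner subwalk.
\item Item 2 (prefix case, mirroring the paper's suffix case). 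Walk backwards from $u$ along the unique incoming prefix edges given by item 1, until $v$ is reached or no such edge remains. In the latter case, prepend any incoming edge; it is necessarily non-prefix. Then $Label(v)$ is a proper infix of the resulting label, and every inner vertex lies on the chain. Hence the chain does reach $v$. Uniqueness follows from item 1, as you say.
\item Item 3. Here the labels are equal, so neither the one-vertex trick nor item 2 (which needs a \emph{proper} prefix or suffix) applies directly. Extend $u$ backwards by its prefix chain plus one non-prefix edge, and forwards by its suffix chain plus one non-suffix edge. Now $Label(u)$ is a proper infix of the label of this walk. Every inner vertex other than $u$ has a strictly shorter label, so $v=u$.
\item Item 5. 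Your construction of $P_o$ and $P_c$ matches the paper's. Your uniqueness idea is sound, and the paper's proof in fact leaves uniqueness unargued. You should drop the ``identical extensions via item 3'' step. Instead, for two closed walks with the same label, extend each by its own non-prefix incoming edge and non-suffix outgoing edge. Each walk is then an inner subwalk, hence a subwalk, of the other. So they have the same number of edges and coincide, and uniqueness of $P_o$ follows by stripping.
\end{itemize}
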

\textbf{Proof: (1)} Assume that a vertex $v$ has both a suffix and a non-suffix outgoing edge, denoted $s=vw$ and $n=vu$, respectively. The label of the path consisting solely of edge $n$ contains $Label(w)$ as a proper infix, yet does not contain $w$ as an internal subpath. This contradicts non-redundancy, since in a non-redundant graph every occurrence of $Label(w)$ must correspond to $w$ as a subpath. The remaining cases are proved analogously.

\textbf{(2)} Consider a vertex $v$ that is a suffix of a vertex $u$. By item~(1), every vertex has at most one outgoing suffix edge. Hence, there exists a unique path $P$ starting at $u$ that consists solely of suffix edges and terminates either upon reaching $v$ or when no outgoing suffix edge is available. In the former case, $P$ is the desired path.

Suppose instead that $P$ terminates before reaching $v$. Let $e$ be any edge outgoing from the terminal vertex of $P$, and consider the walk $P' = Pe$. The edge $e$ cannot be a suffix edge; therefore, $Label(v)$ is a proper infix of $Label(P')$. By non-redundancy of the graph, this implies that $v$ must occur as a subwalk of $P$, a contradiction. The remaining cases are proved analogously.

\textbf{(3)} In search of contradiction we assume that there exist vertices $u$ and $v$, that have the same label. From item 2 we can conclude that there exists a path from $u$ to $v$, such that every edge in it is a suffix edge. Since eveery consecutive vertex in this path must be shorter than the previous and the path contains at least one edge, length of $v$ must be smaller than length of $u$, a contradiction.    

\textbf{(4)} If we apply the definition of non-redundant graphs to a pair of paths, each consisting of a single vertex, we note that one of them can not be an inner subpath of another. Thus vertex label can not be a proper infix of a label of another vertex.

\textbf{(5)} Consider initial walk $P$. If we iteratively add suffix edges to the end of $P$, its label won't change. This process can not continue indefinitely since at each step the size of the last vertex reduces. The last vertex of the resulting walk does not have outgoing suffix edges, therefore it must have outgoing non-suffix edge. Similarly adding prefix edges to the front of $P$ will result in a walk that starts from a vertex that has an incoming non-prefix edge. This walk is by definition closed and by construction has the same label as the initial walk $P$. Similarly removing suffix edges from the end of $P$ and prefix edges from the start of $P$ will result in an open walk with the same label.\qed

We are now ready to prove the properties of non-redundant graphs that are satisfied even in presence of suffix and prefix edges.

\begin{proposition}\label{prop:nrsp}
    Let $\assemblygraph$ be a non-redundant assembly graph. Then:
    \begin{enumerate}
        \item For any two closed walks $P$ and $Q$ in $\assemblygraph$, such that $Label(P)$ is a substring/prefix/suffix of $Label(Q)$, $P$ is a subwalk/prefix walk/suffix walk/equal of $Q$.
        \item For a closed walk $P$, any walk $Q$, such that $Label(Q)$ is a substring of $Label(P)$ must be a subwalk of $P$.
        \item For any two open walks $P$ and $Q$ in $\assemblygraph$, where $P$ is non-trivial (contains at least one edge), $Label(P)$ is a prefix/suffix/substring of $Label(Q)$ iff $P$ is a prefix walk/suffix walk/subwalk of $Q$.
        \item For any two open walks $P$ and $Q$ in $\assemblygraph$, $Label(P)=Label(Q)$ iff $P = Q$.
        \item For any vertex or edge $v$ and closed walk $P$ in $\assemblygraph$, the number of occurrences of $v$ in $P$ is the same as the number of occurrences of $Label(v)$ in $Label(P)$.
        \item Let $C$ be a simple circuit and $d$ be the total length of its edges minus total length of all vertices. Then $d$ is the period of $Label(C)$.
    \end{enumerate}
\end{proposition}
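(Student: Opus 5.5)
The plan is to prove the six items in order, each resting on the previous ones and on Proposition~\ref{prop:speprop}. The basic device is always the same. Any statement of the form ``$Label(P)$ is a prefix/suffix/substring of $Label(Q)$'' is upgraded to a \emph{proper infix} statement by extending $Q$ by one suitable edge on one or both sides. Non-redundancy is then applied. For closed walks such an extension exists with a non-prefix first edge and a non-suffix last edge, by the definition of closedness, and this extension strictly lengthens the label.

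For item~(1), let $P,Q$ be closed with $Label(P)$ a substring of $Label(Q)$. I would choose a non-prefix incoming edge $a$ at the start of $Q$ and a non-suffix outgoing edge $b$ at its end. Then $Label(P)$ is a proper infix of $Label(aQb)$, and non-redundancy gives $aQb=APB$ with $A,B$ non-empty. It remains to show that $P$ does not use $a$ or $b$. If $P$ began with the edge $a$, its label would begin before $Label(Q)$, which contradicts $Label(P)$ lying inside $Label(Q)$. The delicate case is when $P$ starts exactly at the first vertex of $aQb$ after $a$ but offset, and this is excluded using closedness of $P$ together with Proposition~\ref{prop:speprop}(1),(5). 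The prefix and suffix versions follow by extending $Q$ on one side only and comparing positions.

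Item~(2) reduces to item~(1). I would replace $Q$ by its closed version $Q_c$ from Proposition~\ref{prop:speprop}(5), which has the same label, apply item~(1), and note that $Q$ is a subwalk of $Q_c$. Items~(3) and~(4) follow by passing from open walks to their closed versions, applying item~(1), and stripping the prefix/suffix edges again, using the uniqueness in Proposition~\ref{prop:speprop}(5). Here non-triviality of $P$ ensures that a stripped walk still has a non-prefix first edge, so it cannot slip into the prefix-edge tails of $Q_c$. Item~(4) is the equality case: $P$ is a prefix walk of $Q$ and $Q$ of $P$, and the trivial case is handled by Proposition~\ref{prop:speprop}(3).

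Item~(5) follows the argument of Proposition~\ref{prop:nrtr}(2). Cut $P$ at each occurrence of $v$. Taking $Q=v$ in the definition of non-redundancy, $Label(v)$ cannot be a proper infix of any piece that lacks $v$ as an inner vertex. An occurrence as a prefix of a piece forces that piece to start at $v$ by item~(1), because the pieces of a closed walk inherit closedness at the relevant ends. Edges are handled in the same way as vertices.

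For item~(6), unrolling $C$ infinitely gives a walk whose label advances by exactly $d$ per turn, so $d$ is a period. To show it is the minimal period $d'$: if $d'<d$, shift the label by $d'$ and apply item~(1) to long finite closed subwalks of the unrolled circuit. This forces the walk to map to itself with a nontrivial shift. That would make $C$ a repetition of a shorter circuit, contradicting primitivity.

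I expect the main obstacle to be the bookkeeping in item~(1) around prefix and suffix edges: ruling out that $P$ aligns with the added edges $a,b$ or with a neighbouring prefix/suffix chain. This is where Proposition~\ref{prop:speprop}(1),(2) will be needed carefully.
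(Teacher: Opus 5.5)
Your plan follows the paper's proof closely: the one-edge extension plus non-redundancy for (1), closed versions for (2)--(4), cutting at occurrences of $v$ for (5), and a shift-by-period argument for (6). For the substring case of (1) there is no ``delicate case'': $aQb=APB$ with $A,B$ each containing at least one edge already forces $P$ to lie inside $Q$.

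The genuine gap is in item (6). Item (1) only says that one closed walk is a subwalk of another. It says nothing about \emph{where} that subwalk sits relative to a given occurrence of its label. A long closed subwalk of the unrolled circuit already occurs as a subwalk at its original location, so item (1) holds with no shift, and nothing forces the walk to map to itself with offset $d'$. The prefix form of (1) does not help either. It would need a closed walk starting at the label position shifted by $d'$, i.e.\ a vertex occurrence there, which is what you are trying to prove.

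The missing ingredient is positional correspondence: every occurrence of a vertex or edge label in the label of a long closed subwalk of the unrolled circuit sits at an actual occurrence of that vertex or edge. Item (5) gives exactly this, since walk occurrences inject into label occurrences and the counts agree. This is the paper's route. By (5), the set of vertex and edge occurrences is invariant under the shift by $d'$. The occurrences starting within one period then form a circuit $C'$ with the same periodic sequence and edge-minus-vertex length $d'$. So $d'<d$ would make $C$ a repetition of a shorter circuit. Your appeal to primitivity at that last step is fine, and slightly more general than the paper's appeal to simplicity.

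There are also two smaller points in item (5), where the analogy with Proposition~\ref{prop:nrtr}(2) hides the effect of prefix and suffix edges.
\begin{enumerate}
\item For an interior piece $P_i$ from $v$ to $v$, you must check that the prefix and suffix occurrences of $Label(v)$ in $Label(P_i)$ are distinct. Otherwise two occurrences of $v$ in $P$ could collapse onto one label occurrence and the count would fail. The paper notes that $P_i$ must contain a non-suffix edge: suffix edges strictly shorten vertex labels, so a walk made only of them cannot return to $v$. Hence $|Label(P_i)|>|v|$.
\item Item (1) does not apply directly to the boundary piece $P_1$. $P_1$ need not be closed at its end $v$, and the trivial walk $v$ need not be closed either. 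Also, $P_1$ need not start at $v$, since it may reach $v$ along prefix edges. The right target is $Label(P_1)=v$. The paper obtains this by prepending a non-prefix incoming edge (available because $P$ is closed) and applying non-redundancy directly.
\end{enumerate}
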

\textbf{Proof: (1)} We will prove this statement only for substrings since the other cases can be proven analogously.
Consider the extension $Q'$ of the walk $Q$ obtained by adding one edge at the beginning and one at the end. Since $Q$ is closed, this extends the label of $Q$ at both ends. Therefore, $Label(Q)$, as well as $Label(P)$, are proper infixes of $Label(Q')$. Since the graph is non-redundant, it follows that $P$ is an inner subwalk of $Q'$. Moreover, since $Q'$ differs from $Q$ by just one edge on each side, $P$ is in fact a subwalk of $Q$.

\textbf{(2)} Consider a closed walk $Q_c$ that has the same label as $Q$. $Q$ is subwalk of $Q_c$ by item (5) of proposition~\ref{prop:speprop}. $Q_c$ is a subwalk of $P$ by item 1 in this proposition. Therefore $Q$ is a subwalk of $P$.

\textbf{(3)} Again, we prove this statement only for substrings. Consider closed walks $P'$ and $Q'$ with the same labels as $P$ and $Q$, respectively. By item (1), $P'$ is a subwalk of $Q'$, and hence $P$ is also a subwalk of $Q'$. Since $P$ is non-trivial, it must start in a non-prefix edge and end in a non-suffix edge. Therefore, removing all prefix edges from the start and all suffix edges from the end of $Q'$—which transforms $Q'$ back into $Q$—cannot remove any edges of $P$. Consequently, $P$ is a subwalk of $Q$.

\textbf{(4)} The statement for closed walks follows directly from item (1). For open walks, item (3) addresses the case where at least one of the walks is nontrivial. If both walks are trivial (i.e., consist of a single vertex), the claim follows from the fact that, in a non-redundant graph, all vertices have distinct labels.

\textbf{(5)} This proof follows the strategy of item~(2) in Proposition~\ref{prop:nrtr}, 
but additional care is required at the endpoints of the walk.

First, suppose that the vertex $v$ does not occur in the walk $P$. 
By item~(2), $v$ cannot appear as a substring of $Label(P)$. 
Hence the statement holds in this case and we can now assume that $v$ occurs at least once in $P$.

We decompose $P$ into subwalks $P = P_1 P_2 \cdots P_n$, by splitting it at each occurrence of $v$ in $P$. If $P$ starts and/or ends with $v$, we, respectively, consider $P_1$ and/or $P_n$ to be a walk consisting of a single vertex $v$.
By construction, consecutive subwalks intersect exactly at $v$. 
Consequently, the labels of the consequtive subwalks overlap by $v$. 
This establishes a bijection between occurrences of $v$ in $P$ and 
overlaps between consecutive subwalk labels. 
It remains to show that these are the only occurrences of $v$ in $Label(P)$.

Each $Label(P_i)$ has length at least $|v|$, and the overlap between 
consecutive labels is exactly $v$. Therefore, any occurrence of $v$ 
in $Label(P)$ must lie entirely within a single $Label(P_i)$ or 
span the overlap between two consecutive labels. The latter case
corresponds precisely to an occurrence already accounted for. 
Thus it suffices to rule out additional occurrences inside a single 
subwalk label.

By construction, $v$ cannot appear as a proper infix of $Label(P_i)$, 
since this would imply that $v$ occurs as an internal vertex of $P_i$, 
contradicting the definition of the decomposition. The only remaining 
possibilities are that $v$ occurs as a prefix or suffix of $Label(P_i)$.

For interior subwalks (i.e., $1<i<n$), the prefix and suffix 
occurrences coincide with the overlaps already considered. Moreover, subwalks that start and end in $v$ must contain at least one non-suffix edge (otherwise size of the last vertex would be smaller than the size of the first vertex in the walk). Therefore the size of the labels of interior subwalks are strictly larger than $|v|$, making suffix occurrence of $v$ in the subwalk different from prefix occurrence of $v$ in the same subwalk.
Hence we only need to analyze the boundary cases: prefix of $P_1$ and suffix of $P_n$.

Consider $P_1$, which is a prefix walk of $P$ that ends at $v$. 
Suppose that $Label(P_1)$ has $v$ as a prefix. We claim that this 
forces $Label(P_1)=v$. Indeed, assume instead that 
$Label(P_1)\neq v$. In this case $|Label(P_1)| > |v|$, making prefix occurrence of $v$ in $Label(P_1)$ different from suffix occurrence of $v$ in $Label(P_1)$. Since $P$ is closed, there exists an incoming 
edge $e$ to the starting vertex of $P_1$ that is not part of $P_1$. 
Extending $P_1$ by $e$ at the beginning produces a walk $Q$ whose label 
contains $v$ as a proper infix. This implies that $v$ 
occurs as an inner vertex of $Q$, which is impossible since $v$ can only occur once in $P_1$ by construction. Hence $Label(P_1)=v$, and the prefix occurrence of $v$ in $Label(v)$ is already 
accounted for by the overlap between $Label(P_1)$ and $Label(P_2)$.

The case of $P_n$ is symmetric: if $v$ is a suffix of 
$Label(P_n)$, then the whole $Label(P_n)$ must coincide with $v$, resulting in all occurrences of $v$ in $Label(P)$ being accounted for as overlaps of consequitive subwalks.

\textbf{(6)} By construction of chain label, $Label(C)[i]=Label(C)[i+d]$ for any integer $i$. We need to show that there can be no smaller $d'$ that satisfies the same condition. Let $d'$ be the period of $Label(C)$. By item (5) all occurrences of vertex and edge labels in $Label(C)$ match occurrences of corresponding verteces and edges in $C$. These occurrences of vertex and edge labels in $Label(C)$ must be periodic with period $d'$. If we collect edge and vertex label occurrences starting within one period of the $Label(C)$, we obtain a cycle $C'$, that represents the same periodic sequence of vertices and edges as $C$. Total length of edges minus total length of vertices in $C'$ is $d'$. $C$ is a simple cycle, thus $C'$ can only be longer than $C$. Therefore $d'\ge d$.\qed

This proposition allows us to prove the main property of non-redundant graphs, formulated as theorem~\ref{thm:nonredcount} in section~\ref{sec:properties}.

\setcounter{theorem}{0}
\begin{theorem}
In a non-redundant graph, for any circuit $C$ and vertex $v$, the number of occurrences of $v$ in $C$ equals the number of occurrences of $Label(v)$ in $Label(C)$.    
\end{theorem}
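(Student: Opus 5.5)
The plan is to reduce the statement to item~(5) of Proposition~\ref{prop:nrsp}, which gives the same count for closed walks. A circuit $C$ is not literally a walk with endpoints. I would therefore "unroll" it into a long closed walk and then cancel the boundary effects using periodicity.

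\textbf{Step 1: unroll the circuit.} Let $C$ consist of $m$ edges, and let $d$ be the total length of its edges minus the total length of its vertices. By the construction of a cyclic chain label (as in item~(6) of Proposition~\ref{prop:nrsp}, applied to the primitive circuit $C$), $Label(C)$ is the cyclic string with period $d$. For each $N\ge 1$, let $W_N$ be the walk obtained by traversing $C$ exactly $N$ times, starting at some vertex of $C$. Then $Label(W_N)$ is a finite window of the periodic infinite string of length $Nd+|u|$, where $u$ is the starting vertex.

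\textbf{Step 2: handle closedness.} Every vertex of $W_N$ is entered by the previous edge of $C$ and left by the next edge of $C$. Hence $W_N$ extends on both sides along $C$ itself. If the starting vertex of $W_N$ has a non-prefix incoming edge (for instance, the edge of $C$ entering it), then $W_N$ is closed. If not, I would invoke item~(5) of Proposition~\ref{prop:speprop} and replace $W_N$ by its closed extension $W_N^c$. This extension differs from $W_N$ only by a bounded number of prefix and suffix edges at the ends, and it has the same label. These extra edges lie along $C$, because by item~(1) of Proposition~\ref{prop:speprop} a vertex with an incoming prefix edge has no other incoming edge, so the extension is forced to follow $C$ backwards. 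The same holds forwards.

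\textbf{Step 3: count and take the limit.} Item~(5) of Proposition~\ref{prop:nrsp} applied to $W_N^c$ says that the number of occurrences of $v$ in $W_N^c$ equals the number of occurrences of $Label(v)$ in $Label(W_N^c)$. The left side is $N\cdot\#_C(v)+O(1)$, where the $O(1)$ term comes only from the bounded end corrections. The occurrences of $Label(v)$ in the window $Label(W_N)$ are, up to a bounded number near the two ends, the lifts of its occurrences in the cyclic string $Label(C)$. Each cyclic occurrence contributes exactly one occurrence per period, giving $N\cdot\#_{Label(C)}(Label(v))+O(1)$. Dividing by $N$ and letting $N\to\infty$ yields $\#_C(v)=\#_{Label(C)}(Label(v))$.

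The main obstacle is Step~2 together with the bookkeeping in Step~3. I must make sure that the correction walks at the ends have length bounded independently of $N$; this holds because vertex lengths strictly decrease along suffix edges. I must also make sure that occurrences of $Label(v)$ near the window ends can change the count only by a bounded amount, which follows since $|Label(v)|$ is fixed. One further point to check is primitivity: the circuit $C$ is primitive, and the cyclic string $\langle\cdot\rangle$ is taken with its true period. The period of $Label(C)$ might be a proper divisor of $d$ when $C$ is not simple, and item~(6) only covers simple circuits. To avoid dividing by the wrong period, I would count occurrences per traversal of $C$, which corresponds to length $d$, and then compare with occurrences in $Label(C)$ per its own period.
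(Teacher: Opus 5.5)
Your argument is correct in its main line, modulo one point below that the paper shares, and it takes a different route from the paper.

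\textbf{Comparison of routes.} The paper argues directly on the circuit. If $v$ does not lie on $C$, it unrolls $C$ $|v|+1$ times and applies non-redundancy against the one-vertex walk $v$. Otherwise it cuts $C$ at every visit of $v$ into pieces whose labels are longer than $|v|$ and overlap cyclically in exactly $Label(v)$. Each piece contains $Label(v)$ only as prefix and suffix, since a proper-infix occurrence would force an inner visit of $v$. So every occurrence of $Label(v)$ is one of the junctions. Because a circuit has no ends, none of the boundary analysis of item~(5) of Proposition~\ref{prop:nrsp} is needed.

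You instead use that item as a black box on a closed extension of $N$ traversals and remove the boundary terms by averaging over $N$. This handles $v\notin C$ uniformly, at the cost of the closing construction and the $O(1)$ bookkeeping, which you carry out correctly. Step~2 can be shortened: start $W_N$ at a shortest vertex of $C$. Its entering edge in $C$ cannot be a prefix edge and its leaving edge cannot be a suffix edge, so by item~(1) of Proposition~\ref{prop:speprop} the walk $W_N$ is already closed.

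\textbf{The period issue.} Your limit proves that $\#_C(v)$ equals the number of residues modulo $d$ at which $Label(v)$ occurs in the bi-infinite label. Item~(6) of Proposition~\ref{prop:nrsp} covers only simple circuits, so Step~1's claim that $d$ is the period is unjustified for a general primitive $C$. Your proposed remedy does not repair this. If the true period $d'$ were a proper divisor of $d$, comparing per-traversal counts with per-$d'$ counts would give $\#_C(v)=(d/d')$ times the number of occurrences in $Label(C)$, not equality. The paper's splitting argument makes the same tacit identification of $Label(C)$ with the length-$d$ cyclic chain, so this gap is shared, but you can close it:
\begin{enumerate}
\item Along any walk, the start and end of the occupied interval are non-decreasing, and at least one strictly increases at each step, because no edge is both prefix and suffix. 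So distinct visits of $v$ land at distinct positions.
\item Hence your count equality upgrades to a position-preserving bijection between the visits of each vertex and the occurrences of its label.
\item The unrolled circuit is then exactly the sorted list of all vertex-label occurrences in the bi-infinite label. In a non-redundant graph an edge is also determined by its endpoints and label.
\item If the label were invariant under a shift by $d'<d$, the unrolled circuit would repeat after $md'/d<m$ edges, contradicting primitivity.
\end{enumerate}
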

\textbf{Proof}.
We follow the logic of the proof of item 5 of proposition~\ref{prop:nrsp}, but border cases are not of concern in this theorem.

First, consider that $C$ does not contain vertex $v$. In such case we can consider walk $P$, that consists of $|v|+1$ complete passes through $C$, starting from an arbitrary vertex of $C$. $Label(P)$ contains all substrings of $Label(C)$ of length $|v|$ as proper infixes. $P$ does not contain vertex $v$, thus $Label(P)$ does not contain $v$ as a proper infix. Therefore, $Label(C)$ also does not contain $v$.

Let $P_1,P_2,\ldots,P_n$ be the result of splitting cycle $C$ at every occurrence of $v$ in $C$. Label of each subwalk has length larger than $|v|$ and overlaps between consecutive subwalks are exactly $v$. $v$ cannot occur as a proper infix of any of the subwalks since $v$ can not be an inner vertex for any of the subwalks. Thus all occurrences of $v$ in $Label(C)$ correspond to overlaps between consecutive subwalks, which in turn correspond to occurrences of $v$ in $C$. \qed

% \begin{lemma}\label{NONR}
% Let $P$ be a simple circuit in a non-redundant graph with vertex sequence $v_1\ldots v_n,v_{n+1}=v_1$ and edge sequence $e_1\ldots e_n$. Then there exist mappings $O_V$ and $O_E$: from 1...n to the set of positions in $Label(P)$ that map every vertex/edge to its occurrence, such that $O_V(i) = O_E(i)$ and $O_V(i+1) = O_E(i) +|e_i| - |v_{i+1}|$ and images of $O_V$ and $O_E$ cover all occurrences of vertex/edge labels in Label(P).
% \end{lemma}

% \textbf{Proof}. Since the label of $P$ is constructed as a chain label for the sequence of edge labels, mappings $O_V$ and $O_E$ exist by definition of chain label. To show that this mapping covers all occurrences of vertex/edge labels in $Label(P)$, we note that for any vertex v in the graph values of $O_V$ contain as many different occurrences of $Label(v)$ in $Label(P)$ as there are occurrences of vertex $v$ in circuit $P$. The number of occurrences of $Label(v)$ in $Label(P)$ is the same as the number occurrences of vertex $v$ in circuit $P$, thus $O_V$ defines an injection between two sets of the same size: occurrences of $Label(v)$ in $Label(P)$ and occurrences of vertex $v$ in walk $P$. Thus $O_V$ covers all occurrences of $v$ in $Label(P)$. The same is correct for any edge in the graph. Combining these conclusions together completes the proof.

\begin{proposition}
Let $P$ be a walk and $C$ be a circuit in a non-redundant assembly graph. Then, whenever $Label(P)$ is substring of $Label(C)$, $P$ is a subwalk of $C$.
\end{proposition}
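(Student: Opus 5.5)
The plan is to reduce the cyclic situation to the linear one already settled in Proposition~\ref{prop:nrsp}(2). That item says every walk whose label is a substring of the label of a closed walk is a subwalk of that closed walk.

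\textbf{Step 1: build a long closed walk from $C$.} Let $C$ have vertex sequence $v_1,\ldots,v_n,v_{n+1}=v_1$, and let $d$ be the total length of its edges minus the total length of its vertices. By item~(6) of Proposition~\ref{prop:nrsp}, $d$ is the period of $Label(C)$ (first reducing to a simple circuit, or working directly with the chain label). Let $P$ have label $Label(P)$ occurring in $Label(C)$ at some position.

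Take $m$ large, say $m > |Label(P)|/d + 2$, and let $W=C^m$ be the walk that traverses $C$ $m$ times starting at $v_1$. Then $Label(W)$ is a finite string of length at least $md$ that reads $Label(C)$ periodically. Any substring of $Label(C)$ of length at most $(m-1)d$ therefore occurs in $Label(W)$, because we may choose the starting copy freely. In particular $Label(P)$ is a substring of $Label(W)$.

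\textbf{Step 2: make the walk closed.} $W$ need not be closed. Extend it to the closed walk $W_c$ with the same label, as in Proposition~\ref{prop:speprop}(5), by appending prefix edges at the front and suffix edges at the end.

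The key observation is that this can be done inside $C$. I will use one more copy of $C$ on each side, so that $W' = C^{m+2}$. Since $C$ is a circuit containing at least one non-suffix edge (the period $d>0$ forces this), $W'$ contains $W$ as an inner subwalk with a non-trivial extension on each side. Hence any walk whose label lies in the middle part of $Label(W')$ is covered by the argument of Proposition~\ref{prop:nrsp}(1).

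The cleanest route is to apply non-redundancy directly. Choose the occurrence of $Label(P)$ inside the middle copies, so that $Label(P)$ is a proper infix of $Label(W')$. The definition of non-redundancy then gives that $P$ is an inner subwalk of $W'=C^{m+2}$. An inner subwalk of a repeated traversal of $C$ is, by definition, a subwalk of the circuit $C$ read cyclically. This last step is just unwinding what ``subwalk of a circuit'' means.

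\textbf{Main obstacle.} The delicate point is ensuring $Label(P)$ is a proper infix, not merely a substring, of $Label(W')$. This needs every copy of $C$ to contribute at least one character beyond the vertex overlaps, so that padding by whole copies of $C$ strictly extends the label on both sides.

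I would argue this from Theorem~\ref{thm:nonredcount}, or directly: if $d=0$, every edge would be simultaneously length-neutral. A circuit whose edges all add no characters would consist only of prefix edges (which add characters) and suffix edges (which remove characters) summing to zero. Tracking vertex lengths around the cycle, each prefix edge increases the vertex length and each suffix edge decreases it. So a genuine edge labelled strictly longer than both endpoints must exist, unless the labels are cyclic. The cyclic-label case (isolated loops, by Appendix~\ref{app:cyclic-labels}) is handled separately and trivially.

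With $d>0$ established, the padding argument goes through, and the proposition follows.
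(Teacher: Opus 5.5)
Your route works and differs from the paper's, but your justification of its one real ingredient, $d\ge 1$, is wrong as written and needs replacing.

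\textbf{Comparison with the paper.} The paper cuts $C$ at a shortest vertex $v$. The edge entering $v$ in $C$ cannot be a prefix edge, and the edge leaving it cannot be a suffix edge, since either would lead to a shorter vertex. So the cut walk $Q$, and every repetition of it, is closed. The paper then invokes item~(2) of Proposition~\ref{prop:nrsp}, which itself rests on item~(1) and on item~(5) of Proposition~\ref{prop:speprop}.

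You instead pad with extra copies of $C$. Take the occurrence of $Label(P)$ starting in $[d,2d)$ of $Label(C^{m+2})$. It then starts at position at least $1$ and ends before $|v_1|+(m+2)d$, so it is a proper infix, and the definition of non-redundancy applies directly. This bypasses the open/closed-walk machinery entirely. The only input is $d\ge 1$, which plays the same role as the paper's shortest-vertex observation: a circuit cannot be built from length-neutral edges alone.

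Some of your setup is therefore unnecessary:
\begin{itemize}
\item The first part of your Step~2, about $W_c$, can be dropped.
\item Item~(6) of Proposition~\ref{prop:nrsp} is not needed. You only use that $d$ is \emph{a} period of the bi-infinite label, which is immediate from the chain-label construction. Minimality is irrelevant, and ``reducing to a simple circuit'' would not be a valid reduction anyway.
\end{itemize}

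\textbf{The gap.} You conclude that ``a genuine edge labelled strictly longer than both endpoints must exist.'' This is false. In a supregraph every edge is a prefix or suffix edge, so no such edge exists, yet circuits there have positive $d$. The remark ``the period $d>0$ forces this'' also runs the logic backwards, and Theorem~\ref{thm:nonredcount} plays no role.

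The correct argument is short. Let $v_i$ be the start of $e_i$, so that $d=\sum_i(|e_i|-|v_i|)$.
\begin{itemize}
\item Each term is nonnegative, because $v_i$ is a prefix of $e_i$.
\item A term vanishes iff $e_i=v_i$, that is, iff $e_i$ is a suffix edge.
\item No edge is both a prefix and a suffix edge, so every suffix edge leads to a strictly shorter vertex.
\item Hence a circuit cannot consist of suffix edges only, some term is positive, and $d\ge 1$.
\end{itemize}
With this fix, the rest of your proof goes through.
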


\textbf{Proof}.
Let $v$ be one of the shortest vertices in $C$. Incoming and outgoing edges for $v$ in $C$ can not be prefix of suffix edges respectively. Thus the result $Q$ of cutting $C$ in one of the occurrences of $v$ is a closed walk.
If $Label(P)$ is a substring of $Label(C)$, then $Label(P)$ must also be a substring of label of walk $Q'$, constructed as walk $Q$ repeated a certain amount of times. Therefore $P$ is a subwalk of walk $Q'$, which in turn is a subwalk of $C$.\qed

Below we prove he remaining statements from section~\ref{sec:properties}.
\begin{proposition}
De Bruijn graphs and Condensed de Bruijn graphs are non-redundant when constructed from a read set that covers all $k+1$-mers in a circular genome.
\end{proposition}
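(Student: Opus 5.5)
The plan is to verify the definition of non-redundancy directly. Take walks $P$ and $Q$ in $\DBG(\reads,k)$ with $Label(P)$ a proper infix of $Label(Q)$, and show that $P$ is an inner subwalk of $Q$.

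The key observation is that in a de~Bruijn graph there are no prefix or suffix edges. Every vertex is a $k$-mer, and every edge is a $(k+1)$-mer joining its $k$-mer prefix to its $k$-mer suffix. So each walk $Q=v_0e_1v_1\ldots e_mv_m$ has a label $q$ of length $k+m$, and the walk is the sequence of $k$-mers (and $(k+1)$-mers) read off $q$ at consecutive positions. The map from walks to labels is therefore injective: a walk is recovered by sliding a window of length $k$ along its label. Moreover, every string of length at least $k$ all of whose $(k+1)$-mers are edges spells exactly one walk.

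Now write $Label(Q)=a\,Label(P)\,b$ with $a,b$ nonempty. Since $|Label(P)|\ge k$, the $k$-mers of $Label(P)$ are the $k$-mers of $Label(Q)$ at positions $|a|,\ldots,|a|+|Label(P)|-k$. The corresponding vertices and edges of $Q$ form a subwalk $P'$ with $Label(P')=Label(P)$, so $P'=P$ by injectivity. It is an inner subwalk because $|a|\ge1$ and $|b|\ge1$ give nonempty walk portions before and after it. The hypothesis that the reads cover all $(k+1)$-mers of a circular genome is only needed so that the graph is an assembly graph, meaning every edge lies on a cycle (the genome path); this makes the definition applicable, and non-redundancy itself is purely combinatorial.

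For the condensed de~Bruijn graph, I would reduce to the uncondensed case. Each walk in $\CDBG(\reads,k)$ expands to a walk in $\DBG(\reads,k)$ with the same label, by replacing every condensed edge with its unbranching path. Applying the previous paragraph, the expansion of $P$ is an inner subwalk of the expansion of $Q$. The main obstacle is showing that this subwalk starts and ends at branching vertices, so that it is the expansion of a walk of $\CDBG$. This holds because the expanded $P$ begins and ends at vertices of $\CDBG$, which are junction vertices, and an unbranching path cannot pass through a junction in its interior. Hence the endpoints of the expanded $P$ sit at boundaries between expanded edges of $Q$. Contracting back then shows that $P$ is an inner subwalk of $Q$.

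The one degenerate case is a cyclic component made entirely of an isolated loop with no junction. There the condensed graph has a vertex with a cyclic label (Appendix~\ref{app:cyclic-labels}), and the claim holds trivially because no proper infix relation between distinct walks can arise.
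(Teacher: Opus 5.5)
Your proof is correct and follows the paper's argument: the $k$-mers of $Label(P)$ form an interior run of the $k$-mers of $Label(Q)$, and since vertices are identified with $k$-mers, that run is $P$ sitting inside $Q$ as an inner subwalk. Your reduction of the condensed case adds something the paper's one-line proof leaves implicit, since in $\CDBG$ not every $k$-mer is a vertex: you expand edges to $\DBG$ walks and note that the endpoints of the expanded $P$ are branching vertices, hence boundaries between condensed edges of $Q$.
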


\textbf{Proof}.
If the label of walk $P$ is a proper infix of the label of walk $Q$, then the sequence of $k$-mers in $Label(P)$ is a subsequence of the sequence of $k$-mers in $Label(Q)$ that does not contain the first and the last $k$-mer. Since $k$-mers from reads bijectively map to vertices of de Bruijn graph, walk $P$ is an inner subwalk of walk $Q$.\qed

\begin{theorem}
    Every weakly contig-preserving assembly graph is splitting. Every splitting, non-redundant graph is weakly contig-preserving. 
\end{theorem}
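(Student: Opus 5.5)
The statement has two independent parts, and I treat them separately.

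\emph{Splitting is necessary.} I argue by contraposition: take an assembly graph that is not splitting and build a set of genome path candidates $\genomepaths$ for which some optimal contig is not the label of a walk in $\OGA(\genomepaths)$. Without loss of generality, suppose two distinct edges $e_1=vw_1$ and $e_2=vw_2$ leaving $v$ have labels that agree beyond $Label(v)$, i.e.\ both begin with $Label(v)c$ for some character $c$. Every vertex and edge lies on a cycle, so I can choose circuits $C_1\ni e_1$ and $C_2\ni e_2$. Set $\genomepaths=\{C_1,C_2\}$, or, if genome path candidates are sets of circuits, the two candidates $\{C_1\}$ and $\{C_2\}$. Every candidate label then contains $Label(v)c$, so some optimal contig $s$ contains $Label(v)c$ as a substring.

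It remains to show that $s$ is not the label of any walk in $\OGA(\genomepaths)$. A walk common to both circuits can contain $v$ only followed by an edge that lies in both circuits. If I choose $C_1$ avoiding $e_2$ and $C_2$ avoiding $e_1$, then a walk common to both candidates that passes through $v$ must stop at $v$. Its label then ends at $Label(v)$ at that point and cannot include the extension $c$. Hence no maximal common walk has a label containing the occurrence of $Label(v)c$, and so $s\notin\Labels(\OGA(\genomepaths))$.

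The delicate point is that common walks could in principle realize $Label(v)c$ elsewhere in the graph. To control this, I plan to pick $C_1,C_2$ as simple cycles and restrict the argument to the occurrence of $Label(v)c$ at $v$. If this is fragile, the fallback is a cruder route: show directly that any label in $\Labels(\OGA(\genomepaths))$ which reaches $v$ must end exactly there. That gives a contig strictly shorter than the true maximal common substring, contradicting $\OA\subseteq\Labels(\OGA)$.

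\emph{Splitting and non-redundancy are sufficient.} Let $s\in\OA(\Labels(\genomepaths))$; I want to exhibit a walk in $\OGA(\genomepaths)$ labelled $s$. The plan is to take the set $W$ of walks $P$ such that $P$ is a subwalk of every candidate and $Label(P)\subseteq s$, and to pick a maximal element $P^*$ of $W$.

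First, I use the earlier proposition that whenever the label of a walk is a substring of a circuit label, the walk is a subwalk of the circuit. With it, the condition ``$Label(P)$ is a substring of every candidate label'' is equivalent to ``$P$ is a subwalk of every candidate.'' This lets me move between walks and strings.

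Second, I show $Label(P^*)=s$. Suppose instead that $Label(P^*)$ is a proper substring of $s$, say extendable to the right by a character $c$. Every occurrence of $s$ in a candidate lies in a circuit. Non-redundancy, via Theorem~\ref{thm:nonredcount} and the proposition on closed walks, aligns this occurrence with the occurrence of $P^*$. After $P^*$'s final vertex $u$, each candidate continues along some outgoing edge whose label extends $Label(u)$ by $c$. By the splitting property, this edge is the same in all candidates, so appending it keeps the walk common to all candidates. If it is a suffix edge, the label does not grow, but iterating terminates because vertex lengths strictly decrease. Either outcome contradicts the maximality of $P^*$.

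The main obstacle is this alignment step: showing that ``$s$ continues with $c$'' forces the graph continuation, especially when $Label(P^*)$ ends strictly before the end of $Label(u)$'s span in $s$. To handle it I will first replace $P^*$ by its closed form (Proposition~\ref{prop:speprop}(5)). Finally, maximality of $P^*$ among common subwalks follows from maximality of $s$: any longer common walk would have a longer common label, so $P^*\in\OGA(\genomepaths)$ and $s=Label(P^*)$.
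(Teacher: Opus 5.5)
You take weak contig preservation to mean $\OA(\Labels(\genomepaths))\subseteq\Labels(\OGA(\genomepaths))$, which is what the displayed formula in the definition literally says. With that inclusion the second claim is false, so your sufficiency argument cannot be completed. It breaks at its first step: a maximal $P^*\in W$ need not exist. Every walk label, even that of a one-vertex walk, contains a vertex label, so $W=\varnothing$ whenever $s$ contains none.

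Concretely, take the de Bruijn graph with $k=2$ consisting of the two disjoint cycles labelled $\langle\text{AAC}\rangle$ and $\langle\text{GGC}\rangle$; it is splitting and non-redundant. Let $\genomepaths$ consist of the two one-circuit candidates. Then $\OA(\Labels(\genomepaths))=\{\text{C}\}$, while no walk is common to both candidates, so $\OGA(\genomepaths)=\varnothing$. Theorem~\ref{thm:limitation} says this in general: for splitting non-redundant graphs, $\Labels(\OGA(\genomepaths))=\Superstrings(\OA(\Labels(\genomepaths)),V(\assemblygraph))$, which misses every optimal contig that contains no vertex label.

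The notion the paper intends and actually proves is the reverse inclusion $\Labels(\OGA(\genomepaths))\subseteq\OA(\Labels(\genomepaths))$. The sentence right after the definition says so (``labels of optimal contig paths are optimal contigs, although not all optimal contigs may be found in this way''), and the proof of Theorem~\ref{thm:limitation} uses it.

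With the inclusion the right way round, your ingredients already give the paper's proof.

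\emph{Sufficiency.} Start from a walk $P\in\OGA(\genomepaths)$ rather than from a contig. Suppose $Label(P)$ were a proper prefix (say) of a string $s$ common to all candidate labels. Non-redundancy identifies each occurrence of $s$ in a chromosome label with an occurrence of $P$ in the circuit. Splitting then forces the same edge $e$ out of the last vertex of $P$ in every candidate. So $Pe$ is a longer common walk, contradicting the maximality of $P$. This is exactly your extension step, now with no existence problem.

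\emph{Necessity.} Your ``fallback'' is precisely the paper's argument. Take $C_1,C_2$ simple cycles through $e_1,e_2$, so each visits $v$ once. The maximal common walk through $v$ must end at $v$, lies in $\OGA(\genomepaths)$, and its label combined with $Label(v)c$ is a longer common substring, so its label is not in $\OA(\Labels(\genomepaths))$. Note that this contradicts $\Labels(\OGA(\genomepaths))\subseteq\OA(\Labels(\genomepaths))$, not the inclusion you wrote. Your primary route, showing that a particular optimal contig labels no walk in $\OGA(\genomepaths)$, is unnecessary. It is also hard to control, because this direction does not assume non-redundancy and the same string may label unrelated common walks.
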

\textbf{Proof}.
First, we assume that an assembly graph $\assemblygraph$ is weakly contig-preserving. Let vertex $v$ from $\assemblygraph$ have two distinct outgoing edges $e_1$ and $e_2$. Since every edge is contained within at least one cycle, we consider the shortest cycles $C_1$ and $C_2$, containing $e_1$ and $e_2$ respectively. Note that both $C_1$ and $C_2$ contain $v$ only once since otherwise there exists a shorter cycle that contains $e_1$ or $e_2$. Finally, we consider the set of genome candidates that consists of two genome candidates: $\genomepaths=\{\{C_1\},\{C_2\}\}$. Let $P$ be the longest walk containing vertex $v$ present in all genome candidates from $\genomepaths$. $P$ must end in $v$ since $v$ is followed by different edges in different genome candidates. Since $\assemblygraph$ is weakly contig-preserving, $Label(P)$ must be a maximal string that is a substring of all genome candidates. Assume that $e_1$ and $e_2$ have a common prefix $s$, that is longer than $v$. In this case any label of a genome candidate contains not just $Label(P)$, but also a combined sequence of $Label(P)$ and $s$. This combined sequence contains $Label(P)$ as a proper substring, which contradicts the maximality of $Label(P)$. Therefore $e_1$ and $e_2$ must diverge immediately after $v$ and $\assemblygraph$ is splitting.

Let us assume now that the graph $\assemblygraph$ is splitting and non-redundant. Consider a walk $P$ that is one of the maximal walks, present in every genome path candidate from $\genomepaths$. $Label(P)$ is naturally present as a substring of every genome path candidate label, but let's assume that it is not maximal such string. In particular, there exists a string $s$, that contains $Label(P)$ as a proper substring that is also present in every label of genome path candidate. Without loss of generality we will assume that $Label(P)$ is a proper prefix of $s$. Consider a chromosome $C$ of a genome path candidate from $\genomepaths$, such that $s$ is a substring of $Label(C)$. Every occurrence of $s$ in $Label(C)$ is also an occurrence of its prefix $Label(P)$ in $Label(C)$, which in turn corresponds to an occurrence of $P$ in $C$ as a subwalk. Consider the next edge $e$ after $P$ in the occurrence of $P$ in $C$ that matches the occurrence of $s$ in $Label(C)$. Note that edge $e$ will be the same for any genome candidate we consider, since $e$ is an edge outgoing from the last vertex $v$ of $P$ and the first nucleotide in $Label(e)$ after $v$ is defined by $s$. Therefore walk $P+e$ is present in every genome candidate, which contradicts the maximality of $P$. Therefore for non-redundant splitting graphs, labels of all walks from $OGA(\genomepaths)$ are maximal contigs, that are present in every genome path label.\qed.

\begin{theorem}%\label{thm:limitation}
For any non-redundant splitting assembly graph $\assemblygraph$ and a set of genome path candidates $\genomepaths$, $$\Labels(\OGA(\genomepaths))=\Superstrings(\OA(\Labels(\genomepaths)), V(\assemblygraph)).$$.
\end{theorem}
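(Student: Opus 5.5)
We prove the two inclusions separately.

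\emph{Inclusion $\subseteq$.} By the theorem just proved (splitting plus non-redundant implies weakly contig-preserving), every label of a walk in $\OGA(\genomepaths)$ lies in $\OA(\Labels(\genomepaths))$. Every walk contains at least one vertex, and a vertex label is a substring of the walk label. Hence each such label is a superstring of some element of $V(\assemblygraph)$, which gives $\Labels(\OGA(\genomepaths))\subseteq\Superstrings(\OA(\Labels(\genomepaths)),V(\assemblygraph))$.

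\emph{Inclusion $\supseteq$, step 1: realise the contig as a walk.} Let $s$ be an optimal contig containing a vertex label $Label(v)$ as a substring. Fix one genome path candidate $G\in\genomepaths$ and a chromosome circuit $C$ of $G$ whose label contains $s$. Take an occurrence of $s$ in $Label(C)$ and the occurrence of $Label(v)$ inside it. By Theorem~\ref{thm:nonredcount}, this occurrence of $Label(v)$ corresponds to an occurrence of the vertex $v$ in $C$. Extend from that occurrence along $C$ in both directions, taking the shortest subwalk $P$ of $C$ whose label covers the chosen occurrence of $s$. This is possible because the edges of $C$ tile $Label(C)$ with overlaps given by the vertices.

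\emph{Step 2: the walk is shared by every candidate.} Show $Label(P)=s$ and $P$ is a subwalk of every genome path candidate. Any walk label containing $s$ contains $Label(v)$ at the corresponding offset, and that occurrence is pinned to the vertex $v$ via Theorem~\ref{thm:nonredcount} and the proposition on walks inside circuits. Extending from $v$ to the right, the splitting property determines each next edge uniquely: edges out of the current vertex diverge immediately after it, and the next character of $s$ selects one of them. The same holds to the left. So the walk read off $s$ starting from $v$ is the same walk $P$ in every chromosome of every candidate, and in particular $P$ is a subwalk of every genome path candidate.

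\emph{Step 3: maximality.} Show $P$ is maximal, so $P\in\OGA(\genomepaths)$. Suppose $P$ extends by an edge $e$ to a walk present in all candidates. Then $Label(Pe)$ is a common substring of all labels. It is strictly longer than $Label(P)$ unless $e$ is a suffix or prefix edge.

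\emph{Main obstacle.} The obstacle is pinning down $Label(P)=s$ exactly. The subwalk chosen in step~1 may have a label strictly longer than $s$, because the edge at either end may overhang the occurrence of $s$. The plan is to use splitting at the boundary. If the last edge $e$ of $P$ overhangs $s$, then $s$ ends strictly inside $e$ after the vertex $u$ preceding it, and every occurrence of $s$ is forced, by splitting at $u$, to continue along the same edge $e$. So every occurrence of $s$ extends to an occurrence of $Label(P)$, and maximality of $s$ as a contig gives $Label(P)=s$. A symmetric argument handles the left end. Once $Label(P)=s$, maximality of $s$ rules out non-degenerate extensions of $P$ and yields $P\in\OGA(\genomepaths)$. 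Degenerate extensions by prefix or suffix edges leave the label unchanged, so they do not affect $\Labels(\OGA(\genomepaths))$; I would pass to the closed form of $P$ via Proposition~\ref{prop:speprop}(5) to make the walk canonical.
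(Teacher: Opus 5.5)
Your proof is correct. It uses the same three ingredients as the paper's proof: non-redundancy pins an occurrence of a vertex label to an occurrence of the vertex, splitting forces the next edge from the next character of $s$, and maximality of $s$ rules out overhang. The difference is that you organize the $\supseteq$ direction dually.

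The paper takes the \emph{longest} walk $P$ whose label is a substring of $s$. Its presence in every genome path candidate then follows directly from the non-redundancy proposition (a walk whose label is a substring of a circuit label is a subwalk of that circuit). If $Label(P)\neq s$, the paper appends the edge $e$ chosen by splitting and splits into two cases. If $Label(Pe)$ still lies inside $s$, the choice of $P$ is contradicted. If it overhangs $s$, gluing $s$ with $Label(Pe)$ gives a longer common substring, contradicting maximality of $s$.

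You instead take the \emph{shortest} walk around a pinned occurrence of $v$ whose label covers $s$, and show edge by edge that it sits aligned with every occurrence of $s$. This removes the paper's first case entirely; your overhang argument is exactly its second case. The cost is explicit position tracking, which the paper partly sidesteps through the substring-to-subwalk proposition.

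Your version is also more careful at the ends. The paper simply asserts that $P$ is a maximal common walk, ignoring that a trailing suffix edge or a leading prefix edge can be added without changing the label. Passing to the closed form via Proposition~\ref{prop:speprop}(5), as you do, is the right repair.

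Two small points should be made explicit:
\begin{itemize}
\item In the propagation step, when the current vertex has an outgoing suffix edge, the next character of $s$ selects nothing, because that edge's label stops at the vertex. The step is still forced, since Proposition~\ref{prop:speprop}(1) makes that edge the unique outgoing edge.
\item Theorem~\ref{thm:nonredcount} only equates counts. Pinning a \emph{specific} occurrence of $Label(v)$ needs one more step: the position-preserving map from occurrences of $v$ in $C$ to occurrences of $Label(v)$ in $Label(C)$ is injective, so equal counts make it a bijection.
\end{itemize}
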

\textbf{Proof}.
By theorem~\ref{thm:cprescond} we know that $\Labels(\OGA(\genomepaths))\subset \OA(\Labels(\genomepaths))$. Every walk label from $\Labels(\OGA(\genomepaths))$ contains at least one vertex label as a substring, thus we can conclude $$\Labels(\OGA(\genomepaths))\subset\Superstrings(\OA(\Labels(\genomepaths)), V(\assemblygraph)).$$
Consider $s\in \Superstrings(\OA(\Labels(\genomepaths)), V(\assemblygraph))$, that is a maximal contig present in every label of genome path candidate, that also contains at least one vertex as a substring. Consider the longest walk $P$, such that $Label(P)$ is a substring of $s$. Such walk exists since $s$ at least contains a label of one vertex as a substring. Since $Label(P)$ is present in every label of genome path candidate as a substring, walk $P$ must be present in every genome path candidate as a subwalk. We thus have the same situation as in the proof of theorem~\ref{thm:cprescond}: if $Label(P)\neq s$, we can extend $P$ to find a longer walk $P'=P+e$ that is present in every genome path candidate. If $Label(P')$ is still a substring of $s$, we have a contradiction with maximality of $P$. If $Label(P')$ is not a substring of $s$, we can combine $s$ and $Label(P')$ to construct a longer string $s'$ that is present in every genome path candidate label as a substring, which contradicts the maximality of $s$. Therefore $Label(P)=s$ and $P$ is a maximal walk that is present in every genome path candidate as a subwalk.\qed

\appsection{Chromosome candidate preservation and conductor strings.}\label{app:conductor}
We start by proving theorem~\ref{thm:conventional} from section~\ref{sec:conductors}.
\begin{theorem}
For read set $\reads$
$\CPC(\OG(\reads))=\CPC(\SG(\reads))\subseteq\CPC(\ROG(\reads))=\CC(\reads)\subseteq\CPC(\DBG(\reads,k))$. Moreover, each of these inclusions is strict for some read sets.
\end{theorem}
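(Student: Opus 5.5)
The proof splits into the four relations in the chain plus a strictness example for each inclusion. The guiding idea is to match circuits with cyclic read chains, using the convention (implicit in the chain condition) that reads overlap by at least one character in the $k$-mer alphabet.

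\textbf{Step 1: $\CPC(\ROG(\reads))=\CC(\reads)$.} In $\ROG(\reads)$ the vertices are all reads. An edge $r_i\to r_j$ is labeled by the combined sequence of $r_i$ and $r_j$ through their overlap. A circuit $r_1\to r_2\to\cdots\to r_n=r_1$ is therefore literally a cyclic read chain. Its label is the merge of the edge labels along vertex overlaps, which is the merge of the reads along the same overlaps. Conversely, every cyclic read chain traces such a circuit.

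One subtlety needs checking. Several overlaps of different lengths between the same pair of reads give parallel edges. Also, a chain that repeats a shorter chain must be matched with a primitive circuit. For this I would use the convention that $\langle\cdot\rangle$ denotes the primitive cyclic string, so a repeated chain and its primitive root yield the same label.

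\textbf{Step 2: $\CPC(\OG(\reads))=\CPC(\SG(\reads))\subseteq\CPC(\ROG(\reads))$.}
\begin{itemize}
\item The inclusion holds because $\OG(\reads)$ is the subgraph of $\ROG(\reads)$ induced on non-contained reads, so every circuit of $\OG$ is a circuit of $\ROG$ with the same label.
\item For $\CPC(\OG)=\CPC(\SG)$: removing a transitive edge $u\to w$ loses no label, because by definition it duplicates the label of a longer path $u\to\cdots\to w$. Any circuit using $u\to w$ can be rerouted through that path with identical chain label.
\item Rerouting may create a non-primitive circuit or require iterating over several removed edges. I would handle this by induction on the number of transitive edges used, noting that each replacement path's edges are themselves either kept or transitive with strictly shorter overlap span, so the process terminates.
\end{itemize}

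\textbf{Step 3: $\CC(\reads)\subseteq\CPC(\DBG(\reads,k))$.} Take a cyclic read chain with overlaps of at least $k$ (length $1$ in the $k$-mer alphabet). Every $(k+1)$-mer of its cyclic label lies inside a single read, since consecutive reads share at least $k$ characters, so any window of length $k+1$ fits in one read. Each such $(k+1)$-mer is therefore an edge of $\DBG(\reads,k)$. The consecutive $(k+1)$-mers of the label then form a circuit whose label is the chain label. Taking the primitive period again handles repetitions.

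\textbf{Step 4: strictness.} I would take the read set of Fig.~\ref{fig:assembly-graphs}A and reuse the two examples already given in the text.
\begin{itemize}
\item The chain through the contained read GCGA gives $\langle\text{CAGCGACACGATTA}\rangle\in\CC(\reads)\setminus\CPC(\SG(\reads))$. To confirm this I must show no circuit on non-contained reads has that label, by checking the overlaps available around GCGA.
\item $\langle\text{GACAGC}\rangle\in\CPC(\DBG)\setminus\CC(\reads)$. This needs an argument that no cyclic read chain has that label. The label is short, so any chain producing it would require a read that is a substring of its periodic expansion with suitable overlaps; a finite check over the reads suffices.
\end{itemize}

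\textbf{Main obstacle.} I expect the main difficulty to be Step 2's rerouting argument together with the primitive-circuit and parallel-edge bookkeeping. The informal definition of "transitive edge" must be pinned down, namely that its label equals the label of some alternative path with the same endpoints, for the rerouting to preserve labels exactly.
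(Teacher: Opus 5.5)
Your proposal is correct and follows the paper's proof route. Both use a label-preserving bijection between cyclic read chains and circuits of $\ROG(\reads)$, treat $\OG(\reads)$ as a subgraph of $\ROG(\reads)$, reroute transitive edges to obtain $\SG(\reads)$, cover every $(k+1)$-mer of a chain label by a single read for the de Bruijn inclusion, and take the two strictness examples from Fig.~\ref{fig:assembly-graphs}. Your extra bookkeeping is rigor the paper's terse proof leaves implicit: the rerouting terminates because replacement-path edges have strictly shorter labels, you account for circuit primitivity, and you check explicitly that no admissible chain realizes the example labels.
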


\textbf{Proof}. Raw overlap graph features a bijection between reads and vertices. Its edges represent overlaps. Thus there is a bijection between cyclic read chains and circuits in raw overlap graphs, constructed from these reads. This bijection preserves labels, thus the set of read chain labels is the same as the set of cycle labels in $ROG(\reads)$. Overlap graph is the result of removing some vertices and edges from $ROG(\reads)$. Thus the set of circuits in $OG(\reads)$ is a subset of the set of circuits in $ROG(\reads)$. Since circuit-to-chain mapping still preserves labels, we have $\CPC(\SG(\reads))\subseteq\CPC(\ROG(\reads))$. String graph is obtained from Overlap graph by removing transitive edges. Transitive edges do not change the set of circuit labels since for any circuit that contains transitive edges, one can construct a circuit that does not contain them and has the same label. Therefore $\CPC(\OG(\reads))=\CPC(\SG(\reads))$. Finally any read $k$-chain label can be represented as a sequence of $k$ and $k+1$-mers, each coming from one of the reads. Thus de Bruijin graph constructed from these reads will contain a walk with the same label as any label of a read $k$-chain. therefore $\CC(\reads) \subseteq\CPC(\DBG(\reads,k))$.

Examples of cases where inclusions are strict can be found in fogure~\ref{fig:assembly-graphs}.
\qed

\paragraph{Properties of conductor strings.}
The definition of conductor strings uses self-overlapping strings that start and finish with conductor string $s$. Many proofs would be easier if it was possible to record such strings as $sxs$ for some string $x$. Unfortunately it is not always possible since a self-overlapping strings can be shorter than double length of $s$. For example, string ATATA both starts and finishes with ATA, but can not be represented as ATAxATA. That is why when we describe self-overlapping string $t$ we introduce two strings $x$ and $x'$, such that $t=sx=x's$. The following lemma shows an important connection between $x$ and $x'$.

\begin{lemma}\label{lm:cyclicswitch}
    For any finite strings $x,x'$, if there exists string $s$, such that $sx=x's$ then $\langle x\rangle=\langle x'\rangle$
\end{lemma}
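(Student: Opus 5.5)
We will show that $x$ and $x'$ are cyclic shifts of one another, i.e.\ $x = uw$ and $x' = wu$ for some strings $u,w$; then $\langle x\rangle=\langle x'\rangle$ is immediate. Note first that $|x|=|x'|$, since $|sx|=|x's|$. If $x$ is empty, so is $x'$, and there is nothing to prove; from now on assume $n=|x|>0$. We use induction on $|s|$, the classical Lyndon--Schützenberger argument for conjugate words.

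\emph{Short overlap, $|s|\le n$.} Here $x'$ is a prefix of $sx$ of length $n\ge|s|$, so $x'=sw$ for some string $w$. Comparing lengths gives $sx=sws$, hence $x=ws$. So $x=ws$ and $x'=sw$ are rotations of each other.

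\emph{Long overlap, $|s|>n$.} Now $x'$ is a proper prefix of $s$, so $s=x's_1$ for some string $s_1$ with $|s_1|=|s|-n<|s|$. Substituting into $sx=x's$ gives $x's_1x=x'x's_1$. Cancelling the common prefix $x'$ leaves $s_1x=x's_1$. This is the same equation with the shorter string $s_1$, so the induction hypothesis applies. Unrolling the recursion, $s$ has the form $(x')^m s_0$ with $|s_0|<n$, and $s_0$ falls into the short-overlap case.

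In either case $x'$ is a rotation of $x$. Hence the infinite periodic strings $\langle x\rangle$ and $\langle x'\rangle$ coincide up to index shift, which is exactly equality under the Appendix~\ref{app:cyclic-strings} definition.

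The main subtlety is the convention that $\langle x\rangle$ is only written when $|x|$ is the minimal period. I would handle it by reading the statement as equality of the cyclic strings generated by $x$ and $x'$: rotation preserves primitivity, so the minimal periods agree. There is also an alternative route. The equation $sx=x's$ says $sx$ has period $n$, so $sx[i]=sx[i+n]$ wherever defined. Comparing $x$ and $x'$ then shows that the bi-infinite periodic extensions of $x'$ and $x$ agree after a shift by $|s|$. This gives the result directly without induction, and I would present that version if it turns out cleaner.
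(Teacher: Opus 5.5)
Your proof is correct, but it is organized differently from the paper's. The paper compares $sx$ and $x's$ letter by letter. Positions $|s|\le n<|x|$ give $x'[n]=x[n-|s|]$. For $n<|s|$ it chains $x'[n]=s[n]=x[n+|x|-|s|]$, where the second equality comes from comparing positions $|x|+n$. It concludes that $x'$ is $x$ rotated by $|s|$ places.

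That bookkeeping only closes when $|s|\le|x|$. If $|s|>|x|$, position $|x|+n$ of $sx$ still lies inside $s$ for small $n$, so the index $n+|x|-|s|$ is negative and the chain never reaches a letter of $x$. The paper's argument therefore amounts to your short-overlap case, where your explicit factorization $x=ws$, $x'=sw$ is the cleaner statement.

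Your reduction $s=x's_1\Rightarrow s_1x=x's_1$ is what handles $|s|>|x|$. That is precisely the self-overlapping situation the lemma was introduced for, e.g.\ $\mathrm{ATATA}$ with $s=\mathrm{ATA}$, $x=\mathrm{TA}$, $x'=\mathrm{AT}$. So your induction buys full generality, which the paper's direct comparison does not deliver as written.

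Your alternative route would serve equally well. It observes that $sx=x's$ has period $|x|$, so $x$ is the length-$|x|$ window at offset $|s|$ of the periodic extension of $x'$. This is the paper's index argument made uniform by reducing the shift modulo $|x|$, and it avoids induction.

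One cosmetic slip: unrolling may leave $|s_0|=n$ rather than $|s_0|<n$. Your short-overlap case ($|s|\le n$) already covers that.
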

\textbf{Proof}. We will prove that $a$ represents a cyclic shift of $b$ by $|s|$ positions. Indeed, comparing letters at position $n$ in $sa$ and $bs$ we conclude that for $|s|\le n\le |a|$, letter at position $n-|s|$ in $a$ is the same as the letter at position $n$ in $b$. For positions $n<|s|$ the letter at position $n$ in $s$ matches letter at position $n$ in $b$, which in turn (by comparison of letters at positions $|a| + n$ in $sa$ and $bs$) matches the letter at position $n+|a|-|s|$ in $a$. \qed

The next lemma helps us characterize substrings of cyclic strings.
\begin{lemma}\label{lm:cyclicsubstrings}
    If a finite string $s$ is a substring of a cyclic string $\langle t\rangle$, then there exists string $w$, such that $\langle t\rangle=\langle sw\rangle$
\end{lemma}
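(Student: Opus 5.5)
The plan is to read $\langle t\rangle$ off at the position where the occurrence of $s$ begins and let $w$ be the part of one period that follows $s$. The cleanest setting is the appendix model, where $\langle t\rangle$ is the bi-infinite periodic string of period $n=|t|$. A finite substring $s$ of $\langle t\rangle$ is then a finite window of that bi-infinite string, say occupying indices $p,\dots,p+|s|-1$.

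\textbf{Case $|s|\le n$.} Define $w$ as the string read from $\langle t\rangle$ at indices $p+|s|,\dots,p+n-1$, so that $|w|=n-|s|\ge 0$. Then the word $sw$ is exactly the length-$n$ window of $\langle t\rangle$ starting at $p$. Any length-$n$ window of a string of period $n$ is a cyclic shift of $t$, so the periodic extension of $sw$ coincides with $\langle t\rangle$ up to re-indexing. Under the appendix's definition of string equality, this says that $\langle sw\rangle$ and $\langle t\rangle$ are the same bi-infinite string.

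\textbf{Notational check.} The paper writes $\langle x\rangle$ only when $|x|$ is the minimal period. Since $|sw|=n$ and $n$ is the minimal period of $\langle t\rangle$, the minimal period of the extension of $sw$ is also $n$, so $\langle sw\rangle$ is legitimate notation. I will also invoke the circularization remark (or Lemma~\ref{lm:cyclicswitch}) to note that choosing a different starting index gives a cyclic shift, and hence the same cyclic string.

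\textbf{Case $|s|>n$.} This is the main obstacle, because then no $w$ with $|sw|=n$ exists. One could read the lemma with $\langle sw\rangle$ allowing repetitions, that is $w$ empty and $sw$ a power-like string, but that conflicts with the primitivity convention in the notation. My plan is to argue that the intended usage concerns substrings of length at most the period, and to state this restriction explicitly. Alternatively, I would take $w$ to be a string with $sw$ ending in a prefix that makes the total length a multiple of $n$, and interpret $\langle sw\rangle$ as the cyclic string generated by $sw$, which is periodic with period $n$ and equals $\langle t\rangle$.

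Concretely, I would set $m=\lceil |s|/n\rceil\cdot n$ and let $w$ be the next $m-|s|$ characters of $\langle t\rangle$ after $s$. Then $sw$ has $n$-periodic structure and its periodic extension is $\langle t\rangle$. I would remark that the notation is then understood as generated by $sw$ and reduced to its primitive root $t$.
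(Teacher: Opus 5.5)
Your argument is correct and is essentially the paper's proof. The paper writes $xsy$ as a power of $t$ and rotates, $\langle t\rangle=\langle t^{j}\rangle=\langle xsy\rangle=\langle syx\rangle$ with $w=yx$, which is your ``read forward to a multiple of the period'' construction; your first case is just the refinement that gives a primitive $sw$ when $|s|\le|t|$. Drop the option of restricting to $|s|\le|t|$: the paper itself uses your second, relaxed reading of $\langle sw\rangle$ (it writes $\langle t^{j}\rangle=\langle t\rangle$), and the lemma is later applied, e.g.\ in Lemmas~\ref{lm:switch} and~\ref{lm:circularization}, to substrings that may wrap around a chromosome several times.
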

\textbf{Proof}.
Since $s$ is a finite substring of a string obtained as infinite repetition of string $t$, $s$ must be a substring of a finite repretition $t^n$ of string $t$. In other words $xsy=t^n$. $\langle t\rangle=\langle t^n\rangle=\langle xsy\rangle=\langle syx\rangle$, thus the statement of the lemma is satisfied for $w=yx$.
\qed

In the following we prove important properties of conductor strings and use them to prove the remaining theorems from section~\ref{sec:conductors}. 

\begin{lemma}\label{lm:switch}
Let $s$ be a conductor string with respect to a chromosome-set $\chromosomes$. Then for any pair $a_1sb_1,a_2sb_2$ of superstrings of $s$, that are substrings of $\chromosomes$, the results of the suffix switch $a_1sb_2$ and $a_2sb_1$ are also substrings of $\chromosomes$.
\end{lemma}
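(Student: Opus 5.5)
We use Lemma~\ref{lm:cyclicsubstrings} to place each superstring inside a chromosome in a form that starts and ends with $s$, and then apply the conductor property to a concatenation of the two.

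\textbf{Step 1: cyclic words for the two superstrings.} Since $a_1sb_1$ is a substring of some $\langle t_1\rangle\in\chromosomes$, Lemma~\ref{lm:cyclicsubstrings} gives a string $w_1$ with $\langle t_1\rangle=\langle a_1sb_1w_1\rangle$. Set $y_1=b_1w_1a_1$, so that $\langle t_1\rangle=\langle sy_1\rangle$. Similarly there is $w_2$ with $\langle t_2\rangle=\langle a_2sb_2w_2\rangle$, and we set $y_2=b_2w_2a_2$.

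\textbf{Step 2: strings that start and end with $s$.} The strings $sy_1s$ and $sy_2s$ start and end with $s$, so they fit the conductor definition. Write $sy_is=s\cdot(y_is)=(sy_i)\cdot s$, so the corresponding strings are $a=y_1s$ and $b=y_2s$, with $a'=sy_1$ and $b'=sy_2$.

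\textbf{Step 3: apply the conductor property.} Lemma~\ref{lm:cyclicswitch} gives $\langle a\rangle=\langle y_1s\rangle=\langle sy_1\rangle=\langle t_1\rangle\in\chromosomes$, and likewise $\langle b\rangle=\langle t_2\rangle\in\chromosomes$. The conductor property then yields
\[
\langle ab\rangle=\langle y_1sy_2s\rangle=\langle sy_1sy_2\rangle\in\chromosomes .
\]

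\textbf{Step 4: locate the switched strings.} Expanding gives $sy_1sy_2=s\,b_1w_1a_1\,s\,b_2w_2a_2$. The string $a_1sb_2$ appears literally in the middle. Rotating, $a_2sb_1$ appears across the wrap-around: $a_2$ is at the end, followed cyclically by the leading $s b_1$. Both are therefore substrings of $\langle ab\rangle$, provided a substring of a cyclic string may wrap around, which it may under the appendix definition.

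\textbf{Main obstacle: primitivity.} The notation $\langle x\rangle$ is reserved for strings of minimal period, and $y_1sy_2s$ may be a power of a shorter word, for example when $t_1=t_2$. The conductor definition presumably reads $\langle ab\rangle$ as the cyclic string determined by $ab$, meaning $ab$ repeated infinitely. Under that reading, reducing to the primitive root leaves the set of substrings unchanged, so Step 4 still holds. I would add a sentence making this explicit.

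A second point to check is that Step 1 works even when the string is long relative to the period, i.e.\ $|a_1sb_1|>|t_1|$. This should pose no problem, because Lemma~\ref{lm:cyclicsubstrings} allows a representative of $\langle t_1\rangle$ that is longer than its primitive period.
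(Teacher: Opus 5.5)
Your proof is correct and follows the paper's argument essentially verbatim. You write each chromosome as $\langle s\,b_iw_ia_i\rangle$ via Lemma~\ref{lm:cyclicsubstrings}, apply the conductor property to the two strings $s\,b_iw_ia_i\,s$ to obtain $\langle s\,b_1w_1a_1\,s\,b_2w_2a_2\rangle\in\chromosomes$, and read off $a_1sb_2$ and $a_2sb_1$, the latter across the wrap-around. Your remark on the primitivity of $\langle ab\rangle$ addresses a point the paper's proof glosses over (the paper itself already uses non-primitive representatives such as $\langle t^n\rangle$ in Lemma~\ref{lm:cyclicsubstrings}), and your infinite-repetition reading is the consistent one.
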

\textbf{Proof}. Let $w_1=\langle a_1sb_1t_1\rangle=\langle sb_1t_1a_1\rangle$ and $w_2=\langle a_2sb_2t_2\rangle=\langle sb_2t_2a_2\rangle$ be members of $\chromosomes$ that contain $a_1sb_1$ and $a_2sb_2$ as substrings respectively (by Lemma~\ref{lm:cyclicsubstrings}. Since $s$ is conductor, we can apply definition of conductor strings to $s$ and strings $sbt_1as$ and $sdt_2cs$ that have $s$ as both prefix and suffix and conclude that $\langle b_1t_1a_1sb_2t_2a_2s\rangle=\langle b_2t_2a_2sb_1t_1a_1s\rangle$ also belong to $\chromosomes$. Therefore $a_1sb_2$ and $a_2sb_1$ are also substrings of $\chromosomes$.\qed

\begin{lemma}\label{lm:circularization}
Let $s$ be a conductor string with respect to chromosome-set $\chromosomes$ and $sa=a's$ be a substring of $\chromosomes$ that starts and finishes with $s$. Then $\langle a\rangle\in\chromosomes$.
\end{lemma}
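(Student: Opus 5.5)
We want: $s$ conductor, $sa=a's$ is a substring of some $w\in\chromosomes$; show $\langle a\rangle\in\chromosomes$. The natural route is to apply the conductor definition to $w$ written as a string starting and finishing with $s$, splitting it into the piece $sa$ and a remainder.

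First, by Lemma~\ref{lm:cyclicsubstrings} applied to the substring $sa$ of $w$, write $w=\langle sax\rangle$ for some finite string $x$. Then the string $sax s$ starts and finishes with $s$. Write it as $s\cdot(ax)$ with $s(ax)=y s$ for suitable $y$ (indeed $saxs$ has $s$ as a suffix since it literally ends in $s$). Its circularization is $\langle ax\rangle = w\in\chromosomes$. Next, split this self-overlapping string into two self-overlapping pieces sharing the middle occurrence of $s$: the piece $sa=a's$ (starts and ends with $s$) and the piece $s\,x'$ where, since $sa=a's$, we have $saxs=a'sxs$, so the second piece is $sxs$, i.e. $s\cdot(xs)$, which starts and ends with $s$. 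The concatenation rule in the conductor definition (pair $va, vb$ with $va=a'v$, $vb=b'v$) gives: $\langle a\cdot xs\rangle\in\chromosomes$ iff $\langle a\rangle\in\chromosomes$ and $\langle xs\rangle\in\chromosomes$. Since $\langle a\,xs\rangle=\langle sax\rangle=w\in\chromosomes$ (cyclic shift), the "only if" direction yields $\langle a\rangle\in\chromosomes$, done.

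The delicate step is bookkeeping: matching exactly the form required by the conductor definition, namely strings $v a$ and $v b$ beginning with $v=s$ and equal to $a' v$, $b' v$, and verifying that the combined cyclic string $\langle ab\rangle$ is indeed the cyclic shift of $w$. Here $b=xs$ and $sb=sxs$ ends with $s$, fine; and $ab=axs$, whose cyclic class equals $\langle sax\rangle$ by rotation. One must also handle the case where $|x|$ makes things degenerate (e.g. $x$ empty or $sa$ overlapping itself around $w$, i.e. $|sa|>|w|$); Lemma~\ref{lm:cyclicsubstrings} already absorbs long substrings via repetitions $t^n$, but then $\langle sax\rangle$ may be a proper power of $w$'s primitive root, conflicting with the convention that $\langle\cdot\rangle$ is taken with its period. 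I expect this primitivity issue to be the main obstacle; I would handle it by interpreting $\langle ab\rangle$ as the cyclic string (infinite periodic word) it denotes, which coincides with $w$ regardless of powers, so the conductor condition still applies directly.
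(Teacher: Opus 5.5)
Your argument is the paper's own proof, with your $x$ in the role of the paper's $t$. You write $w=\langle sax\rangle$ by Lemma~\ref{lm:cyclicsubstrings}, apply the conductor definition to the pair $sa$ and $sxs=s\cdot(xs)$, note $\langle a\cdot xs\rangle=\langle sax\rangle=w\in\chromosomes$, and obtain $\langle a\rangle\in\chromosomes$ from the ``only if'' direction. Your reading of $\langle\cdot\rangle$ as the periodic word is also what the paper implicitly uses in Lemma~\ref{lm:cyclicsubstrings}.

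There is one small slip in your preliminary aside. $saxs$ is $s\cdot(axs)$, so its circularization is $\langle axs\rangle=w$, not $s\cdot(ax)$ with circularization $\langle ax\rangle$. The final step never uses this, so the proof is unaffected.
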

\textbf{Proof}. Let $\langle w\rangle$ be a cyclic string from $\chromosomes$ that contains $sa$ as a substring. Then by lemma~\ref{lm:cyclicsubstrings} $\langle w\rangle=\langle sat\rangle=\langle ats\rangle$. Consider strings $sa$ and $sts$, that have $s$ as both prefix and suffix. Since $s$ is a conductor string and $\langle ats\rangle\in\chromosomes$, we conclude that $\langle a\rangle, \langle st\rangle\in\chromosomes$.\qed

\begin{theorem}
For any non-redundant assembly graph $\assemblygraph$, the following strings are conductors for $\chromosomes = \CPC(\assemblygraph)$:
(i) any superstring of a conductor;
(ii) any vertex, edge, or walk label in $\assemblygraph$;
(iii) any string that is not a substring of $\chromosomes$;
(iv) any string that is at least as long as the longest edge label in $\assemblygraph$.
\end{theorem}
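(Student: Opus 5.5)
I would prove the four items in the order (iii), (i), (ii), (iv), because (iv) follows from (i) and (ii), and (i) is used for (ii).

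\textbf{(iii).} Let $v$ not be a substring of $\chromosomes$ and take $va=a'v$, $vb=b'v$. Every string in the conductor equivalence contains $v$, since $\langle ab\rangle$, $\langle a\rangle$ and $\langle b\rangle$ all contain $v$ as a substring. The one exception is the degenerate case where $a$ is short, but there $\langle a\rangle$ still contains $v$ because $va=a'v$ forces $v$ to be periodic with period $|a|$. Hence both sides of the equivalence are false, and the equivalence holds vacuously.

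\textbf{(i).} Let $v$ be a conductor and $w=xvy$ a superstring. Take $wa=a'w$ and $wb=b'w$. Each of these also starts and ends with $v$: rewriting $wa=xvya$ and shifting by $x$, we get $v(yax)=(yax)'v$, up to cyclic shift. Here Lemma~\ref{lm:cyclicswitch} lets us replace $\langle a\rangle$ by the cyclic shift $\langle yax\rangle$, which is the same cyclic string. Likewise $\langle ab\rangle=\langle yaxyb x\rangle$-type shifts coincide. Applying the conductor property of $v$ to the shifted pair gives the equivalence for $w$. The bookkeeping of the shifts needs care when $a$ is shorter than $w$; I would handle it with Lemma~\ref{lm:cyclicswitch}.

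\textbf{(ii).} By (i), it suffices to treat vertex labels, since every edge or walk label contains a vertex label. Fix a vertex $v$ and strings $va=a'v$, $vb=b'v$.

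For the direction ``$\langle a\rangle,\langle b\rangle\in\chromosomes \Rightarrow \langle ab\rangle\in\chromosomes$'', take circuits $C_a,C_b$ with these labels. Since $\langle a\rangle$ contains $v$ by periodicity, Theorem~\ref{thm:nonredcount} shows that $C_a$ passes through $v$. Cutting $C_a$ at $v$ gives a closed walk from $v$ to $v$ whose label is $v\tilde a$ with $\langle\tilde a\rangle=\langle a\rangle$, and similarly for $b$. The difficulty is that the cut may not align with the chosen representative $a$, and that labels such as $va$ might correspond to several passes. So I would first argue that any string $va$ which starts and ends with $v$ and is a substring of $\chromosomes$ is the label of a walk from $v$ to $v$. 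This uses the preceding proposition (a walk label inside a circuit label is a subwalk) together with Proposition~\ref{prop:nrsp}. Concatenating the two walks then gives a closed walk with label $vab$, hence a circuit with label $\langle ab\rangle$. If the concatenation is non-primitive, $\langle ab\rangle$ is a power; I would check that the paper's convention of taking the primitive root resolves this.

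For the converse, take a circuit labeled $\langle ab\rangle$. It contains $v$ at the occurrences corresponding to the two junctions, and by Theorem~\ref{thm:nonredcount} these occurrences match occurrences of the vertex $v$ in the circuit. Splitting the circuit at these two occurrences gives two closed walks from $v$ to $v$ with labels $va$ and $vb$; each yields a circuit, with labels $\langle a\rangle$ and $\langle b\rangle$.

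\textbf{(iv).} Let $w$ be at least as long as the longest edge label. If $w$ is not a substring of $\chromosomes$, then (iii) applies. Otherwise $w$ occurs in some circuit label. Any substring of a circuit label of this length covers some full vertex label, because consecutive vertex occurrences along a circuit are at most one edge length apart. So $w$ contains a vertex label, and (i) with (ii) finishes.

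\textbf{Main obstacle.} I expect the hardest step to be (ii): making the correspondence between occurrences of $v$ in $va$ and walks precise, including the short overlapping case $|a|<|v|$.
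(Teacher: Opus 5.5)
Your plan follows the paper's proof item by item: (iii) because every string in the equivalence contains $v$; (i) by shifting to a string that starts and ends with the conductor and invoking Lemma~\ref{lm:cyclicswitch}; (ii) by cutting circuits at vertex occurrences, using the positional correspondence behind Theorem~\ref{thm:nonredcount}, and then concatenating or splitting walks from $v$ to $v$; and (iv) because a string at least as long as every edge label that occurs in a circuit label covers a whole vertex label. Your worry in (ii) about several passes is harmless, since any walk from $v$ to $v$ labeled $va$ suffices for the concatenation.

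The one step that is wrong as written is the formula in (i). The string $v(yax)$ need not end with $v$. Take $v=\mathtt{A}$, $x=\mathtt{B}$, $y=\mathtt{C}$ and $a=\mathtt{BAC}$. Then $wa=\mathtt{BACBAC}$ ends with $w$, but $v\,yax=\mathtt{ACBACB}$ does not end with $v$.

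The correct bookkeeping, which is the paper's, goes as follows.
\begin{itemize}
\item Since $wa=xvya$ ends with $w$, and hence with $y$, the string $ya$ ends with $y$. Write $ya=\tilde a y$, and likewise $yb=\tilde b y$.
\item Then $wa=xv\tilde a y$ ends with $xvy$, so $v\tilde a$ starts and ends with $v$; the same holds for $v\tilde b$.
\item Lemma~\ref{lm:cyclicswitch} gives $\langle\tilde a\rangle=\langle a\rangle$ and $\langle\tilde b\rangle=\langle b\rangle$. From $yab=\tilde a\tilde b y$ it also gives $\langle ab\rangle=\langle\tilde a\tilde b\rangle$.
\item Apply the conductor property of $v$ to the pair $v\tilde a$, $v\tilde b$.
\end{itemize}
No case split on $|a|$ versus $|w|$ is needed.
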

\textbf{Proof}.
\textbf{(i)} Assume that $s$ is a conductor string and consider its superstring $t=xsy$.
Consider an arbitrary pair of strings $ta=xsya$, $tb=xsyb$ that start and end with $t$. Since $t$ ends with $y$ we can also rewrite these as $ta=xsya=xsa'y$, $tb=xsyb=xsb'y$ for $a',b'$ such that $ya=a'y$ and $yb=b'y$. Since $yab=a'yb=a'b'y$, we have $\langle ab\rangle=\langle a'b'\rangle$ by lemma~\ref{lm:cyclicswitch}. Therefore $\langle ab\rangle\in\chromosomes$ iff $\langle a'b'\rangle\in\chromosomes$. Since both $ta=xsa'y$ and $tb=xsb'y$ finish with $t=xst$, we have $sa'$ and $sb'$ both start and finish with $s$. We apply the definition of conductor string to $s$ to obtain $\langle a'b'\rangle\in\chromosomes$ iff $\langle a'\rangle,\langle b'\rangle\in\chromosomes$. From $ya=a'y$ and $yb=b'y$, we have $\langle a\rangle=\langle a'\rangle$ and $\langle b\rangle=\langle b'\rangle$ by lemma~\ref{lm:cyclicswitch}. Therefore $\langle a'\rangle,\langle b'\rangle\in\chromosomes$ iff $\langle a\rangle,\langle b\rangle\in\chromosomes$. Combining all observations, we conclude that $\langle ab\rangle\in\chromosomes$ iff $\langle a\rangle,\langle b\rangle\in\chromosomes$. Thus $t$ (arbitrary superstring of a conductor string $s$) is a conductor string.

\textbf{(ii)} We will prove the statement only for vertex labels, since walk and edge labels are superstrings of vertex labels. Consider a vertex $v$ and strings $va$ and $vb$, that both start and end with $v$. Assume that $\langle a\rangle, \langle b\rangle\in\chromosomes$. Thus there exist cycles $C_1, C_2$, such that $Label(C_1)=\langle a\rangle$ and $Label(C_2)=\langle b\rangle$. Since all occurrences of a vertex $v$ in a cycle correspond to occurrences of $v$ in its label we can find occurrences of $v$ that correspond to the start and end of substring $va$ in the $Label(C_1)$ and $vb$ in $Label(C_2)$ in $Label(C)$. The walks $P_1$ and $P_2$, defined as subwalks of $C_1$ and $C_2$ between these pairs of occurrences of $v$ have labels $sa$ and $sb$ correspondingly. Since $P_1$ and $P_2$ both start and end in $v$ we can combine them into a single cycle $C$ that goes through $P_1$ first and through $P_2$ after that. $Label(C)=\langle ab\rangle$. Therefore $\langle ab\rangle\in\CPC(\assemblygraph)=\chromosomes$. The opposite implication from the conductor string definition is proven analogously.

\textbf{(iii)} If a string $v$ is not a substring of $\chromosomes$, then no superstring of $v$ can belong to $\chromosomes$. Therefore the bidirectional in the definition of conductor strings is always correct as both left side $\langle ab\rangle\in\chromosomes$ and the right side $\langle a\rangle,\langle b\rangle\in\chromosomes$ are always false.

Note that similarly one can show that cyclic strings are conductors since cyclic strings have no superstrings.

\textbf{(iv)} Consider a string $s$ that is at least as long as the longest edge label in $\assemblygraph$. If $s$ is not a substring of $\chromosomes$, it is a conductor based on item (iii). If $s$ is a substring of $\chromosomes$, it must be a substring of a cycle lablel, which is a combined sequence of a sequence of edges. It can not be strictly contained within a single edge label, thus it must contain at least one vertex label as a substring. Since every vertex label is a conductor and all superstrings of conductors are also conductors $s$ is also a conductor.\qed

\begin{theorem}%\label{thm:criterion}
If there exists $K$, such that any string of length $K$ is a conductor with respect to chromosome-set $\chromosomes$, then $\CPC(DBG(\chromosomes, K))=\chromosomes$.
\end{theorem}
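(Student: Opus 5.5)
We prove the two inclusions separately. Here $DBG(\chromosomes,K)$ is the de Bruijn graph whose vertices are the length-$K$ substrings of $\chromosomes$ and whose edges are the length-$(K+1)$ substrings of $\chromosomes$, each joining its $K$-mer prefix to its $K$-mer suffix. Before starting, we also assume every string in $\chromosomes$ has period large enough to contain a $K$-mer occurrence, i.e.\ is not itself a short cyclic string. Short cyclic strings would otherwise appear as isolated loops; they can be handled separately as in Appendix~\ref{app:cyclic-labels}.

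\textbf{Inclusion $\chromosomes\subseteq\CPC(DBG(\chromosomes,K))$.} This is routine. For $\langle w\rangle\in\chromosomes$, read off its consecutive $K$-mers and $(K+1)$-mers along one period. Every one of them is a substring of $\chromosomes$ and is therefore present in the graph. Together they form a circuit whose label is $\langle w\rangle$. This circuit is primitive, because $|w|$ is the period of $\langle w\rangle$.

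\textbf{Inclusion $\CPC(DBG(\chromosomes,K))\subseteq\chromosomes$.} This is the main part, and we argue by induction on the length of the circuit. Take a circuit $C$ with label $\langle a\rangle$ and fix a vertex $v$ on it, a $K$-mer. The label of $C$ read from $v$ back to $v$ has the form $va=a'v$.

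In the base case, all $(K+1)$-mers of $va$ lie in a single chromosome. We want a statement of the form ``$va$ is a substring of $\chromosomes$''. From that, Lemma~\ref{lm:circularization}, applied with $v$ as a conductor, yields $\langle a\rangle\in\chromosomes$.

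So the key intermediate claim is: \emph{every walk label in $DBG(\chromosomes,K)$ is a substring of $\chromosomes$.} We prove it by induction on walk length.
\begin{itemize}
\item A single edge is a $(K+1)$-mer substring, so the claim holds.
\item Suppose the label of walk $P$ ends with the $K$-mer $u$ and is a substring $xu$ of $\chromosomes$, and we extend $P$ by an edge $uc$, whose label $uc$ is a substring of $\chromosomes$. Since $u$ is a conductor, Lemma~\ref{lm:switch} applied to $x\,u\,\varepsilon$ and $\varepsilon\,u\,c$ shows that $xuc$ is a substring of $\chromosomes$.
\end{itemize}
Applying the claim to the walk obtained by cutting $C$ at $v$ gives that $va$ is a substring of $\chromosomes$. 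Lemma~\ref{lm:circularization} then gives $\langle a\rangle\in\chromosomes$. With this claim in hand the induction on circuit length is no longer needed.

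\textbf{Expected obstacle.} The difficulty is a mismatch between primitivity and the circularization. Lemma~\ref{lm:circularization} yields $\langle a\rangle$, but $a$ might be a power $b^m$. In that case $\langle a\rangle$ denotes $\langle b\rangle$ under the paper's convention, while $C$ is primitive as a circuit.

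We need to check that the label of a primitive circuit in a de Bruijn graph is genuinely $\langle a\rangle$ with $|a|$ equal to the circuit length. The reason is that $k$-mers biject with vertices: if $\langle a\rangle$ had a smaller period, the vertex sequence of $C$ would repeat with that period, so $C$ would not be primitive. This is the analogue of Proposition~\ref{prop:nrsp}(6), since de Bruijn graphs are non-redundant.

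A secondary care point is the edge cases of Lemma~\ref{lm:switch} with empty $a$ or $b$; these should follow directly from its proof.
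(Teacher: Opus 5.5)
Your proof is correct and follows the paper's argument essentially step for step. You get the easy inclusion directly where the paper cites Theorem~\ref{thm:conventional} with $\CC(\chromosomes)=\chromosomes$; you run the same induction showing every walk label in $DBG(\chromosomes,K)$ is a substring of $\chromosomes$ via Lemma~\ref{lm:switch} at the conductor $K$-mer; and you close with Lemma~\ref{lm:circularization} applied to the circuit cut at a vertex. Your primitivity check, based on $K$-mers being in bijection with vertices, makes explicit a detail the paper leaves implicit; your preliminary exclusion of short-period cyclic strings is unnecessary, since every cyclic string has $K$-mer occurrences and your argument never uses that assumption.
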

\textbf{Proof}. By theorem~\ref{thm:conventional} $\CC(\chromosomes) \subseteq\CPC(DBG(\chromosomes,K))$. Since cyclic strings can form only trivial chains consisting of a single cyclic sequence, $\CC(\chromosomes)=\chromosomes$ . Therefore $\chromosomes \subseteq \CPC(DBG(\chromosomes,K))$ and it is sufficient to show that $\CPC(DBG(\chromosomes,K)) \subseteq \chromosomes$. We will use induction to show that any walk label in $\CPC(DBG(\chromosomes,K))$ is a substring of $\chromosomes$. Any path of length $1$ is a substring of $\chromosomes$ since it corresponds to a $k+1$-mer in $\chromosomes$. Let $P$ and $Q$ be two walks that end and start in vertex $v$ respectively. By induction assumption $Label(P)$ and $Label(Q)$ are substrings of $\chromosomes$. Since $v$ is a conductor, by lemma~\ref{lm:switch},  $Label(PQ)$ is also a substring of $\chromosomes$. This shows that the label of any walk is a substring of $\chromosomes$. Consider a circuit $C$ and a vertex $v$ in it. If we cut circuit $C$ in vertex $v$, the resulting walk $P$ starts and ends in a vertex $v$ and its label is a substring of $\chromosomes$, that starts and ends with a conductor string $v$. By lemma~\ref{lm:circularization} we can conclude that $Label(C)$ belongs to $\chromosomes$ for any circuit $C$, which concludes the proof.\qed

The following lemma provide the necessary tools to apply the criterion to the set of chain labels.

\begin{lemma}\label{lm:chain_prefix}
Let $P$ be a read chain and $s$ a prefix of $Label(P)$. Let $r$ be a non-contained read, such that $r$ is a suffix of $s$. Then there exists a read chain $Q$, that ends with read $r$, coinsides with $P$ before the last read and $Label(Q)=s$.
\end{lemma}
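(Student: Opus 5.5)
Let $P = r_1,\ldots,r_n$ be the read chain, with overlaps $t_1,\ldots,t_{n-1}$ (each of length at least $1$ in the $k$-mer alphabet). I will locate the read whose span in $Label(P)$ covers the right end of $s$, truncate the chain there, and replace that read with $r$.

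\textbf{Step 1: choose the last read.} Write $Label(P)$ as the combined sequence, so each read $r_i$ occupies an interval $[b_i, e_i)$. Both endpoint sequences are strictly increasing; strictness of the right endpoints uses that consecutive reads in a chain have proper overlaps. Let $i$ be the smallest index with $e_i \ge |s|$. Such an $i$ exists because $s$ is a prefix of $Label(P)$, so $|s| \le e_n$.

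\textbf{Step 2: place $r$ over $r_{i-1}$.} The occurrence of $r$ as a suffix of $s$ occupies $[|s|-|r|, |s|)$. Take $Q = r_1,\ldots,r_{i-1}, r$, where $r$ overlaps $r_{i-1}$ by $e_{i-1} - (|s|-|r|)$ characters. Three things must be checked.
\begin{itemize}
\item The overlap length is positive, i.e.\ $|s|-|r| < e_{i-1}$.
\item The overlap length is less than $|r|$. This follows from $e_{i-1} < |s|$, which holds by minimality of $i$.
\item The overlapping characters agree. This is automatic, because both $r_{i-1}$ and $r$ are read off the same string $Label(P)$ in those positions.
\end{itemize}
Given these, $Label(Q)$ equals the prefix of $Label(P)$ of length $|s|$, which is $s$. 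The chain $Q$ ends with $r$ and coincides with $P$ before its last read. If $i=1$, then $s$ lies inside $r_1$ together with $r$. In that case $Q$ consists of the single read $r$, and we need $r = s$.

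\textbf{Step 3: where non-containment enters.} I expect this to be the main obstacle. Suppose the overlap were not positive, i.e.\ $|s|-|r| \ge e_{i-1}$. Then the occurrence of $r$ would lie entirely inside the span of $r_i$: it starts after $b_i$ (since $b_i < e_{i-1}$) and ends at $|s| \le e_i$. So $r$ would be a substring of $r_i$. Since $r$ is non-contained, this forces $r = r_i$ with identical spans, so $e_{i-1} > b_i = |s|-|r|$, contradicting the assumption.

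The case $i=1$ is handled the same way. There $r$ is a substring of $r_1$, so $r = r_1$ with the same span, which gives $s = r$ and $Q = r$.

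If the paper's notion of overlap also requires that a read is not placed strictly inside its predecessor, this holds as well. The right end of $r$ is $|s| > e_{i-1}$, and the left end of $r$ is at least $b_{i-1}$; otherwise $r_{i-1}$ would be a proper substring of $r$ inside $Label(P)$, which is allowed but harmless.

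The remaining bookkeeping is to phrase spans for chains whose reads may repeat, and to confirm that the strict monotonicity in Step 1 follows from the chain definition.
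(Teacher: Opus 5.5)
Your argument has a genuine gap in Step 2. For $Q=r_1,\ldots,r_{i-1},r$ to be a chain, the last $e_{i-1}-(|s|-|r|)$ characters of $r_{i-1}$ must equal the first ones of $r$. So besides $0<e_{i-1}-(|s|-|r|)<|r|$, this number must also be smaller than $|r_{i-1}|$, i.e.\ $b_{i-1}<|s|-|r|$: the predecessor must start before $r$ does.

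Minimality of $i$ controls only the right end $e_{i-1}$. Your closing remark that $r_{i-1}$ lying inside the occurrence of $r$ is ``allowed but harmless'' is false. In that case the required overlap is at least $|r_{i-1}|$, and your ``automatic agreement'' argument breaks because the overlap region is no longer inside the span of $r_{i-1}$. There is then no chain step from $r_{i-1}$ to $r$ at all.

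This situation really arises, because read chains may use contained reads. Take reads $ZAB$, $BC$, $CDEF$, $ABCDE$, the chain $P=ZAB,BC,CDEF$ (overlaps $1,1$, label $ZABCDEF$), $s=ZABCDE$, and the non-contained read $r=ABCDE$. Then $i=3$, and $r_{i-1}=BC$ occupies $[2,4)$ while $r$ occupies $[1,6)$. You would need $BC$ to overlap $ABCDE$ by $3$ characters, whereas the correct answer is $Q=ZAB,ABCDE$. The same failure occurs whenever $s=r$ and $i>1$, where the answer should simply be the one-read chain $Q=r$.

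The repair is to choose the predecessor relative to the left end of the occurrence of $r$, not the right end of $s$; this is what the paper does. Take the longest prefix chain $P'$ of $P$ with $|Label(P')|\le |s|-|r|$, and let $w$ be the next read of $P$. Then:
\begin{itemize}
\item $w$ starts strictly before position $|s|-|r|$, since it overlaps the last read of $P'$ by at least one character, or is $r_1$ when $|s|>|r|$;
\item $w$ ends strictly after $|s|-|r|$, by maximality of $P'$;
\item $w$ ends strictly before $|s|$, since otherwise $r$ would be a proper substring of $w$. This is where non-containment is used.
\end{itemize}
So $w$ and $r$ overlap properly, and $Q=P',w,r$ has label $s$.

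Equivalently, you can keep your $i$ and step back to the largest $j\le i-1$ with $b_j<|s|-|r|$, which exists when $|s|>|r|$. Then $e_j>|s|-|r|$ follows from $e_j>b_{j+1}\ge |s|-|r|$ if $j<i-1$, and from your Step 3 if $j=i-1$. Also $e_j\le e_{i-1}<|s|$, so $r_j$ and $r$ overlap properly.
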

\textbf{Proof}.
Without loss of generality we consider only the case Consider the longest prefix chain $P'$ of $P$, such that $|Label(P')|\le|s|-|r|$. Let the next read after prefix chain $P'$ in $P$ be $w$. Combined sequence of $P'$ and $w$ has length larger than $|s|-|r|$ and is a prefix of $s$. Thus $w$ has overlap with $r$, such the label of a chain $Q$, that consists of $P'$, $w$, and $r$ is $s$. Chain $Q$ thus satisfies the required conditions.    
\qed

Note that the same argument can be applied in a reverse situation where $s$ is a suffix of $Label(P)$, that starts with the sequence of read $r$.

\begin{theorem}
If $\reads$ is a read set with maximal read length $L$, then every string of length $2L$ and every non-contained read is a conductor with respect to $\chromosomes=\CC(\reads)$.
\end{theorem}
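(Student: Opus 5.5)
The plan is to verify the conductor definition directly by cutting and re-gluing read chains at a common point. Both parts of the statement use the same splicing principle. Suppose two read chains have labels containing a common string $s$ at marked occurrences. Suppose also that each chain contains a read lying entirely inside its occurrence of $s$, and that these two reads overlap consistently with their positions in $s$ by at least one character. Then the prefix of the first chain up to its read, followed by the suffix of the second chain from its read, is again a read chain. Its label is the corresponding prefix of the first label glued through $s$ to the corresponding suffix of the second. Both directions of the conductor condition then follow by applying this splice twice.

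\textbf{Strings of length $2L$.} Let $|s|=2L$. If $s$ is not a substring of $\CC(\reads)$, it is a conductor by the argument of item (iii) of Theorem~\ref{thm:conductors}, which uses nothing about graphs. Otherwise fix the middle position $m=L$ of $s$. Any read of a chain covering position $m$ of an occurrence of $s$ has length at most $L$, so it lies entirely inside that occurrence. For the direction "$\langle a\rangle,\langle b\rangle\in\CC(\reads)\Rightarrow\langle ab\rangle\in\CC(\reads)$", proceed as follows. Write $\langle a\rangle$ and $\langle b\rangle$ as labels of cyclic chains. Cut them at the occurrences of $s$ that bound $sa$ and $sb$; these occurrences exist by Lemma~\ref{lm:cyclicsubstrings}. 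In each chain choose a read covering position $m$ of the relevant occurrences. Splice the chain for $sa$ into the chain for $sb$ at the second copy of $s$, and close the cycle at the remaining copy. Any two reads covering position $m$ of $s$ overlap there by at least one character, so the result is a cyclic chain labelled $\langle ab\rangle$. For the converse, take a cyclic chain with label $\langle ab\rangle$. It contains two occurrences of $s$, bounding the $a$-part and the $b$-part. Close each part into a cycle by splicing the read covering position $m$ at one occurrence to the read covering position $m$ at the other. This yields chains with labels $\langle a\rangle$ and $\langle b\rangle$; Lemma~\ref{lm:cyclicswitch} handles the identification of the cyclic labels when $sa$ is written as $a's$.

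\textbf{Non-contained reads.} For a non-contained read $r$, the splice point is $r$ itself. The key step, and the one I expect to be the main obstacle, is normalization. Given a read chain whose label contains $r$ at a given occurrence, I need to modify the chain so that $r$ literally appears as a read at exactly that occurrence. This is what Lemma~\ref{lm:chain_prefix} and its mirror version provide. Apply the lemma to the prefix of the label ending at this occurrence of $r$; this produces a chain ending in read $r$. Apply the mirror version to the suffix starting at this occurrence. Glue the two along $r$. The gluing is where one must check carefully that overlaps stay at least $1$ and that the glued chain coincides with the original outside the affected reads. Once $r$ is a read at every relevant occurrence, the splicing argument above applies verbatim with $r$ as the shared read, and $r$ overlaps itself fully. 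This proves both directions of the conductor definition, again using Lemma~\ref{lm:cyclicswitch} to pass between the $sa=a's$ forms.
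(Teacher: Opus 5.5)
Your treatment of non-contained reads is the paper's argument. Lemma~\ref{lm:chain_prefix} and its mirror make $r$ an actual read of the chain at each relevant occurrence, and chains running from $r$ to $r$ are then concatenated or cut. This is safe because of non-containment: the read straddling the left end of the occurrence cannot run past its right end, so it and $r$ form a genuine suffix--prefix overlap.

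For strings of length $2L$ the paper does not splice at all. It applies item (i) of Theorem~\ref{thm:conductors}, asserting that every such string occurring in $\CC(\reads)$ contains a non-contained read. That assertion fails when chains pass through contained reads. Take reads \texttt{1ab2}, \texttt{3bc4}, \texttt{5ca6}, \texttt{ab}, \texttt{bc}, \texttt{ca}, so $L=4$. The chain \texttt{ab}, \texttt{bc}, \texttt{ca}, \texttt{ab} has label $\langle abc\rangle$, and its substring \texttt{abcabcab} contains no non-contained read. So a direct argument like yours is genuinely needed there, but as written it has a gap.

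The step ``any two reads covering position $m$ of $s$ overlap there by at least one character, so the result is a cyclic chain'' does not follow. Suppose a read occupying $[p_x,q_x]$ in one chain is to be followed by a read occupying $[p_y,q_y]$ in the other, inside a chain whose label is the spliced string. Then you need a suffix--prefix overlap at those positions, i.e.\ $p_x\le p_y\le q_x\le q_y$. Reads covering $m$ can be nested, say $[m-2,m+1]$ and $[m-1,m]$. In that case the juxtaposition is not a chain with the intended label.

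Closing the cycle adds a second requirement you do not address. Along each chain, the read where you enter at one copy of $s$ must precede the read where you leave at the next copy. This is automatic when the copies are disjoint, but not when $|a|<2L$ or $|b|<2L$ (e.g.\ $s=c^{2L}$, $a=c$). Your converse direction, which splices a chain with its own translate, has the same problem.

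One way to handle both problems at once:
\begin{itemize}
\item Collect all reads, with their positions, of the relevant translated unrolled chains that lie inside a stretch where that translate's label agrees with the target cyclic string.
\item Discard every such read that is properly contained in another, and sort the rest.
\item Because reads have length at most $L$, every pair of adjacent positions is covered by one of these reads. This forces consecutive survivors to overlap properly.
\item The family is invariant under the period shift, so it closes into a cyclic chain.
\end{itemize}
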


\textbf{Proof}. Any string $s$ that has length at least $2L$ must be either not present as substring in any chain (in which case it is a conductor by theorem~\ref{thm:conductors} item~iii) or contains a non-contained read as a substring. Since superstrings of conductors are conductors themselves, it is sufficient to prove the theorem for non-contained reads.

Let $r$ be a non-contained read. Consider strings $ra_1=b_1r, ra_2=b_2r$, that start and finish with $r$. Assume $\langle a_1\rangle, \langle b_1\rangle\in\chromosomes$. Then there exist read chains $C_1$ and $C_2$, such that $Label(C_1)=\langle a_1\rangle=\langle b_1\rangle$ and $Label(C_2)=\langle a_2\rangle=\langle b_2\rangle$. Consider read chains $P_1$ and $P_2$, that represent sufficient amount of repeated units of cyclic chains $C_1$, $C_2$ to contain $ra_1$ and $ra_2$ as label substrings respectively. Applying lemma~\ref{lm:chain_prefix} twice (once for prefix and once for suffix) we can conclude that there exist chains $Q_1, Q_2$ that start and end with $r$ and have labels $ra_1$ and $ra_2$ respectively. We can combine these two chains into a single larger chain with label $\langle a_1a_2\rangle$. Thus $\langle a_1a_2\rangle\in\chromosomes$. The reverse implication (if $\langle a_1a_2\rangle\in\chromosomes$, then $\langle a_1\rangle,\langle a_2\rangle\in\chromosomes$ can be proven analogously.\qed

\appsection{Universal assembly graph construction}\label{app:universal}

\begin{theorem}%\label{thm:ag}
\begin{enumerate}
 \setlength{\itemsep}{0pt}   % no extra space between items
  \setlength{\parskip}{0pt}   % no paragraph spacing in items
  \setlength{\topsep}{0pt}    % no space above/below list
  \setlength{\parsep}{0pt}    % no extra spacing between paragraphs in item
  \setlength{\partopsep}{0pt} % no extra spacing when starting a new paragraph
  \item For any non-redundant assembly graph $\assemblygraph$, we have
    $AG(\CPC(\assemblygraph), V(\assemblygraph)) = \assemblygraph$.
    \item For any infix-free string-set $v$ and chromosome-set $\chromosomes$ $AG(\chromosomes,V)$ is a non-redundant graph.
    \item Let $\chromosomes$ be a cyclic string-set, and let $V$ be a proper infix-free set of strings such that every string in $\chromosomes$ contains at least one element of $V$ as a substring. Then $\chromosomes \subseteq \CPC(AG(\chromosomes, V))$, and for any other non-redundant assembly graph $\assemblygraph$ satisfying $\chromosomes \subseteq \CPC(\assemblygraph)$ and $V(AG(\chromosomes, V)) \subseteq V(\assemblygraph)$, we have $\chromosomes \subseteq \CPC(AG(\chromosomes, V)) \subseteq \CPC(\assemblygraph)$.
\end{enumerate}
\end{theorem}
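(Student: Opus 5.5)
I would prove the three items in the order (2), (1), (3). The workhorse throughout is the correspondence between occurrences of vertex labels in a cyclic label and occurrences of vertices in a circuit, given by Theorem~\ref{thm:nonredcount} and Proposition~\ref{prop:nrsp}(5). The other basic tool is that the sorted list $Occ(V,C)$ of occurrences of a proper infix-free set in a cyclic string is well defined.

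\textbf{Item (2).} Take walks $P,Q$ in $AG(\chromosomes,V)$ with $Label(P)$ a proper infix of $Label(Q)$. Every edge of $AG(\chromosomes,V)$ comes from two consecutive occurrences of elements of $V$ in some $C\in\chromosomes$. Its label therefore contains exactly two occurrences of elements of $V$, namely its endpoints, and no element of $V$ occurs strictly inside it (otherwise the occurrences would not be consecutive). Concatenating edges, the occurrences of $V$ in $Label(Q)$, sorted as in the proposition on sorted occurrence lists, are exactly the vertices of $Q$ in order. Occurrences that touch the ends of the label need separate care because of prefix and suffix edges.

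Now locate $Label(P)$ inside $Label(Q)$. The occurrences of $V$ inside $Label(P)$ form a contiguous block of this sorted list. Since $Label(P)$ is a proper infix, the block is strictly inner on both sides. I then show that consecutive occurrences in the block determine the same edges in $P$ and in $Q$. The reason is that the edge label is the substring between the two occurrences extended by the second vertex, and the graph glues edges with identical labels. This makes $P$ an inner subwalk of $Q$.

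\textbf{Item (1).} For $\assemblygraph$ non-redundant, run $AG$ on each circuit label $Label(C)$ with $V=V(\assemblygraph)$. By Theorem~\ref{thm:nonredcount} and Proposition~\ref{prop:nrsp}(5), the sorted occurrences of vertex labels in $Label(C)$ correspond bijectively and in order to the vertex occurrences along $C$. By Proposition~\ref{prop:speprop}(3,4), vertex labels are distinct and proper infix-free, so this correspondence is well defined. Hence the edges produced are exactly the edges of $C$, with matching labels. Every edge and vertex of $\assemblygraph$ lies on some circuit, and edges are identified by label (Proposition~\ref{prop:nrsp}(4) on open walks). Together these give $AG(\CPC(\assemblygraph),V(\assemblygraph))=\assemblygraph$.

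\textbf{Item (3).} The inclusion $\chromosomes\subseteq\CPC(AG(\chromosomes,V))$ is immediate: each $C$ contains some element of $V$, so $Occ(V,C)$ is nonempty and $C$ becomes a circuit whose label is $C$. For minimality, take $C\in\CPC(AG(\chromosomes,V))$. We have $V'=V(AG(\chromosomes,V))\subseteq V(\assemblygraph)$, and every edge $e$ of $AG$ is a label of $\chromosomes\subseteq\CPC(\assemblygraph)$. So $e$ is a substring of a circuit label of $\assemblygraph$, and by the proposition on walks inside circuits it is the label of a walk $W_e$ in $\assemblygraph$. That walk starts and ends at the vertices labelled by the endpoints of $e$; here Proposition~\ref{prop:nrsp}(2) and Proposition~\ref{prop:speprop}(3) are used to fix the endpoints. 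Concatenating the walks $W_e$ along the circuit gives a circuit of $\assemblygraph$ with label $C$.

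\textbf{Main obstacle.} I expect the hardest part to be the boundary handling in (2) and (3) caused by prefix and suffix edges. An element of $V$ may be a prefix or suffix of another, so ``consecutive'' occurrences can share a start or an end, and walks with equal labels need not be equal. I would handle this by working with closed and open walks (Proposition~\ref{prop:speprop}(5)). Then I would check that $W_e$ can be chosen to begin and end exactly at the right vertices, by trimming or adding prefix and suffix edges.
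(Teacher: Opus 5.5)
Your outline follows the paper's proof item by item, but two steps do not go through as written.

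\textbf{Item (2).} The claim that an edge label contains exactly two $V$-occurrences is false. An edge $u\to w$ also contains, at its ends, every element of $V$ that is a proper prefix of $u$ or a proper suffix of $w$. For example, with $V=\{A,AB,B\}$ and $\chromosomes=\{\langle ABC\rangle\}$, the prefix edge $A\to AB$ has label $AB$, which has three $V$-occurrences. Only your second claim, that no element of $V$ is a proper infix of an edge label, is correct.

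These boundary occurrences are where the work lies. Your remedy via Proposition~\ref{prop:speprop}(5) is circular here, because that proposition is proved for non-redundant graphs, and non-redundancy of $AG(\chromosomes,V)$ is what (2) asserts. A label-preserving closed extension of a walk exists in any assembly graph. However, closedness only removes the stray occurrences once you have the fact that is actually missing:
\begin{quote}
if a vertex $w$ of $AG(\chromosomes,V)$ has a proper suffix in $V$, its \emph{only} outgoing edge is the suffix edge to its longest proper suffix in $V$, and symmetrically for prefixes and incoming edges.
\end{quote}
This follows from the sorted-occurrence proposition. Left and right ends are both non-decreasing along the list, and the proper suffix of $w$ sitting at the end of $w$ comes after $w$. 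So the occurrence immediately after $w$ must end where $w$ ends, i.e.\ it is the longest proper suffix of $w$ in $V$.

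Consequently, a closed walk in $AG(\chromosomes,V)$ starts at a vertex with no proper prefix in $V$ and ends at one with no proper suffix in $V$. Every proper prefix or suffix in $V$ of an interior vertex is itself a vertex of the walk. Hence the vertex sequence is exactly the sorted list of $V$-occurrences in the label. The paper supplies precisely this through the chain $w\to w_1\to\cdots\to v$ of suffix edges, each the unique outgoing edge of its start. After that, your contiguous-block argument finishes (2) as in the paper.

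\textbf{Item (3).} The proposition on walks inside circuits, like Proposition~\ref{prop:nrsp}(2), takes a walk as input and concludes that it is a subwalk. It does not turn an arbitrary substring of a circuit label into a walk label, so it cannot produce $W_e$. The paper instead uses the chromosome $X\in\chromosomes$ from which $e$ was extracted:
\begin{itemize}
\item $e$ is the stretch of $X$ spanned by consecutive occurrences of $u$ and $w$, and there is a circuit $C'$ of $\assemblygraph$ with $Label(C')=X$.
\item Since $u,w\in V(\assemblygraph)$ and $\assemblygraph$ is non-redundant, the occurrence correspondence behind Theorem~\ref{thm:nonredcount} says these two label occurrences come from occurrences of the vertices $u$ and $w$ in $C'$.
\item The segment of $C'$ between them is $W_e$, and its endpoints are fixed automatically.
\end{itemize}

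\textbf{Item (1), a smaller point.} Edges are not determined by their labels alone: an outer vertex $v$ of a condensed supregraph has an incoming prefix edge and an outgoing suffix edge, both labelled $v$. Also, a single prefix or suffix edge is not an open walk, so Proposition~\ref{prop:nrsp}(4) does not apply. What you need is that no two distinct edges share both endpoints and label; for prefix and suffix edges this follows from Proposition~\ref{prop:speprop}.
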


\textbf{Proof: (1)} In non-redundant assembly graphs the number of occurrences of vertex labels in a cycle label matches the number of occurrences of the vertex in the cycle. Thus a mapping that transforms vertices in the cycle into positions where they occur in the label is an order-preserving byjection. Therefore the cycle extracted from $Label(C)$ for a circuit $C$ using algorithm $AG$ matches $C$. Since every vertex and edge in assembly graph is covered by at least one cycle, gluing all cycles in the graph $\assemblygraph$ through matching edge and vertex sequences results in complete graph $\assemblygraph$.

\textbf{(2)} Consider a closed walk $P$ in graph $AG(\chromosomes,V)$. We will show that the sequence of vertices in $P$ matches the sequence of occurrences of strings from $V$ in $Label(P)$. More precisely, we will prove that for any prefix $s$ of string $Label(P)$ that ends with vertex $v\in V$, there exists a prefix walk $Q$ of $P$, that ends with $v$ and $Label(Q)=s$. Consider the shortest prefix walk $Q_1$ of $P$, such that $|Label(Q_1)|\ge|s|$ (it exists because $Label(P)>|s|$). Consider the last edge $e$ from $u$ to $w$ in the walk $Q_1$ and walk $Q_0$ that is the result of removing $e$ from $Q_1$. Assume that $v$ is not a substring of $e$. Then $|Label(Q_1)|-|e|=|Label(Q_0)|-|u|> |s|-|v|$. On the other hand $|Label(Q_0)|<|s|$, which means that vertex $v$ contains vertex $u$ (last vertex in $Q_0$) as a proper infix, which is impossible since they are both from proper infix free set $V$. Therefore $v$ is a substring of $e$.

$v$ can not be a proper infix of $e$, since otherwise the sequence of $v$ will occur between $u$ and $w$ in the chromosome $e$ was extracted from. For the same reason $v$ can be neither a prefix of $e$ larger than $u$ nor a suffix of $e$ that is larger than $w$. Thus $v$ is either a prefix of $u$ or a suffix of $w$. The first case is impossible since it would mean that $|Label(Q_0)|\ge|s|$. Therefore the only remaining case to consider is that $v$ is a suffix of $w$.

Consider all proper suffixes $w_1,w_2,\ldots,w_n$ of $w$ that are present in $V$ and are longer than $v$. Any chromosome sequence $c$ that contains $w$ also contains $v$ as well as $w_1,w_2,\ldots,w_n$. When vertex occurrences in $c$ are transformed into a sequence, $w$ is always followed by $n+1$ vertices, connected to $w$ by suffix edges from $w$ to $w_1$, from $w_1$ to $w_2$, etc., until it reaches $v$ with edge from $w_n$ to $v$. Each of these edges is the only outgoing edge for its start, thus walk $P$ has no choice but to follow this path after prefix walk $Q$. Also none of these edges can be the end of walk $P$, since $P$ is a closed walk. Therefore walk $P$ contains all of these edges. Adding these $n+1$ edges to prefix walk $Q$ of $P$ results in a prefix walk of $P$ that has label $s$ (adding suffix edges to the end of the walk does not change its label) and ends with vertex $v$.

From this we infer that every occurrence of a vertex label in a closed walk $P$ matches an occurrence of a vertex at corresponding position. To complete the proof of non-redundancy of graph $AG(\chromosomes,V)$ we consider a walk walk $Q$ such that $Label(Q)$ a proper infix of $Label(P)$ for a walk $P$. Let $P_c$ and $Q_c$ be closed walks that have the same labels as $P$ and $Q$ correspondingly. The seqeuence of occurrences of vertex labels in $Label(Q)$, that corresponds to actual vertices of $Q_c$ is also present in $P_c$ since $Label(Q_c)$ is a substring of $Label(P_c)$. Therefore (from the property proven above), $Q_c$ is a subwalk of $P_c$, whose label is a proper infix of $Label(P_c)$. $P$ differs from $P_c$ by adding a few prefix edges to the start and suffix edges to the back. Removing these edges would not change that $Q_c$ and thus $Q$ is a inner substring of $P$. Thus for any pair of walks $P, Q$, such that $Label(Q)$ is a proper infix of $Label(P)$, we conclude that $Q$ is inner subwalk of $P$, showing that $AG(\chromosomes,V)$ is non-redundant.
% Rewrite this proof. Do it in terms of positions an segments instead of substrings. Fix the problem that closed walks were discussed only in context of non-redundant graphs. Create a separate statement that proves equivalency of default definition of redundant graph and one with closed walks.

\textbf{(3)} Construction of $AG(\chromosomes, V)$ involves transformation of each chromosome candidate $\langle s\rangle$ from $\chromosomes$ into a circuit $C$, where vertices correspond to occurrences of strings from $V$ in $\langle s\rangle$ and edges are extracted to connect consecutive vertices. The label of $C$ is $\langle s\rangle$. After all vertices and edges with matching sequences are glued together, circuit $C$ maps to a circuit $C'$in $AG(\chromosomes, V)$, with a mapping that preserves edge and vertex sequences. Thus it also preserves the label of $C$ and $Label(C')\in\CPC(AG(\chromosomes, V))$ for any $C$ from $\chromosomes$, and therefore $\chromosomes \subseteq \CPC(AG(\chromosomes, V))$.

Next we consider a non-redundant assembly graph $\assemblygraph$ such that $\chromosomes \subseteq \CPC(\assemblygraph)$ and $V(AG(\chromosomes, V)) \subseteq V(\assemblygraph)$.
We will prove that for any edge $e$ from $u$ to $v$ in $AG(\chromosomes, V)$, graph $\assemblygraph$ has a walk $P$ from $u$ to $v$, such that $Label(e) = Label(P)$. By construction of $AG(\chromosomes, V)$, $e$ is present in one of the cycles derived from chromosomes. We denote the chromosome, from which the cycle containing $e$ is derived as $C$. Since $\chromosomes \subseteq \CPC(\assemblygraph)$, there exists a circuit $C'$ with the same label as $C$. Since $V(AG(\chromosomes, V)) \subseteq V(\assemblygraph)$, each vertex in $C$ matches an occurrence of a vertex with the same label in $C'$. In particular, we consider occurrences of vertices $u$ and $v$, that flank the occurrence of $e$ in $C$. The label of the walk segment (in graph $\assemblygraph$) between these occurrences matches the label of $e$ (in graph $AG(\chromosomes, V)$). Therefore every edge $e$ in $AG(\chromosomes, Vertices)$ corresponds to a walk in graph $\assemblygraph$, connecting vertices with the same labels and itself having the same label as $e$. Thus every cycle in $AG(\chromosomes, V)$ can be translated into a cycle in graph $\assemblygraph$ by replacing each edge with a walk having the same label. This transformation preserves the label of the cycle. Therefore $\CPC(AG(\chromosomes, V))\subseteq\CPC(\assemblygraph)$.

\appsection{Multiplicity-extremal substrings}\label{app:mes}
\begin{theorem}
A finite string $s$ is MES for a chromosome-set $\chromosomes$ if and only if there exist characters $c_1\neq c_2$ and $c_3\neq c_4$, such that $c_1s$, $c_2s$, $sc_3$, and $sc_4$ are substrings of $\chromosomes$.
\end{theorem}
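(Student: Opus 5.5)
The plan is to prove both directions straight from the definition of MES. Throughout, cyclic strings are treated as periodic bi-infinite strings, as in Appendix~\ref{app:cyclic-strings}, so every occurrence of a finite string in a chromosome candidate has a well-defined preceding and following character. The basic tool is a shift map on occurrences. If $t=xsy$ and we fix this decomposition, then every occurrence of $t$ at position $p$ in a cyclic string $\langle w\rangle$ gives an occurrence of $s$ at position $p+|x|$. This map is injective, since distinct positions stay distinct after adding $|x|$, and its image consists of occurrences of $s$ that are preceded by $x$ and followed by $y$.

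\emph{Only if.} Assume $s$ is a MES, so in particular $s$ is a substring of $\chromosomes$. We suppose for contradiction that no two distinct characters $c_1\neq c_2$ with $c_1s$ and $c_2s$ both substrings of $\chromosomes$ exist. Since $s$ occurs somewhere and every occurrence in a cyclic string is preceded by some character, there is then exactly one character $c$ such that $cs$ is a substring of $\chromosomes$. Fix any chromosome candidate $\langle w\rangle$. Every occurrence of $s$ in $\langle w\rangle$ is preceded by $c$, so the shift map from occurrences of $cs$ to occurrences of $s$ is a bijection. Hence $cs$ has the same number of occurrences as $s$ in every chromosome candidate. But $cs$ is a proper superstring of $s$, which contradicts the MES property. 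The symmetric argument with right extensions produces $c_3\neq c_4$.

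\emph{If.} Assume $c_1s,c_2s,sc_3,sc_4$ are substrings of $\chromosomes$ with $c_1\neq c_2$ and $c_3\neq c_4$. Then $s$ is a substring of $\chromosomes$. Let $t$ be any proper superstring of $s$ and fix a decomposition $t=xsy$ with $x$ or $y$ nonempty; by symmetry we may take $x$ nonempty, with last character $c$. Choose $i\in\{1,2\}$ with $c_i\neq c$, and a chromosome candidate $\langle w\rangle$ containing $c_is$. By the shift map, occurrences of $t$ in $\langle w\rangle$ inject into occurrences of $s$ in $\langle w\rangle$ that are preceded by $c$. The occurrence of $s$ inside the occurrence of $c_is$ is preceded by $c_i\neq c$, so it lies outside the image. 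Therefore $t$ has strictly fewer occurrences than $s$ in $\langle w\rangle$, and $s$ is a MES.

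The main obstacle is bookkeeping rather than substance: the occurrence counts must be made rigorous for cyclic strings. This includes the case where $s$ is longer than the period of $\langle w\rangle$, and the case where $t$ contains several copies of $s$. Both are handled by working with positions as elements of $\mathbb{Z}/|w|\mathbb{Z}$ and by fixing one decomposition $t=xsy$ before defining the shift map, so that injectivity and the preceding-character argument are immediate.
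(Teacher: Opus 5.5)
Your proof is correct and follows the same route as the paper. In the "only if" direction, a unique preceding (resp.\ following) character would give a one-character extension with the same occurrence count in every candidate. In the "if" direction, the candidate containing $c_i s$ (resp.\ $s c_i$), with $c_i$ differing from the adjacent character of the extension, shows a strict drop. The only difference is that your shift-map injection for an arbitrary proper superstring $t = xsy$ makes explicit the monotonicity step the paper leaves implicit when it treats only one-character extensions $sc$.
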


\textbf{Proof}. Let $s$ be a MES for a chromosome-set $\chromosomes$ and $sc_3$ be one of its occurrences in $\chromosomes$. Since $s$ is MES, not every occurrence of $s$ in $\chromosomes$ is followed by $c_3$. Therefore there exists an occurrence of $s$, followed by a different character ($c_4$). Thus there exist $c_3$ and $c_4$, such that $sc_3$ and $sc_4$ are both substrings of $\chromosomes$. Proof for existence of $c_1$ and $c_2$ is analogous.

Let now $s$ be a string such that $c_1s$, $c_2s$, $sc_3$, and $sc_4$ are substrings of $\chromosomes$. Let's consider an extension $sc$ of $s$ by one nucleotide. Without loss of generality we can assume that $c\neq c_3$. Since $sc_3$ is a substring of $\chromosomes$, it is impossible that every occurrence of $s$ extends to an occurrence of $sc$. Similarly we can prove that left extensions have less occurrences than $s$ in $\chromosomes$ as well, which means that $s$ is MES.

\begin{lemma}\label{lm:mem_mes}
If a string $s$ has MEM between two strings $r_1, r_2$ (common substring, that can not be extended to a larger match), then $s$ is also MES with respect to the string-set $\{r_1,r_2\}$.
\end{lemma}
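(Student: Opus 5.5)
The plan is to reduce this to the criterion of Theorem~\ref{thm:mesalt}, applied to the string-set $\{r_1,r_2\}$ viewed as a (finite, non-cyclic) collection. Here ``substring of $\chromosomes$'' means substring of $r_1$ or $r_2$, and ``occurrences'' are counted in each of $r_1$ and $r_2$ separately. Since the MES definition only asks that every proper superstring lose at least one occurrence in at least one member of the set, I will argue directly from that definition, using the characterization as a guide.

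Let $s$ be a MEM between $r_1$ and $r_2$: $s$ occurs in $r_1$ at some position and in $r_2$ at some position, and this matched pair cannot be extended. So either the characters immediately to the left of the two occurrences differ, or at least one occurrence sits at the left end of its string. The analogous statement holds on the right.

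It suffices to consider one-character extensions $cs$ and $sc$, since any proper superstring of $s$ contains one of these, and occurrences of a superstring induce distinct occurrences of that one-character extension. Take a left extension $cs$. If every occurrence of $s$ in both $r_1$ and $r_2$ extended to an occurrence of $cs$, then in particular the two matched occurrences would both be preceded by $c$. That would extend the match, contradicting maximality. Hence at least one of these two occurrences does not extend to $cs$, so $cs$ has strictly fewer occurrences than $s$ in $r_1$ or in $r_2$. The same argument applies to $sc$ on the right. Thus every proper superstring of $s$ has fewer occurrences in at least one member, and $s$ is a substring of the set, so $s$ is a MES.

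The main obstacle is the boundary case, where the matched occurrence is at the end of $r_i$. There the four-character form of Theorem~\ref{thm:mesalt} may fail, since there is no character $c_1 \neq c_2$ to exhibit. That is why I would use the defining property, fewer occurrences in some member, rather than the characterization. I will note explicitly that an occurrence at the string boundary cannot extend at all, which only strengthens the count inequality.
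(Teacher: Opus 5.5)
Your argument is correct and is the paper's: maximality of the match means every one-character extension $cs$ or $sc$ misses at least one of the two matched occurrences, so every proper superstring of $s$ has fewer occurrences in $r_1$ or in $r_2$. You spell out steps the paper leaves implicit, namely reducing arbitrary superstrings to one-character extensions via the injection of occurrences, and arguing from the MES definition rather than from Theorem~\ref{thm:mesalt}; the latter choice is appropriate, since for linear strings the four-character criterion can fail at the string ends.
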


\textbf{Proof}. Consider MEM between strings $r_1$ and $r_2$ that has sequence $s$. Since it is MEM, it can not be extended to a longer match. Therefore, any extention of $s$ will lose at least one of the two occurrences, constituting MEM. Therefore $s$ is a MES with respect to the string-set $\{r_1,r_2\}$.\qed

\begin{lemma}\label{lm:cyclic_mes}
Any cyclic string from $\chromosomes$ is MES with respect to $\chromosomes$.
\end{lemma}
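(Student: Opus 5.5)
The plan is to check the definition of MES directly: a cyclic string $\langle t\rangle\in\chromosomes$ has to be a substring of $\chromosomes$, and every proper superstring of it has to occur fewer times than $\langle t\rangle$ does in at least one chromosome candidate. The first condition is immediate, since $\langle t\rangle$ is a substring of itself. By the paper's convention that $\langle t\rangle$ is written only when $|t|$ is the period, $\langle t\rangle$ occurs in the candidate $\langle t\rangle$ at least once.

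Next, I will argue that a cyclic string has no proper superstrings at all, following the remark in Appendix~\ref{app:cyclic-labels} and in the proof of Theorem~\ref{thm:conductors}(iii). A proper superstring would have the form $A\langle t\rangle B$ with $A$ or $B$ non-empty. In the formalism of Appendix~\ref{app:cyclic-strings}, concatenation is defined only when the left factor is right-finite and the right factor is left-finite. Since $\langle t\rangle$ is infinite in both directions, nothing can be placed before or after it. Hence $A$ and $B$ must both be empty, and the only superstring of $\langle t\rangle$ is $\langle t\rangle$ itself.

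The MES condition then holds vacuously, because there is no proper superstring whose occurrence count needs to drop. I will also note the degenerate-occurrence subtlety: occurrences are counted over positions, so $\langle t\rangle$ occurs $|t|$ times in itself, once at each position. This does not change the argument, since only the existence of an occurrence is needed.

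The main obstacle is reconciling the lemma with Theorem~\ref{thm:mesalt}, which is stated for finite strings; a cyclic string has no one-character extensions $c\,\langle t\rangle$. So I will deliberately avoid Theorem~\ref{thm:mesalt} and argue only from the definition. The one point that needs care is justifying rigorously that concatenation with an infinite string is impossible, and for that I will cite the index-set definition of strings in Appendix~\ref{app:cyclic-strings}.
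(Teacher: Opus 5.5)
Your proposal is correct and follows the paper's argument. The paper's proof is the single observation that a cyclic string has no proper extensions, so the MES condition holds vacuously; you have just written out the substring condition and the concatenation formalism behind it. Drop the aside that $\langle t\rangle$ occurs $|t|$ times in itself, since it does not follow from the paper's position-based definition of occurrence; it is harmless, because with no proper superstrings no occurrence counts are ever compared.
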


\textbf{Proof}. Since cyclic strings have no extensions, any extension of a cyclic string has less occurrences in $\chromosomes$.\qed.

\begin{lemma}\label{lm:overlap}
    MEM between two MESs with respect to a string-set $\chromosomes$ is also MES.
\end{lemma}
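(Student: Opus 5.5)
The plan is to verify the characterization of Theorem~\ref{thm:mesalt} directly for the matched string. Let $s_1,s_2$ be MESs with respect to $\chromosomes$, and let $m$ be the sequence of a MEM between them. This means $m$ occurs in $s_1$ at some position and in $s_2$ at some position, and the pair of occurrences cannot be extended by one character to the left or to the right. First, $m$ is a substring of $\chromosomes$, because it is a substring of $s_1$ and $s_1$ is a substring of $\chromosomes$. By Theorem~\ref{thm:mesalt} it then suffices to exhibit two distinct characters $c_3\neq c_4$ with $mc_3$ and $mc_4$ substrings of $\chromosomes$, and symmetrically two distinct characters $c_1\neq c_2$ with $c_1m$ and $c_2m$ substrings of $\chromosomes$. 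I will treat only the right extension; the left one is the mirror image.

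Right-maximality of the MEM leaves two cases.
\begin{enumerate}
\item \emph{Both occurrences are followed by a character inside their strings.} Then these characters must differ, since otherwise the match would extend. Call them $c_3$ in $s_1$ and $c_4$ in $s_2$. Then $mc_3$ is a substring of $s_1$ and $mc_4$ is a substring of $s_2$, so both are substrings of $\chromosomes$, as required.
\item \emph{At least one occurrence ends at the end of its string.} Say $m$ is a suffix of $s_1$, so $s_1=xm$. Since $s_1$ is a MES, Theorem~\ref{thm:mesalt} gives characters $d_3\neq d_4$ with $s_1d_3$ and $s_1d_4$ substrings of $\chromosomes$. Hence $md_3$ and $md_4$ are suffixes of substrings of $\chromosomes$, and so are themselves substrings of $\chromosomes$. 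Take $c_3=d_3$ and $c_4=d_4$. This case covers the situation where both occurrences end at string ends, since we only use $s_1$.
\end{enumerate}
The same argument on the left, using prefixes and the characters $c_1\neq c_2$ guaranteed for $s_1$ (or $s_2$) by Theorem~\ref{thm:mesalt}, produces the left witnesses. Applying Theorem~\ref{thm:mesalt} then shows that $m$ is a MES.

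The main point needing care is degenerate situations. Theorem~\ref{thm:mesalt} is stated for finite strings, so I will check the following.
\begin{itemize}
\item If $s_1$ is cyclic (Lemma~\ref{lm:cyclic_mes}), then $m$ is a finite substring of it. Every finite substring of a cyclic string has a next and a previous character, so case 2 never arises for that string and case 1 applies.
\item If the MEM is $m=s_1=s_2$, both cases in each direction reduce to the MES property of $s_1$ itself.
\item I will assume a MEM is non-empty. If the empty string is allowed, it is handled by the same argument, since distinct characters following it come from $s_1$.
\end{itemize}
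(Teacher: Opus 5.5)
Your argument is correct and is the one the paper sets up: it writes the two MESs as $a_1sb_1$ and $a_2sb_2$ and appeals to non-extendability of the match, with Theorem~\ref{thm:mesalt} supplying the two distinct flanking characters on each side, taken from the differing neighbours when $b_1,b_2$ (resp.\ $a_1,a_2$) are both non-empty, and from the MES property of the string whose end the match reaches otherwise. The paper's written proof of this lemma breaks off mid-sentence, but the same case split appears explicitly in its proof of Theorem~\ref{thm:mesext}, so your proposal is a complete version of the intended proof.
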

\textbf{Proof}.
Let $a_1sb_1, a_2sb_2$ the two MESs in question and $s$ represent the sequence of the MEM. Since $s$ is a match between these two strings that can not be extended further, then 
\qed.

\begin{theorem}%\label{thm:mesext}
For any substring $s$ of a chromosome-set $\chromosomes$, there exists a unique MES, denoted $MES(s)$ (\emph{MES extension} of $s$), such that $s$ occurs exactly once in $MES(s)$ as a substring and every occurrence of $s$ in $\chromosomes$ extends to an occurrence of $MES(s)$.
\end{theorem}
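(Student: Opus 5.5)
The plan is to build $MES(s)$ by greedy one-character extension and then prove uniqueness from the extension property alone. We work with occurrences of $s$ in the cyclic strings of $\chromosomes$, using the position formalism of Appendix~\ref{app:cyclic-strings}.

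\textbf{Construction.} Call $s$ \emph{right-forced} if $s$ is not cyclic and there is a single character $c$ such that $sc$ is a substring of $\chromosomes$. Every occurrence of $s$ in a chromosome candidate is then followed by $c$, so $s\mapsto sc$ maps occurrences bijectively onto occurrences of $sc$, in every chromosome candidate. Left-forcing is defined symmetrically. Starting from $s_0=s$, we repeatedly extend on a forced side, producing a chain $s=s_0\subset s_1\subset\cdots$. Each $s_i$ contains $s$ at a distinguished offset, and every occurrence of $s$ extends uniquely to an occurrence of $s_i$.

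\textbf{Termination.} We must show the process stops. Fix one chromosome candidate $\langle t\rangle$ containing $s$, with period $d=|t|$. Whenever an extension is forced, it is forced in every candidate, in particular in $\langle t\rangle$. Once the extension reaches length $\ge d$ it has become the whole of $\langle t\rangle$: we are reading off $\langle t\rangle$ from a single occurrence, and forced steps never branch. At that point the result is a cyclic string, which is MES by Lemma~\ref{lm:cyclic_mes}. We then set $MES(s)=\langle t\rangle$ and interpret "occurs exactly once" via occurrences in a cyclic string. In the other case the process halts at a finite $m$ with both sides unforced. Then there exist $c_1\neq c_2$ with $c_1m, c_2m$ substrings of $\chromosomes$, and $c_3\neq c_4$ with $mc_3, mc_4$ substrings, so $m$ is MES by Theorem~\ref{thm:mesalt}. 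Every occurrence of $s$ extends to one of $m$ by the bijection above.

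\textbf{Single occurrence of $s$ in $m$.} Suppose $s$ occurred in $m$ at a second offset. The extension map sends occurrences of $s$ injectively to occurrences of $m$ at the distinguished offset. The occurrences of $s$ at the second offset, inside occurrences of $m$, are also occurrences of $s$, so they too extend to occurrences of $m$. Counting in one fixed chromosome candidate, where occurrences are finite, the number of occurrences of $s$ would then be at least twice the number of occurrences of $m$, contradicting the bijection. Some care is needed with coinciding positions in short cycles, but this is handled by working with positions modulo the period.

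\textbf{Uniqueness.} Let $m'$ be another MES in which $s$ occurs exactly once, with every occurrence of $s$ extending to $m'$. Write $m'=xsy$ and $m=x_0sy_0$. Because occurrences of $s$ extend to both strings, the contexts of $s$ in $\chromosomes$ are the same to the left up to $\max(|x|,|x_0|)$ and to the right up to $\max(|y|,|y_0|)$. Hence $x'sy'$, with $x'$ the longer of $x,x_0$ and $y'$ the longer of $y,y_0$, has the same occurrence count as $s$ in every candidate. If, say, $|y'|>|y|$, then $m'$ has a proper superstring with no loss of occurrences in any candidate, contradicting that $m'$ is MES. So $|y|=|y_0|$, likewise $|x|=|x_0|$, and therefore $m=m'$.

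\textbf{Main obstacle.} The hard step is termination together with the cyclic edge case. I would handle it by bounding the length of forced extensions by the period of a fixed candidate, and by arguing carefully, using Appendix~\ref{app:cyclic-strings}, that the "exactly once" and extension statements remain meaningful when $MES(s)$ is cyclic.
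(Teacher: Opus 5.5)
There are two genuine gaps: the termination step uses a wrong stopping criterion, and the counting argument for single occurrence does not work.

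\textbf{Termination.} No finite forced extension ever \emph{becomes} a cyclic string. Reaching length $\ge d$ inside one fixed candidate $\langle t\rangle$ also says nothing about whether forcing continues, because other candidates containing $s$ may separate the contexts later. Take $\chromosomes=\{\langle abc\rangle,\langle abcabd\rangle\}$, $s=c$ and $\langle t\rangle=\langle abc\rangle$, so $d=3$. The forced steps reach length $3=d$ (e.g.\ $c\to ca\to cab$), so your rule sets $MES(c)=\langle abc\rangle$. But the occurrence of $c$ in $\langle abcabd\rangle$ cannot extend to an occurrence of $\langle abc\rangle$, since a cyclic string is a substring only of itself. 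In fact $cab$ is followed by $c$ in one candidate and by $d$ in the other, while the left side is forced twice more. The process halts at the finite MES $abcab$, which is longer than $d$.

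The correct dichotomy is: either the process halts after finitely many steps, or it performs infinitely many extensions on some side. The second case needs an argument you do not give. All occurrences of $s$ then share one one-sided infinite context, which is purely periodic with the minimal period of every candidate containing $s$. Hence exactly one candidate contains $s$, $s$ occurs at a single cyclic position in it, and that candidate is $MES(s)$.

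\textbf{Single occurrence.} Suppose $s$ sits in $m$ at the distinguished offset $i$ and at a second offset $j$. By your own bijection, each occurrence of $s$ at offset $j$ inside some occurrence of $m$ is already the distinguished occurrence of another occurrence of $m$. The two families of occurrences of $s$ coincide as sets, so no factor of two appears. What you actually get is that, in a fixed candidate, the set of occurrences of $s$ is invariant under translation by $g=|j-i|$. Because $g<|m|$, the translated copies of $m$ overlap and cover the candidate. So the candidate has period $g$, its minimal period $d$ divides $g$, and the two occurrences of $s$ \emph{always} coincide modulo the period. This is not a corner case that reduction modulo $d$ handles; it is the only case, and it is exactly what kills the count.

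The contradiction must come from the halting condition instead. We have $|m|\ge g+|s|>d$, so every candidate containing $s$ is the $g$-periodic extension of $m$, and $m$ occurs at one cyclic position of one candidate. Hence $m$ occurs once in $\chromosomes$ and is forced on both sides, contradicting that the process halted at $m$.

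With these two repairs your route works. The halting case via Theorem~\ref{thm:mesalt} and your uniqueness argument by comparing contexts are sound. The result is a genuinely different, more constructive proof than the paper's. The paper instead takes a minimal MES covering each occurrence of $s$ and identifies any two of them through the maximal exact match between them.
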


\textbf{Proof}.
Consider an occurrence $(p_1,s)$ of a string $s$ in a chromosome $c_1$. An occurrence $(q_1, m_1)$ of a minimal MES $m_1$ in the same chromosome $c_1$, that covers occurrence $(p,s)$. In other words there is a position $r_1$ in $m_1$, such that $(q_1,m_1)\circ(r_1,s)=(p_1,s)$. Similarly we consider an occurrence $(p_2,s)$ of $s$ in a chromosome $c_2$ and minimal MES $m_2$ that covers this occurrence of $s$ in $c_2$: $(q_2,m_2)\circ(r_2,s)=(p_2,s)$. $s$ is a match between $m_1$ and $m_2$. It can be extended to a MEM with string $s'$: for position $t$ of $s$ in $s'$ and positions $r_1', r_2'$ of $s'$ in $m_1$ and $m_2$, we have $(t,s)\circ (r_1',s')=(r_1,s)$ and $(t,s)\circ (r_2',s')=(r_2,s)$.

If either $r_1'$ or $r_2'$ is a start position in $m_1$ or $m_2$, then there is a MES that starts with $s'$, which means (by theorem~\ref{thm:mesalt}) that there exist $c_1,c_2$, such that $c_1s$ and $c_2s$ are substrings of $\chromosomes$. Similarly if both $r_1'$ are $r_2'$ start positions in $m_1$ or $m_2$, the letters preceding positions $r_1'$ and $r_2'$ in $m_1$ and $m_2$ must be different since since $s'$ is constructed as the longest extension of a match $s$ between $m_1$ and $m_2$. Therefore if we set $c_1$ and $c_2$ to be the letters in positions $r_1-1$ and $r_2-1$ of $m_1$ and $m_2$ correspondingly, we again have $c_1s$ and $c_2s$ are substrings of $\chromosomes$. Similarly, by considering the end of match $s'$ between $m_1$ and $m_2$, we can find $c_3$ and $c_4$, such that $sc_3$ and $sc_4$ are substrings of $\chromosomes$. By theorem~\ref{thm:mesalt} we conclude that $s'$ is a MES, that covers the considered occurrences of $s$ in $c_1$ and $c_2$. Since $m_1$ and $m_2$ are minimal such MES, $m_1=s'=m_2$ and $r_1=r_2=t$. Thus any occurrence of $s$ in chromosomes can be extended to an occurrence of a MES and the same chromosome. 

If we assume that $s$ occurs at least twice in $MES(s)$ at positions $r_1$ and $r_2$ we can use the same argument to conclude that $r_1=r_2$ and reach a contradiction.\qed

\begin{lemma}\label{lm:mes_substring}
Whenever $s$ is a substring of $t$, the string $MES(s)$ is a substring of $MES(t)$. Furthermore, if $s$ is a conductor as well as MES, Whenever $s$ is a prefix/suffix of $t$, the string $MES(s)$ is a prefix/suffix of $MES(t)$.
\end{lemma}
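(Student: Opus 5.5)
The plan is to work directly from the characterization of MES extensions in Theorem~\ref{thm:mesext}: $MES(x)$ is the unique MES containing $x$ exactly once such that every occurrence of $x$ in $\chromosomes$ extends to an occurrence of $MES(x)$.

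\textbf{First claim.} Write $t = asb$ and $MES(t)=a'tb' = a'asbb'$. Every occurrence of $t$ in $\chromosomes$ carries a nested occurrence of $s$ at offset $|a|$. By Theorem~\ref{thm:mesext} applied to $s$, that occurrence of $s$ extends to an occurrence of $MES(s)$. Since every occurrence of $t$ extends to $MES(t)$, we have a fixed occurrence of $s$ inside $MES(t)$, at offset $|a'a|$. The question is whether the corresponding occurrence of $MES(s)$ fits inside $MES(t)$.

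I would argue by contradiction. Suppose $MES(s)=xsy$ sticks out, say on the right, so $|y|>|bb'|$. Then every occurrence of $t$, and hence of $MES(t)$, is followed by the same character, namely the character of $y$ at position $|bb'|$. This holds because each occurrence of $MES(t)$ contains an occurrence of $s$ at a fixed offset, and that occurrence extends to $xsy$. This contradicts Theorem~\ref{thm:mesalt}, which requires $MES(t)c_3$ and $MES(t)c_4$ to both be substrings for some $c_3\neq c_4$. The left side is symmetric. Hence $MES(s)$ lies within $MES(t)$. The one subtle point is that the position of $s$ inside $MES(s)$ is unique, by Theorem~\ref{thm:mesext}, so the extension is determined by the occurrence.

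\textbf{Second claim.} Suppose $s$ is a conductor and a MES, so $MES(s)=s$, and $s$ is a prefix of $t$; the suffix case is mirrored. By the first part, $s$ occurs in $MES(t)=a'tb'$ at offset $|a'|$. We need $a'$ to be empty.

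Suppose not, and let $c$ be the last character of $a'$. Then every occurrence of $t$ is preceded by $c$. Since $s$ is a MES, Theorem~\ref{thm:mesalt} gives some character $d\neq c$ with $ds$ a substring of $\chromosomes$. Now apply Lemma~\ref{lm:switch} to the conductor $s$ and the two substrings $ds\cdot u$ and $c\,t = c\,s\,w$, where $t=sw$ and $u$ is whatever follows that occurrence of $ds$ (possibly empty). The switch produces $d\,s\,w = d\,t$ as a substring of $\chromosomes$. This contradicts the fact that every occurrence of $t$ is preceded by $c$. Hence $a'$ is empty, and $MES(s)=s$ is a prefix of $MES(t)$.

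\textbf{Main obstacle.} I expect the delicate step to be the first claim's bookkeeping with occurrences in cyclic strings. There one must make sure that the extension of $s$ to $MES(s)$ is consistent across all occurrences and uses the same offset. Nested occurrences and the uniqueness part of Theorem~\ref{thm:mesext} should handle this. Lemma~\ref{lm:switch} also needs the substrings to be finite; that is fine here, because $t$ is a finite substring.
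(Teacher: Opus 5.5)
Your proof is correct and uses the same ingredients as the paper. For the first claim you use Theorem~\ref{thm:mesext}. For the second you use Lemma~\ref{lm:switch} on the conductor $s$, together with the two distinct left extensions of $s$ given by Theorem~\ref{thm:mesalt}.

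Your explicit check in the first claim is worth keeping. If $MES(s)$ protruded from $MES(t)$, every occurrence of $MES(t)$ would be followed by one fixed character, contradicting Theorem~\ref{thm:mesalt}. The paper's one-line argument for the first claim leaves this step implicit.

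In the second claim, your direct contradiction with the forced preceding character $c$ is a leaner version of the paper's argument. The paper writes $MES(t)=amb$ with $m=MES(s)$ and checks both sides via Theorem~\ref{thm:mesalt} to show that $mb$ is itself a MES containing $t$. It then concludes $a$ is empty by uniqueness of the MES extension.
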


\textbf{Proof}. Since $MES(t)$ contains $t$ as a substring, it also contains $s$ as a substring and thus by theorem~\ref{thm:mesext}, $MES(s)$ is a substring of $MES(t)$.

Let $m=MES(s)$ and $MES(t)=amb$ B. We will show that in case $s$ is a conductor string $mb$ is a MES. By theorem~\ref{thm:mesalt}, applied to $m$, 
there are $c_1,c_2$ such that $c_1m$ and $c_2m$ are substrings of $\chromosomes$. By lemma~\ref{lm:switch}, applied to conductor string $s$ and substrings of $\chromosomes$ $mb$, $c_1m$, $c_2m$, we conclude that $c_1mb$ and $c_2mb$ are also substrings of $\chromosomes$. By theorem~\ref{thm:mesalt}, applied to string $amb$, there are $c_3$ and $c_4$, such that $ambc_3$ and $ambc_4$ are substrings of $\chromosomes$, and in particular $mbc_3$ and $mbc_4$ are substrings of $\chromosomes$ as well. Therefore by theorem~\ref{thm:mesalt}, applied to $mb$, we conclude that $mb$ is a MES. In case $s$ is a prefix of $t$, $mb$ contains $t$ as a substring. Thus $a$ is empty and $m$ is a prefix of $MES(t)$. The case when $s$ is a suffix of $t$ can be proven the same way.\qed

% \appsection{Chain labels as a language class}\label{app:subclass}
% Chains of overlapping reads can be considered as a class of languages. Each language is charactarized by a finite set $W$ of words and the language itself consists of labels of chains of words from the set $W$. To stay within the context of cyclic strings we consider only labels of cyclic chains. Similarly, every assembly graph defines a language, that consists of labels of its circuits. In section~\ref{sec:conductors} we have shown that for any set $W$ there exists an assembly graph $\assemblygraph$, such that the chain label language defined by $W$ is exactly the same as the seet of circuit labels in $\assemblygraph$. Here we will show that the opposite is incorrect: there is an assembly graph $\assemblygraph$, such that labels of its circuts cannot be represented as chains labels for any set of words.

\appsection{Supregraphs}\label{app:supregraphs}
Note that supregraphs can not have parallel edges. Therefore we will sometimes denote edges in supregraphs by a pair of its start and end vertices.

\begin{lemma}\label{lm:condmes}
If string $s$ has a prefix and suffix that are conductor MES, then $s$ is also a conductor MES.
\end{lemma}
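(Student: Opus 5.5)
Let $s=pa=bq$, where $p$ is a prefix of $s$ and $q$ is a suffix of $s$, and both $p$ and $q$ are conductor MES. We may assume $s$ is a substring of $\chromosomes$. Otherwise $s$ is a conductor by Theorem~\ref{thm:conductors}(iii), and the MES property is only meaningful for substrings; the statement is presumably intended for such $s$. The conductor part is immediate: $s$ is a superstring of the conductor $p$, so it is a conductor by Theorem~\ref{thm:conductors}(i). It remains to show that $s$ is a MES, and for this we use the criterion of Theorem~\ref{thm:mesalt}. We need two distinct characters that can precede $s$ and two distinct characters that can follow $s$.

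\textbf{Left extensions.} Since $p$ is a MES, Theorem~\ref{thm:mesalt} gives characters $c_1\neq c_2$ with $c_1p$ and $c_2p$ substrings of $\chromosomes$. We apply Lemma~\ref{lm:switch} to the conductor $p$. Take the two superstrings $c_1p$ (with empty right context) and $s=pa$ (with empty left context), both substrings of $\chromosomes$. Switching suffixes shows that $c_1pa=c_1s$ is a substring of $\chromosomes$. The same argument with $c_2$ gives $c_2s$.

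\textbf{Right extensions.} This is symmetric, using the conductor MES $q$. Theorem~\ref{thm:mesalt} gives $c_3\neq c_4$ with $qc_3$ and $qc_4$ substrings of $\chromosomes$. Lemma~\ref{lm:switch} applied to $s=bq$ and $qc_3$ yields $bqc_3=sc_3$, and likewise $sc_4$. Theorem~\ref{thm:mesalt} then concludes that $s$ is a MES.

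The only delicate point is applying Lemma~\ref{lm:switch} with empty flanking strings, which its statement allows since $a_i,b_i$ are arbitrary. A second subtlety is the case where $s$ is cyclic, or where the prefix $p$ equals $s$ itself. If $p=s$ the claim is trivial. Cyclic strings are MES by Lemma~\ref{lm:cyclic_mes}. So I expect no real obstacle beyond checking these degenerate cases.
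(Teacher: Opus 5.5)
Your proof is correct and is essentially the paper's argument: conductivity comes from superstring closure (Theorem~\ref{thm:conductors}(i)), and the MES property comes from Theorem~\ref{thm:mesalt}, with the two left extensions of the prefix and the two right extensions of the suffix transferred to $s$ via Lemma~\ref{lm:switch}. Your explicit assumption that $s$ is a substring of $\chromosomes$ is a hypothesis the paper's proof also silently needs, since Lemma~\ref{lm:switch} is applied with $s$ itself as one of the two substrings; it holds in the lemma's only use, for edge labels in the proof of Theorem~\ref{thm:spg}.
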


\textbf{Proof}. As a superstring of a conductor, string $s$ is automatically conductor. To show that it is MES, we apply theorem~\ref{thm:mesalt} to its conductor MES prefix $a$ and suffix $b$ to find letters $c_1\neq c_2$ and $c_3\neq c_4$, such that $c_1a$, $c_2a$, $bc_3$, $bc_4$ are substrings of $\chromosomes$. We then apply lemma~\ref{lm:switch} to conductors $a$ and $b$ and chromosome substrings $c_1a,s$, $c_2a$, $bc_3$, $bc_4$, paired with $s$ to conclude that $c_1S$, $c_2S$, S$c_3$, S$c_4$ are substrings of $\chromosomes$, which my theorem~\ref{thm:mesalt} implies that $s$ is a MES.\qed

\begin{theorem}
Let $\chromosomes$ be a finitely-conducting chromosome-set. Then $\CPC(AG(\chromosomes, \Core(\chromosomes)))=\chromosomes$. Each edge in $AG(\chromosomes, \Core(\chromosomes))$ is either a prefix or a suffix edge.
\end{theorem}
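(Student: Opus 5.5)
The proof splits into two parts: the edge-structure claim, and the equality of circuit labels. For the second, $\chromosomes\subseteq\CPC(AG(\chromosomes,\Core(\chromosomes)))$ follows from Theorem~\ref{thm:ag}(3). This requires that $\Core(\chromosomes)$ is proper infix-free, which holds by definition of core strings. It also requires that every $c\in\chromosomes$ contains a core string. To see this, note that since $\chromosomes$ is finitely conducting, a long enough substring of $c$ (or $c$ itself, which is a cyclic conductor MES by Lemma~\ref{lm:cyclic_mes} and Theorem~\ref{thm:conductors}(iii)-type reasoning) is a conductor. Its MES extension (Theorem~\ref{thm:mesext}) is a conductor MES by Theorem~\ref{thm:conductors}(i). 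A minimal conductor MES with respect to the proper-infix order inside it is then a core string occurring in $c$. The reverse inclusion will be obtained by the same strategy as in Theorem~\ref{thm:criterion}. First, by induction on walk length, every walk label is a substring of $\chromosomes$: single edges are substrings by construction, and concatenating walks at a vertex $v$ preserves this by Lemma~\ref{lm:switch}, since $v$ is a conductor. Second, cutting a circuit at a vertex $v$ yields a substring starting and ending with $v$, and Lemma~\ref{lm:circularization} shows that its circularization, which is the circuit label, lies in $\chromosomes$. The paper's convention of primitive circuits matches the definition of circularization, so no extra care is needed there.

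The main work is the structural claim that every edge $u\to w$ is a prefix or suffix edge. Take an edge with label $e=C[p_1:p_2+|w|]$, arising from consecutive occurrences of $u$ and $w$ in some $c\in\chromosomes$. The plan is to show that $e$ equals $u$ or $w$. Suppose not, so that $u$ is a proper prefix of $e$ and $w$ a proper suffix. Then $e$ has a conductor MES prefix and a conductor MES suffix, so by Lemma~\ref{lm:condmes} $e$ is itself a conductor MES. I expect the argument to need a short case analysis on how $u$ and $w$ overlap inside $e$. If $u$ is a prefix of $w$ or $w$ a suffix of $u$, the edge is already of the required type, since $e=w$ or $e=u$ respectively.

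Otherwise the left end of $u$ precedes that of $w$ and the right end of $u$ precedes that of $w$. Consider the suffix $x$ of $e$ beginning one character after $u$ starts, so $x$ still contains $w$, and consider also $x$'s MES extension. The aim is to produce a conductor MES strictly inside $e$ that is neither $u$ nor $w$ and contains no core string as a proper infix. Such a string would contain a core string whose occurrence lies strictly between those of $u$ and $w$, contradicting consecutiveness in Algorithm~\ref{alg:ag}. The tool for locating the intermediate conductor MES is the MES extension of a suitable substring, say $MES$ of the overlap region or of the span from just after $u$'s start to $w$'s end. This is combined with Lemma~\ref{lm:mes_substring}, which says MES extensions of conductor prefixes and suffixes remain prefixes and suffixes, and with Theorem~\ref{thm:mesalt} together with Lemma~\ref{lm:switch} to verify the branching characters.

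I expect this step to be the main obstacle. The difficulty is showing that the intermediate MES extension does not overshoot the boundaries of $e$ in $c$, that is, that it still fits between the occurrences of $u$ and $w$. If instead it contains $u$ or $w$ as a proper infix, that contradicts $u$ or $w$ being core only if it is itself a conductor MES containing them, which needs an argument. My first attempt would be to use the branching characters $c_1\ne c_2$ preceding $w$ and $c_3\ne c_4$ following $u$ from Theorem~\ref{thm:mesalt}, transported via Lemma~\ref{lm:switch} through the conductor $u$ or $w$, to show directly that the string from $w$'s start to $u$'s end, or its MES extension, is a conductor MES. If that fails, I would argue instead that $e$ being a conductor MES containing the two core strings forces, by minimality, a chain of core strings inside $e$ with prefix or suffix relations, and that consecutiveness reduces this chain to a single prefix or suffix step.
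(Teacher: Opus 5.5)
\textbf{Verdict: gap in the structural claim.} Your proof of $\CPC(AG(\chromosomes,\Core(\chromosomes)))=\chromosomes$ is fine and follows the paper. You get $\subseteq$ from Theorem~\ref{thm:ag}(3), and here you actually check its hypotheses more carefully than the paper does. The reverse inclusion by induction via Lemma~\ref{lm:switch}, closed with Lemma~\ref{lm:circularization}, is the paper's argument. The gap is the claim that every edge is a prefix or suffix edge. You correctly get from Lemma~\ref{lm:condmes} that the edge label $e$ is a conductor MES. But you then try to \emph{construct} an intermediate conductor MES from MES extensions of suffixes of $e$. You admit you cannot stop these from overshooting the span of $e$, and you end with two speculative fallbacks. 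As written, the prefix/suffix claim is unproved.

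\textbf{The missing idea: apply the definition of core string to $e$ itself.} No auxiliary MES extension is needed, only a two-case split.

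\emph{Case (a): some conductor MES occurs as a proper infix of $e$.} A shortest conductor MES inside it is a core string. Its occurrence in $c$ starts strictly after $u$'s start and ends strictly before $w$'s end. Because $\Core(\chromosomes)$ is proper infix-free, it cannot start after $w$ starts, and it cannot end before $u$ ends. So in the sorted occurrence list it lies strictly between $u$ and $w$, contradicting consecutiveness.

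\emph{Case (b): no conductor MES is a proper infix of $e$.} Then $e\in\Core(\chromosomes)$. Its occurrence starts at $u$'s start and ends at $w$'s end, so it again lies strictly between $u$ and $w$ unless $e=u$ (a suffix edge) or $e=w$ (a prefix edge).

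This dichotomy is the paper's whole argument. Your worry about overshooting the boundaries of $e$ disappears because $e$ is already the conductor MES you need.
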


\textbf{Proof}. Consider consecutive occurrences of core sequences $a, b$ in a chromosome candidate and an edge sequence $e$ that connects them. We will show that $e$ is also a core sequence. By lemma~\ref{lm:condmes} $e$ is a conductor MES. Any conductor MES, representing a proper infix of $e$, would have an occurrence between $a$ and $b$ in the chromosome candidate, which is impossible since $a$ and $b$ are consecutive core sequence occurrences. Therefore $e$ is a conductor MES that does not contain other conductor MES as proper infixes. In this case $e$ should have also been marked as a core sequence occurrence between $a$ and $b$, unless sequence of $e$ equals either $a$ or $b$, which means that every edge $e$ in the graph is either a prefix or a suffix edge.

Let $\assemblygraph=AG(\chromosomes, \Core(\chromosomes))$.
By item 3 of theorem~\ref{thm:ag} we have $\chromosomes\subseteq \CPC(\assemblygraph)$, so we only need to prove the opposite inclusion. We start by proving by induction that label of every walk $P$ in $\assemblygraph$ is a substring of $\chromosomes$. The base case is for a walk, consisting of a single vertex. Every vertex is a core string, and thus is a substring of $\chromosomes$. Note since every edge is either prefix or suffix, edge labels are also substrings of $\chromosomes$. Consider a walk $P$ and edge $e$ that extends it. Both $Label(P)$ and $e$
 are substrings of $\chromosomes$. The last vertex is $P$ is a conductor that is also a prefix of $e$. Thus by lemma~\ref{lm:switch}, applied to $Label(P)$ and $e$ we infer that the extension of $P$ with $e$ also has a label that is a substring of $\chromosomes$.

 Consider an arbitrary circuit $C$ in $\assemblygraph$. Let $v$ be an arbitrary vertex in $C$ and $P$ be a walk, resulting from breaking $C$ in vertex $v$. $Label(P)$ is a substring of $\chromosomes$ that starts and ends with conductor string $v$. Thus by lemma~\ref{lm:circularization} we can conclude that $Label(C)\chromosomes$. Therefore $\CPC(\assemblygraph)\subseteq\chromosomes)$.\qed

The following proposition provides properties of condensed supregraphs and explains why it is called condensed.
\begin{proposition}\label{prop:consupprop}
Let $\assemblygraph$ be a condensed supregraph and $v$ its vertex.
\begin{enumerate}
 \setlength{\itemsep}{0pt}   % no extra space between items
  \setlength{\parskip}{0pt}   % no paragraph spacing in items
  \setlength{\topsep}{0pt}    % no space above/below list
  \setlength{\parsep}{0pt}    % no extra spacing between paragraphs in item
  \setlength{\partopsep}{0pt} % no extra spacing when starting a new paragraph
  \item Either $v$ has at least two outgoing prefix edges and no outgoing suffix edges or exactly one outgoing suffix edge and no outgoing prefix edges.
  \item Either $v$ has at least two incoming suffix edges and no incoming prefix edges or exactly one incoming prefix edge and no incoming suffix edges.
  \item $v$ is unbranching (both indegree and outdegree equal to $1$ if only if it has a maximal label by substring inclusion among all vertex labels. Its incoming and outgoing edges are prefix and suffix respectively. We refer to such vertices as \emph{outer} vertices.
  \item $v$ has more than one incoming and outgoing vertices if only if it has a minimal label by substring inclusion among all vertex labels. Its incoming and outgoing edges are suffix and prefix respectively. We refer to such vertices as \emph{junction} vertices.
\end{enumerate}
\end{proposition}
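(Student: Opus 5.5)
Throughout, let $\chromosomes=\CPC(\assemblygraph)$. Since $\assemblygraph$ is a supregraph, every edge is a prefix edge or a suffix edge, and never both. Since $\assemblygraph$ is non-redundant, Proposition~\ref{prop:speprop} applies. Item 1 of that proposition says that a vertex with an outgoing suffix edge has no other outgoing edges. So for item 1 of the present proposition it remains to show that a vertex without an outgoing suffix edge has at least two outgoing prefix edges.

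\textbf{Item 1.} Suppose $v$ has only prefix out-edges. I would use condensedness: $Label(v)$ is a MES, so by Theorem~\ref{thm:mesalt} there are characters $c_3\neq c_4$ with $vc_3$ and $vc_4$ substrings of $\chromosomes$. Pick circuits whose labels contain $vc_3$ and $vc_4$ respectively. By Theorem~\ref{thm:nonredcount}, each of these occurrences of $Label(v)$ corresponds to a passage of the circuit through the vertex $v$. Take the closed walks $v\to\cdots$ starting at these passages. The label right after $v$ is determined by the first outgoing edge, which is a prefix edge $v\to w$ with $w=vx$. This reduces the claim to showing that the same prefix edge cannot yield both continuations $c_3$ and $c_4$. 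If $x$ is nonempty, the edge fixes the next character, so $c_3\neq c_4$ forces two distinct out-edges. The case $x$ empty is impossible, because then the edge would be both a prefix and a suffix edge.

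Item 2 is the mirror statement. I would obtain it by reversing all strings and edges, which exchanges prefix edges with suffix edges.

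\textbf{Items 3 and 4.} Combining items 1 and 2 leaves four local types for a vertex:
\begin{itemize}
\item in-prefix and out-suffix: indegree $=$ outdegree $=1$;
\item in-suffix and out-prefix: indegree $\ge 2$ and outdegree $\ge 2$;
\item in-prefix and out-prefix;
\item in-suffix and out-suffix.
\end{itemize}
For a vertex $v$ of the first type I would show that $Label(v)$ is maximal under substring inclusion. Suppose some vertex $u$ properly contains $v$. Then, by non-redundancy (Proposition~\ref{prop:nrsp}), the single-vertex walk $v$ is a subwalk of a closed walk through $u$. The only edges at $v$ are its unique incoming prefix edge and outgoing suffix edge. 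Hence every walk through $v$ is forced to pass through a vertex strictly longer than $v$, namely the target of the out-suffix edge's predecessor. By Proposition~\ref{prop:speprop}(2), this longer vertex must be $u$ or a chain toward it. I would then argue the converse by the same chain argument: if $v$ is not maximal, Proposition~\ref{prop:speprop}(2) gives a suffix or prefix path from a longer vertex, producing an incoming suffix edge or an outgoing prefix edge at $v$.

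Dually, for the minimal-label characterisation in item 4, a vertex with an incoming prefix edge $u\to v$ has the proper prefix $u$ of $v$ as a vertex, so $v$ is not minimal. Similarly, an outgoing suffix edge also shows non-minimality. Conversely, non-minimal vertices acquire such an edge via Proposition~\ref{prop:speprop}(2).

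\textbf{Main obstacle.} The hardest step is ruling out the two mixed types: in-prefix with out-prefix, and in-suffix with out-suffix. I would try to show that in a condensed supregraph labels strictly increase along prefix edges and strictly decrease along suffix edges. Combined with every vertex lying on a cycle, this forces each cycle to alternate between ascending runs and descending runs, with outer vertices at the peaks and junctions at the troughs. The intermediate mixed vertices on an ascending run would then have in- and out-degree 1. I expect that such unbranching intermediate vertices contradict the MES property via Theorem~\ref{thm:mesalt}, since their unique continuation prevents both $c_3\neq c_4$ and $c_1\neq c_2$. This is the step I would check most carefully.
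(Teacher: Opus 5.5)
Your items 1 and 2 are the paper's argument: if $v$ has a single outgoing prefix edge, every occurrence of $v$ extends by the same character, contradicting the MES property. Your treatment of item 4 via Proposition~\ref{prop:speprop}(2),(4) is also fine.

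The problem is the step you call the main obstacle. The two mixed types cannot be ruled out, because they occur in condensed supregraphs. Your claim that intermediate vertices of an ascending run have in- and out-degree $1$ already contradicts your own item 1, since a vertex with an outgoing prefix edge has at least two of them. Here is a concrete example:
\begin{itemize}
\item vertices $c$, $cy$, $cyac$, $cybc$, $czc$;
\item prefix edges $c\to cy$, $cy\to cyac$, $cy\to cybc$, $c\to czc$;
\item suffix edges $cyac\to c$, $cybc\to c$, $czc\to c$.
\end{itemize}
Its circuit labels are the cyclic words over the blocks $cya$, $cyb$, $cz$. Every vertex label is a MES by Theorem~\ref{thm:mesalt}. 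Yet $cy$ has one incoming prefix edge and two outgoing prefix edges. Such vertices (the paper calls them forks) are neither maximal nor minimal. Items 3--4 do not exclude them, because those items only characterise outer and junction vertices.

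What is actually missing is the direction ``maximal $\Rightarrow$ unbranching'' of item 3. Your proposal never states it: the argument you call ``the converse'' is just the contrapositive of ``in-prefix/out-suffix $\Rightarrow$ maximal''. The missing direction is immediate and is the exact dual of what you already do for minimal vertices. An outgoing prefix edge $v\to w$, or an incoming suffix edge $w\to v$, makes $w$ a vertex properly containing $v$. So a maximal $v$ has neither, and items 1--2 then force exactly one incoming prefix edge and exactly one outgoing suffix edge.

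For the other direction, drop your first sketch: both neighbours of an in-prefix/out-suffix vertex are strictly shorter, so nothing forces a walk through a longer vertex. Use only the chain argument:
\begin{itemize}
\item a vertex properly containing $v$ must contain it as a proper prefix or proper suffix, by Proposition~\ref{prop:speprop}(4);
\item Proposition~\ref{prop:speprop}(2) then yields an outgoing prefix edge or an incoming suffix edge at $v$, contradicting item 1 or item 2.
\end{itemize}
With these two observations the proof closes without any global classification of cycles into ascending and descending runs.
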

\textbf{Proof: (1)}: We have already shown that condensed supregraphs are non-redundant, thus they can not have simultaneously outgoing prefix and suffix edges (see proposition~\ref{prop:speprop}). And any vertex can only have one outgoing suffix edge. It remains to prove that $v$ can not have exactly one outgoing prefix edge and no other outgoing edges. Consider a cycle $C$, such that $v$ is a substring of $Label(C)$. since the graph is non-redundant, $C$ must contain vertex $v$ at corresponding position. Since $v$ has only one outgoing edge $e$, $C$ must contain $e$ after this occurrence of $v$. Therefore, whenever $v$ is present in $Label(C)$, it can be extended to the occurrence of $e$ in $Label(C)$. This contradicts the assumption that $v$ is MES.

\textbf{(2)} The claim is symmetric to (1) and can be proven analogously.

\textbf{(3)} If no other vertex is a superstring of $v$, it can not have outgoing prefix or incoming suffix edges. Therefore by items (1) and (2) it must have exactly one incoming prefix edge and exactly one outgoing suffix edge. On the other hand, if a vertex has indegree and outdegree $1$, then it must have outgoing suffix ($uv$) and incoming prefix ($vw$) edges by items (1) and (2). Assume that $v$ is not a maximal substring vertex. Then the larger vertex must contain either $u$ or $w$ as a proper infix, which is impossible in a non-redundant graph.
% Add a specific lemma in non-reundancy section that can be used here.

\textbf{(4)} Opposite to item (3), vertex $v$ that has minimal label can only have outgoing prefix and incoming suffix edges. By items (1) and (2) there must be more than one such incoming and outgoing edges. On the other hand, if $v$ has outgoing prefix ($vw$) and incoming suffix ($uv$) edges, but is not a minimal vertex, then the vertex with label contained within $v$ must be a proper infix of either $u$ or $w$, which again is  impossible in a non-redundant graph.
\qed

Item (3) in proposition~\ref{prop:consupprop} shows that some of the vertices can be unbranching in condensed supregraph. However it is also evident from the proposition that two unbranching vertices can not be adjacent in the graph: only the maximal my substring inclusion vertices can be unbranching in condensed supregraph and such vertices are only connected with vertices that are their substrings, which can not therefore be unbranching. Thus, similarly to condensed de Bruijn graph, in condensed supregraph, unbranching vertices are merged together where possible.

Items (4) and (5) provide a classification of vertices in a condensed supregraph into junction, outer and all the remaining, which we will refer to as \emph{forks}. Every fork vertex has branching on one side (incoming or outgoing) and one incident edge on another side. Branching side always extends the sequence of fork vertex (prefix edges if outgoing, suffix edges if incoming), while the edge on the other side leads to a labeled with a prefix or suffix of the fork vertex. Any walk in a condensed supregraph can be viewed as a sequence of alternating junction and outer vertices, with fork vertices in between.

\begin{theorem}
For a finitely-conducting chromosome-set $\chromosomes$ $AG(\chromosomes, \Core(\chromosomes))$ is the smallest supregraph $\assemblygraph$ (in terms of number of vertices or edges) such that $\CPC(\assemblygraph)=\chromosomes$.
\end{theorem}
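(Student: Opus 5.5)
The plan is to show that every supregraph $\assemblygraph$ with $\CPC(\assemblygraph)=\chromosomes$ satisfies $\Core(\chromosomes)\subseteq V(\assemblygraph)$. The comparison of vertex and edge counts then follows from the structure of $AG(\chromosomes,\Core(\chromosomes))$.

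\textbf{Step 1: every core string is a vertex label.} Fix such a supregraph $\assemblygraph$ and a core string $s$. Since $s$ is a substring of some $\langle w\rangle\in\chromosomes$, take a circuit $C$ with $Label(C)=\langle w\rangle$ and an occurrence of $s$ in it. By non-redundancy, cut $C$ into a closed walk as in the proposition on walks and circuits. Then take the shortest subwalk $P$ of $C$ whose label covers this occurrence of $s$; this can be taken as an open walk.

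\textbf{Step 2: $s$ equals a vertex label on $P$.} Every edge of a supregraph is a prefix or suffix edge, so the label of $P$ equals the label of one of its vertices. Indeed, along a walk of prefix/suffix edges, the combined label is the longest vertex label in a maximal monotone stretch, and covering requires, by minimality, a single vertex.

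Let $v$ be a vertex with $s\subseteq Label(v)$, chosen minimal. By Theorem~\ref{thm:conductors}(ii), $Label(v)$ is a conductor, and so is any vertex label inside it. We apply Theorem~\ref{thm:mesext} and Lemma~\ref{lm:mes_substring}: the MES extension $MES(Label(u))$ of any vertex $u$ with $Label(u)\subseteq Label(v)$ is a conductor MES contained in $MES(Label(v))$. We want to argue $Label(v)=s$. If $s$ is a proper infix of $Label(v)$, then by Theorem~\ref{thm:nonredcount} and the splitting/condensed structure (Proposition~\ref{prop:consupprop}), $v$ has neighbours with labels that are proper prefixes or suffixes. We descend to those neighbours until reaching a vertex whose label is contained in $s$ or equal to it. A vertex label strictly inside $s$ would give a conductor MES (its extension via Theorem~\ref{thm:mesext}) that is a proper infix of $s$, contradicting that $s$ is core. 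Hence some vertex has label exactly $s$, or $s$ is covered by an edge between two vertices whose labels are prefix/suffix of each other. In the latter case $s$ would lie in the larger endpoint label, and the descent repeats on a smaller vertex.

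I expect this step to be the main obstacle. The issue is that $\assemblygraph$ need not be condensed, so vertex labels need not be MES. I would handle this by first replacing $\assemblygraph$ with $\CSPG(\assemblygraph)$, which has the same circuit labels (Theorem~\ref{thm:cspg}) and no more vertices or edges. That reduction must itself be checked; if it fails, I would instead argue directly with conductor superstrings and Lemma~\ref{lm:switch}.

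\textbf{Step 3: counting.} With $\Core(\chromosomes)\subseteq V(\assemblygraph)$, the vertex bound is immediate, since $V(AG(\chromosomes,\Core(\chromosomes)))=\Core(\chromosomes)$ (restricted to those occurring, which are all of them). For edges, apply Theorem~\ref{thm:ag}(3) in the refined form used in its proof. Each edge $e$ of $AG(\chromosomes,\Core(\chromosomes))$ between consecutive core occurrences $u,w$ corresponds to a walk in $\assemblygraph$ from $u$ to $w$ with the same label. This walk contains no other core vertex, because consecutiveness is inherited via Theorem~\ref{thm:nonredcount}. Distinct edges $e$ have distinct pairs $(u,w)$, since supregraphs have no parallel edges. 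Hence the corresponding walks are edge-disjoint in their first edges. Injectively assigning to each $e$ the first edge of its walk gives $|E(AG)|\le|E(\assemblygraph)|$.
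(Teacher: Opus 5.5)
Your proof rests on the claim in Steps 1--2 that every supregraph $\assemblygraph$ with $\CPC(\assemblygraph)=\chromosomes$ satisfies $\Core(\chromosomes)\subseteq V(\assemblygraph)$. That claim is false. Free multiplexing at a junction $v$ of $\SPG(\chromosomes)$ deletes $v$ and replaces it by vertices labelled $Label(uvw)$. By Theorem~\ref{thm:freemultiplexing}, the result is again a condensed supregraph with the same circuit labels, but the core string $Label(v)$ is no longer a vertex label. The paper's own $\SPG^*(\reads)$ is such a graph: the core vertex GCGA is multiplexed away.

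For a minimal instance, let $\chromosomes$ be the circuit labels of the supregraph with vertices $A$, $ABA$, $ACA$ and edges $A\to ABA\to A$, $A\to ACA\to A$. This graph is $\SPG(\chromosomes)$, with $\Core(\chromosomes)=\{A,ABA,ACA\}$. Multiplexing $A$ gives a supregraph with the same circuit labels, vertices $ABA$, $ACA$, $ABABA$, $ABACA$, $ACABA$, $ACACA$, eight edges, and no vertex labelled $A$.

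This is exactly where your descent stalls. From $ABABA$ you move to $ABA$, which contains $s=A$ as a proper prefix and suffix, but all of whose neighbours are longer. So neither of your two alternatives occurs: there is no vertex equal to $s$, and no smaller vertex still containing $s$. The true relation runs the other way. Every vertex of $\assemblygraph$ contains a core string, i.e.\ $\SPG(\chromosomes)\le\assemblygraph$ in the sense of Lemma~\ref{lm:spgpo}. That relation by itself gives no injection from core strings to vertices, so it does not yield the count.

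Consequently Step 3 has nothing to count from. Even granting vertex containment, your edge injection is unjustified: distinct endpoint pairs $(u,w_1)\neq(u,w_2)$ do not stop the two corresponding walks in $\assemblygraph$ from sharing their first edge.

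The missing idea is a monotonicity argument, which is how the paper proceeds. By Theorem~\ref{thm:freemultiplexing}(3), every condensed supregraph with circuit labels $\chromosomes$ is reached from $\SPG(\chromosomes)$ by free multiplexing steps. A free step at a junction with in-degree $a\ge 2$ and out-degree $b\ge 2$ increases the vertex count by $ab-1>0$ and the edge count by $2ab-a-b>0$. The subsequent condensing changes nothing, since the new labels are again conductor MES by Lemma~\ref{lm:condmes}.

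Your plan to first pass from $\assemblygraph$ to $\CSPG(\assemblygraph)$ is still worth keeping. It covers non-condensed supregraphs, a reduction the paper's proof uses only implicitly. You would need to verify that condensing a supregraph never increases its number of vertices or edges.
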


\textbf{Proof}. The proof is based on a stronger fact proven in theorem~\ref{thm:spgstar} item 3. Since we do not rely on the current theorem in the proof of $AG(\chromosomes, \Core(\chromosomes))$, it does not create a loop in the inference. Theorem~\ref{thm:spgstar} item 3 states that any graph $\assemblygraph^*$ can be obtained from $\SPG(\CPC(\assemblygraph*))$ using a sequence of free multiplexing operations. In other words. any graph $\assemblygraph^*$, such that $\CPC(\assemblygraph^*)=\chromosomes$ can be obtained from $\assemblygraph=\SPG(\chromosomes)=AG(\chromosomes, \Core(\chromosomes))$ using free multiplexing steps. Free multipexing only increases the number of vertices and edges in the graph. Condensing procedure, performed after each step in multiplexing process does not change the graph because free multiplexing never introduces unubranching paths in the graph. Since $\assemblygraph^*$ can be obtained from $\assemblygraph$ by modifications that increase the number of vertices and edges, $\assemblygraph$ is the minimal graph, such that $\CPC(\assemblygraph)=\chromosomes$.\qed

\begin{theorem}%\label{thm:cspg}
For any non-redundant assembly graph $\assemblygraph$, $\CSPG(\assemblygraph)$ is a condensed supregraph with the same set of chromosome path candidates: $\CPC(\assemblygraph) = \CPC(\CSPG(\assemblygraph))$.
\end{theorem}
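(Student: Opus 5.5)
The plan is to follow the three stages of the condensing procedure and track two invariants through each one: non-redundancy, and the set of circuit labels $\CPC$. At the end I will check that the result is a supregraph whose vertices are MESs.

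\textbf{Stage 1 (edge subdivision).} Replace each edge $e$ from $u$ to $v$ by a vertex labelled $e$, with a prefix edge $u\to e$ and a suffix edge $e\to v$. Circuits of the old graph correspond bijectively to circuits of the new one. Since the chain label of $u\to e\to v$ is exactly $Label(e)$, labels are preserved and $\CPC$ is unchanged.

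Non-redundancy follows from Theorem~\ref{thm:ag}. The new graph is $AG(\CPC(\assemblygraph),V(\assemblygraph)\cup E(\assemblygraph))$, and this identification needs three ingredients:
\begin{itemize}
\item By Proposition~\ref{prop:nrsp}(5) and Theorem~\ref{thm:nonredcount}, occurrences of vertex and edge labels in a circuit label correspond exactly to occurrences in the circuit.
\item The set $V\cup E$ is proper infix-free. An edge label cannot be a proper infix of another edge label, by non-redundancy applied to single-edge walks. A vertex label cannot be a proper infix of an edge label unless that vertex occurs as an inner vertex, which is impossible for a single edge.
\item Consecutive occurrences in the sorted list are vertex, edge, vertex, and so on, so the prefix/suffix edges produced by Algorithm~\ref{alg:ag} are exactly the new ones.
\end{itemize}
With this identification, Theorem~\ref{thm:ag}(2) gives non-redundancy. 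Every edge of the new graph is a prefix or a suffix edge.

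\textbf{Stage 2 (relabelling by MES).} Write $\chromosomes=\CPC(\assemblygraph)$. Replace every label $x$ with $MES(x)$, computed with respect to $\chromosomes$. By Theorem~\ref{thm:conductors}(ii) every label is a conductor, so Lemma~\ref{lm:mes_substring} applies: if $u$ is a prefix (suffix) of $e$, then $MES(u)$ is a prefix (suffix) of $MES(e)$. Hence the prefix/suffix edges remain valid assembly-graph edges, but some of them may now join equal labels.

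For $\CPC$, the key fact is Theorem~\ref{thm:mesext}: every occurrence of $x$ in a chromosome extends uniquely to an occurrence of $MES(x)$. So I would argue via Theorem~\ref{thm:ag} again. The relabelled graph, after identifying equal labels, should coincide with $AG(\chromosomes,\{MES(x)\})$. The inclusion $\chromosomes\subseteq\CPC$ then follows from item 3. For the reverse inclusion, I would show that every walk label is a substring of $\chromosomes$, by induction along the walk using Lemma~\ref{lm:switch} at conductor vertices. Closing the circuit then uses Lemma~\ref{lm:circularization}, exactly as in the proofs of Theorems~\ref{thm:criterion} and~\ref{thm:spg}.

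\textbf{Stage 3 (collapsing).} An edge joining two equal labels is simultaneously a prefix edge and a suffix edge. Chains of such edges form unbranching paths, and contracting them, as in the definition of an assembly graph, changes neither circuit labels nor the vertex-label set. The remaining edges are genuine prefix or suffix edges, so the graph is a supregraph. Its vertices are MESs with respect to $\chromosomes$, which equals its own $\CPC$, so it is condensed.

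\textbf{Main obstacle.} The hard step is Stage 2. I must ensure that the set of MES labels stays proper infix-free, so that Theorem~\ref{thm:ag}(2) applies and gives non-redundancy. I must also ensure that gluing equal labels does not merge distinct occurrences in a way that creates new circuits. I would first try to derive proper infix-freeness from the infix-freeness of $V\cup E$ together with the uniqueness in Theorem~\ref{thm:mesext}. The argument would go as follows: if $MES(x)$ were a proper infix of $MES(y)$, then every occurrence of $y$ would contain an occurrence of $x$ in a position contradicting Stage 1 non-redundancy. If that fails, I would fall back on proving non-redundancy directly through closed walks, as in the proof of Theorem~\ref{thm:ag}(2).
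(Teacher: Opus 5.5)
Your route differs from the paper's. The paper asserts that each condensing step leaves circuit labels unchanged, and it sketches non-redundancy by matching occurrences of $v$ with occurrences of $MES(v)$; the paper itself marks that sketch as incomplete. You instead identify $\CSPG(\assemblygraph)$ with $AG(\chromosomes,M)$, where $M=\{MES(x): x\in V(\assemblygraph)\cup E(\assemblygraph)\}$. You then get non-redundancy from Theorem~\ref{thm:ag}(2), and $\CPC=\chromosomes$ from Theorem~\ref{thm:ag}(3) together with the conductor induction of Theorem~\ref{thm:spg}. That induction also disposes of your worry that gluing creates new circuits, a point the paper passes over.

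\textbf{The gap.} Everything in Stage~2 rests on $M$ being proper infix-free; item~3 of Theorem~\ref{thm:ag} needs this as well. Your sketch for that step does not work. If $MES(x)$ is a proper infix of $MES(y)$, it does not follow that an occurrence of $y$ contains $x$; only the surrounding occurrence of $MES(y)$ does. Moreover, ``every occurrence of $y$ comes with an $x$ at a fixed offset'' is compatible with non-redundancy. For example, take $y$ an edge ending at a vertex of out-degree one whose out-edge enters $x$.

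\textbf{Missing idea: extremality plus switching.} Let $m=MES(x)$. It is a conductor, being a superstring of the vertex or edge label $x$ (Theorem~\ref{thm:conductors}). Suppose $MES(y)=c\,m\,d$ with $c,d$ non-empty.
\begin{itemize}
\item By Theorem~\ref{thm:mesalt} there are letters $c_3\neq c_4$ with $mc_3$ and $mc_4$ substrings of $\chromosomes$. Lemma~\ref{lm:switch} at $m$ then makes $c\,m\,c_3$ and $c\,m\,c_4$ substrings.
\item Suppose the unique occurrence of $y$ in $MES(y)$ lies inside $c\,m$. By Theorem~\ref{thm:mesext}, the copy of $y$ inside an occurrence of $c\,m\,c_j$ extends, with the same alignment, to an occurrence of $c\,m\,d$. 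This forces $c_3$ and $c_4$ both to equal the first letter of $d$, a contradiction.
\item Symmetrically, using $c_1\neq c_2$ on the left, $y$ cannot lie inside $m\,d$.
\end{itemize}
So $y$ starts strictly inside $c$ and ends strictly inside $d$, which makes $x$ a proper infix of $y$. That contradicts the proper infix-freeness of the labels of $V\cup E$. Besides your two cases, you also need the vertex-in-vertex case, which is Proposition~\ref{prop:speprop}(4). The edge-in-vertex case follows by passing to an out-edge of that vertex.

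\textbf{Other points.} You still have to prove, not just assert, that the relabelled and collapsed graph equals $AG(\chromosomes,M)$. Occurrences of $x$ and of $MES(x)$ in circuit labels are in bijection (Theorems~\ref{thm:mesext} and~\ref{thm:nonredcount}). Use this to show that all elements sharing a MES form one contiguous block in every circuit, which is exactly the same-label unbranching path the collapse step merges.

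Also drop the Stage~1 claim that the subdivided graph is $AG(\CPC(\assemblygraph),V\cup E)$ and non-redundant. If $\assemblygraph$ already has prefix or suffix edges, as when $\CSPG$ is applied inside multiplexing, such an edge carries the label of one endpoint. Subdividing it then produces edges that are both prefix and suffix. You do not need Stage~1 non-redundancy anyway: the argument above uses only proper infix-freeness and occurrence correspondence in the original graph.
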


\textbf{Proof}. We need to prove the following statements:
First we note that each operation in supregraph condensing procedure directly preserved the set of chromosome path candidates since labels of all circuits remained the same. Thus $\CPC(\assemblygraph) = \CPC(\CSPG(\assemblygraph))$.

To show that the resulting graph is supregraph we note that replacing edges with outer vertices ensures that every edge is either a prefixx or a suffix edge. The replacement of sequences with their MES extensions does not change that by lemma~\ref{lm:mes_substring}. Collapsing trivial paths also does not change that. Thus in $\CSPG(\assemblygraph)$ every edge is either a prefix or a suffix edge.

Finally, to show that the resulting graph is non-redundant Consider a cicruit $C$ in $\CSPG(\assemblygraph)$. For every vertex $v$ from $\assemblygraph$, its occurrences in $Label(C)$ match occurrences of $MES(v)$ in $Label(C)$ by definition of MES. On the other hand by construction every vertex in $\assemblygraph$ can be (surjectively) mapped to a vertex in $\CSPG(\assemblygraph)$, in such a way that every vertex occurrence is embedded into its MES occurrence. Thus the occurrences of MES extensions of every vertex match the occurrences of vertices in $\CSPG(\assemblygraph)$, showing that the graph is non-redundant.
% Incomplete! Need to introduce corresponding criterion for non-redundancy.
\qed

Note that every outer vertex $v$ has exactly one incoming ($uv$) and one outgoing ($vw$) edge. Thus it can be replaced with an edge $uw$, marked with the same sequence as $v$. This modification does not change the set of chromosome path candidates and it preserves non-redundancy. On the other hand this graph is smaller than $\SPG(\chromosomes)$. To truly find the minimal graph with a given set of chromosome path candidates one should apply this procedure to all outer edges in the graph.

\begin{proposition}\label{prop:min}
    Consider a finitely-conducting chromosome-set $\chromosomes$ and a graph $\assemblygraph$, constructed by replacing every outer vertex in graph $\SPG(\chromosomes)$ with an edge. Then $\assemblygraph$ is the smallest non-redundant graph in terms of the number of edges), such that $\CPC(\assemblygraph)=\chromosomes$.
\end{proposition}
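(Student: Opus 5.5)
To prove Proposition~\ref{prop:min}, I will turn an arbitrary non-redundant graph with the same circuit labels into a supregraph, and then compare edge counts with $\SPG(\chromosomes)$.

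\textbf{Setup.} Let $\assemblygraph'$ be any non-redundant assembly graph with $\CPC(\assemblygraph')=\chromosomes$. By Theorem~\ref{thm:cspg}, $\CSPG(\assemblygraph')$ is a condensed supregraph with the same chromosome path candidates. By Theorem~\ref{thm:freemultiplexing}(3), it is obtained from $\SPG(\chromosomes)$ by free multiplexing steps.

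\textbf{Step 1: count $\assemblygraph$.} In $\SPG(\chromosomes)$, Proposition~\ref{prop:consupprop} says each outer vertex has exactly one incoming prefix edge and one outgoing suffix edge. Replacing an outer vertex by a single edge therefore removes one vertex and one edge. This gives
$$|E(\assemblygraph)| = |E(\SPG)| - \#\mathrm{outer}(\SPG).$$
One also checks that the replacement preserves circuit labels and non-redundancy, since occurrences of the outer label correspond exactly to occurrences of the new edge.

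\textbf{Step 2: count $\CSPG(\assemblygraph')$ from $\assemblygraph'$.} Condensing replaces each edge of $\assemblygraph'$ by a vertex with one incoming and one outgoing edge. Replacing labels by MES extensions and collapsing unbranching paths only merges these, so
$$|E(\CSPG(\assemblygraph'))| \le |E(\assemblygraph')| + \#\mathrm{outer}(\CSPG(\assemblygraph')).$$
More precisely, every outer vertex of the condensed graph comes from at least one edge of $\assemblygraph'$. Every other edge of the condensed graph, being a prefix or suffix edge between a junction or fork and its neighbour, is also charged to a distinct edge of $\assemblygraph'$ or a distinct outer vertex. The cleanest formulation is
$$|E(\assemblygraph')| \ge |E(\CSPG(\assemblygraph'))| - \#\mathrm{outer}(\CSPG(\assemblygraph')),$$
and I would prove it by charging each pair (outer vertex, incident edge) to an edge of $\assemblygraph'$.

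\textbf{Step 3: monotonicity under free multiplexing.} It then suffices to show that $|E| - \#\mathrm{outer}$ does not decrease under a free multiplexing step. Multiplexing a junction $v$ with in-degree $a$ and out-degree $b$ removes $v$ and its $a+b$ edges. It adds $ab$ new vertices, each with two edges, and these new vertices are outer or become parts of forks after condensation. The net change is roughly $2ab - (a+b) - ab = ab - a - b$, which is $\ge 0$ once $a, b \ge 2$; the case $a=b=2$ gives $0$. The bookkeeping after re-condensation is delicate, so I would instead express $|E| - \#\mathrm{outer}$ as the number of edges in the ``outer-contracted'' graph. I would then argue directly that a free multiplexing step replaces the $a+b$ edges through $v$ by at least $ab$ paths whose edges cannot all be fewer.

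\textbf{Main obstacle.} I expect the hard part to be the charging argument in Step 2 together with the effect of the recondensation in Step 3. A cleaner alternative for both is to note that the outer-contracted graph is exactly $\CSPG(\assemblygraph')$ with outer vertices turned into edges, and that $\assemblygraph'$ maps onto it surjectively on edges.
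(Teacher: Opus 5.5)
Your proposal follows the paper's proof. The paper uses the same potential $|E|-\#\text{outer}$ and the same inequality $|E(\assemblygraph')|\ge |E(\CSPG(\assemblygraph'))|-\#\text{outer}(\CSPG(\assemblygraph'))$, then combines Theorem~\ref{thm:freemultiplexing}(3) with monotonicity of the potential under free multiplexing, and observes that contracting the outer vertices of $\SPG(\chromosomes)$ attains the minimum value. For your Step~2, the paper's charging is that an edge $e=uv$ of $\assemblygraph'$ whose new vertex is never collapsed must have $\mathrm{outdeg}(u)\ge 2$ and $\mathrm{indeg}(v)\ge 2$ (otherwise $MES(e)=MES(u)$ or $MES(e)=MES(v)$), so it survives as an outer vertex; the paper merely asserts the Step~3 monotonicity, which your count $ab-a-b\ge 0$ justifies (neighbours $u_i,w_j$ that stop being outer only increase the potential).
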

\textbf{Proof}.
Consider an arbitrary non-redundant graph $\assemblygraph$ and apply the supregraph condensing procedure to it. When transforming every edge into a new vertex, the number of vertices and edges in the graph increases by the initial number of edges. However the last procedure that collapses trivial paths removes some of the edges. We will show that the actual increase in the number of vertices and edges equals the amount of outer edges in the graph $\CSPG(\assemblygraph)$.
Consider an edge $e$ from $u$ to $v$ in graph $\assemblygraph$ and assume that vertex $u$ has outdegree $1$. Since $\assemblygraph$ is non-redundant, every occurrence of $u$ in $\SPG(\assemblygraph)=\chromosomes$ must be an occurrence of vertex $u$ in a corresponding cycle. Every occurrence of $u$ in a cycle must continue with $e$ since $e$ is the only outgoing edge for $u$. Therefore $e$ is a substring of $MES(v)$ and thus $MES(e)=MES(u)$. Similarly if indegree of $v$ is $1$, we can conclude that $MES(e)=MES(v)$. Thus if either outdegree of $u$ or indegree of $v$ is $1$, after transforming every edge into vertex and then replacing vertex sequences with their MES extensions, a vertex, corresponding to edge $e$ will be adjacent to a vertex with exactly the same sequence and a trivial edge, connecting them. Such trivial edges are collapsed as the final step, reducing the number of vertices and edges in the graph by $1$. Thus in the end, the transformation of an edge into vertex adds extra vertex and edge to the final graph only if edge start has outdegree at least $2$ and edge end has indegree at least $2$, which corresponds to outer edges in $CSPG(\assemblygraph)$.

$\CPC(CSPG(\assemblygraph))=\chromosomes$ and it is a condensed supregraph. Thus it can be obtained from $\SPG(\chromosomes)$ using only free multiplexing steps. Let $E^*(\assemblygraph)$ be the number of edges in the graph minus the number of outer vertices in the graph. This value can only increase after free multiplexing operations, thus $E^*(\CSPG(\assemblygraph))\le E^*(\SPG(\chromosomes))$. On the other hand we have shown that $|E(\assemblygraph)|\ge E^*(\CSPG(\assemblygraph))$. Thus for any graph $\assemblygraph$, such that $\CPC(\assemblygraph)=\chromosomes$, we have $|E(\assemblygraph)|\ge E^*(\SPG(\chromosomes))$. On the other hand replacing all outer vertices with edges in $\SPG(\chromosomes)$ results in a graph with exactly $E^*(\SPG(\chromosomes))$ edges. Thus this graph has the minimal amount of edges among all graphs with a given set of chromosome path candidates.
\qed

\appsection{Mutiplexing}\label{app:multiplexing}
\begin{theorem}%\label{thm:mescnt}
For non-redundant assembly graph $\assemblygraph$, vertex $v$, and string $s$, if $Label(v)=MES(s)$ (with respect to $\CPC(\assemblygraph)$), then for any circuit $C$ in $\assemblygraph$, the number of occurrences of $v$ in $C$ is the same as the number of occurrences of $s$ in $Label(C)$.
\end{theorem}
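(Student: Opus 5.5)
The plan is to chain two counting facts: vertex occurrences in $C$ are counted by occurrences of $Label(v)$ in $Label(C)$, and those are in turn counted by occurrences of $s$. Write $\chromosomes=\CPC(\assemblygraph)$ and $m=MES(s)=Label(v)$. The first link is Theorem~\ref{thm:nonredcount}: in the non-redundant graph $\assemblygraph$, the number of occurrences of $v$ in $C$ equals the number of occurrences of $m$ in $Label(C)$. It therefore suffices to prove the purely string-theoretic claim that, for every $\langle c\rangle\in\chromosomes$, the occurrences of $s$ in $\langle c\rangle$ are in bijection with the occurrences of $m$ in $\langle c\rangle$.

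The bijection comes from Theorem~\ref{thm:mesext}, which says that $s$ occurs exactly once in $m$, say at position $r$. Define $\phi$ from occurrences of $m$ in $\langle c\rangle$ to occurrences of $s$ by nesting: $\phi(q,m)=(q,m)\circ(r,s)$.

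\emph{Surjectivity.} Theorem~\ref{thm:mesext} says every occurrence of $s$ in $\chromosomes$ extends to an occurrence of $m$. Because $s$ occurs only once in $m$, that extension has the form $(q,m)\circ(r,s)$, so every occurrence of $s$ lies in the image of $\phi$.

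\emph{Injectivity.} If two occurrences $(q_1,m)$ and $(q_2,m)$ nest to the same occurrence of $s$, then $q_1+r=q_2+r$ as positions. Since positions in $\langle c\rangle$ form a $\mathbb{Z}/|c|\mathbb{Z}$-torsor, this gives $q_1=q_2$.

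Hence $\phi$ is a bijection, and combining it with Theorem~\ref{thm:nonredcount} proves the statement.

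The main obstacle I expect is that "extends" in Theorem~\ref{thm:mesext} must be read at the level of individual occurrences within the same chromosome. That reading is exactly what the proof of that theorem establishes: it shows that each occurrence is covered by an occurrence of the minimal MES in the same chromosome.

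A secondary subtlety is cyclic labels shorter than $m$, where $m$ might wrap around $\langle c\rangle$. In that case occurrences are still defined via positions of the infinite periodic string, and the torsor argument for injectivity still applies. If instead $Label(v)$ is itself cyclic, then $v$ is an isolated loop (Appendix~\ref{app:cyclic-labels}) and the statement is trivial.
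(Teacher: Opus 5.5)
Your proposal is correct and follows the paper's proof. The paper likewise composes Theorem~\ref{thm:nonredcount} (non-redundancy matches occurrences of $v$ in $C$ with occurrences of $Label(v)$ in $Label(C)$) with Theorem~\ref{thm:mesext} (occurrences of $s$ correspond to occurrences of $MES(s)$ in each chromosome candidate). The only difference is that you spell out the second correspondence as an explicit nesting bijection with checked surjectivity and injectivity, where the paper simply asserts it.
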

\textbf{Proof}. Let $C$ be a circuit in $\assemblygraph$. By theorem~\ref{thm:mesext}, occurrences of $s$ in any chromosome candidate match with occurrences of $MES(s)$ in it. Since the graph is non-redundant, occurrences of $v=MES(s)$ in $Label(C)$ match the occurrences of vertex $v$ in circuit $C$. Thus, the number of occurrences of $s$ in $Label(C)$ is the same as the number of occurrences of $v$ in $C$.\qed

\begin{theorem}%\label{thm:freemultiplexing}
Let $\assemblygraph^*$ be obtained from a supregraph $\assemblygraph$ by multiplexing. Then:
\begin{enumerate}
 \setlength{\itemsep}{0pt}   % no extra space between items
  \setlength{\parskip}{0pt}   % no paragraph spacing in items
  \setlength{\topsep}{0pt}    % no space above/below list
  \setlength{\parsep}{0pt}    % no extra spacing between paragraphs in item
  \setlength{\partopsep}{0pt} % no extra spacing when starting a new paragraph
  \item $\assemblygraph^*$ is a condensed supregraph and $\CPC(\assemblygraph^*) \subseteq \CPC(\assemblygraph)$.
    \item $\CPC(\assemblygraph^*) = \CPC(\assemblygraph)$ if and only if every multiplexing step is free.
    \item Any condensed supregraph $\assemblygraph^*$ can be obtained from $\SPG(\CPC(\assemblygraph^*))$ (minimal supregraph with the same set of chromosome candidates) by a sequence of free multiplexing steps.
\end{enumerate}
\end{theorem}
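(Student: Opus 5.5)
The proof goes by analysing a single multiplexing step and then inducting over the sequence of steps. In one step $VertexMultiplex(\assemblygraph, v, D_r)$ the only circuits that change are those through $v$. Such a circuit passes $v$ via pairs $(uv, vw)$ of incoming and outgoing edges. Each new vertex $p$ is labeled $Label(uvw)$ and is joined to $u$ by a prefix edge and to $w$ by a suffix edge. Hence every circuit of the new graph, before condensing, maps to a circuit of $\assemblygraph$ with the same label: replace each detour $u\to p\to w$ by $u\to v\to w$. This map is label-preserving because the walk $u,p,w$ and the walk $u,v,w$ have the same combined label.

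By Theorem~\ref{thm:cspg}, the condensing step $\CSPG$ changes neither the circuit labels nor non-redundancy, and it produces a condensed supregraph. Before invoking it, I must check that the intermediate graph is non-redundant, since that is the hypothesis of Theorem~\ref{thm:cspg}. I would argue this by counting occurrences. By Theorem~\ref{thm:nonredcount} applied to $\assemblygraph$, occurrences of $Label(v)$ in circuit labels correspond bijectively to passes through $v$. Each such pass now goes through a unique $p$, whose label $Label(uvw)$ is determined by the pair of flanking edges. The graph is a supregraph, so those edges are prefix/suffix edges of $v$, and Proposition~\ref{prop:speprop} then tells us the labels of distinct $p$ differ and are not proper infixes of one another. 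The remaining case is a walk whose label is a proper infix of another walk's label. I would reduce this to closed walks via Proposition~\ref{prop:nrsp} and lift the walks back to $\assemblygraph$. I expect this non-redundancy check to be the main technical obstacle, since the paper's criteria for non-redundancy are stated walk by walk. Induction over steps then gives item 1.

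For item 2, a free step keeps every pair $(uv,vw)$. Every circuit of $\assemblygraph$ then lifts to the new graph, by replacing each passage through $v$ with the corresponding $p$, so equality holds. Conversely, suppose some step drops a pair $(uv,vw)$ that lies on some circuit $C$. I claim $Label(C)$ is lost. The reason is that non-redundancy of the final graph together with Theorem~\ref{thm:nonredcount} forces any circuit with that label to pass through a vertex labeled $Label(uvw)$, or through its MES extension after condensing, and that vertex no longer exists. So $\CPC$ shrinks strictly, and since later steps can only shrink it further by item 1, it never recovers.

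For item 3, set $\chromosomes=\CPC(\assemblygraph^*)$. I would compare vertex sets: $V(\SPG(\chromosomes))=\Core(\chromosomes)\subseteq$ MES-labels of $\assemblygraph^*$. This holds because every core string, being a conductor MES that occurs in $\chromosomes$, must appear as a vertex of any condensed supregraph representing $\chromosomes$; I would justify this via Theorem~\ref{thm:ag}(1) and Theorem~\ref{thm:mescnt}. Then I would induct on the total length of the labels of vertices of $\assemblygraph^*$ that are not core. Take a non-core junction-adjacent vertex of $\assemblygraph^*$ whose label is minimal. Its label has the form $Label(uvw)$ for some vertex $v$ of a coarser graph, obtained by contracting it back; this is the inverse of multiplexing. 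Showing that this contraction yields a condensed supregraph with the same $\CPC$ is the second delicate point. There I would use Proposition~\ref{prop:consupprop} to identify the fork structure around the vertex.
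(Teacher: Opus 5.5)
Your arguments for items 1 and 2 are essentially the paper's. You project each new vertex $p$ back to $v$ to get label-preserving circuits in $\assemblygraph$, and for the converse you show that a circuit through a dropped pair $(uv,vw)$ has no counterpart. For that converse, the clean justification is to project a would-be counterpart back to $\assemblygraph$, where circuits with equal labels coincide by non-redundancy, rather than to reason about a vertex of the final graph.

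The genuine gap is in item 3. Your starting claim, that every core string must be a vertex of every condensed supregraph $\assemblygraph^*$ with $\CPC(\assemblygraph^*)=\chromosomes$, is false. Take $\SPG(\chromosomes)$, whose vertices are all core strings, and freely multiplex it at a junction vertex $v$. By items 1 and 2 themselves, the result is a condensed supregraph with the same chromosome path candidates, and $v$ is no longer a vertex of it. Every new label $Label(uvw)$ strictly contains $v$, and condensing only enlarges labels to their MES extensions. The passage from $\SPG(\reads)$ to $\SPG^*(\reads)$ in Section~\ref{sec:multiplexing}, which multiplexes away the core vertex GCGA, is a concrete instance. Theorems~\ref{thm:ag}(1) and~\ref{thm:mescnt} cannot rescue the claim, since they only describe occurrences of vertices that already exist. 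The correct relation runs the other way. Every vertex label of $\assemblygraph^*$ is a conductor (Theorem~\ref{thm:conductors}(ii)) and a MES, hence contains a core string; that is, $\SPG(\chromosomes)\le\assemblygraph^*$ in the order of Lemma~\ref{lm:spgpo}.

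Even with this corrected, the backward \emph{un-multiplexing} induction is where the real content of item 3 lies, and you leave it open. Inverting a step requires three things:
\begin{enumerate}
\item recover, from a vertex labeled $Label(uvw)$, a vertex $v$ of a coarser graph, which need not be present in $\assemblygraph^*$;
\item show that all vertices produced at $v$ are present and form the full product of its in- and out-neighbours, so that the inverted step is free;
\item show that the contracted graph is a condensed supregraph with the same $\CPC$, and that applying \textsf{VertexMultiplex} followed by $\CSPG$ returns exactly the graph you started from.
\end{enumerate}

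The paper sidesteps all of this by working forward from $\SPG(\chromosomes)$. It repeatedly freely multiplexes the shortest vertex of the current graph that is not a vertex of $\assemblygraph^*$. Lemma~\ref{lm:supremes} says the prefix-edge out-neighbours of a vertex $u$ are exactly the minimal MES containing $u$ as a proper prefix, which depends only on $\chromosomes$. This shows that the chosen vertex must be a junction, and that the invariant $\text{current}\le\assemblygraph^*$ survives each step. When the process stops, $V(\text{current})\subseteq V(\assemblygraph^*)$ gives the reverse inequality. Antisymmetry (Lemma~\ref{lm:spgpo}, which rests on Theorem~\ref{thm:ag}(1)) then gives equality.
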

\textbf{Proof: (1)} Assume that $\assemblygraph^*$ was obtained from $\assemblygraph$ in a single multiplexing step, but without the supregraph condensing procedure. Assume that the multiplexing was applied at vertex $v$ and resulted in a set of new vertices $U$. Consider a cycle $C$ in graph $\assemblygraph^*$. $C$ may pass through new vertices. Consider a sequence of vertices in $C$. If we replace each occurrence of of a vertex from $U$ with vertex $v$. The resulting sequence of vertices represents a cycle in graph $\assemblygraph$ that has the same label as $C$. Therefore label of any circuit in $\assemblygraph^*$ is aalso label of a circuit in $\assemblygraph$. By theorem~\ref{thm:cspg} supregraph compression does not change the set of circuit labels. Applying this observation to every step in the multiplexing process we conclude that $\CPC(\assemblygraph^*) \subseteq \CPC(\assemblygraph)$ regardless of the number of multiplexing steps it took to obtain $\assemblygraph^*$ from $\assemblygraph$.

\textbf{(2)} Again assume that $\assemblygraph^*$ was obtained from $\assemblygraph$ in a single multiplexing step, but without the supregraph condensing procedure. First, consider the case that the multiplexing step was free and every pair of incoming and outgoing edges of vertex $v$ became a new vertex in $\assemblygraph^*$. Any circuit in graph $\assemblygraph$ can be transformed into a circuit in $\assemblygraph^*$ by replacing all occurrences of $v$ with one of the new vertices, chosen based on the edges before and after $v$ in the cycle. Since supregraph condensing also does not change the set of circuit labels, free multiplexing does not change it too.

Assume now that the multiplexing step was not free and there is a pair of edges $uv,vw$ that are not represented in $\assemblygraph^*$ as a new vertex. Consider also a graph $\assemblygraph^f$, that is the result of free multiplexing of graph $\assemblygraph$ in vertex $v$. In $\assemblygraph^f$ a combined sequence of vertices $u$ and $w$ is present in at least one cycle $C^f$. This cycle must contain the new vertex $v'$, representing the combined sequence of $u$ and $w$. Assume that $\CPC(\assemblygraph^*)=\CPC(\assemblygraph)=\CPC(\assemblygraph^f)$. In such case $Label(C)$ must be a label of a cycle in $\assemblygraph^*$. This cycle $C^*$ can not contain $v'$, since $v'$ was not added to $\assemblygraph^*$ during multiplexing step. Since $\assemblygraph^*$ is a subgraph of $\assemblygraph^f$, $C^*$ is also a cycle in $\assemblygraph^f$. Thus two different cycles in $\assemblygraph^f$ have the same label, which is impossible in a non-redundant graph, a contradiction. Therefore the set of chromosome path candidates only remains unchanged after free multiplexing steps.\qed

Before we proceed to item (3) we will prove the following lemmas.

\begin{lemma}\label{lm:supremes}
Let $\assemblygraph$ be a compressed supregraph. Consider vertex $u$ that has outgoing prefix edges. $\assemblygraph$ contains edge $uv$ iff $Label(v)$ is a minimal by substring inclusion MES with respect to $\CPC(\assemblygraph)$ that contains $u$ as a proper prefix.
\end{lemma}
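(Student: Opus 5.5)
We prove both directions using non-redundancy of $\assemblygraph$ (a condensed supregraph), the MES characterization (Theorem~\ref{thm:mesalt}), and the structure from Proposition~\ref{prop:consupprop}. Let $\chromosomes=\CPC(\assemblygraph)$. Every vertex label is a conductor (Theorem~\ref{thm:conductors}(ii)) and a MES (condensedness). By Proposition~\ref{prop:consupprop}(1), a vertex $u$ with an outgoing prefix edge has no outgoing suffix edge and at least two outgoing prefix edges.

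\textbf{Forward direction.} Let $uv$ be a prefix edge, so $Label(v)=Label(u)b$ with $b$ nonempty; $v$ is a MES by assumption.

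Suppose some MES $m$ satisfies that $u$ is a proper prefix of $m$ and $m$ is a proper substring of $v$. Since $m$ starts with $u$, write $m=ub'$ with $b'$ a proper prefix of $b$; then $m$ is a proper prefix of $v$.

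Take any circuit $C$ through the edge $uv$. By Theorem~\ref{thm:nonredcount}, occurrences of $u$ in $Label(C)$ correspond to occurrences of the vertex $u$ in $C$. Each such occurrence is followed in $C$ by some outgoing prefix edge $uv'$, since $u$ has only prefix outgoing edges.

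Now use that $m$ is a MES with $u$ as a proper prefix. By Theorem~\ref{thm:mesalt} there are letters $c_3\neq c_4$ with $mc_3$ and $mc_4$ substrings of $\chromosomes$. Let $Q$ be the walk $u\to v$. By Proposition~\ref{prop:nrsp}(2) applied to closed walks, any occurrence of $mc_3$ or $mc_4$ lies inside a walk label. Since $m$ extends $u$, each occurrence starts at an occurrence of vertex $u$ and continues along some prefix edge $uv'$.

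Because $v$ itself is a MES that contains $m$, and $v$ has incoming degree one (a prefix edge, Proposition~\ref{prop:consupprop}(2)), every occurrence of $v$ arises exactly via $uv$. The branching after $m$ must therefore come from different edges $uv'$ whose labels share $m$ as a prefix.

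Pick such a $v'\neq v$. Then $m$ is a common prefix of $v$ and $v'$ strictly longer than $u$. This contradicts the splitting property of a supregraph at $u$: outgoing edges must diverge immediately after $u$. So $v$ is minimal.

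\textbf{Backward direction.} Let $m$ be a minimal MES with $u$ as a proper prefix. The string $m$ is a substring of some label in $\chromosomes$, and $m$ extends $u$ to the right. By non-redundancy, the occurrence of $u$ inside $m$ is an occurrence of vertex $u$ on some circuit, followed by a prefix edge $uv$.

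Compare $v$ and $m$ letter by letter; both extend $u$.
\begin{itemize}
\item If $v$ were a proper prefix of $m$, then $v$ would be a MES with $u$ as a proper prefix and shorter than $m$, contradicting the minimality of $m$.
\item If $m$ were a proper prefix of $v$, then by the forward argument the MES $m$ would force branching at the end of $m$ inside the unique continuation through $v$, contradicting the divergence argument above.
\item If $v$ and $m$ diverged strictly after $u$ but before either ends, that would contradict the fact that the occurrence of $m$ continues along edge $uv$.
\end{itemize}
Hence $Label(v)=m$, and since vertex labels are distinct (Proposition~\ref{prop:speprop}(3)), the edge $uv$ exists.

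\textbf{Main obstacle.} The delicate step is turning "$m$ is a MES" into an actual graph branching at the right place. This needs the translation of occurrences of $mc_3$ and $mc_4$ into walks starting at $u$, which uses Proposition~\ref{prop:nrsp}. It also needs the observation that $v$ has a unique incoming prefix edge, so that all occurrences of $v$ pass through $u$. If the splitting property is not directly available, I would instead derive that $v'$ and $v$ share the prefix $m$. Then $m$ would be a proper infix of the label of $v'$ extended by one more edge, without a matching vertex, contradicting non-redundancy.
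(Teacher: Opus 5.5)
\textbf{Shared gap: the splitting property.} Your argument follows the paper's proof. In the forward direction, an occurrence of $mc$ (with $c$ different from the letter of $v$ after $m$) must begin at an occurrence of the vertex $u$. It must leave $u$ along an out-edge whose label agrees with $m$ one letter beyond $u$. Splitting makes that edge $uv$, and then $|v|>|m|$ forces $mc$ to be a prefix of $v$. The backward direction is minimality plus the forward direction, also as in the paper.

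The load-bearing step is splitting, and this is a genuine gap. The paper has the same gap, since it only asserts that supregraphs are splitting. Nothing in the definition of a (condensed) supregraph implies it. Your fallback does not repair it: non-redundancy constrains only strings that label walks, and $m$ need not label any walk, so ``$m$ is a proper infix with no matching vertex'' contradicts nothing.

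\textbf{Counterexample.} No argument can close the gap, because the statement is false without an extra hypothesis. Take vertices A, ABCA, ABDA, AEA, with a prefix edge from A to each of the other three and a suffix edge from each back to A. This graph is exactly $\CSPG$ of the one-vertex graph with loops ABCA, ABDA, AEA.
\begin{enumerate}
\item Walk labels have the form $\mathrm{A}x_1\mathrm{A}x_2\cdots x_k\mathrm{A}$ with $x_i\in\{\mathrm{BC},\mathrm{BD},\mathrm{E}\}$. Since A occurs only at the junctions, a direct check shows the graph is non-redundant.
\item $\CPC$ consists of the primitive $\langle \mathrm{A}x_1\cdots \mathrm{A}x_k\rangle$. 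Every vertex label has two distinct left and two distinct right extensions, so this is a condensed supregraph.
\item The string AB is preceded by C, D, E and followed by C, D, so it is a MES. It has A as a proper prefix and is a proper prefix of ABCA.
\end{enumerate}
Hence the edge A$\to$ABCA violates the forward direction. Also, AB is a minimal such MES but is not a vertex, which violates the backward direction.

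\textbf{What is needed.} The lemma requires an added hypothesis such as ``$\assemblygraph$ is splitting''. Under it your proof is sound, with two small fixes:
\begin{enumerate}
\item That $m$ is a prefix, not merely a substring, of $v$ needs Proposition~\ref{prop:speprop}(4).
\item The reason some $mc_i$ leaves the edge $uv$ is $|v|>|m|$, not the in-degree of $v$.
\end{enumerate}
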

\textbf{Proof}. Assume $s$ is a MES, that contains $u$ as a proper prefix, but itself is a proper prefix of $Label(v)=Label(uv)$. From theorem~\ref{thm:mesalt} there exists letter $c$ such that $sc$ is a substring of $\CPC(\assemblygraph)$, but not a prefix of $v$. Consider cycle $C$, such that $Label(C)$ that contains $sc$. Since $\assemblygraph$ is non-redundant, $C$ must contain vertex $v$ at the start of occurrence of $sc$ in $Label(C)$. It must be followed by edge $uv$ since the first letter after $u$ in $v$ is the same as in $s$ and $\assemblygraph$ is splitting. But $sc$ is not a prefix of $v$, a contradiction. Thus every outgoing edge from $u$ is a minimal MES, that contains $u$ as a proper prefix.

To prove the inverse statement consider a minimal MES $s$ that contains $u$ as a proper prefix. There must be a cycle $C$ in the graph, such that $s$ is a substring of $Label(C)$. This substring occurrence must include occurrence of $u$. The next edge after $u$ in the cycle must be a prefix edge $uv$. $v$ is a MES, that contains $u$ as a proper infix. Thus $s$ is a prefix of $v$. On the other hand $v$ is also a minimal MES, that contains $u$ as a prefix. Therefore $s=v$.
\qed

\begin{lemma}\label{lm:spgpo}
    Consider the following relation on supregraphs: for assembly graphs $\assemblygraph_1, \assemblygraph_2$ we say that $\assemblygraph_1\le\assemblygraph_2$ if every vertex in $\assemblygraph_2$ is a superstring of a vertex in $\assemblygraph_1$. When restricted to the set of all condensed supregraphs with a given set of chromosome candidates $\chromosomes$, this relation is a partial order relation.
\end{lemma}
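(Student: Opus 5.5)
We check the three axioms directly. \emph{Reflexivity} is immediate, since every vertex is a superstring of itself. \emph{Transitivity} is a chain of substring inclusions. Suppose $\assemblygraph_1\le\assemblygraph_2\le\assemblygraph_3$ and $x\in V(\assemblygraph_3)$. Then $x$ contains some $y\in V(\assemblygraph_2)$, and $y$ contains some $z\in V(\assemblygraph_1)$, so $x$ contains $z$. All the work is in \emph{antisymmetry}. Assume $\assemblygraph_1\le\assemblygraph_2$ and $\assemblygraph_2\le\assemblygraph_1$, with $\CPC(\assemblygraph_1)=\CPC(\assemblygraph_2)=\chromosomes$. By item~1 of Theorem~\ref{thm:ag}, $\assemblygraph_i=AG(\chromosomes,V(\assemblygraph_i))$. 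Hence it suffices to show $V(\assemblygraph_1)=V(\assemblygraph_2)$, and by symmetry it suffices to show $V(\assemblygraph_2)\subseteq V(\assemblygraph_1)$.

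Take $v\in V(\assemblygraph_2)$. Applying the two hypotheses in turn gives $u\in V(\assemblygraph_1)$ and $w\in V(\assemblygraph_2)$ with $w\subseteq u\subseteq v$. By item~4 of Proposition~\ref{prop:speprop}, $w$ is not a proper infix of $v$, so either $w=v$ or $w$ is a proper prefix or proper suffix of $v$. If $w=v$, then $u=v$ and $v\in V(\assemblygraph_1)$. This already settles the base case: if $v$ is minimal by substring inclusion in $V(\assemblygraph_2)$, i.e.\ a junction vertex by item~4 of Proposition~\ref{prop:consupprop}, then $w=v$. So all junction vertices of either graph are shared.

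For the general case we induct on $|v|$. Assume every vertex of $\assemblygraph_2$ shorter than $v$ lies in $V(\assemblygraph_1)$, and suppose $w$ is a proper prefix of $v$; the suffix case is symmetric. By item~2 of Proposition~\ref{prop:speprop}, there is a unique path $w=x_0\to x_1\to\cdots\to x_m=v$ of prefix edges in $\assemblygraph_2$. By induction $x_0,\dots,x_{m-1}\in V(\assemblygraph_1)$. By Lemma~\ref{lm:supremes}, $x_{j+1}$ is a minimal MES (with respect to $\chromosomes$) containing $x_j$ as a proper prefix. This characterization depends only on $\chromosomes$. So if $x_{m-1}$ also has outgoing prefix edges in $\assemblygraph_1$, the same lemma applied in $\assemblygraph_1$ shows that $v=x_m$ is an out-neighbour of $x_{m-1}$ there, and hence $v\in V(\assemblygraph_1)$.

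The main obstacle is the remaining case. By item~1 of Proposition~\ref{prop:consupprop}, $x_{m-1}$ could instead have a single outgoing suffix edge in $\assemblygraph_1$, to some proper suffix $y$. I would rule this out with non-redundancy of $\assemblygraph_1$. Take a circuit whose label contains $v$; it exists because $\CPC(\assemblygraph_1)=\chromosomes$ and $v$ is a substring of $\chromosomes$. The occurrence of $x_{m-1}$ at the start of $v$ must be realized in $\assemblygraph_1$ by the vertex $x_{m-1}$, followed by its unique suffix edge. Consequently every occurrence of $x_{m-1}$ in $\chromosomes$ would be followed by the unique continuation forced by $\assemblygraph_1$. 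On the other hand, $x_{m-1}$ branches to the right in $\chromosomes$: it has at least two outgoing prefix edges in $\assemblygraph_2$, and by Theorem~\ref{thm:mesalt} there are characters $c\neq c'$ with $x_{m-1}c$ and $x_{m-1}c'$ substrings of $\chromosomes$. I expect the contradiction to come from following this forced suffix-edge walk in $\assemblygraph_1$ down to a vertex $z$ that has branching prefix edges. The vertex $z$ is a proper suffix of $x_{m-1}$ and is a vertex of $\assemblygraph_1$. Then apply $\assemblygraph_1\le\assemblygraph_2$ to $v$ and combine with the inductive hypothesis, checking carefully how the suffix and prefix edge structures of the two graphs interact. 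If this case analysis becomes unwieldy, my fallback is to compare the sorted occurrence lists $Occ(V(\assemblygraph_1),\cdot)$ and $Occ(V(\assemblygraph_2),\cdot)$ along a common circuit label. There, mutual domination together with Theorem~\ref{thm:nonredcount} should force the two lists to coincide entry by entry.
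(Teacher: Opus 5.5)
Your skeleton is the paper's. You dispose of reflexivity and transitivity directly, reduce antisymmetry to $V(\assemblygraph_1)=V(\assemblygraph_2)$ via item~1 of Theorem~\ref{thm:ag}, induct on length, and follow the prefix-edge path up to $v$. You then transfer the last edge $x_{m-1}\to v$ to $\assemblygraph_1$ with Lemma~\ref{lm:supremes}. You are right that this transfer needs $x_{m-1}$ to have outgoing prefix edges in $\assemblygraph_1$; the paper's proof applies Lemma~\ref{lm:supremes} in both graphs without checking this.

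However, you leave that case open, so antisymmetry is not proved. The contradiction you sketch also does not exist. A single outgoing suffix edge $x_{m-1}\to y$ forces the next \emph{vertex}, not the next \emph{character*}. Right-branching of $x_{m-1}$ in $\chromosomes$ is compatible with it, because the branching can happen at the junction reached from $y$ through further suffix edges. Applying $\assemblygraph_1\le\assemblygraph_2$ to $v$ points in the wrong direction, and the occurrence-list fallback is only a hope.

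The missing step is short and uses the other inequality, applied to $y$:
\begin{enumerate}
\item Since $\assemblygraph_2\le\assemblygraph_1$, the vertex $y\in V(\assemblygraph_1)$ contains some $w'\in V(\assemblygraph_2)$.
\item Because $y$ is a proper suffix of $x_{m-1}$, this gives an occurrence of $w'$ in $x_{m-1}$ that does not start at its first character.
\item Both $x_{m-1}$ and $w'$ lie in $V(\assemblygraph_2)$, so by item~4 of Proposition~\ref{prop:speprop} that occurrence cannot be a proper infix. It must therefore end at the last character, and $w'$ is a proper suffix of $x_{m-1}$.
\item Item~2 of Proposition~\ref{prop:speprop}, applied in $\assemblygraph_2$, gives a path of suffix edges from $x_{m-1}$ to $w'$. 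So $x_{m-1}$ has an outgoing suffix edge in $\assemblygraph_2$.
\item This contradicts item~1, since $x_{m-1}\to v$ is a prefix edge of $\assemblygraph_2$.
\end{enumerate}

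The suffix case is the mirror image. There the single incoming prefix edge $y\to x_1$ in $\assemblygraph_1$ yields a proper prefix $w'\in V(\assemblygraph_2)$ of $x_1$. That forces an incoming prefix edge at $x_1$ in $\assemblygraph_2$, which clashes with the incoming suffix edge $v\to x_1$. This argument does not need $y$ to be a shared vertex, so it works with your one-sided inductive hypothesis. With it your induction closes, and the sorted-occurrence-list fallback is unnecessary.
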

\textbf{Proof}. Reflexivity and transitivity of this relation are trivial. We will prove antisymmetry: whenever $\assemblygraph_1\le\assemblygraph_2$ and $\assemblygraph_2\le\assemblygraph_1$, we have $\assemblygraph_1=\assemblygraph_2$.

Since $\assemblygraph_1$ and $\assemblygraph_2$ are non-redundant it is sufficient to prove that they have the same set of vertices. Indeed, in such case by item 1 of theorem~\ref{thm:ag} $$\assemblygraph_1=\AG(\CPC(\assemblygraph_1), V(\assemblygraph_1))=\AG(\chromosomes, V(\assemblygraph_1))=V(\assemblygraph_1))=\AG(\CPC(\assemblygraph_2), V(\assemblygraph_2))=\assemblygraph_2.$$

Consider the smallest vertex in the graphs $\assemblygraph_1$ and $\assemblygraph$ that does not have a copy in the other graph. Without loss of generality, assume that it is vertex $v_1$ from graph $\assemblygraph_1$. Since $\assemblygraph_2\le\assemblygraph_1$, there is a vertex in $\assemblygraph_2$, that is a substring of $v$. Consider the largest such vertex $u_2$. $u_2$ can not have the same label as $v$, thus it is shorter than $v$ and must have a copy $u_1$ in $\assemblygraph_1$. By item (4) of proposition~\ref{prop:speprop}, $u_1$ cannot be a proper infix of $v_1$. Without loss of generality, assume that $u_1$ is a prefix of $v_1$. By item (2) of proposition~\ref{prop:speprop} there must be a path from $u_1$ to $v_1$, that consists of only prefix edges. This path cannot contain more than one edge since otherwise there will be a vertex, larger than $u_2$ in $\assemblygraph_2$, that is a substring of $v_1$. Thus, there exists a prefix edge $u_1v_1$. Consider minimal MES $s$, that contains $u_1$ as a proper prefix and is itself a prefix of $v_1$. It exists since $v_1$ itself is MES. By lemma~\ref{lm:supremes} both $\assemblygraph_1$ and $\assemblygraph_2$ must contain vertex with label $s$ and there should be an edge from $u_1$/$u_2$ to this vertex. Since edge $u_1v_1$ is the only edge in graph $\assemblygraph$ that can have $s$ as a label, we have $v_1=s$. But $\assemblygraph_2$ must contain a vertex marked with $s$ too, thus there exists vertex $v_2$ in $\assemblygraph_2$ that has the same label as $v_1$.\qed

We are now ready to proceed with the proof of item (3) of theorem~\ref{thm:freemultiplexing}.

\textbf{(3)} Let $\chromosomes=\CPC(\assemblygraph^*)$ and $\assemblygraph=\SPG(\chromosomes)$. To obtain $\assemblygraph^*$ from $\assemblygraph$ we will iteratively apply free multiplexing to a smallest vertex in $\assemblygraph$ that is not present in $V(\assemblygraph^*)$.

First we will show that such operations retain invariant $\assemblygraph\le\assemblygraph^*$. This property holds for initial $\assemblygraph$, that is $AG(\chromosomes, \Core(\chromosomes))$ because $\Core(\chromosomes))$ contains all conductor MES, that are not proper infixes of the conductor MES. Thus every conductor MES (and thus every vertex of $\assemblygraph^*$) contains a string from $\Core(\chromosomes)$.

Free operation on a junction vertex also does not change the invariant. Indeed, consider a vertex $v$ from graph $\assemblygraph$ that is not present in $V(\assemblygraph^*)$. Consider vertex $u$ in $\assemblygraph^*$, that contains $v$ as a substring: $u=avb$. Without loss of generality assume that $b$ is not empty. Let $s$ be the shortest MES substring of $u$ that contains $v$ as a proper prefix. It exists since $vb$ is a MES.
% Add this property to conductor MES discussion
By lemma~\ref{lm:supremes} $s$ is a vertex in $\assemblygraph$ and $vs$ is an edge it. $s$ will stay in the graph after the free multiplexing procedure since $v$ is the only vertex that is removed. Thus any vertex in $\assemblygraph^*$ that had a substring vertex in $\assemblygraph$ still has one after multiplexing operation on a junction vertex that is not present in $\assemblygraph^*$.

We still need to show that it is possible to apply multiplexing to the shortest vertex in $\assemblygraph$, that is not present in $\assemblygraph^*$, that is to show that such vertex is always a junction. Assume that the smallest such vertex $v$ is not junction. Without loss of generality assume that it has incoming prefix edge $uv$. $u$ is shorter than $v$, thus there is vertex $u^*$ in $\assemblygraph^*$. By lemma~\ref{lm:supremes} $v$ is a smallest MES that contains $u$ as a proper prefix. Thus vertex with label $v$ must be present in $\assemblygraph^*$ too, a contradiction.

% Explain why process is not infinite
The process stops when $V(\assemblygraph)\subseteq V(\assemblygraph^*)$. Thus $\assemblygraph^*\le \assemblygraph$. However we have also shown that $\assemblygraph\le \assemblygraph^*$ at any moment in the multiplexing process. Thus by lemma~\ref{lm:spgpo} $\assemblygraph=\assemblygraph^*$ in the end of multiplexing process.\qed

We previously noted that one can reinterpret reads as sequence of $k$-mers by replacing the original alphabet with the alphabet of $k$-mers. However, in the formulation of next theorem implicitly relies on $k$ when defining $\SPG_0(\reads)$ as $\CSPG(\DBG(\reads,k))$. This choice is made to keep the construction of $\SPG(\reads)$ as close to practical implementations as possible. If we instead work strictly over the $k$-mer alphabet, the appropriate definition would be $\SPG_0(\reads) = \CSPG(\DBG(\reads,1))$, where vertices correspond to single letters of the $k$-mer alphabet and edges correspond to adjacent pairs of such letters. We adopt this interpretation in the proof of the next theorem.

\begin{theorem}
    Multiplexing with decision rule $D^{SPG}$ always stops in a finite number of steps and regardless of operation order $\ExhaustiveMultiplex(SPG_0(\reads), D^{SPG})=SPG(\reads)$.
\end{theorem}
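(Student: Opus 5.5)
We argue by tracking an invariant through the multiplexing process. We work over the $k$-mer alphabet, so $\SPG_0(\reads)=\CSPG(\DBG(\reads,1))$. Write $\chromosomes=\CC(\reads)$; this set is finitely conducting by Theorem~\ref{thm:readscond}. At every stage the current graph $\assemblygraph$ should satisfy three conditions:
\begin{itemize}
\item[(a)] $\assemblygraph$ is a condensed supregraph;
\item[(b)] $\chromosomes\subseteq\CPC(\assemblygraph)$;
\item[(c)] every vertex of $\assemblygraph$ is a MES with respect to $\chromosomes$ (not merely with respect to $\CPC(\assemblygraph)$) and is a substring of $\chromosomes$.
\end{itemize}

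\textbf{Initial graph.} For $\SPG_0(\reads)$, (a) and (b) follow from Theorems~\ref{thm:conventional} and~\ref{thm:cspg}. Condition (c) needs a separate check. Junction vertices of a condensed supregraph are branching on both sides. Each branch is realized by some read chain, since every edge of $\DBG(\reads,1)$ is a substring of a read. Theorem~\ref{thm:mesalt} then gives the MES property with respect to $\chromosomes$. Fork and outer vertices are handled by the same argument on their branching side together with Lemma~\ref{lm:supremes}.

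\textbf{Preservation of the invariant.} Consider one step of $D^{SPG}$ at a non-conductor junction $v$, keeping exactly those pairs $(uv,vw)$ whose combined label is a substring of $\chromosomes$.
\begin{itemize}
\item For (b): any circuit whose label lies in $\chromosomes$ passes $v$ only through pairs whose combined label is a substring of that label. Such pairs are kept, so the circuit lifts to the new graph, exactly as in the proof of item~(2) of Theorem~\ref{thm:freemultiplexing}.
\item For (a): this is item~(1) of Theorem~\ref{thm:freemultiplexing}.
\item For (c): new vertices are substrings of $\chromosomes$ by construction, and condensing replaces them by MES extensions with respect to the new $\CPC$. These must also be MES with respect to $\chromosomes$. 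To see this, use Theorem~\ref{thm:mesalt}: each branching at a surviving vertex is witnessed by kept edge pairs, which are substrings of $\chromosomes$.
\end{itemize}

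\textbf{Termination.} The potential is the multiset of lengths of non-conductor vertices. Every vertex is a substring of $\chromosomes$. A non-conductor has length below $2L$ by Theorem~\ref{thm:readscond}, so there are finitely many candidate labels. Multiplexing $v$ removes $v$. The new vertices strictly contain $v$ as a proper infix, so they are longer. By (c) each is a MES of $\chromosomes$ and a substring of $\chromosomes$ of bounded length, unless it is a conductor. The number of possible vertex labels is therefore finite. Moreover, a removed label can never reappear: by Lemma~\ref{lm:supremes}, after $v$ is multiplexed no minimal-MES prefix/suffix extension chain can regenerate $v$. Hence the process stops.

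\textbf{Identifying the final graph.} At termination $\assemblygraph$ has no non-conductor junctions. The conductor property should pass to forks and outer vertices, since these contain a junction as a prefix or suffix and superstrings of conductors are conductors. So every vertex is a conductor MES of $\chromosomes$. Then the induction in the proof of Theorem~\ref{thm:spg}, using Lemmas~\ref{lm:switch} and~\ref{lm:circularization}, gives $\CPC(\assemblygraph)\subseteq\chromosomes$. Together with (b) this yields $\CPC(\assemblygraph)=\chromosomes$. Next we show $V(\assemblygraph)\subseteq\Core(\chromosomes)$ restricted to minimal vertices, and use item~(3) of Theorem~\ref{thm:freemultiplexing}: $\assemblygraph$ is obtained from $\SPG(\chromosomes)$ by free multiplexing. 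This forces equality, because every junction of $\assemblygraph$ is a conductor MES not containing another conductor MES. If some junction were not core, a core string would be a proper infix of it, and this should contradict the fact that a shorter junction was multiplexed away only when it was a non-conductor. Order independence follows because the result is always $\SPG(\reads)$.

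\textbf{Main obstacle.} I expect the hardest step to be showing that multiplexing under $D^{SPG}$ never removes or overshoots a core string. It must never multiplex a vertex whose label is a proper infix-free conductor MES, and it must never create vertices that skip over core strings. The plan is to prove that core strings are always present as vertices, using the partial order of Lemma~\ref{lm:spgpo}, and to establish the invariant $\SPG(\chromosomes)\le\assemblygraph$ throughout.
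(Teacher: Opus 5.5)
Most of your outline follows the paper's proof: you keep $\CC(\reads)\subseteq\CPC(\assemblygraph)$ and that labels are substrings of $\CC(\reads)$, you argue termination from finiteness of non-conductor labels, and you finish with Lemmas~\ref{lm:switch}, \ref{lm:circularization} and item~(3) of Theorem~\ref{thm:freemultiplexing}. The gap is in identifying the final graph with $\SPG(\reads)$, which needs an invariant absent from your list.

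\textbf{The missing invariant.} In the order of Lemma~\ref{lm:spgpo} one needs $\assemblygraph\le\SPG(\reads)$ throughout, i.e.\ every core string contains some current vertex label as a substring.
\begin{itemize}
\item It holds for $\SPG_0(\reads)$: the MES extension of any letter of a core string $u$ is a vertex and lies inside $u$.
\item It survives a step at a non-conductor junction $v$. If $v$ is a substring of a core string $u$, it is a proper one, since $u$ is a conductor. By Lemma~\ref{lm:supremes}, the shortest MES (for the current $\CPC$) that properly contains $v$ and lies in $u$ is a neighbour of $v$, so it is not removed. Condensing keeps it inside $u$, because $u$ remains a MES for every superset of $\CC(\reads)$.
\item At termination, $\CPC(\assemblygraph)=\CC(\reads)$ makes every vertex a conductor MES. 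Each such vertex contains a core string, so $\SPG(\reads)\le\assemblygraph$, and antisymmetry in Lemma~\ref{lm:spgpo} gives equality.
\end{itemize}

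\textbf{Why your final step does not replace it.} You claim that no junction contains a conductor MES as a proper infix. Your only support is the appeal to a ``shorter junction multiplexed away'', which is exactly the missing invariant. Even granted, the claim does not force equality. Freely multiplexing $\SPG(\chromosomes)$ at one junction $j_0$ gives a different condensed supregraph with the same $\CPC$. Its junctions are minimal labels, hence old $\SPG$ vertices, hence core strings, yet $j_0$ has disappeared.

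Your repair under ``Main obstacle'' points the wrong way. $\SPG(\chromosomes)\le\assemblygraph$ says every current vertex contains a core string. That is false at the start, as is ``core strings are always vertices''. The inequality to carry through the process is $\assemblygraph\le\SPG(\reads)$; the reverse one is only available at the end.

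\textbf{Invariant (c) is false.} Its MES half fails already for $\SPG_0(\reads)$. Take reads $axc, ca, bxd, db$, with letters standing for $k$-mers.
\begin{itemize}
\item $\CC(\reads)=\{\langle axc\rangle,\langle bxd\rangle\}$.
\item $\SPG_0(\reads)$ has the junction $x$ and outer vertices $xcax$ and $xdbx$. These are MES for $\CPC(\DBG(\reads,1))$, which also contains $\langle axdbxc\rangle$.
\item In $\CC(\reads)$, however, $xcax$ is always preceded by $a$ and followed by $c$, so it is not a MES there.
\end{itemize}
Your ``branching side'' argument says nothing about outer vertices. For forks, the extensions on the non-branching side come from a junction that may be a non-conductor, so Lemma~\ref{lm:switch} cannot transfer them to $\CC(\reads)$.

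You only need the substring half of (c), both for finiteness of non-conductor labels and for the base case of the walk induction. MES-ness with respect to $\CC(\reads)$ at the end follows from $\CPC(\assemblygraph)=\CC(\reads)$ anyway. The same example makes the reversed order concrete: the core strings are $\langle axc\rangle$ and $\langle bxd\rangle$, and the junction $x$ contains neither, so $\SPG(\reads)\le\SPG_0(\reads)$ fails.
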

\textbf{Proof}.
Let $\assemblygraph$ be any graph, that can be obtained from $SPG_0$ through multilexing with the decision rule $D^{SPG}$. We will show that $\assemblygraph$ always retains the following properties: $\assemblygraph\le \SPG(\reads)$, $\CC(\reads)\subseteq\CPC(\assemblygraph)$, and every edge in $\assemblygraph$ is a substring of $\CPC(\reads)$.

First we show that $\SPG_0(\reads)=\CSPG(\DBG(\reads,k))$ satisfies these conditions. Combining theorems \ref{thm:conventional} and \ref{thm:cspg} we have $\CC(\reads)\subseteq\CPC(\DBG(\reads,k))=\CPC(SPG_0(\reads))$. Next we note that every edge label in de Bruijn graph is a substring of $\CPC(\reads)$ by construction. Replacing edge and vertex sequences with their MES extensions during supregraph compressing of the de Bruijn graph does not change this fact since the cyclic string from $\CPC(\reads)$ that contained the sequence of a given edge as substring, also must contain its MES extension as a substring. Finally, to show that $\SPG_0(\reads)\le \SPG(\reads)$ consider a vertex $v$ in $\SPG(\reads)$. $v$ is a MES with respect to $\CPC(\reads)$, thus $v$ is also a MES with respect to a larger chromosome-set $\CPC(\SPG_0(\reads))$. Consider a single letter $c$ of $v$. $c$ corresponds to a vertex in $\DBG(\reads, 1)$ supregraph compressing replaces $c$ with its MES extension. Since MES extension preserves substring relation, MES extension of $c$ is a substring of MES extension of $v$, which is equal to $v$. Therefore for any vertex $v$ in $\SPG(\reads)$ there is a vertex in $\SPG_0(\reads)$, labeled with a substring of $v$, thus $\SPG_0(\reads)\le \SPG(\reads)$.

Next we show that multiplexing preserves these properties of the graph. Assume that the multiplexing was applied to a vertex $v$ in supregraph $\assemblygraph$, resulting in a supregraph $\assemblygraph'$. Consider a circuit $C$ in $\assemblygraph$, such that $Label(C)\in\CC(\reads)$. If $C$ does not pass through $v$, then it remains in $\assemblygraph'$. If $C$ passes through $v$, the combined label of edges before and after $v$ in $C$ is a substring of $\label(C)$, and thus a substring of $\CC(\reads)$. Thus this pair was retained during multiplexing. Replacing all occurrences of $v$ in $C$ with corresponding vertex, created during multiplexing results in a cycle in $\assemblygraph'$ with the same label. Therefore labels of all cycles in $\assemblygraph$ that are also in $\CC(\reads)$ are preserved during multiplexing. Thus $\CC(\reads)\subseteq\CPC(\assemblygraph')$.

New vertices and edges added during the multiplexing step are specifically checked by the $D^{SPG}$ condition to be substrings of $\CC(\reads)$. Thus all vertex and edge sequences in $\assemblygraph'$ are still substrings of $\CC(\reads)$.

Finally, consider a vertex $u=v$ in $\SPG(\reads)$ that contains $v$ as a substring. Consider a smallest MES with respect to $\CPC(\assemblygraph)$, that contains $v$ as a proper substring and is a substring of $u$. It exists since $u$ itself is a MES with respect to $\CC(\reads)$ and thus it is a MES $s$ with respect to a larger set $\CPC(\assemblygraph)$. By lemma~\ref{lm:supremes} $s$ is a vertex adjacent to $v$ in $\assemblygraph$. Since $v$ is the only deleted vertex, we infer that any vertex sequence in $\SPG(\reads)$ has at least one vertex from $\assemblygraph'$ as a substring. Therefore $\assemblygraph'\le\SPG(\reads)$.

This concludes the proof that the multiplexing process preserves three properties of the graph: $\assemblygraph\le \SPG(\reads)$, $\CC(\reads)\subseteq\CPC(\assemblygraph)$, and every edge in $\assemblygraph$ is a substring of $\CPC(\reads)$. 

To complete the proof assume that $\assemblygraph$ is the final result of the multiplexing process. The process stops when no new junction can be selected for multiplexing. Thus every junction vertex is a conductor with respect to $\CC(\reads)$. Since every vertex contains junction vertex as a substring (minimal vertex contained within a given vertex as a substring is always junction), every vertex in the graph is a conductor. We can show by induction on the walk length that every walk label in $\assemblygraph$ is a substring og $\SPG(\reads)$. As a base we have shown that every edge in the graph is a substring of $\SPG(\reads)$. Consider a walk $P$ and an edge $uv$, where $u$ is the last vertex in $P$. By induction assumption, $Label(P)$ and $Label(uv)$ are substrings of $\SPG(\reads)$. Since $u$ is a conductor, by lemma~\ref{lm:switch}, we infer that $Label(P+uv)$ is also a substring of $\CC(\reads)$. Consider an arbitrary circuit $C$ in $\assemblygraph$ and a walk $P$, obtained from $C$ by breaking it in vertex $v$. $Label(P)$ is a substring of $\CC(\reads)$ and $v$ is a conductor. Thus by lemma~\ref{lm:circularization} $Label(C)\in\CC(\reads)$. Therefore $\CPC(\assemblygraph)\subseteq\CC(\reads)$. We have already shown that $\CC(\reads)\subseteq\CPC(\assemblygraph)$, therefore $\CC(\reads)=\CPC(\assemblygraph)$. By theorem~\ref{thm:freemultiplexing} $\SPG(\reads)\le\assemblygraph$. We have already shown that $\assemblygraph\le \SPG(\reads)$. Thus $\assemblygraph = \SPG(\reads)$, which shows that regardless of the order of steps, final result of multiplexing procedure is always $\SPG(\reads)$.

Finally to show that the process ends in a finite number of steps we note that every multiplexing step removes from the graph a non-conductor vertex. The same vertex can not be removed twice and there is a finite amount of non-conductor strings since $\CC(\reads)$ is finitely-conducting chromosome-set (theorem~\ref{thm:readscond}). Therefore the process is finite.
% Deal with cyclic vertices
\qed

\begin{theorem}%\label{thm:spgstar}
    Multiplexing with the decision rule $D^{\reads}$ always stops in a finite number of steps and regardless of the operation order the result is the same condensed supregraph $SPG^*(\reads)=\ExhaustiveMultiplex(SPG(\reads), D^{\reads})$. $SPG^*(\reads)$ contains vertex with label $MES(r)$ for every non-contained read $r$ from $\reads$.
\end{theorem}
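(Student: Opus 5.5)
Throughout the process the set of chromosome path candidates stays fixed at $\CC(\reads)$: by item (2) of Theorem~\ref{thm:freemultiplexing}, free steps preserve $\CPC$, and by item (1) every intermediate graph is a condensed supregraph. I would argue termination, order-independence, and the vertex property separately, using the partial order of Lemma~\ref{lm:spgpo} and the canonical-target argument from the proof of item (3) of Theorem~\ref{thm:freemultiplexing}.

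\textbf{Termination.} A vertex is multiplexed only if its label is a proper infix of some read. Its label is also a MES with respect to the fixed set $\CC(\reads)$, and a finite read set has only finitely many such strings. After a free step at $v$, $v$ disappears. By Lemma~\ref{lm:supremes} the new vertices, after condensing, are minimal MES strictly extending $v$ on the left or the right, so their labels are strictly longer than $v$. Every remaining vertex label in the graph is a MES of $\CC(\reads)$. Consequently, once $v$ has been removed it can never reappear: reappearing would require some vertex to have a label that is a proper substring of a vertex removed earlier, which contradicts the monotonicity below. Since each step removes one string from the finite pool of MES proper infixes of reads and never returns it, the process is finite.

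\textbf{Monotonicity and order-independence.} I would show that if $\assemblygraph \to \assemblygraph'$ is a free step at $v$, then $\assemblygraph \le \assemblygraph'$ and $V(\assemblygraph')=V(\assemblygraph)\setminus\{v\}\cup\{\text{minimal MES extensions}\}$. To get order-independence, I would characterize the final vertex set intrinsically. Define $\mathcal{V}^*$ as the set of conductor MES $w$ of $\CC(\reads)$ that satisfy two conditions: $w$ is not a proper infix of any read, and every conductor MES that is a proper infix of $w$ is a proper infix of some read. This is analogous to the definition of $\Core$, with "proper infix of a read" added as an extra reason to discard a string. I would then show the result equals $AG(\CC(\reads),\mathcal{V}^*)$. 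The stopping condition says no vertex is a proper infix of a read. Each vertex is reached from a core vertex through minimal MES extensions, so its proper-infix vertices were all removed for being proper infixes of reads. Applying Lemma~\ref{lm:spgpo} to any two terminal graphs, each is $\le$ the other, so they coincide; alternatively, item (1) of Theorem~\ref{thm:ag} gives uniqueness once the vertex set is determined.

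\textbf{Main obstacle: the vertex $MES(r)$.} Let $r$ be a non-contained read. By Theorem~\ref{thm:readscond} it is a conductor, so $MES(r)$ is a conductor MES. $MES(r)$ cannot be a proper infix of any read: if it were, $r$ would be a proper substring of another read, contradicting non-containment (the boundary case where $r$ equals a prefix or suffix of that read would need checking). In the final graph some vertex $u$ is a substring of $MES(r)$, and no vertex is a proper infix of a read. I would take the longest vertex $u$ contained in $MES(r)$. If $u\ne MES(r)$, then either $u$ is a proper infix of $MES(r)$ or $u$ is a proper prefix or suffix of it. In the prefix or suffix case, Lemma~\ref{lm:supremes} yields a longer vertex inside $MES(r)$, a contradiction. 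In the proper-infix case, $u$ occurs inside $r$'s extension, and I must show $u$ is a proper infix of $r$ itself, so that it would have been multiplexed. This is delicate when $u$ overhangs $r$ within $MES(r)$. In that situation I would use the fact that $r$ occurs exactly once in $MES(r)$ (Theorem~\ref{thm:mesext}), together with non-redundancy, to show that $u$ together with its forced unique extensions covers $r$, and then derive a contradiction from Lemma~\ref{lm:supremes} again. I expect this overlap case analysis to be the hardest step.
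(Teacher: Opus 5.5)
\textbf{The main gap is the overhang case you flag, and the tool you plan to use there will not close it.} Uniqueness of the occurrence of $r$ in $MES(r)$ plus non-redundancy does not rule out a vertex $u$ that is a proper infix of $MES(r)$ but sticks out of $r$. What rules it out is the conductor property of $u$; every vertex of the current graph is a conductor MES.

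Write $MES(r)=aub$ with $a,b$ non-empty. By Theorem~\ref{thm:mesalt} there are $c_1\neq c_2$ with $c_1u,c_2u$ substrings of $\CC(\reads)$. Lemma~\ref{lm:switch}, applied to the conductor $u$ and the substring $ub$, makes $c_1ub$ and $c_2ub$ substrings. Right-branching of $MES(r)$ gives $ubc_3,ubc_4$ with $c_3\neq c_4$. So $ub$ is a MES, and symmetrically so is $au$; this is the argument inside the proof of Lemma~\ref{lm:mes_substring}.

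Now suppose $u$ is not strictly inside the occurrence of $r$. Then $r$ lies in $au$ or in $ub$, so by Lemma~\ref{lm:mes_substring} $MES(r)$ would be a substring of a strictly shorter MES, which is absurd. Hence every conductor MES that is a proper infix of $MES(r)$ is a proper infix of $r$. In particular, no vertex of a terminal graph is a proper infix of $MES(r)$. This is exactly the paper's first step for this part.

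You also need this fact in your prefix/suffix case. Lemma~\ref{lm:supremes} yields a longer vertex only when $u$ has outgoing prefix (or incoming suffix) edges. If, say, $u$ is a proper prefix of $MES(r)$ whose unique out-edge is a suffix edge, then the next vertex on a circuit through $MES(r)$ is a proper infix of $MES(r)$, and only the fact above excludes it. With it, your ``longest vertex inside $MES(r)$'' argument closes, equivalently $MES(r)\in\mathcal V^*$.

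The paper continues differently from that point. It takes a largest conductor MES $s$ that is a proper infix of $r$, shows that $as$ and $sb$ are precisely the minimal extensions of $s$ given by Lemma~\ref{lm:supremes}, and concludes that the free step at $s$ creates $asb=MES(r)$ directly.

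\textbf{A second, smaller gap is the direction of your invariant.} The monotonicity you state, that every vertex of $\assemblygraph'$ contains a vertex of $\assemblygraph$, is what termination needs, together with minimality of the multiplexed junction. Order-independence, and your claim that some terminal vertex lies inside $MES(r)$, need the reverse: every conductor MES that is not a proper infix of a read contains a vertex of the current graph. This holds for $\SPG(\reads)$, since every conductor MES contains a core string. It survives a step at $v$ for two reasons. First, $v$ itself is a proper infix of a read. Second, any such string strictly containing $v$ contains the minimal MES extension of $v$ on one side; by Lemma~\ref{lm:supremes} that extension is a neighbour of $v$, and it is not removed. This is the analogue of the invariant $\assemblygraph\le\assemblygraph^*$ in the proof of item~3 of Theorem~\ref{thm:freemultiplexing}.

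Your sentence about vertices being ``reached from core vertices'' does not control conductor MES that never become vertices. With the invariant in place, two things follow:
\begin{itemize}
\item any two terminal graphs are $\le$ each other, so Lemma~\ref{lm:spgpo} applies as you intend;
\item your intrinsic set $\mathcal V^*$ is correct: it is the set of proper-infix-minimal conductor MES that are not proper infixes of reads.
\end{itemize}
This is a cleaner order-independence argument than the paper's, which argues that the vertices created at each multiplexed junction are determined by $\CC(\reads)$ alone. Finally, your hedge about non-containment is unnecessary: any proper substring of another read is already contained.
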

\textbf{Proof}.
Since only proper infixes of reads are used for multiplexing operations and the same vertex can not be used twice, the length of the process is finite. Moreover, the length of the process is always the same since every conductor MES that is a proper infix of a read will be the focus of multiplexing operation exactly once during the process. Consider the final graph $SPG^*(\reads)$. We will show that the set of vertices in it is the same regardless of the order of operations.

Consider a vertex $v$ in this graph. It cannot be a proper infix of a read since otherwise the process would not stop at this graph. If this vertex was present in the initial graph, it would never be removed from the graph regardless of the order of operations. Thus this vertex belongs to any other possible final graph. If $v$ was not present in the initial graph, it was added during one of the multiplexing operations. Let multiplexing in vertex $u$ be the the operation that added $v$. At the moment of this operation, $u$ is a junction vertex. Thus by lemma~\ref{lm:supremes} vertices adjacent to $u$ at that moment are the minimal MES that contain $u$ as a proper substring. This set does not depend on the order of operations, thus vertex $v$ would be added to the graph regardless of the order of operations. $v$ also would never be removed since it is only removed when multiplexing operation is centered on it and it is impossible since $v$ is not a proper infix of any read. Therefore, regardless of the order of operation, the set of vertices (and thus the entire graph) will be the same.

Let $r$ be a non-contained read. Assume that $s$ is no proper infix of $MES(r)=asb$ that is a conductor MES but is not a proper infix of $r$. In such case $r$ is a substring of either $as$ or $sb$ (or both). Both $as$ and $sb$ and MES, a contradiction, since $MES(r)$ is the smallest MES that contains $r$ as a substring.

Assume $r$ does not contain a MES conductor as a proper infix. Then $MES(r)$ does not contain one also. By theorem~\ref{thm:readscond}, thus $MES(r)$ is a MES conductor. Thus $MES(r)\in\Core(\CC(\reads))=V(\SPG(\reads))$. In addition, $MES(r)$ cannot be removed from graph by multiplexing process since a superstring of a non-contained read cannot be a proper infix of another read. Thus $MES(r)$ is present in $\SPG^*(\reads)$.

Let $s$ be the largest conductor MES that is present a proper infix in $r$. Multiplexing must be applied to $s$ once.  Let $MES(r)=asb$. Consider the smallest $b'$ such that $sb'$ is a conductor MES and $b'$ is a prefix of $b$. If $b\neq b'$, $r$ either contains $sb'$ as a proper infix, which is impossible since $s$ is the largest such string, or $r$ is contained within $asb'$, which is impossible since $asb$ is the smallest MES, that contains $r$. Thus $b=b'$ and by lemma~\ref{lm:supremes}, $sb$ is a vertex that was adjacent to $s$ when multiplexing was applied to $s$. Similarly $as$ is also a vertex adjacent to $s$. Free multiplexing at $s$ adds $asb=MES(r)$ as a new vertex. As discussed above this vertex can never be removed from the graph, thus $MES(r)$ is present in $\SPG^*(\reads)$ in case $r$ contains a proper infix conductor MES too.
\qed

\begin{proposition}\label{prop:example}
    Let $\assemblygraph$ be a non-redundant splitting assembly graph without prefix and suffix edges, such that $\CPC(\assemblygraph)$ consists of all cyclic strings. If $\assemblygraph$ contains an edge of length at least $k+1$, it can not contain vertices shorter than $k$.
\end{proposition}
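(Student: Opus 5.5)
Suppose, aiming for a contradiction, that $\assemblygraph$ has an edge $e$ with $|e|\ge k+1$ and also a vertex $v$ with $|v|<k$. I would work entirely with the structural facts for non-redundant graphs without prefix and suffix edges from Proposition~\ref{prop:nrtr}: labels of walks determine the walks, and occurrences of a vertex label in a walk label correspond exactly to occurrences of that vertex in the walk. Since $\CPC(\assemblygraph)$ consists of all cyclic strings, every finite string is a substring of some circuit label. Hence every finite string is the label of a subwalk of some circuit, up to trimming its ends, by the proposition stating that a walk whose label is a substring of a circuit label is a subwalk of that circuit.

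The first step is to show that every sufficiently short string can be realised, and that vertex labels are ``locally forced''. Take any string $x$ and embed it in a cyclic string. Its occurrence must be covered by the chain of edge labels of the corresponding circuit. From non-redundancy and Theorem~\ref{thm:nonredcount}, every occurrence of $Label(v)$ in any circuit label corresponds to a visit of $v$. Applying this to all cyclic strings, every occurrence of the string $Label(v)$ in every string is a vertex occurrence. Since there are no prefix or suffix edges, edges properly extend vertices on both sides. The splitting property then says that for each letter $c$ there is exactly one outgoing edge of $v$ whose label begins $Label(v)c$.

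The second step exploits the long edge $e$ from $u$ to $w$. Its label contains no vertex label as a proper infix; otherwise, by non-redundancy, that vertex would be an inner vertex of the single-edge walk $e$, which is impossible. Now consider the string $y = Label(e)$ and modify an interior position. Since $|e|\ge k+1 > |v|+1$, we can choose a string $y'$ that agrees with $e$ except that a window of length $|v|$ strictly inside $e$ is replaced by $Label(v)$. We would keep the prefix $u$ and suffix $w$ intact, or at least a prefix that forces the walk to leave $u$ along $e$ via splitting. Because every string occurs in some chromosome, $y'$ occurs in some circuit label. By step one, the inserted copy of $Label(v)$ is a visit of $v$. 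Meanwhile, the letters just after $u$ agree with those of $e$, so splitting forces the circuit to traverse $e$ from that occurrence of $u$. This means the copy of $Label(v)$ lies strictly inside the occurrence of $e$, and $e$ would then carry $v$ as a proper infix, giving a contradiction.

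The main obstacle is the bookkeeping in this last step. First, the window has to be placed so that it is a proper infix of $e$ and disjoint from the forced region after $u$; the length bound $|e|\ge k+1$ versus $|v|<k$ is exactly what leaves room. Second, I must handle the case where $u$ or $w$ itself is long, so that the window cannot avoid them. There I would instead use only the first letter after $u$, which splitting needs, and place the window after it, accepting overlap with $w$. That overlap is harmless, because the contradiction only needs $v$ to be a proper infix of the edge label of $e$. I would try this placement first and verify that the length count closes.
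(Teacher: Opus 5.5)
\textbf{Gap in the second step.} The window does not fit, and this is more than bookkeeping. To use splitting at $u$, the string $y'$ must keep all of $Label(u)$ followed by the next letter $c$ of $Label(e)$. Otherwise nothing forces the circuit to visit $u$ at the start of the occurrence of $y'$. So the window for $Label(v)$ must lie in positions $|u|+1,\dots,|e|-2$ of $Label(e)$, which requires $|e|\ge |u|+|v|+2$. This does not follow from $|e|\ge k+1$ and $|v|<k$. Your fallback of letting the window overlap $w$ does not help, because the obstruction is $u$; the mirror argument at $w$ needs $|e|\ge |w|+|v|+2$ and fails the same way.

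In fact the favourable case never occurs. Apply your own mechanism to $y'=Label(u)\,c\,d'$, where $d'$ differs from the letter of $Label(e)$ that follows $c$. This shows that, in any graph satisfying the hypotheses, every edge $e$ from $u$ to $w$ has $|e|=|u|+1$, and symmetrically $|e|=|w|+1$. So there is never room for a window. Any string that departs from $Label(e)$ only after its end is absorbed by the circuit continuing from $w$. An argument confined to the neighbourhood of $e$ therefore cannot reach a short vertex $v$ that sits elsewhere in the graph.

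\textbf{What is missing} is a global link between $v$ and $e$. The paper obtains it by induction on $t\le k-1$. Some $t$-mer, namely a proper infix of the long edge, is not a vertex. Non-vertex status then propagates from a $t$-mer $s_1$ to each successor $s_2$ with $s_1a=bs_2$. Otherwise, for every letter $c$ the string $cs_1a=cbs_2$ would lie inside a single edge ending at $s_2$. This gives two incoming edges of $s_2$ that share the letter $b$ immediately before $s_2$, contradicting splitting. Strong connectivity of the de Bruijn graph on $t$-mers then rules out every vertex of length $t$.

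Alternatively, you can repair your own route. Once every edge is exactly one letter longer than both its endpoints, vertex lengths are constant along walks. A circuit whose label contains $Label(u)\,Label(v)$ visits both $u$ and $v$, since by non-redundancy every occurrence of a vertex label in a circuit label is a visit of that vertex. Hence there is a walk from $u$ to $v$, and $|v|=|u|=|e|-1\ge k$.
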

\textbf{Proof}.
We will show by induction on $1\le t\le k-1$ that there are no vertices of length $t$ in $\assemblygraph$.
Let $t$-mers $s_1$ and $s_2$ be neighbours ($s_1a=bs_2$ for some letters $a, b$). We will show that if $s_1$ is not a vertex in $\assemblygraph$, then $s_2$ also is not a vertex.

Assume that $\assemblygraph$ contains vertex $s_2$
Let $C$ be a cycle in the graph such that $Label(C)$ contains $cs_1b$ as a substring where $c$ is a letter. By induction assumption all vertices  have length at least $t=|s_1|$. Thus every $(t+1)$-mer from $Label(C)$ is a substring of at least one edge. If edges, covering $(t+1)$-mers $cs_1$ and $s_1b$ are different the vertex between them would be a substring of $s_1$, which is impossible. Thus for any letter $c$ there is edge $e_c$ that contains $cs_1b$ as a substring. Moreover, all these edges must end in vertex $s_2$ since is a suffix fo $cs_1b=cas_2$. However, all such edges end with $as_2$, which is impossible in a splitting graph, a contradiction. Therefore $s_2$ is not present a a vertex in $\assemblygraph$.

To complete the induction step we note that since $\assemblygraph$ contains an edge of length $k+1$, it can not contain vertices labeled as proper infixes of the edge. Thus for every $1\le t\le k-1$ there is at least one $t$-mer missing from the graph. Since every $t$-mer is reachable from any other $t$-mer we can conclude that no $t$-mer is present in $\assemblygraph$ for $1\le t\le k-1$.
\qed

Consider the example discussed at the end of section~\ref{sec:multiplexing}. There, we analyzed a read set capable of representing an arbitrary sequence over the alphabet. Constructing a genome candidate that preserves the representation of this read set—when one of the reads has length $k+1$—requires introducing a vertex or edge whose multiplicity matches that of this sequence. In particular, such an element must have length at least $k+1$. However, proposition~\ref{prop:example} implies that graphs of this type cannot contain vertices of length less than $k$. This establishes a lower bound on the structural complexity of the graph.

\appsection{Optimal genome assembly}\label{app:optimal}

\begin{theorem}
For any read set $\reads$ and a set of multiplicity conditions $\multiplicity$ defined on non-contained reads:
$$OptimalAssembly(\reads|\multiplicity)=\Superstrings(OA(GC(\reads|\multiplicity)), \reads^1),$$
where $\reads^1$ denotes the set of unique reads from $\reads$.
\end{theorem}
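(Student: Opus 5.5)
The plan is to work entirely inside $\assemblygraph=SPG^*(\reads)$ with vertex constraints $\vertexmultiplicity$. First, record that this pair is a genome candidate-preserving representation. Theorem~\ref{thm:spgstar} gives $\CPC(\assemblygraph)=\CC(\reads)$, and each non-contained read $r$ has a vertex labeled $MES(r)$. By Theorem~\ref{thm:mescnt}, the number of occurrences of that vertex in any circuit equals the number of occurrences of $r$ in the circuit label. Translating $\multiplicity(r)$ into the interval constraint $\vertexmultiplicity(MES(r))$ therefore gives $GC(\reads|\multiplicity)=\Labels(GPC(\assemblygraph|\vertexmultiplicity))$.

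A genome path candidate is a multiset of circuits. Counting how many times each vertex and edge is used gives a feasible integral circulation in the network $N$. Conversely, any feasible integral circulation decomposes into circuits forming a genome path candidate; this uses the fact that the constraints are intervals, so integrality of circulations with integral bounds applies. Hence the cleaning loop in Algorithm~\ref{alg:optimal_assembly} removes exactly the elements that lie on no genome path candidate. It therefore does not change $GPC$, and the cleaned graph, after applying $\CSPG$ (Theorem~\ref{thm:cspg}), still represents the same genome candidates.

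Second, I would identify the output. Let $r$ be a read with $\multiplicity(r)=\{1\}$, and let $v$ be the shortest vertex of the cleaned, condensed graph containing $r$. Every genome path candidate passes through the vertex $MES(r)$ exactly once. I would argue that in the cleaned graph the walk obtained by maximally extending $MES(r)$ along edges shared by all candidates has been condensed into a single vertex $v$. Concretely, cleaning leaves every element on some feasible circulation. If $MES(r)$ could be followed by two different edges in different candidates, both edges survive and create branching. If it is always followed by the same edge, then after cleaning the neighbour has the same MES extension with respect to the new $\CPC$, so condensing merges them. Thus $v$ is the label of the maximal walk common to all genome path candidates through the unique occurrence. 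By Theorem~\ref{thm:limitation}, applied to the non-redundant splitting graph, this label is an optimal contig containing $r$.

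Third, I would prove the reverse inclusion. Take $s\in OA(GC(\reads|\multiplicity))$ containing a unique read $r$. Then $s$ contains $r$, and hence contains the vertex $MES(r)$, since every occurrence of $r$ extends to its MES (Theorem~\ref{thm:mesext}). Theorem~\ref{thm:limitation} then gives a maximal common walk labeled $s$. Because $r$ occurs once in each candidate, this walk is determined by the single occurrence of $MES(r)$, and by the second step it coincides with the vertex produced for $r$. One also needs that two distinct unique reads inside the same contig yield the same vertex; this follows because the contig determines the walk.

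The main obstacle is the second step: showing rigorously that, after cleaning, the maximal common extension around a unique vertex is collapsed into a single vertex by $\CSPG$. This requires relating MES with respect to the cleaned graph's $\CPC$ to "walks common to every genome path candidate." The difficulty is that cleaning removes circulation-infeasible elements but may leave circuits that are individually infeasible. My first attempt would be to show that for a vertex of multiplicity exactly $1$, every circuit through it in the cleaned graph can be completed to a feasible circulation. If that holds, local branching at the extension boundary is witnessed by actual genome candidates.
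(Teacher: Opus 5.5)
Your proof has a genuine gap, and it is the one you flag yourself in your second step. Your dichotomy argues only the easy direction: if candidates continue the unique occurrence differently, both edges survive cleaning. Maximality of the output vertex needs the converse. Let $P$ be the forced closed walk through the unique vertex $v$: prefix edges before $v$ and suffix edges after it, forced by Proposition~\ref{prop:speprop}(1), with $Label(P)=Label(v)$. If the end vertex $u$ of $P$ keeps two outgoing edges after cleaning, you must show that each of them really follows the unique occurrence of $P$ in some genome path candidate.

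Your sentence ``if it is always followed by the same edge, then after cleaning the neighbour has the same MES extension with respect to the new $\CPC$'' silently identifies genome candidates with all circuits of the cleaned graph. But $\CSPG$ computes MES extensions with respect to the latter, which contains many circuits that are not genome candidates. The lemma you propose to bridge this is false. As soon as two different cycles through $v$ survive, which is exactly the branching situation you must analyse, their concatenation is a circuit through $v$ twice. It lies in no feasible circulation, because $\vertexmultiplicity(v)=\{1\}$ and adding circuits only increases counts. Restricting to circuits through $v$ once does not help, since such a circuit may pass another unit-multiplicity vertex twice.

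The missing idea, which is the core of the paper's proof, is a local exchange at $u$. Take any edge $e$ out of $u$ that survived cleaning, and a feasible circulation $f$ using $e$. Since $v$ carries flow exactly $1$ and the edges of $P$ are forced, a circuit decomposition of $f$ traverses $P$ through $v$ exactly once, continuing with some edge $e'$. Meanwhile $e$ is entered from some edge $b$. Re-pair these two transitions at $u$: join the last edge of $P$ with $e$, and $b$ with $e'$. This merges or splits circuits but leaves every vertex and edge count unchanged, so the result is again a genome path candidate, now containing $Pe$. The cleaned graph after condensing is a condensed supregraph, so $u$ has at least two outgoing prefix edges (Proposition~\ref{prop:consupprop}(1)). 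By the splitting property, two candidates then extend the unique occurrence of $Label(v)$ by different characters. The symmetric argument at the start of $P$ gives left-maximality, and with this in place your second and third steps go through.

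In your third step, note also that $s$ containing $r$ does not by itself give that $s$ contains $MES(r)$. You need that $r$ occurs exactly once in $s$ and in every candidate. Then $s$ and $MES(r)$, aligned at that occurrence, combine into a string common to all candidates, and maximality of $s$ forces containment.
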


\textbf{Proof}. The condensing step merges each unbranching path into a single vertex. This vertex inherits flow conditions from all merged vertices. Therefore, the final condensed graph can also be considered as a network with capacities inherited from the network defined by $SPG^*(\reads)$ and $\vertexmultiplicity$. If any of the vertices in the unbranching path was unique, then the condensed vertex becomes the smallest one that contains the sequence of a unique read.

In absence of topology condition, any circuit decomposition of a circulation in the constructed network is a genome candidate. Thus we need to prove that the vertices marked as unique in the network are the longest sequences that are present in any decomposition of any valid circulation, assuming that every edge is present in at least one valid circulation (since we removed every edge that was not).

Consider a unique vertex $v$ and the shortest closed path $P$ that contains $v$. Every genome path candidate going through $v$ also goes through $P$. Also $Label(P)=Label(v)$. Let $P$ end in vertex $u$ and $e$ be an edge, outgoing from $u$. There is at least one circulation $C$ that uses edge $e$. We claim that there exists a decomposition of this circulation, that contains path $Pe$. Indeed $P$ and $e$ must each be present in any decomposition of $C$. If $P$ does not continue with $e$, it must continue with some edge $e'$, while $e$ is preceded by some walk $P'$. Switching in $u$ connections $Pe'$ and $P'e$ into connections $Pe$ and $P'e'$ creates a different decomposition but does not change vertex multiplicities. Thus we can find a decomposition that contains $Pe$ as a subpath. Since $v$ must be present exactly once in any path decomposition, there is no other path going through $v$ in this genome candidate, except the one containing $Pe$. Similarly one can find a genome path candidate that contains $Pe'$ but not $Pe$ for another edge $e'$, outgoing from $v$. Combining this with similar lggic applied to the start of path $P$, we conclude that it is impossible to prolong path $P$ to a longer path that is present in any circulation decomposition. Therefore the procedure UniqueOptimal constructs all optimal contigs that contain unique vertices as substrings.\qed

\end{document}